\def\dOi{12(3:2)2016}
\keywords{bi-intuitionistic logic, subtraction, exclusion, simple type
  theory, dual, sequent calculus, consistency, completeness, coinduction, kripke model}
\newcommand{\redto}[0]{\rightsquigarrow}
\newcommand{\redtoby}[1]{\stackrel{ \ifrName{({\tiny #1})}}{\redto}}
\newcommand{\interp}[1]{\llbracket #1 \rrbracket}
\newcommand*{\Cdot}[1][1.25]{%
  \mathpalette{\CdotAux{#1}}\cdot%
}
\newdimen\CdotAxis
\newcommand*{\CdotAux}[3]{%
  {%
    \settoheight\CdotAxis{$#2\vcenter{}$}%
    \sbox0{%
      \raisebox\CdotAxis{%
        \scalebox{#1}{%
          \raisebox{-\CdotAxis}{%
            $\mathsurround=0pt #2#3$%
          }%
        }%
      }%
    }%
    \dp0=0pt %
    \sbox2{$#2\bullet$}%
    \ifdim\ht2<\ht0 %
      \ht0=\ht2 %
    \fi
    \sbox2{$\mathsurround=0pt #2#3$}%
    \hbox to \wd2{\hss\usebox{0}\hss}%
  }%
}
\newcommand{\dttdrule}[4][]{{\displaystyle\frac{\begin{array}{l}#2\end{array}}{#3}\quad\dttdrulename{#4}}}
\newcommand{\dttpremise}[1]{ #1 \\}
\newenvironment{dttdefnblock}[3][]{ \framebox{\mbox{#2}} \quad #3 \\[0pt]}{}
\newenvironment{dttfundefnblock}[3][]{ \framebox{\mbox{#2}} \quad #3 \\[0pt]\begin{displaymath}\begin{array}{l}}{\end{array}\end{displaymath}}
\newcommand{\dttnt}[1]{\mathit{#1}}
\newcommand{\dttmv}[1]{\mathit{#1}}
\newcommand{\dttkw}[1]{\mathbf{#1}}
\newcommand{\dttsym}[1]{#1}
\newcommand{\dttdrulename}[1]{\textsc{#1}}
\newcommand{\dttdrulerelXXax}[1]{\dttdrule[#1]{%
}{
G  \dttsym{,}  \dttnt{n} \,  \preccurlyeq_{ \dttnt{p} }  \, \dttnt{n'}  \dttsym{,}  G'  \vdash  \dttnt{n} \,  \preccurlyeq^*_{ \dttnt{p} }  \, \dttnt{n'}}{%
{\dttdrulename{rel\_ax}}{}%
}}
\newcommand{\dttdrulerelXXrefl}[1]{\dttdrule[#1]{%
}{
G  \vdash  \dttnt{n} \,  \preccurlyeq^*_{ \dttnt{p} }  \, \dttnt{n}}{%
{\dttdrulename{rel\_refl}}{}%
}}
\newcommand{\dttdrulerelXXtrans}[1]{\dttdrule[#1]{%
\dttpremise{ G  \vdash  \dttnt{n} \,  \preccurlyeq^*_{ \dttnt{p} }  \, \dttnt{n'}  \qquad  G  \vdash  \dttnt{n'} \,  \preccurlyeq^*_{ \dttnt{p} }  \, \dttnt{n''} }%
}{
G  \vdash  \dttnt{n} \,  \preccurlyeq^*_{ \dttnt{p} }  \, \dttnt{n''}}{%
{\dttdrulename{rel\_trans}}{}%
}}
\newcommand{\dttdrulerelXXflip}[1]{\dttdrule[#1]{%
\dttpremise{G  \vdash  \dttnt{n'} \,  \preccurlyeq^*_{  \bar{  \dttnt{p}  }  }  \, \dttnt{n}}%
}{
G  \vdash  \dttnt{n} \,  \preccurlyeq^*_{ \dttnt{p} }  \, \dttnt{n'}}{%
{\dttdrulename{rel\_flip}}{}%
}}
\newcommand{\dttdruleax}[1]{\dttdrule[#1]{%
\dttpremise{G  \vdash  \dttnt{n} \,  \preccurlyeq^*_{ \dttnt{p} }  \, \dttnt{n'}}%
}{
G  \dttsym{;}  \Gamma  \dttsym{,}  \dttnt{p} \, \dttnt{A}  \mathbin{@}  \dttnt{n}  \dttsym{,}  \Gamma'  \vdash  \dttnt{p} \, \dttnt{A}  \mathbin{@}  \dttnt{n'}}{%
{\dttdrulename{ax}}{}%
}}
\newcommand{\dttdruleunit}[1]{\dttdrule[#1]{%
}{
G  \dttsym{;}  \Gamma  \vdash  \dttnt{p} \,  \langle  \dttnt{p} \rangle   \mathbin{@}  \dttnt{n}}{%
{\dttdrulename{unit}}{}%
}}
\newcommand{\dttdruleand}[1]{\dttdrule[#1]{%
\dttpremise{ G  \dttsym{;}  \Gamma  \vdash  \dttnt{p} \, \dttnt{A}  \mathbin{@}  \dttnt{n}  \qquad  G  \dttsym{;}  \Gamma  \vdash  \dttnt{p} \, \dttnt{B}  \mathbin{@}  \dttnt{n} }%
}{
G  \dttsym{;}  \Gamma  \vdash  \dttnt{p} \, \dttsym{(}   \dttnt{A}  \ndwedge{ \dttnt{p} }  \dttnt{B}   \dttsym{)}  \mathbin{@}  \dttnt{n}}{%
{\dttdrulename{and}}{}%
}}
\newcommand{\dttdruleandBar}[1]{\dttdrule[#1]{%
\dttpremise{G  \dttsym{;}  \Gamma  \vdash  \dttnt{p} \,  \dttnt{A} _{ \dttnt{d} }   \mathbin{@}  \dttnt{n}}%
}{
G  \dttsym{;}  \Gamma  \vdash  \dttnt{p} \, \dttsym{(}   \dttnt{A_{{\mathrm{1}}}}  \ndwedge{  \bar{  \dttnt{p}  }  }  \dttnt{A_{{\mathrm{2}}}}   \dttsym{)}  \mathbin{@}  \dttnt{n}}{%
{\dttdrulename{andBar}}{}%
}}
\newcommand{\dttdruleimp}[1]{\dttdrule[#1]{%
\dttpremise{\dttnt{n'} \, \not\in \, \dttsym{\mbox{$\mid$}}  G  \dttsym{\mbox{$\mid$}}  \dttsym{,}  \dttsym{\mbox{$\mid$}}  \Gamma  \dttsym{\mbox{$\mid$}}}%
\dttpremise{\dttsym{(}  G  \dttsym{,}  \dttnt{n} \,  \preccurlyeq_{ \dttnt{p} }  \, \dttnt{n'}  \dttsym{)}  \dttsym{;}  \Gamma  \dttsym{,}  \dttnt{p} \, \dttnt{A}  \mathbin{@}  \dttnt{n'}  \vdash  \dttnt{p} \, \dttnt{B}  \mathbin{@}  \dttnt{n'}}%
}{
G  \dttsym{;}  \Gamma  \vdash  \dttnt{p} \, \dttsym{(}   \dttnt{A}  \ndto{ \dttnt{p} }  \dttnt{B}   \dttsym{)}  \mathbin{@}  \dttnt{n}}{%
{\dttdrulename{imp}}{}%
}}
\newcommand{\dttdruleimpBar}[1]{\dttdrule[#1]{%
\dttpremise{G  \vdash  \dttnt{n} \,  \preccurlyeq^*_{  \bar{  \dttnt{p}  }  }  \, \dttnt{n'}}%
\dttpremise{ G  \dttsym{;}  \Gamma  \vdash   \bar{  \dttnt{p}  }  \, \dttnt{A}  \mathbin{@}  \dttnt{n'}  \qquad  G  \dttsym{;}  \Gamma  \vdash  \dttnt{p} \, \dttnt{B}  \mathbin{@}  \dttnt{n'} }%
}{
G  \dttsym{;}  \Gamma  \vdash  \dttnt{p} \, \dttsym{(}   \dttnt{A}  \ndto{  \bar{  \dttnt{p}  }  }  \dttnt{B}   \dttsym{)}  \mathbin{@}  \dttnt{n}}{%
{\dttdrulename{impBar}}{}%
}}
\newcommand{\dttdrulecut}[1]{\dttdrule[#1]{%
\dttpremise{ G  \dttsym{;}  \Gamma  \dttsym{,}   \bar{  \dttnt{p}  }  \, \dttnt{A}  \mathbin{@}  \dttnt{n}  \vdash  \dttsym{+} \, \dttnt{B}  \mathbin{@}  \dttnt{n'}  \qquad  G  \dttsym{;}  \Gamma  \dttsym{,}   \bar{  \dttnt{p}  }  \, \dttnt{A}  \mathbin{@}  \dttnt{n}  \vdash  \dttsym{-} \, \dttnt{B}  \mathbin{@}  \dttnt{n'} }%
}{
G  \dttsym{;}  \Gamma  \vdash  \dttnt{p} \, \dttnt{A}  \mathbin{@}  \dttnt{n}}{%
{\dttdrulename{cut}}{}%
}}
\newcommand{\dttdruleaxCut}[1]{\dttdrule[#1]{%
\dttpremise{  \dttnt{p} \, \dttnt{B}  \mathbin{@}  \dttnt{n'}  \in  \dttsym{(}  \Gamma  \dttsym{,}   \bar{  \dttnt{p}  }  \, \dttnt{A}  \mathbin{@}  \dttnt{n}  \dttsym{)}   \qquad  G  \dttsym{;}  \Gamma  \dttsym{,}   \bar{  \dttnt{p}  }  \, \dttnt{A}  \mathbin{@}  \dttnt{n}  \vdash   \bar{  \dttnt{p}  }  \, \dttnt{B}  \mathbin{@}  \dttnt{n'} }%
}{
G  \dttsym{;}  \Gamma  \vdash  \dttnt{p} \, \dttnt{A}  \mathbin{@}  \dttnt{n}}{%
{\dttdrulename{axCut}}{}%
}}
\newcommand{\dttdruleaxCutBar}[1]{\dttdrule[#1]{%
\dttpremise{   \bar{  \dttnt{p}  }  \, \dttnt{B}  \mathbin{@}  \dttnt{n'}  \in  \dttsym{(}  \Gamma  \dttsym{,}   \bar{  \dttnt{p}  }  \, \dttnt{A}  \mathbin{@}  \dttnt{n}  \dttsym{)}   \qquad  G  \dttsym{;}  \Gamma  \dttsym{,}   \bar{  \dttnt{p}  }  \, \dttnt{A}  \mathbin{@}  \dttnt{n}  \vdash  \dttnt{p} \, \dttnt{B}  \mathbin{@}  \dttnt{n'} }%
}{
G  \dttsym{;}  \Gamma  \vdash  \dttnt{p} \, \dttnt{A}  \mathbin{@}  \dttnt{n}}{%
{\dttdrulename{axCutBar}}{}%
}}
\newcommand{\dttdruleAx}[1]{\dttdrule[#1]{%
\dttpremise{G  \vdash  \dttnt{n} \,  \preccurlyeq^*_{ \dttnt{p} }  \, \dttnt{n'}}%
}{
G  \dttsym{;}  \Gamma  \dttsym{,}  \dttmv{x}  \dttsym{:}  \dttnt{p} \, \dttnt{A}  \mathbin{@}  \dttnt{n}  \dttsym{,}  \Gamma'  \vdash  \dttmv{x}  \dttsym{:}  \dttnt{p} \, \dttnt{A}  \mathbin{@}  \dttnt{n'}}{%
{\dttdrulename{Ax}}{}%
}}
\newcommand{\dttdruleUnit}[1]{\dttdrule[#1]{%
}{
G  \dttsym{;}  \Gamma  \vdash  \dttkw{triv}  \dttsym{:}  \dttnt{p} \,  \langle  \dttnt{p} \rangle   \mathbin{@}  \dttnt{n}}{%
{\dttdrulename{Unit}}{}%
}}
\newcommand{\dttdruleAnd}[1]{\dttdrule[#1]{%
\dttpremise{ G  \dttsym{;}  \Gamma  \vdash  \dttnt{t_{{\mathrm{1}}}}  \dttsym{:}  \dttnt{p} \, \dttnt{A}  \mathbin{@}  \dttnt{n}  \qquad  G  \dttsym{;}  \Gamma  \vdash  \dttnt{t_{{\mathrm{2}}}}  \dttsym{:}  \dttnt{p} \, \dttnt{B}  \mathbin{@}  \dttnt{n} }%
}{
G  \dttsym{;}  \Gamma  \vdash  \dttsym{(}  \dttnt{t_{{\mathrm{1}}}}  \dttsym{,}  \dttnt{t_{{\mathrm{2}}}}  \dttsym{)}  \dttsym{:}  \dttnt{p} \, \dttsym{(}   \dttnt{A}  \ndwedge{ \dttnt{p} }  \dttnt{B}   \dttsym{)}  \mathbin{@}  \dttnt{n}}{%
{\dttdrulename{And}}{}%
}}
\newcommand{\dttdruleAndBar}[1]{\dttdrule[#1]{%
\dttpremise{G  \dttsym{;}  \Gamma  \vdash  \dttnt{t}  \dttsym{:}  \dttnt{p} \,  \dttnt{A} _{ \dttnt{d} }   \mathbin{@}  \dttnt{n}}%
}{
G  \dttsym{;}  \Gamma  \vdash   \mathbf{in}_{ \dttnt{d} }\, \dttnt{t}   \dttsym{:}  \dttnt{p} \, \dttsym{(}   \dttnt{A_{{\mathrm{1}}}}  \ndwedge{  \bar{  \dttnt{p}  }  }  \dttnt{A_{{\mathrm{2}}}}   \dttsym{)}  \mathbin{@}  \dttnt{n}}{%
{\dttdrulename{AndBar}}{}%
}}
\newcommand{\dttdruleImp}[1]{\dttdrule[#1]{%
\dttpremise{\dttnt{n'} \, \not\in \, \dttsym{\mbox{$\mid$}}  G  \dttsym{\mbox{$\mid$}}  \dttsym{,}  \dttsym{\mbox{$\mid$}}  \Gamma  \dttsym{\mbox{$\mid$}}}%
\dttpremise{\dttsym{(}  G  \dttsym{,}  \dttnt{n} \,  \preccurlyeq_{ \dttnt{p} }  \, \dttnt{n'}  \dttsym{)}  \dttsym{;}  \Gamma  \dttsym{,}  \dttmv{x}  \dttsym{:}  \dttnt{p} \, \dttnt{A}  \mathbin{@}  \dttnt{n'}  \vdash  \dttnt{t}  \dttsym{:}  \dttnt{p} \, \dttnt{B}  \mathbin{@}  \dttnt{n'}}%
}{
G  \dttsym{;}  \Gamma  \vdash  \lambda  \dttmv{x}  \dttsym{.}  \dttnt{t}  \dttsym{:}  \dttnt{p} \, \dttsym{(}   \dttnt{A}  \ndto{ \dttnt{p} }  \dttnt{B}   \dttsym{)}  \mathbin{@}  \dttnt{n}}{%
{\dttdrulename{Imp}}{}%
}}
\newcommand{\dttdruleImpBar}[1]{\dttdrule[#1]{%
\dttpremise{G  \vdash  \dttnt{n} \,  \preccurlyeq^*_{  \bar{  \dttnt{p}  }  }  \, \dttnt{n'}}%
\dttpremise{ G  \dttsym{;}  \Gamma  \vdash  \dttnt{t_{{\mathrm{1}}}}  \dttsym{:}   \bar{  \dttnt{p}  }  \, \dttnt{A}  \mathbin{@}  \dttnt{n'}  \qquad  G  \dttsym{;}  \Gamma  \vdash  \dttnt{t_{{\mathrm{2}}}}  \dttsym{:}  \dttnt{p} \, \dttnt{B}  \mathbin{@}  \dttnt{n'} }%
}{
G  \dttsym{;}  \Gamma  \vdash  \langle  \dttnt{t_{{\mathrm{1}}}}  \dttsym{,}  \dttnt{t_{{\mathrm{2}}}}  \rangle  \dttsym{:}  \dttnt{p} \, \dttsym{(}   \dttnt{A}  \ndto{  \bar{  \dttnt{p}  }  }  \dttnt{B}   \dttsym{)}  \mathbin{@}  \dttnt{n}}{%
{\dttdrulename{ImpBar}}{}%
}}
\newcommand{\dttdruleCut}[1]{\dttdrule[#1]{%
\dttpremise{G  \dttsym{;}  \Gamma  \dttsym{,}  \dttmv{x}  \dttsym{:}   \bar{  \dttnt{p}  }  \, \dttnt{A}  \mathbin{@}  \dttnt{n}  \vdash  \dttnt{t_{{\mathrm{1}}}}  \dttsym{:}  \dttsym{+} \, \dttnt{B}  \mathbin{@}  \dttnt{n'}}%
\dttpremise{G  \dttsym{;}  \Gamma  \dttsym{,}  \dttmv{x}  \dttsym{:}   \bar{  \dttnt{p}  }  \, \dttnt{A}  \mathbin{@}  \dttnt{n}  \vdash  \dttnt{t_{{\mathrm{2}}}}  \dttsym{:}  \dttsym{-} \, \dttnt{B}  \mathbin{@}  \dttnt{n'}}%
}{
G  \dttsym{;}  \Gamma  \vdash  \nu \, \dttmv{x}  \dttsym{.}  \dttnt{t_{{\mathrm{1}}}}  \mathbin{\Cdot[2]}  \dttnt{t_{{\mathrm{2}}}}  \dttsym{:}  \dttnt{p} \, \dttnt{A}  \mathbin{@}  \dttnt{n}}{%
{\dttdrulename{Cut}}{}%
}}
\newcommand{\dttdruleRImp}[1]{\dttdrule[#1]{%
}{
\nu \, \dttmv{z}  \dttsym{.}  \lambda  \dttmv{x}  \dttsym{.}  \dttnt{t}  \mathbin{\Cdot[2]}  \langle  \dttnt{t_{{\mathrm{1}}}}  \dttsym{,}  \dttnt{t_{{\mathrm{2}}}}  \rangle  \redto  \nu \, \dttmv{z}  \dttsym{.}  \dttsym{[}  \dttnt{t_{{\mathrm{1}}}}  \dttsym{/}  \dttmv{x}  \dttsym{]}  \dttnt{t}  \mathbin{\Cdot[2]}  \dttnt{t_{{\mathrm{2}}}}}{%
{\dttdrulename{RImp}}{}%
}}
\newcommand{\dttdruleRImpBar}[1]{\dttdrule[#1]{%
}{
\nu \, \dttmv{z}  \dttsym{.}  \langle  \dttnt{t_{{\mathrm{1}}}}  \dttsym{,}  \dttnt{t_{{\mathrm{2}}}}  \rangle  \mathbin{\Cdot[2]}  \lambda  \dttmv{x}  \dttsym{.}  \dttnt{t}  \redto  \nu \, \dttmv{z}  \dttsym{.}  \dttnt{t_{{\mathrm{2}}}}  \mathbin{\Cdot[2]}  \dttsym{[}  \dttnt{t_{{\mathrm{1}}}}  \dttsym{/}  \dttmv{x}  \dttsym{]}  \dttnt{t}}{%
{\dttdrulename{RImpBar}}{}%
}}
\newcommand{\dttdruleRAndOne}[1]{\dttdrule[#1]{%
}{
\nu \, \dttmv{z}  \dttsym{.}  \dttsym{(}  \dttnt{t_{{\mathrm{1}}}}  \dttsym{,}  \dttnt{t_{{\mathrm{2}}}}  \dttsym{)}  \mathbin{\Cdot[2]}   \mathbf{in}_{ \dttsym{1} }\, \dttnt{t}   \redto  \nu \, \dttmv{z}  \dttsym{.}  \dttnt{t_{{\mathrm{1}}}}  \mathbin{\Cdot[2]}  \dttnt{t}}{%
{\dttdrulename{RAnd1}}{}%
}}
\newcommand{\dttdruleRAndTwo}[1]{\dttdrule[#1]{%
}{
\nu \, \dttmv{z}  \dttsym{.}  \dttsym{(}  \dttnt{t_{{\mathrm{1}}}}  \dttsym{,}  \dttnt{t_{{\mathrm{2}}}}  \dttsym{)}  \mathbin{\Cdot[2]}   \mathbf{in}_{ \dttsym{2} }\, \dttnt{t}   \redto  \nu \, \dttmv{z}  \dttsym{.}  \dttnt{t_{{\mathrm{2}}}}  \mathbin{\Cdot[2]}  \dttnt{t}}{%
{\dttdrulename{RAnd2}}{}%
}}
\newcommand{\dttdruleRAndBarOne}[1]{\dttdrule[#1]{%
}{
\nu \, \dttmv{z}  \dttsym{.}   \mathbf{in}_{ \dttsym{1} }\, \dttnt{t}   \mathbin{\Cdot[2]}  \dttsym{(}  \dttnt{t_{{\mathrm{1}}}}  \dttsym{,}  \dttnt{t_{{\mathrm{2}}}}  \dttsym{)}  \redto  \nu \, \dttmv{z}  \dttsym{.}  \dttnt{t}  \mathbin{\Cdot[2]}  \dttnt{t_{{\mathrm{1}}}}}{%
{\dttdrulename{RAndBar1}}{}%
}}
\newcommand{\dttdruleRAndBarTwo}[1]{\dttdrule[#1]{%
}{
\nu \, \dttmv{z}  \dttsym{.}   \mathbf{in}_{ \dttsym{2} }\, \dttnt{t}   \mathbin{\Cdot[2]}  \dttsym{(}  \dttnt{t_{{\mathrm{1}}}}  \dttsym{,}  \dttnt{t_{{\mathrm{2}}}}  \dttsym{)}  \redto  \nu \, \dttmv{z}  \dttsym{.}  \dttnt{t}  \mathbin{\Cdot[2]}  \dttnt{t_{{\mathrm{2}}}}}{%
{\dttdrulename{RAndBar2}}{}%
}}
\newcommand{\dttdruleRRet}[1]{\dttdrule[#1]{%
\dttpremise{\dttmv{x} \, \not\in \, \mathsf{FV} \, \dttsym{(}  \dttnt{t}  \dttsym{)}}%
}{
\nu \, \dttmv{x}  \dttsym{.}  \dttnt{t}  \mathbin{\Cdot[2]}  \dttmv{x}  \redto  \dttnt{t}}{%
{\dttdrulename{RRet}}{}%
}}
\newcommand{\dttdruleRBetaR}[1]{\dttdrule[#1]{%
}{
\nu \, \dttmv{z}  \dttsym{.}  \dttnt{c}  \mathbin{\Cdot[2]}  \dttsym{(}  \nu \, \dttmv{x}  \dttsym{.}  \dttnt{t_{{\mathrm{1}}}}  \mathbin{\Cdot[2]}  \dttnt{t_{{\mathrm{2}}}}  \dttsym{)}  \redto  \nu \, \dttmv{z}  \dttsym{.}  \dttsym{[}  \dttnt{c}  \dttsym{/}  \dttmv{x}  \dttsym{]}  \dttnt{t_{{\mathrm{1}}}}  \mathbin{\Cdot[2]}  \dttsym{[}  \dttnt{c}  \dttsym{/}  \dttmv{x}  \dttsym{]}  \dttnt{t_{{\mathrm{2}}}}}{%
{\dttdrulename{RBetaR}}{}%
}}
\newcommand{\dttdruleRBetaL}[1]{\dttdrule[#1]{%
}{
\nu \, \dttmv{z}  \dttsym{.}  \dttsym{(}  \nu \, \dttmv{x}  \dttsym{.}  \dttnt{t_{{\mathrm{1}}}}  \mathbin{\Cdot[2]}  \dttnt{t_{{\mathrm{2}}}}  \dttsym{)}  \mathbin{\Cdot[2]}  \dttnt{t}  \redto  \nu \, \dttmv{z}  \dttsym{.}  \dttsym{[}  \dttnt{t}  \dttsym{/}  \dttmv{x}  \dttsym{]}  \dttnt{t_{{\mathrm{1}}}}  \mathbin{\Cdot[2]}  \dttsym{[}  \dttnt{t}  \dttsym{/}  \dttmv{x}  \dttsym{]}  \dttnt{t_{{\mathrm{2}}}}}{%
{\dttdrulename{RBetaL}}{}%
}}
\newcommand{\dttdruleClassAx}[1]{\dttdrule[#1]{%
}{
\Gamma  \dttsym{,}  \dttmv{x}  \dttsym{:}  \dttnt{p} \, \dttnt{A}  \dttsym{,}  \Gamma'  \vdash_c  \dttmv{x}  \dttsym{:}  \dttnt{p} \, \dttnt{A}}{%
{\dttdrulename{ClassAx}}{}%
}}
\newcommand{\dttdruleClassUnit}[1]{\dttdrule[#1]{%
}{
\Gamma  \vdash_c  \dttkw{triv}  \dttsym{:}  \dttnt{p} \,  \langle  \dttnt{p} \rangle }{%
{\dttdrulename{ClassUnit}}{}%
}}
\newcommand{\dttdruleClassAnd}[1]{\dttdrule[#1]{%
\dttpremise{ \Gamma  \vdash_c  \dttnt{t_{{\mathrm{1}}}}  \dttsym{:}  \dttnt{p} \, \dttnt{A}  \qquad  \Gamma  \vdash_c  \dttnt{t_{{\mathrm{2}}}}  \dttsym{:}  \dttnt{p} \, \dttnt{B} }%
}{
\Gamma  \vdash_c  \dttsym{(}  \dttnt{t_{{\mathrm{1}}}}  \dttsym{,}  \dttnt{t_{{\mathrm{2}}}}  \dttsym{)}  \dttsym{:}  \dttnt{p} \, \dttsym{(}   \dttnt{A}  \ndwedge{ \dttnt{p} }  \dttnt{B}   \dttsym{)}}{%
{\dttdrulename{ClassAnd}}{}%
}}
\newcommand{\dttdruleClassAndBar}[1]{\dttdrule[#1]{%
\dttpremise{\Gamma  \vdash_c  \dttnt{t}  \dttsym{:}  \dttnt{p} \,  \dttnt{A} _{ \dttnt{d} } }%
}{
\Gamma  \vdash_c   \mathbf{in}_{ \dttnt{d} }\, \dttnt{t}   \dttsym{:}  \dttnt{p} \, \dttsym{(}   \dttnt{A_{{\mathrm{1}}}}  \ndwedge{  \bar{  \dttnt{p}  }  }  \dttnt{A_{{\mathrm{2}}}}   \dttsym{)}}{%
{\dttdrulename{ClassAndBar}}{}%
}}
\newcommand{\dttdruleClassImp}[1]{\dttdrule[#1]{%
\dttpremise{\Gamma  \dttsym{,}  \dttmv{x}  \dttsym{:}  \dttnt{p} \, \dttnt{A}  \vdash_c  \dttnt{t}  \dttsym{:}  \dttnt{p} \, \dttnt{B}}%
}{
\Gamma  \vdash_c  \lambda  \dttmv{x}  \dttsym{.}  \dttnt{t}  \dttsym{:}  \dttnt{p} \, \dttsym{(}   \dttnt{A}  \ndto{ \dttnt{p} }  \dttnt{B}   \dttsym{)}}{%
{\dttdrulename{ClassImp}}{}%
}}
\newcommand{\dttdruleClassImpBar}[1]{\dttdrule[#1]{%
\dttpremise{ \Gamma  \vdash_c  \dttnt{t_{{\mathrm{1}}}}  \dttsym{:}   \bar{  \dttnt{p}  }  \, \dttnt{A}  \qquad  \Gamma  \vdash_c  \dttnt{t_{{\mathrm{2}}}}  \dttsym{:}  \dttnt{p} \, \dttnt{B} }%
}{
\Gamma  \vdash_c  \langle  \dttnt{t_{{\mathrm{1}}}}  \dttsym{,}  \dttnt{t_{{\mathrm{2}}}}  \rangle  \dttsym{:}  \dttnt{p} \, \dttsym{(}   \dttnt{A}  \ndto{  \bar{  \dttnt{p}  }  }  \dttnt{B}   \dttsym{)}}{%
{\dttdrulename{ClassImpBar}}{}%
}}
\newcommand{\dttdruleClassCut}[1]{\dttdrule[#1]{%
\dttpremise{\Gamma  \dttsym{,}  \dttmv{x}  \dttsym{:}   \bar{  \dttnt{p}  }  \, \dttnt{A}  \vdash_c  \dttnt{t_{{\mathrm{1}}}}  \dttsym{:}  \dttsym{+} \, \dttnt{B}}%
\dttpremise{\Gamma  \dttsym{,}  \dttmv{x}  \dttsym{:}   \bar{  \dttnt{p}  }  \, \dttnt{A}  \vdash_c  \dttnt{t_{{\mathrm{2}}}}  \dttsym{:}  \dttsym{-} \, \dttnt{B}}%
}{
\Gamma  \vdash_c  \nu \, \dttmv{x}  \dttsym{.}  \dttnt{t_{{\mathrm{1}}}}  \mathbin{\Cdot[2]}  \dttnt{t_{{\mathrm{2}}}}  \dttsym{:}  \dttnt{p} \, \dttnt{A}}{%
{\dttdrulename{ClassCut}}{}%
}}
\newcommand{\Ldrule}[4][]{{\displaystyle\frac{\begin{array}{l}#2\end{array}}{#3}\quad\Ldrulename{#4}}}
\newcommand{\Lpremise}[1]{ #1 \\}
\newenvironment{Ldefnblock}[3][]{ \framebox{\mbox{#2}} \quad #3 \\[0pt]}{}
\newcommand{\Lnt}[1]{\mathit{#1}}
\newcommand{\Lmv}[1]{\mathit{#1}}
\newcommand{\Lsym}[1]{#1}
\newcommand{\Ldrulename}[1]{\textsc{#1}}
\newcommand{\Ldrulerefl}[1]{\Ldrule[#1]{%
\Lpremise{ \Gamma  \vdash_{   G  \cup  \Lsym{\{}  \Lsym{(}  \Lmv{n}  \Lsym{,}  \Lmv{n}  \Lsym{)}  \Lsym{\}}   }  \Delta }%
}{
 \Gamma  \vdash_{ G }  \Delta }{%
{\Ldrulename{refl}}{}%
}}
\newcommand{\Ldruletrans}[1]{\Ldrule[#1]{%
\Lpremise{ \Lmv{n_{{\mathrm{1}}}}   G   \Lmv{n_{{\mathrm{2}}}} }%
\Lpremise{ \Lmv{n_{{\mathrm{2}}}}   G   \Lmv{n_{{\mathrm{3}}}} }%
\Lpremise{ \Gamma  \vdash_{   G  \cup  \Lsym{\{}  \Lsym{(}  \Lmv{n_{{\mathrm{1}}}}  \Lsym{,}  \Lmv{n_{{\mathrm{3}}}}  \Lsym{)}  \Lsym{\}}   }  \Delta }%
}{
 \Gamma  \vdash_{ G }  \Delta }{%
{\Ldrulename{trans}}{}%
}}
\newcommand{\Ldrulehyp}[1]{\Ldrule[#1]{%
}{
 \Gamma  \Lsym{,}  \Lmv{n}  \Lsym{:}  \Lnt{T}  \vdash_{ G }  \Lmv{n}  \Lsym{:}  \Lnt{T}  \Lsym{,}  \Delta }{%
{\Ldrulename{hyp}}{}%
}}
\newcommand{\LdrulemonL}[1]{\Ldrule[#1]{%
\Lpremise{ \Lmv{n}   G   \Lmv{n'} }%
\Lpremise{ \Gamma  \Lsym{,}  \Lmv{n}  \Lsym{:}  \Lnt{T}  \Lsym{,}  \Lmv{n'}  \Lsym{:}  \Lnt{T}  \vdash_{ G }  \Delta }%
}{
 \Gamma  \Lsym{,}  \Lmv{n}  \Lsym{:}  \Lnt{T}  \vdash_{ G }  \Delta }{%
{\Ldrulename{monL}}{}%
}}
\newcommand{\LdrulemonR}[1]{\Ldrule[#1]{%
\Lpremise{ \Lmv{n'}   G   \Lmv{n} }%
\Lpremise{ \Gamma  \vdash_{ G }  \Lmv{n'}  \Lsym{:}  \Lnt{T}  \Lsym{,}  \Lmv{n}  \Lsym{:}  \Lnt{T}  \Lsym{,}  \Delta }%
}{
 \Gamma  \vdash_{ G }  \Lmv{n}  \Lsym{:}  \Lnt{T}  \Lsym{,}  \Delta }{%
{\Ldrulename{monR}}{}%
}}
\newcommand{\LdruletrueL}[1]{\Ldrule[#1]{%
\Lpremise{ \Gamma  \vdash_{ G }  \Delta }%
}{
 \Gamma  \Lsym{,}  \Lmv{n}  \Lsym{:}   \top   \vdash_{ G }  \Delta }{%
{\Ldrulename{trueL}}{}%
}}
\newcommand{\LdruletrueR}[1]{\Ldrule[#1]{%
}{
 \Gamma  \vdash_{ G }  \Lmv{n}  \Lsym{:}   \top   \Lsym{,}  \Delta }{%
{\Ldrulename{trueR}}{}%
}}
\newcommand{\LdrulefalseL}[1]{\Ldrule[#1]{%
}{
 \Gamma  \Lsym{,}  \Lmv{n}  \Lsym{:}   \perp   \vdash_{ G }  \Delta }{%
{\Ldrulename{falseL}}{}%
}}
\newcommand{\LdrulefalseR}[1]{\Ldrule[#1]{%
\Lpremise{ \Gamma  \vdash_{ G }  \Delta }%
}{
 \Gamma  \vdash_{ G }  \Lmv{n}  \Lsym{:}   \perp   \Lsym{,}  \Delta }{%
{\Ldrulename{falseR}}{}%
}}
\newcommand{\LdruleandL}[1]{\Ldrule[#1]{%
\Lpremise{ \Gamma  \Lsym{,}  \Lmv{n}  \Lsym{:}  \Lnt{T_{{\mathrm{1}}}}  \Lsym{,}  \Lmv{n}  \Lsym{:}  \Lnt{T_{{\mathrm{2}}}}  \vdash_{ G }  \Delta }%
}{
 \Gamma  \Lsym{,}  \Lmv{n}  \Lsym{:}   \Lnt{T_{{\mathrm{1}}}}  \land  \Lnt{T_{{\mathrm{2}}}}   \vdash_{ G }  \Delta }{%
{\Ldrulename{andL}}{}%
}}
\newcommand{\LdruleandR}[1]{\Ldrule[#1]{%
\Lpremise{ \Gamma  \vdash_{ G }  \Lmv{n}  \Lsym{:}  \Lnt{T_{{\mathrm{1}}}}  \Lsym{,}  \Delta }%
\Lpremise{ \Gamma  \vdash_{ G }  \Lmv{n}  \Lsym{:}  \Lnt{T_{{\mathrm{2}}}}  \Lsym{,}  \Delta }%
}{
 \Gamma  \vdash_{ G }  \Lmv{n}  \Lsym{:}   \Lnt{T_{{\mathrm{1}}}}  \land  \Lnt{T_{{\mathrm{2}}}}   \Lsym{,}  \Delta }{%
{\Ldrulename{andR}}{}%
}}
\newcommand{\LdruledisjL}[1]{\Ldrule[#1]{%
\Lpremise{ \Gamma  \Lsym{,}  \Lmv{n}  \Lsym{:}  \Lnt{T_{{\mathrm{1}}}}  \vdash_{ G }  \Delta }%
\Lpremise{ \Gamma  \Lsym{,}  \Lmv{n}  \Lsym{:}  \Lnt{T_{{\mathrm{2}}}}  \vdash_{ G }  \Delta }%
}{
 \Gamma  \Lsym{,}  \Lmv{n}  \Lsym{:}   \Lnt{T_{{\mathrm{1}}}}  \lor  \Lnt{T_{{\mathrm{2}}}}   \vdash_{ G }  \Delta }{%
{\Ldrulename{disjL}}{}%
}}
\newcommand{\LdruledisjR}[1]{\Ldrule[#1]{%
\Lpremise{ \Gamma  \vdash_{ G }  \Lmv{n}  \Lsym{:}  \Lnt{T_{{\mathrm{1}}}}  \Lsym{,}  \Lmv{n}  \Lsym{:}  \Lnt{T_{{\mathrm{2}}}}  \Lsym{,}  \Delta }%
}{
 \Gamma  \vdash_{ G }  \Lmv{n}  \Lsym{:}   \Lnt{T_{{\mathrm{1}}}}  \lor  \Lnt{T_{{\mathrm{2}}}}   \Lsym{,}  \Delta }{%
{\Ldrulename{disjR}}{}%
}}
\newcommand{\LdruleimpL}[1]{\Ldrule[#1]{%
\Lpremise{ \Lmv{n}   G   \Lmv{n'} }%
\Lpremise{ \Gamma  \vdash_{ G }  \Lmv{n'}  \Lsym{:}  \Lnt{T_{{\mathrm{1}}}}  \Lsym{,}  \Delta }%
\Lpremise{ \Gamma  \Lsym{,}  \Lmv{n'}  \Lsym{:}  \Lnt{T_{{\mathrm{2}}}}  \vdash_{ G }  \Delta }%
}{
 \Gamma  \Lsym{,}  \Lmv{n}  \Lsym{:}   \Lnt{T_{{\mathrm{1}}}}  \supset  \Lnt{T_{{\mathrm{2}}}}   \vdash_{ G }  \Delta }{%
{\Ldrulename{impL}}{}%
}}
\newcommand{\LdruleimpR}[1]{\Ldrule[#1]{%
\Lpremise{ \Lmv{n'}  \not\in | G |,| \Gamma |,| \Delta | }%
\Lpremise{ \Gamma  \Lsym{,}  \Lmv{n'}  \Lsym{:}  \Lnt{T_{{\mathrm{1}}}}  \vdash_{   G  \cup  \Lsym{\{}  \Lsym{(}  \Lmv{n}  \Lsym{,}  \Lmv{n'}  \Lsym{)}  \Lsym{\}}   }  \Lmv{n'}  \Lsym{:}  \Lnt{T_{{\mathrm{2}}}}  \Lsym{,}  \Delta }%
}{
 \Gamma  \vdash_{ G }  \Lmv{n}  \Lsym{:}   \Lnt{T_{{\mathrm{1}}}}  \supset  \Lnt{T_{{\mathrm{2}}}}   \Lsym{,}  \Delta }{%
{\Ldrulename{impR}}{}%
}}
\newcommand{\LdrulesubL}[1]{\Ldrule[#1]{%
\Lpremise{ \Lmv{n'}  \not\in | G |,| \Gamma |,| \Delta | }%
\Lpremise{ \Gamma  \Lsym{,}  \Lmv{n'}  \Lsym{:}  \Lnt{T_{{\mathrm{1}}}}  \vdash_{   G  \cup  \Lsym{\{}  \Lsym{(}  \Lmv{n}  \Lsym{,}  \Lmv{n'}  \Lsym{)}  \Lsym{\}}   }  \Lmv{n'}  \Lsym{:}  \Lnt{T_{{\mathrm{2}}}}  \Lsym{,}  \Delta }%
}{
 \Gamma  \Lsym{,}  \Lmv{n'}  \Lsym{:}   \Lnt{T_{{\mathrm{1}}}}  \prec  \Lnt{T_{{\mathrm{2}}}}   \vdash_{ G }  \Delta }{%
{\Ldrulename{subL}}{}%
}}
\newcommand{\LdrulesubR}[1]{\Ldrule[#1]{%
\Lpremise{ \Lmv{n'}   G   \Lmv{n} }%
\Lpremise{ \Gamma  \vdash_{ G }  \Lmv{n'}  \Lsym{:}  \Lnt{T_{{\mathrm{1}}}}  \Lsym{,}  \Delta }%
\Lpremise{ \Gamma  \Lsym{,}  \Lmv{n'}  \Lsym{:}  \Lnt{T_{{\mathrm{2}}}}  \vdash_{ G }  \Delta }%
}{
 \Gamma  \vdash_{ G }  \Lmv{n}  \Lsym{:}   \Lnt{T_{{\mathrm{1}}}}  \prec  \Lnt{T_{{\mathrm{2}}}}   \Lsym{,}  \Delta }{%
{\Ldrulename{subR}}{}%
}}
\newcommand{\ifrName}[1]{\textsc{#1}}
\newcommand{\ndto}[1]{\to_{#1}}
\newcommand{\ndwedge}[1]{\wedge_{#1}}
\newcommand{\SN}[0]{\mathbf{SN}}
\begin{document}
\title{Dualized Simple Type Theory}

\author[H. Eades]{Harley Eades III}
\address{Augusta University\\ 2500 Walton Way\\ Augusta, GA 30904}
\email{heades@augusta.edu}
\author[A. Stump]{Aaron Stump}
\address{University of Iowa\\ 14 Maclean Hall\\ Iowa City, IA 52242-1419}
\email{aaron-stump@uiowa.edu}
\author[R. McCleeary]{Ryan McCleeary}
\address{University of Iowa\\ 14 Maclean Hall\\ Iowa City, IA 52242-1419}
\email{ryan-mcleeary@uiowa.edu}

\maketitle

\begin{abstract}
  We propose a new bi-intuitionistic type theory called Dualized Type
  Theory (DTT).  It is a simple type theory with perfect
  intuitionistic duality, and corresponds to a single-sided polarized
  sequent calculus.  We prove DTT strongly normalizing, and prove type
  preservation. DTT is based on a new propositional bi-intuitionistic
  logic called Dualized Intuitionistic Logic (DIL) that builds on
  Pinto and Uustalu's logic L.  DIL is a simplification of L by
  removing several admissible inference rules while maintaining
  consistency and completeness. Furthermore, DIL is defined using a
  dualized syntax by labeling formulas and logical connectives with
  polarities thus reducing the number of inference rules needed to
  define the logic. We give a direct proof of consistency, but prove
  completeness by reduction to L.
\end{abstract}

\section{Introduction}

The verification of software often requires the mixture of finite and
infinite data types.  The former are used to define tree-based
structures while the latter are used to define infinite stream-based
structures.  An example of a tree-based structure is a list or an AVL
tree. Infinite stream-based structures can be used to verify
properties of a software system over time or to verify liveness
properties of the system; see the introduction to \cite{Jacobs:2012}
for a great discussion of the use of co-induction to study software
systems.  An example of an infinite stream-based structure is an
infinitely branching tree, or an infinite list.

Finite tree-based structures can be modeled by inductive data types
while infinite stream-based structures can be modeled by coinductive
data types.  Thus, tool support for reasoning about the behavior of a
software system must provide both inductive data types as well as
coinductive data types, and allow for their mixture.  However, there
are problems with existing systems that do provide both inductive and
coinductive data types.  For example, Agda restricts how inductive and
coinductive types can be nested (see the discussion in
\cite{abel+13}), while Coq supports general mixed inductive and
coinductive data, but in doing so, sacrifices type preservation.
Therefore, what is the proper logical foundation to study the
relationships between inductive and coinductive data types?  By
studying such a foundation we may determine in what ways inductive and
coinductive data can be mixed without sacrificing expressivity or key
meta-theoretic properties.

One fairly obvious relationship between inductive and coinductive data
types is that they are duals to each other.  We believe that the
proper foundation for studying inductive and coinductive types must be
able to express this symmetry while maintaining constructivity.
It turns out that a constructive logical foundation may lie in an
already known constructive logic known as bi-intuitionistic logic.

Bi-intuitionistic logic (BINT)\footnote{We only consider propositional
  logic in this paper.  Note that first-order BINT is non-conservative
  over first-order intuitionistic logic
  \cite{Rauszer:1980,10.2307/2270260}, but we believe that
  second-order BINT is conservative over second-order intuitionistic
  logic, but we leave this to future work.} is a conservative
extension \cite{Crolard:2004} of intuitionistic logic with prefect
duality.  That is, every logical connective in the logic has a dual.
For example, BINT contains conjunction and disjunction, their units
true and false, but also implication and its dual called
co-implication (also known as subtraction, difference, or exclusion).

Co-implication is fairly unknown in computer science, but an intuition
of its meaning can be seen in its interpretation into Kripke models.
In \cite{Rauszer:1974,Rauszer:1980} Rauszer gives a conservative
extension of the Kripke semantics for intuitionistic logic that models
all of the logical connectives of BINT by introducing a new logical
connective for co-implication. The usual interpretation of implication
in a Kripke model is as follows:
\begin{center}
  \begin{math}
    \interp{A \to B}_w = \forall w'.w \leq w' \to \interp{A}_{w'} \to \interp{B}_{w'}
  \end{math}
\end{center}
Rauszer took the dual of the previous interpretation to obtain the following:
\begin{center}
  \begin{math}
    \interp{A - B}_w = \exists w'.w' \leq w \land \lnot\interp{A}_{w'} \land \interp{B}_{w'}
  \end{math}
\end{center}
The previous interpretation shows that implication considers future
worlds, while \\ co-implication considers past worlds.

We consider BINT logic to be the closest extension of intuitionistic
logic to classical logic while maintaining constructivity.  BINT has
two forms of negation, one defined as usual, $\lnot A
\stackrel{\mathsf{def}}{=} A \to \perp$, and a second defined in terms
of co-implication, ${\sim} A \stackrel{\mathsf{def}}{=} \top -
A$.  The latter we call ``non-$A$''.  Now in BINT it is possible to
prove $A \lor {\sim} A$ for any $A$ \cite{crolard01}.  In
fact, the latter, in a type theoretic setting, corresponds to the type
of a constructive control operator \cite{Crolard:2004}.

BINT is a conservative extension of intuitionistic logic, and hence
maintains constructivity, but contains a rich notion of symmetry
between the logical connectives.  Thus, any extension of a BINT logic
must preserve this symmetry, and hence, if we add inductive data
types, then we must also add co-inductive data types.  However, all of
this is premised on the ability to define a BINT type theory.

The contributions of this paper are a new formulation of Pinto and
Uustalu's BINT labeled sequent calculus L called Dualized
Intuitionistic Logic (DIL) and a corresponding type theory called
Dualized Type Theory (DTT).  DIL is a single-sided polarized
formulation of Pinto and Uustalu's L, thus, DIL is a propositional
bi-intuitionistic logic, and builds on L by removing the following
rules (see Section~\ref{sec:L} for a complete definition of L):
\begin{center}
  \begin{math}
    \begin{array}{lll}
      \Ldrulerefl{} & \Ldruletrans{}\\
      & \\
      \LdrulemonL{} & \LdrulemonR{}   
    \end{array}
  \end{math}
\end{center} 
We show that in the absence of the previous rules DIL still maintains
consistency (Theorem~\ref{thm:consistency}) and completeness
(Theorem~\ref{thm:completeness}).  Furthermore, DIL is defined using a
dualized syntax that reduces the number of inference rules needed to
define the logic.

Since DIL has multiple conclusions, and the active formula is on the
right, DIL must have a means of switching out the active formula with
another conclusion.  This is done in DIL using cuts on hypotheses. We
call these types of cuts ``axiom cuts.''  These axiom cuts show up in
non-trivial proofs like the proof of the axiom $A \lor {\sim} A$ for
any $A$ \cite{crolard01}. Furthermore, when the latter is treated as a
type in DTT, the inhabitant is a continuation without a canonical
form, because the inhabitant contains as a subexpression an axiom cut.
Thus, the presence of these continuations prevents the canonicity
result for a type theory -- like DTT -- from holding.  Thus, if
general cut elimination was a theorem of DIL, then $A \lor {\sim} A$
would not be provable.  So DIL must contain cuts that cannot be
eliminated.  This implies that DIL does not enjoy general cut
elimination, but all cuts other than axiom cuts can be
eliminated. Throughout the sequel we define ``cut elimination'' as the
elimination of all cuts other than axiom cuts, and we call DIL ``cut
free'' with respect to this definition of cut elimination. The latter
point is similar to Wadler's dual calculus \cite{Wadler:2005}.

The general form of a DIL sequent is $G  \dttsym{;}  \Gamma  \vdash  \dttnt{p} \, \dttnt{A}  \mathbin{@}  \dttnt{n}$ where
$\Gamma$ is a context, multiset of hypotheses of the form $\dttnt{p'} \, \dttnt{B}  \mathbin{@}  \dttnt{n'}$, $\dttnt{p}$ is a polarity that can be either $\dttsym{+}$ or
$\dttsym{-}$, and $\dttnt{n}$ is a node of the abstract Kripke graph $G$
which is a list of edges.  Think of $G$ as a list of constraints
on the accessibility relation in the Kripke semantics. The negative
hypotheses in $\Gamma$ are alternate conclusions.  In fact, if we
denote by $\Gamma^{\dttnt{p}}$ the subcontext of $\Gamma$ consisting of all
the hypotheses with polarity $\dttnt{p}$, then we can translate a DIL
sequent, $G  \dttsym{;}  \Gamma  \vdash  \dttnt{p} \, \dttnt{A}  \mathbin{@}  \dttnt{n}$, into the more traditional form, where
if $\dttnt{p} = \dttsym{+}$, then the sequent is equivalent to $G;\Gamma^+
\vdash \dttsym{+} \, \dttnt{A}  \mathbin{@}  \dttnt{n},\Gamma^-$, but if $\dttnt{p} = \dttsym{-}$, then the sequent
is equivalent to $G;\Gamma^+, \dttsym{-} \, \dttnt{A}  \mathbin{@}  \dttnt{n} \vdash \Gamma^-$.

The polarities provide two main properties of DIL and DTT.  The first,
which is more fundamental than the second, is the ability to single
out an active formula providing a single-conclusion perspective of a
multi-conclusion logic.  This is important if we want to obtain a type
theory in the traditional form: a single term on the right.  The
second main property is they provide a means of significantly reducing
the number of inference rules that define the logic.  Above we saw
that in $G  \dttsym{;}  \Gamma  \vdash  \dttsym{+} \, \dttnt{A}  \mathbin{@}  \dttnt{n}$ we think of $\dttnt{A}$ as being on the
right, but in $G  \dttsym{;}  \Gamma  \vdash  \dttsym{-} \, \dttnt{A}  \mathbin{@}  \dttnt{n}$ we think of $\dttnt{A}$ as being on
the left, and thus, if we index the logical operators of DIL with
polarities, for example in $ \dttnt{A}  \ndwedge{ \dttnt{p} }  \dttnt{B} $, we can collapse the left and
right rules into a single rule.  For example, $ \dttnt{A}  \ndwedge{ \dttsym{+} }  \dttnt{B} $ is
conjunction, but $ \dttnt{A}  \ndwedge{ \dttsym{-} }  \dttnt{B} $ is disjunction, and the right-rule for
conjunction mirrors the left-rule for disjunction, but we move from
the right to the left, but in DIL this is just a change in polarity.
The right-rule for conjunction and the left-rule for disjunction can
thus be given by the single rule:
\begin{center}
  $\dttdruleand{}$
\end{center}

A summary of our contributions is as follows:
\begin{itemize}
\item A new formulation of Pinto and Uustalu's BINT labeled sequent
  calculus L called Dualized Intuitionistic Logic (DIL),
\item a corresponding simple type theory called Dualized Type Theory
  (DTT),
\item a computer-checked proof -- in Agda -- of consistency for DIL with
  respect to Rauszer's Kripke semantics for BINT logic,
\item a completeness proof for DIL by reduction to Pinto and Uustalu's
  L, and
\item the basic metatheory for DTT: type preservation and strong normalization
for DTT.  We show the latter using a version of Krivine's classical
realizability by translating DIL into a classical logic.
\end{itemize}

\noindent The rest of this paper is organized as follows.  We first discuss
related work in Section~\ref{sec:related_work}.  Then we introduce
Pinto and Uustalu's L calculus in Section~\ref{sec:L}, and then DIL in
Section~\ref{sec:dualized_intuitionistic_logic_(dil)}. We present the
consistency proof for DIL in Section~\ref{subsec:consistency_of_dil},
and then show DIL is complete (with only axiom cuts) in
Section~\ref{subsec:completeness}.  Following DIL we introduce DTT in
Section~\ref{sec:dualized_type_theory_(dtt)}, and its metatheory in
Section~\ref{sec:metatheory_of_dtt}.  All of the mathematical content
of this paper was typeset with the help of Ott \cite{Sewell:2010}.

\section{Related Work}
\label{sec:related_work}

The main motivation for studying BINT is to use it to study the
mixture of inductive and co-inductive data types, but from a
constructive perspective. However, a natural question to ask is can
classical logic be used?  There has been a lot of work done since
Griffin's seminal paper \cite{Griffin:1990} showing that the type of
Peirce's law corresponds to a control operator, and thus, providing a
means of defining a program from any classical proof; for example see
\cite{Parigot:1992,Rehof:1994,Curien:2000,Wadler:2005}.  Kimura and
Tatsuta extend Wadler's Dual Calculus (DC) with inductive and
coinductive data types in \cite{kimura+09}.  The Dual Calculus was
invented by Wadler \cite{Wadler:2005}, and is a multi-conclusion
classical simple type theory based in sequent calculus instead of
natural deduction. DC only contains the logical operators conjunction,
disjunction, and negation.  Then he defines the other operators in terms
of these.  Thus, co-implication is defined, and not taken as a
primitive operator. Kimura and Tatsuta carry out a very similar
program to what we are proposing here.  They add inductive and
co-inductive types to DC, show that the rich symmetry of classical
logic extends to inductive and co-inductive types, and finally shows
how to embed this extension into the second-order extension of DC.
The starkest difference between their work, and the ultimate goals of
our program is that we wish to be as constructive as possible.  We
choose to do this, because we ultimately wish to extend our work to
dependent types, which we conjecture will be a goal more easily reached
in a constructive setting versus a classical setting.  Extending
control operators to dependent types is currently an open problem; for
example, general $\Sigma$-types cannot be mixed with control operators
\cite{Herbelin:2005}.

As we mentioned above BINT logic is fairly unknown in computer
science.  Crolard introduced a logic and corresponding type theory
called subtractive logic, and showed it can be used to study
constructive coroutines in \cite{crolard01,Crolard:2004}.  He
initially defined subtractive logic in sequent style with the Dragalin
restriction, and then defined the corresponding type theory in natural
deduction style by imposing a restriction on Parigot's
$\lambda\mu$-calculus in the form of complex dependency tracking.
Just as linear logicians have found -- for example in
\cite{Schellinx:1991} -- Pinto and Uustalu were able to show that
imposing the Dragalin restriction in subtractive logic results in a
failure of cut elimination \cite{Pinto:2009}.  They recover cut
elimination by proposing a new BINT logic called L that lifts the
Dragalin restriction by labeling formulas and sequents with nodes and
graphs respectively; this labeling corresponds to placing constraints
on the sequents where the graphs can be seen as abstract Kripke
models. Gor\'e et al. also proposed a new BINT logic that enjoys cut
elimination using nested sequents; however it is currently unclear how
to define a type theory with nested sequents
\cite{DBLP:conf/aiml/GorePT08}.  Bilinear logic in its intuitionistic
form is a linear version of BINT and has been studied by Lambek in
\cite{Lambek:1993,Lambek:1995}.  Biasi and Aschieri propose a term
assignment to polarized bi-intuitionistic logic in
\cite{Biasi:2008:TAP:2365856.2365859}.  One can view the polarities of
their logic as an internalization of the polarities of the logic we
propose in this article. Bellin has studied BINT similar to that of
Biasi and Aschieri from a philosophical perspective in
\cite{Bellin:2004,Bellin:2005,Bellin:2014}, and he defined a linear
version of Crolard's subtractive logic, for which he was able to
construct a categorical model using linear categories in
\cite{Bellin:2012}.

DIL sequents are labeled with an abstract Kripke graph that is
defined as a multiset of edges between abstract nodes -- labels
denoted $n$.  Then all formulas in a sequent are labeled with a node
from the graph, and the inference rules of DIL are restricted using
conditions on the graph and the nodes on formulas that are based on
the interpretation of formulas into the Kripke semantics.  This idea
in BINT logic comes from Pinto and Uustalu's L \cite{Pinto:2009}, but
their work was inspired by Negri's work on contraction and cut-free
modal logics \cite{Negri:2005}.

A system related to both L and DIL is Reed and Pfenning's labeled
intuitionistic logic with a restricted notion of control operators.
Their logic can also be seen as a restriction of classical logic by
labeling the formulas with strings of nodes representing a directed
path in the Kripke semantics.  That is, a formula is of the form
$A[p]$ where $p$ is a string of nodes where if $p = n_1n_2 \cdots
n_{i-1}n_i$ then we can intuitively think of $p$ as a path in the
Kripke semantics, and hence, $p$ represents the path
$R\,n_1\,(R\,n_2\,(\cdots(R\,n_{i-1}\,n_i)\cdots))$, where $R$ is the
accessibility relation.  One very interesting aspect of their natural
deduction formulation -- which has a term assignment -- is that it
contains the terms $\mathsf{throw}$ and $\mathsf{catch}$, which are
used to allow for multiple conclusions.  These give the logic some
control like operators intuitionisticly.  We conjecture that the
propositional fragment of Reed and Pfenning's system should be able to
be embedded into DIL fairly straightforwardly.  In fact,
$\mathsf{throw}$ and $\mathsf{catch}$ correspond to our axiom cuts
mentioned in Section~\ref{subsec:completeness}, which allows DIL to
switch between the multiple conclusions. Both L and DIL have a more
general labeling than Reed and Pfenning's system, because theirs only
speaks about a single path, and future worlds along that path, but L
and DIL allow one to talk about multiple different paths, and consider
both future and past worlds.

Similarly, to L, DIL, DTT, and Reed and Pfenning's logic Murphy et
al. use a labeling system that annotates formulas with worlds, and use
world constraints to restrict the logic \cite{Murphy:2005}.  They even
use the same syntax as DIL and DTT, that is, their formulas are
denoted by $A@w$, which stands for $A$ is true at the world $w$.  It
would be interesting to see if their work provides a means of
extending DIL and DTT to BINT modal logic.

An alternative approach to Pinto and Uustalu's L was given by Galmiche
and M{\'e}ry \cite{Galmiche:2011}.  They give a labeled sequent
calculus for BINT and a counter-model construction similar to L, but
they use a different method for constructing the labeled sequent
calculus called connection-based validity.  Their system uses a
different notion of graph called R-graphs that annotate sequents, but
these graphs are far more complex than the abstract Kripke graphs of
DIL and L.  M{\'e}ry et al. later implement an interactive theorem
prover for their system \cite{Daniel-Mery:2013}.

\vspace{-4.2px}
\section{Pinto and Uustalu's L}
\label{sec:L}
In this section we briefly introduce Pinto and Uustalu's L from
\cite{Pinto:2009}.  The syntax for formulas, graphs, and contexts of L
are defined in Figure~\ref{fig:L-syntax}, while the inference rules
are defined in Figure~\ref{fig:L-ifr}.
\begin{figure}
  \begin{center}
    \begin{math}
      \begin{array}{rrllllllllllllllllllll}
        (\text{formulas})   & \Lnt{A},\Lnt{B},\Lnt{C} & ::= &  \top \,|\, \perp \,|\, \Lnt{A}  \supset  \Lnt{B} \,|\, \Lnt{A}  \prec  \Lnt{B} \,|\, \Lnt{A}  \land  \Lnt{B} \,|\, \Lnt{A}  \lor  \Lnt{B} \\
        (\text{graphs})     & G & ::= &  \cdot \,|\,\Lsym{(}  \Lmv{n}  \Lsym{,}  \Lmv{n'}  \Lsym{)} \,|\,G  \Lsym{,}  G'\\
        (\text{contexts})   & \Gamma,\Delta & ::= &  \cdot \,|\, \Lmv{n}  \Lsym{:}  \Lnt{A} \,|\,\Gamma  \Lsym{,}  \Gamma'\\
        (\text{sequents})   & Q & ::= &  \Gamma  \vdash_{ G }  \Delta 
      \end{array}
    \end{math}
  \end{center}

  \caption{Syntax of L.}
  \label{fig:L-syntax}
\end{figure}
\begin{figure}
  \begin{mathpar}
    \Ldrulerefl{}   \and
    \Ldruletrans{}  \and
    \Ldrulehyp{}    \and
    \LdrulemonL{}   \and
    \LdrulemonR{}   \and
    \LdruletrueL{}  \and
    \LdruletrueR{}  \and
    \LdrulefalseL{} \and
    \LdrulefalseR{} \and
    \LdruleandL{}   \and
    \LdruleandR{}   \and
    \LdruledisjL{}  \and
    \LdruledisjR{}  \and
    \LdruleimpL{}   \and
    \LdruleimpR{}   \and
    \LdrulesubL{}   \and
    \LdrulesubR{}   
  \end{mathpar}
  
  \caption{Inference Rules for L.}
  \label{fig:L-ifr}
\end{figure} 
The formulas include true and false denoted $ \top $ and $ \perp $
respectively, implication and co-implication denoted $ \Lnt{A}  \supset  \Lnt{B} $ and
$ \Lnt{A}  \prec  \Lnt{B} $ respectively, and finally, conjunction and disjunction
denoted $ \Lnt{A}  \land  \Lnt{B} $ and $ \Lnt{A}  \lor  \Lnt{B} $ respectively.  So we can see
that for every logical connective its dual is a logical connective of
the logic.  This is what we meant by BINT containing perfect
intuitionistic duality in the introduction. Sequents have the form
$ \Gamma  \vdash_{ G }  \Delta $, where $\Gamma$ and $\Delta$ are multisets of formulas
$n : A$ labeled by a node $n$, $G$ is the abstract Kripke model
or sometimes referred to as simply the graph of the sequent, and
$\Lmv{n}$ is a node in $G$.

A graph is a multiset of directed edges where each edge is a pair of
nodes.  One should view these edges as constraints on the
accessibility relation in the Kripke semantics; see the interpretation
of graphs in Definition~\ref{def:graph_interp} and the definition
validity for L in Definition~\ref{def:L-counter-model}.  We denote
$\Lsym{(}  \Lmv{n_{{\mathrm{1}}}}  \Lsym{,}  \Lmv{n_{{\mathrm{2}}}}  \Lsym{)} \in G$ by $ \Lmv{n_{{\mathrm{1}}}}   G   \Lmv{n_{{\mathrm{2}}}} $.  Furthermore, we denote
the union of two graphs $G$ and $G'$ as $ G  \cup  G' $. Now
each formula present in a sequent is labeled with a node in the
graph.  This labeling is denoted $\Lmv{n}  \Lsym{:}  \Lnt{A}$ and should be read as the
formula $\Lnt{A}$ is true at the node $\Lmv{n}$.  We denote the operation
of constructing the list of nodes in a graph or context by $|G|$
and $|\Gamma|$ respectively. The reader should note that it is possible
for some nodes in the sequent to not appear in the graph.  For
example, the sequent $ \Lmv{n}  \Lsym{:}  \Lnt{A}  \vdash_{  \cdot  }  \Lmv{n}  \Lsym{:}  \Lnt{A}  \Lsym{,}   \cdot  $ is a derivable sequent.
The complete graph can always be recovered if needed by using the
graph structural rules $\Ldrulename{refl}$, $\Ldrulename{trans}$,
$\Ldrulename{monL}$, and $\Ldrulename{monR}$.

Consistency of L is stated in \cite{Pinto:2009} without a detailed
proof, but is proven complete with respect to Rauszer's Kripke
semantics using a counter model construction.  In
Section~\ref{sec:dualized_intuitionistic_logic_(dil)} we give a
translation of the formulas of L into the formulas of DIL
(Section~\ref{subsec:a_l_to_dil_translation}) and a translation in the
inverse direction (Section~\ref{subsec:a_dil_to_l_translation}), which
are both used to show completeness of DIL in
Section~\ref{subsec:completeness}.

\section{Dualized Intuitionistic Logic (DIL)}
\label{sec:dualized_intuitionistic_logic_(dil)}
The syntax for polarities, formulas, and graphs of DIL are defined in
Figure~\ref{fig:dil-syntax}, where $\dttkw{a}$ ranges over atomic
formulas.  The following definition shows that DIL's formulas are
simply polarized versions of L's formulas.
\begin{defi}
  \label{def:L-form-to-DIL-form}
  The following defines a translation of formulas of L to formulas of DIL:
  \begin{center}
    \begin{tabular}{cccccccccccccc}
      \begin{math}
      \begin{array}{lll}
         \mathsf{D}(  \top  )    & = &  \langle  \dttsym{+} \rangle \\
         \mathsf{D}(  \perp  )  & = &  \langle  \dttsym{-} \rangle \\
      \end{array}
    \end{math}
      & \ \   
      \begin{math}
      \begin{array}{lll}        
         \mathsf{D}(  \Lnt{A}  \land  \Lnt{B}  )  & = &   \mathsf{D}( \dttnt{A} )   \ndwedge{ \dttsym{+} }   \mathsf{D}( \dttnt{B} )  \\
         \mathsf{D}(  \Lnt{A}  \lor  \Lnt{B}  )  & = &   \mathsf{D}( \dttnt{A} )   \ndwedge{ \dttsym{-} }   \mathsf{D}( \dttnt{B} )  \\
      \end{array}
    \end{math}
      & \ \   
      \begin{math}
      \begin{array}{lll}
         \mathsf{D}(  \Lnt{A}  \supset  \Lnt{B}  )  & = &   \mathsf{D}( \dttnt{A} )   \ndto{ \dttsym{+} }   \mathsf{D}( \dttnt{B} )  \\
         \mathsf{D}(  \Lnt{B}  \prec  \Lnt{A}  )  & = &   \mathsf{D}( \dttnt{A} )   \ndto{ \dttsym{-} }   \mathsf{D}( \dttnt{B} )  \\
      \end{array}
    \end{math}
    \end{tabular}
  \end{center}
\end{defi}

We represent graphs as lists of edges denoted $\dttnt{n_{{\mathrm{1}}}} \,  \preccurlyeq_{ \dttnt{p} }  \, \dttnt{n_{{\mathrm{2}}}}$, where
we denote an edge from $\dttnt{n_{{\mathrm{1}}}}$ to $\dttnt{n_{{\mathrm{2}}}}$ by $\dttnt{n_{{\mathrm{1}}}} \,  \preccurlyeq_{ \dttsym{+} }  \, \dttnt{n_{{\mathrm{2}}}}$, and we
denote the edge from $\dttnt{n_{{\mathrm{2}}}}$ to $\dttnt{n_{{\mathrm{1}}}}$ by $\dttnt{n_{{\mathrm{1}}}} \,  \preccurlyeq_{ \dttsym{-} }  \, \dttnt{n_{{\mathrm{2}}}}$.  Lastly,
contexts denoted $\Gamma$ are represented as lists of formulas.
\begin{figure}[t]
  
  \begin{center}
    \begin{math}
      \begin{array}{rrllllllllllllllllllll}
        (\text{polarities}) & \dttnt{p} & ::= & \dttsym{+} \,|\, \dttsym{-}\\
        (\text{formulas})   & \dttnt{A},\dttnt{B},\dttnt{C} & ::= & \dttkw{a}\,|\, \langle  \dttnt{p} \rangle \,|\,  \dttnt{A}  \ndto{ \dttnt{p} }  \dttnt{B} \,|\, \dttnt{A}  \ndwedge{ \dttnt{p} }  \dttnt{B} \\
        (\text{graphs})     & G & ::= &  \cdot  \,|\, \dttnt{n} \,  \preccurlyeq_{ \dttnt{p} }  \, \dttnt{n'} \,|\, G  \dttsym{,}  G'\\
        (\text{contexts})   & \Gamma & ::= &  \cdot  \,|\, \dttnt{p} \, \dttnt{A}  \mathbin{@}  \dttnt{n} \,|\, \Gamma  \dttsym{,}  \Gamma'\\
        (\text{sequents})   & Q & ::= & G  \dttsym{;}  \Gamma  \vdash  \dttnt{p} \, \dttnt{A}  \mathbin{@}  \dttnt{n}
      \end{array}
    \end{math}
  \end{center}

  \caption{Syntax for DIL.}
  \label{fig:dil-syntax}
\end{figure}
Throughout the sequel we denote the opposite of a polarity $\dttnt{p}$ by
$ \bar{  \dttnt{p}  } $.  This is defined by $ \bar{  \dttsym{+}  }  = \dttsym{-}$ and $ \bar{  \dttsym{-}  } 
= +$.  The inference rules for DIL are in Figure~\ref{fig:dil-ifr}.
\begin{figure*}[t]
    \begin{mathpar}
        \dttdruleax{} \and
        \dttdruleunit{} \and
        \dttdruleand{} \and
        \dttdruleandBar{} \and
        \dttdruleimp{} \and
        \dttdruleimpBar{} \and        
        \dttdrulecut{} 
    \end{mathpar}
  
  \caption{Inference Rules for DIL.}
  \label{fig:dil-ifr}
\end{figure*}

The sequent has the form $G  \dttsym{;}  \Gamma  \vdash  \dttnt{p} \, \dttnt{A}  \mathbin{@}  \dttnt{n}$, which when $\dttnt{p}$ is
positive (resp. negative) can be read as the formula $\dttnt{A}$ is true
(resp. false) at node $\dttnt{n}$ in the context $\Gamma$ with respect to
the graph $G$.  Note that the metavariable $\dttnt{d}$ in the premise
of the $\dttdrulename{AndBar}$ rule ranges over the set $\{1,2\}$ and
prevents the need for two rules.  The inference rules depend on a
reachability judgment that provides a means of proving when a node is
reachable from another within some graph $G$.  This judgment is
defined in Figure~\ref{fig:dil-reach}.
\begin{figure*}
    \begin{mathpar}
      \dttdrulerelXXax{} \and
      \dttdrulerelXXrefl{} \and
      \dttdrulerelXXtrans{} \and
      \dttdrulerelXXflip{}
    \end{mathpar}
  
  \caption{Reachability Judgment for DIL.}
  \label{fig:dil-reach}
\end{figure*}
In addition, the $\ifrName{imp}$ rule depends on the operations 
$\dttsym{\mbox{$\mid$}}  G  \dttsym{\mbox{$\mid$}}$ and $\dttsym{\mbox{$\mid$}}  \Gamma  \dttsym{\mbox{$\mid$}}$ that simply compute the list of all 
the nodes in $G$ and $\Gamma$
respectively.  The condition $\dttnt{n'} \, \not\in \, \dttsym{\mbox{$\mid$}}  G  \dttsym{\mbox{$\mid$}}  \dttsym{,}  \dttsym{\mbox{$\mid$}}  \Gamma  \dttsym{\mbox{$\mid$}}$ in the
$\ifrName{imp}$ rule is required for consistency.

The most interesting inference rules of DIL are the rules for
implication and \\ co-implication from Figure~\ref{fig:dil-ifr}.  Let
us consider these two rules in detail. These rules mimic the
definitions of the interpretation of implication and co-implication in
a Kripke model.  The $\ifrName{imp}$ rule states that the formula $\dttnt{p} \, \dttsym{(}   \dttnt{A}  \ndto{ \dttnt{p} }  \dttnt{B}   \dttsym{)}$ holds at node $\dttnt{n}$ if $\dttnt{p} \, \dttnt{A}  \mathbin{@}  \dttnt{n'}$ holds at an
arbitrary node $\dttnt{n'}$ where we add a new edge $\dttnt{n} \,  \preccurlyeq_{ \dttnt{p} }  \, \dttnt{n'}$ to the
graph, then $\dttnt{p} \, \dttnt{B}  \mathbin{@}  \dttnt{n'}$ holds.  Notice that when $\dttnt{p}$ is
positive $\dttnt{n'}$ will be a future node, but when $\dttnt{p}$ is negative
$\dttnt{n'}$ will be a past node.  Thus, universally quantifying over past
and future worlds is modeled here by adding edges to the graph.  Now
the $\ifrName{impBar}$ rule states the formula $\dttnt{p} \, \dttsym{(}   \dttnt{A}  \ndto{  \bar{  \dttnt{p}  }  }  \dttnt{B}   \dttsym{)}$
is derivable if there exists a node $\dttnt{n'}$ that is provably
reachable from $\dttnt{n}$, $ \bar{  \dttnt{p}  }  \, \dttnt{A}$ is derivable at node $\dttnt{n'}$,
and $\dttnt{p} \, \dttnt{B}  \mathbin{@}  \dttnt{n'}$ is derivable at node $\dttnt{n'}$.  When $\dttnt{p}$ is
positive $\dttnt{n'}$ will be a past node, but when $\dttnt{p}$ is negative
$\dttnt{n'}$ will be a future node. This is exactly dual to
implication. Thus, existence of past and future worlds is modeled by
the reachability judgment.

Before moving on to proving consistency and completeness of DIL we
first show that the formula $ \dttnt{A}  \ndwedge{ \dttsym{-} }  \mathop{\sim}  \dttnt{A} $ has a proof in DIL that
contains a cut that cannot be eliminated.  This also serves as an
example of a derivation in DIL. Consider the following where we leave
off the reachability derivations for clarity and $\Gamma' \equiv
 \dttsym{-} \, \dttsym{(}   \dttnt{A}  \ndwedge{ \dttsym{-} }  \mathop{\sim}  \dttnt{A}   \dttsym{)}  \mathbin{@}  \dttnt{n}   \dttsym{,}   \dttsym{-} \, \dttnt{A}  \mathbin{@}  \dttnt{n} $:
\begin{center}  
  \tiny
  \begin{math}
      $$\mprset{flushleft}
      \inferrule* [right={\tiny andBar}] {
        $$\mprset{flushleft}
        \inferrule* [right={\tiny cut}] {
          $$\mprset{flushleft}
          \inferrule* [right={\tiny andBar}] {
            $$\mprset{flushleft}
            \inferrule* [right={\tiny impBar}] {
              $$\mprset{flushleft}
              \inferrule* [right=\tiny ax] {
                \,
              }{G  \dttsym{;}  \Gamma  \dttsym{,}  \Gamma'  \vdash  \dttsym{-} \, \dttnt{A}  \mathbin{@}  \dttnt{n}}
              \\
              $$\mprset{flushleft}
              \inferrule* [right={\tiny unit}] {
                \,
              }{G  \dttsym{;}  \Gamma  \dttsym{,}  \Gamma'  \vdash  \dttsym{+} \,  \langle  \dttsym{+} \rangle   \mathbin{@}  \dttnt{n}}
            }{G  \dttsym{;}  \Gamma  \dttsym{,}  \Gamma'  \vdash  \dttsym{+} \, \mathop{\sim}  \dttnt{A}  \mathbin{@}  \dttnt{n}}
          }{G  \dttsym{;}  \Gamma  \dttsym{,}  \Gamma'  \vdash  \dttsym{+} \, \dttsym{(}   \dttnt{A}  \ndwedge{ \dttsym{-} }  \mathop{\sim}  \dttnt{A}   \dttsym{)}  \mathbin{@}  \dttnt{n}}
          \\
          $$\mprset{flushleft}
          \inferrule* [right={\tiny ax}] {
            \,
          }{G  \dttsym{;}  \Gamma  \dttsym{,}  \Gamma'  \vdash  \dttsym{-} \, \dttsym{(}   \dttnt{A}  \ndwedge{ \dttsym{-} }  \mathop{\sim}  \dttnt{A}   \dttsym{)}  \mathbin{@}  \dttnt{n}}
        }{G  \dttsym{;}  \Gamma  \dttsym{,}  \dttsym{-} \, \dttsym{(}   \dttnt{A}  \ndwedge{ \dttsym{-} }  \mathop{\sim}  \dttnt{A}   \dttsym{)}  \mathbin{@}  \dttnt{n}  \vdash  \dttsym{+} \, \dttnt{A}  \mathbin{@}  \dttnt{n}}
      }{G  \dttsym{;}  \Gamma  \dttsym{,}  \dttsym{-} \, \dttsym{(}   \dttnt{A}  \ndwedge{ \dttsym{-} }  \mathop{\sim}  \dttnt{A}   \dttsym{)}  \mathbin{@}  \dttnt{n}  \vdash  \dttsym{+} \, \dttsym{(}   \dttnt{A}  \ndwedge{ \dttsym{-} }  \mathop{\sim}  \dttnt{A}   \dttsym{)}  \mathbin{@}  \dttnt{n}}
  \end{math}
\end{center}
Now using only an axiom cut we may conclude the following derivation:
\begin{center}
  \footnotesize
  \begin{math}
    $$\mprset{flushleft}
    \inferrule* [right=cut] {
      G  \dttsym{;}  \Gamma  \dttsym{,}  \dttsym{-} \, \dttsym{(}   \dttnt{A}  \ndwedge{ \dttsym{-} }  \mathop{\sim}  \dttnt{A}   \dttsym{)}  \mathbin{@}  \dttnt{n}  \vdash  \dttsym{+} \, \dttsym{(}   \dttnt{A}  \ndwedge{ \dttsym{-} }  \mathop{\sim}  \dttnt{A}   \dttsym{)}  \mathbin{@}  \dttnt{n}
      \\
      $$\mprset{flushleft}
      \inferrule* [right=ax] {
        \,
      }{G  \dttsym{;}  \Gamma  \dttsym{,}  \dttsym{-} \, \dttsym{(}   \dttnt{A}  \ndwedge{ \dttsym{-} }  \mathop{\sim}  \dttnt{A}   \dttsym{)}  \mathbin{@}  \dttnt{n}  \vdash  \dttsym{-} \, \dttsym{(}   \dttnt{A}  \ndwedge{ \dttsym{-} }  \mathop{\sim}  \dttnt{A}   \dttsym{)}  \mathbin{@}  \dttnt{n}}
    }{G  \dttsym{;}  \Gamma  \vdash  \dttsym{+} \, \dttsym{(}   \dttnt{A}  \ndwedge{ \dttsym{-} }  \mathop{\sim}  \dttnt{A}   \dttsym{)}  \mathbin{@}  \dttnt{n}}
  \end{math}
\end{center}
The reader should take notice to the fact that all cuts within the
previous two derivations are axiom cuts -- see the introduction to
Section~\ref{subsec:completeness} for the definition of axiom cuts --
where the inner most cut uses the hypothesis of the outer
cut. Therefore, neither can be eliminated.

\subsection{Consistency of DIL}
\label{subsec:consistency_of_dil}

In this section we prove consistency of DIL with respect to Rauszer's
Kripke semantics for BINT logic.  All of the results in this section
have been formalized in the Agda proof assistant\footnote{Agda source
  code is available at
  \url{https://github.com/heades/DIL-consistency}}.  We begin by first
defining a Kripke frame.

\begin{defi}
  \label{def:kripke_frame}
  A \textbf{Kripke frame} is a pair $(W, R)$ of a set of worlds $W$, and
  a preorder $R$ on $W$.  
\end{defi}
Then we extend the notion of a Kripke frame to include an evaluation for atomic
formulas resulting in a Kripke model.
\begin{defi}
  \label{def:kripke_model}
  A \textbf{Kripke model} is a tuple $(W, R, V)$, such that, $(W, R)$ is
  a Kripke frame, and $V$ is a binary monotone relation on $W$ and the
  set of atomic formulas of DIL.
\end{defi}
Now we can interpret formulas in a Kripke model as follows:
\begin{defi}
  \label{def:interpretation}
  The interpretation of the formulas of DIL in a Kripke model $(W, R,
  V)$ is defined by recursion on the structure of the formula as
  follows:
  \begin{center}
    \begin{tabular}{lll}
      \begin{math}
        \begin{array}{rll}
           \interp{  \langle  \dttsym{+} \rangle  }_{ \dttmv{w} }    & = &  \top \\
           \interp{  \langle  \dttsym{-} \rangle  }_{ \dttmv{w} }    & = &  \perp \\
           \interp{ \dttkw{a} }_{ \dttmv{w} }          & = &  V\, \dttmv{w} \, \dttkw{a} \\  
        \end{array}
      \end{math}
      &
      \begin{math}
        \begin{array}{rll}      
           \interp{  \dttnt{A}  \ndwedge{ \dttsym{+} }  \dttnt{B}  }_{ \dttmv{w} }  & = &  \interp{ \dttnt{A} }_{ \dttmv{w} }   \land  \interp{ \dttnt{B} }_{ \dttmv{w} } \\    
           \interp{  \dttnt{A}  \ndwedge{ \dttsym{-} }  \dttnt{B}  }_{ \dttmv{w} }  & = &  \interp{ \dttnt{A} }_{ \dttmv{w} }   \lor  \interp{ \dttnt{B} }_{ \dttmv{w} } \\    
           \interp{  \dttnt{A}  \ndto{ \dttsym{+} }  \dttnt{B}  }_{ \dttmv{w} }  & = & \forall w' \in W.  R\, \dttmv{w} \, \dttmv{w'}  \to  \interp{ \dttnt{A} }_{ \dttmv{w'} }  \to  \interp{ \dttnt{B} }_{ \dttmv{w'} } \\
           \interp{  \dttnt{A}  \ndto{ \dttsym{-} }  \dttnt{B}  }_{ \dttmv{w} }  & = & \exists w' \in W.  R\, \dttmv{w'} \, \dttmv{w}  \land \lnot  \interp{ \dttnt{A} }_{ \dttmv{w'} }  \land  \interp{ \dttnt{B} }_{ \dttmv{w'} } \\
        \end{array}
      \end{math}
    \end{tabular}
  \end{center}
\end{defi}
The interpretation of formulas really highlights the fact that implication is dual to co-implication.  Monotonicity
holds for this interpretation.

\begin{lem}[Monotonicity]
  \label{lemma:monotonicity}
  Suppose $(W, R, V)$ is a Kripke model, $\dttnt{A}$ is some DIL formula, and $w,w' \in W$.
  Then $ R\, \dttmv{w} \, \dttmv{w'} $ and $ \interp{ \dttnt{A} }_{ \dttmv{w} } $ imply $ \interp{ \dttnt{A} }_{ \dttmv{w'} } $.
\end{lem}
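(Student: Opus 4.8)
The plan is to prove the statement by structural induction on the DIL formula $\dttnt{A}$, in each case unfolding the interpretation from Definition~\ref{def:interpretation}. Throughout I use that $R$ is a preorder (Definition~\ref{def:kripke_frame}), so in particular transitive, together with the monotonicity of the valuation $V$ that is built into the notion of Kripke model (Definition~\ref{def:kripke_model}). The hypotheses $R\,w\,w'$ and $\interp{\dttnt{A}}_w$ are carried through each case, and the goal is always $\interp{\dttnt{A}}_{w'}$.

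For the base cases there is little to do. When $\dttnt{A} = \langle \dttsym{+}\rangle$ the interpretation is $\top$, which holds at $w'$ regardless; when $\dttnt{A} = \langle \dttsym{-}\rangle$ the interpretation is $\perp$, so the hypothesis $\interp{\dttnt{A}}_w = \perp$ is absurd and the conclusion follows vacuously; and when $\dttnt{A} = \dttkw{a}$ is atomic the goal is to pass from $V\,w\,\dttkw{a}$ to $V\,w'\,\dttkw{a}$, which is exactly the monotonicity of $V$ applied along $R\,w\,w'$.

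The (co-)conjunction cases are where the induction hypotheses are actually consumed. If $\dttnt{A} =  \dttnt{A_{{\mathrm{1}}}}  \ndwedge{ \dttsym{+} }  \dttnt{A_{{\mathrm{2}}}} $, then $\interp{\dttnt{A}}_w =  \interp{\dttnt{A_{{\mathrm{1}}}}}_w \land \interp{\dttnt{A_{{\mathrm{2}}}}}_w$, and the induction hypotheses applied to $\dttnt{A_{{\mathrm{1}}}}$ and to $\dttnt{A_{{\mathrm{2}}}}$ transport both conjuncts to $w'$. If $\dttnt{A} =  \dttnt{A_{{\mathrm{1}}}}  \ndwedge{ \dttsym{-} }  \dttnt{A_{{\mathrm{2}}}} $, then $\interp{\dttnt{A}}_w = \interp{\dttnt{A_{{\mathrm{1}}}}}_w \lor \interp{\dttnt{A_{{\mathrm{2}}}}}_w$, and I case-split on which disjunct holds at $w$, transporting it to $w'$ by the corresponding induction hypothesis.

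The (co-)implication cases are the ones worth pausing on, and pleasantly they need no induction hypothesis at all, only transitivity of $R$; the one thing to get right is that the edge $R\,w\,w'$ composes with the quantified edge in the correct direction. If $\dttnt{A} =  \dttnt{A_{{\mathrm{1}}}}  \ndto{ \dttsym{+} }  \dttnt{A_{{\mathrm{2}}}} $, the goal $\interp{\dttnt{A}}_{w'}$ requires that for every $w''$ with $R\,w'\,w''$ we have $\interp{\dttnt{A_{{\mathrm{1}}}}}_{w''} \to \interp{\dttnt{A_{{\mathrm{2}}}}}_{w''}$; given such a $w''$, transitivity with $R\,w\,w'$ yields $R\,w\,w''$, and then $\interp{\dttnt{A}}_w$ supplies the required implication at $w''$. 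Dually, if $\dttnt{A} =  \dttnt{A_{{\mathrm{1}}}}  \ndto{ \dttsym{-} }  \dttnt{A_{{\mathrm{2}}}} $, then $\interp{\dttnt{A}}_w$ provides a past witness $w''$ with $R\,w''\,w$, $\lnot\interp{\dttnt{A_{{\mathrm{1}}}}}_{w''}$, and $\interp{\dttnt{A_{{\mathrm{2}}}}}_{w''}$; the same $w''$ witnesses $\interp{\dttnt{A}}_{w'}$, since transitivity of $R\,w''\,w$ with $R\,w\,w'$ gives $R\,w''\,w'$ while the remaining two conjuncts are untouched. Thus the only genuine subtlety is the future-directed versus past-directed use of transitivity; once those directions are lined up, both cases close immediately, and the induction is complete.
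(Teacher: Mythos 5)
Your proof is correct, and since the paper defers this lemma to its Agda formalization without printing a proof, your structural induction on $\dttnt{A}$ — using monotonicity of $V$ for atoms, the induction hypotheses only for the (co-)conjunction cases, and transitivity of $R$ (in the two opposite directions) for the (co-)implication cases — is exactly the intended argument. The observation that the implication cases need no induction hypothesis is accurate and correctly handled.
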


At this point we must set up the mathematical machinery that allows
for the interpretation of sequents in a Kripke model.  This will
require the interpretation of graphs, and hence, nodes.  We interpret
nodes as worlds in the model using a function we call a node
interpreter.
\begin{defi}
  \label{def:node_interpreter}
  Suppose $(W, R, V)$ is a Kripke model and $S$ is a set of nodes of
  an abstract Kripke model $G$.  Then a \textbf{node interpreter} on
  $S$ is a function from $S$ to $W$.
\end{defi}
Now using the node interpreter we can interpret edges as statements
about the reachability relation in the model.  Thus, the
interpretation of a graph is just the conjunction of the
interpretation of its edges.
\begin{defi}
  \label{def:graph_interp}
  Suppose $(W, R, V)$ is a Kripke model, $G$ is an abstract Kripke
  model, and $ N $ is a node interpreter on the set of nodes of $G$.
  Then the interpretation of
  $G$ in the Kripke model is defined by recursion on the structure
  of the graph as follows:
  \[
  \begin{array}{lll}
     \interp{  \cdot  }_{N}             & = &  \top \\
     \interp{ \dttnt{n_{{\mathrm{1}}}} \,  \preccurlyeq_{ \dttsym{+} }  \, \dttnt{n_{{\mathrm{2}}}}  \dttsym{,}  G }_{N}  & = &  R\, \dttsym{(}   N\, \dttnt{n_{{\mathrm{1}}}}   \dttsym{)} \, \dttsym{(}   N\, \dttnt{n_{{\mathrm{2}}}}   \dttsym{)}  \land  \interp{ G }_{N} \\
     \interp{ \dttnt{n_{{\mathrm{1}}}} \,  \preccurlyeq_{ \dttsym{-} }  \, \dttnt{n_{{\mathrm{2}}}}  \dttsym{,}  G }_{N}  & = &  R\, \dttsym{(}   N\, \dttnt{n_{{\mathrm{2}}}}   \dttsym{)} \, \dttsym{(}   N\, \dttnt{n_{{\mathrm{1}}}}   \dttsym{)}  \land  \interp{ G }_{N} \\
  \end{array}
  \]
\end{defi}
The reachability judgment of DIL provides a means to prove that two
particular nodes are reachable in the abstract Kripke graph, but this
proof is really just a syntactic proof of transitivity. The following
lemma makes this precise.
\begin{lem}[Reachability Interpretation]
  \label{lemma:reachability_interpretation}
  Suppose $(W, R, V)$ is a Kripke model, and $ \interp{ G }_{N} $ for some abstract Kripke graph $G$. Then
  \begin{enumerate}[label=\roman*.]
  \item[i.] if $G  \vdash  \dttnt{n_{{\mathrm{1}}}} \,  \preccurlyeq_{ \dttsym{+} }  \, \dttnt{n_{{\mathrm{2}}}}$, then $ R\, \dttsym{(}   N\, \dttnt{n_{{\mathrm{1}}}}   \dttsym{)} \, \dttsym{(}   N\, \dttnt{n_{{\mathrm{2}}}}   \dttsym{)} $, and
  \item[ii.] if $G  \vdash  \dttnt{n_{{\mathrm{1}}}} \,  \preccurlyeq_{ \dttsym{-} }  \, \dttnt{n_{{\mathrm{2}}}}$, then $ R\, \dttsym{(}   N\, \dttnt{n_{{\mathrm{2}}}}   \dttsym{)} \, \dttsym{(}   N\, \dttnt{n_{{\mathrm{1}}}}   \dttsym{)} $.
  \end{enumerate}
\end{lem}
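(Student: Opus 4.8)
The plan is to proceed by structural induction on the derivation of the reachability judgment $G \vdash n_1 \preccurlyeq^*_{p} n_2$, analysing the last rule applied. The essential observation is that the flip rule relates a $\preccurlyeq^*_{p}$ goal to a $\preccurlyeq^*_{\bar{p}}$ premise, so parts (i) and (ii) cannot be proved in isolation; I would instead establish them by a single simultaneous induction. To keep the bookkeeping uniform I would write $R_p\,x\,y$ for the proposition $R\,x\,y$ when $p = +$ and for $R\,y\,x$ when $p = -$, so that both parts collapse into the one claim that $\interp{G}_N$ together with $G \vdash n_1 \preccurlyeq^*_{p} n_2$ implies $R_p\,(N\,n_1)\,(N\,n_2)$. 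The single algebraic identity I would record before starting is $R_{\bar{p}}\,x\,y = R_p\,y\,x$, which is immediate from the two-case definition and is exactly what the flip case will consume.

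First I would treat the axiom rule, where the edge $n_1 \preccurlyeq_{p} n_2$ occurs in $G$. Since $\interp{G}_N$ is, by Definition~\ref{def:graph_interp}, a conjunction of the interpretations of the individual edges of $G$, a trivial auxiliary induction on the structure of $G$ extracts the conjunct belonging to this edge, and that conjunct is by definition $R\,(N\,n_1)\,(N\,n_2)$ when $p = +$ and $R\,(N\,n_2)\,(N\,n_1)$ when $p = -$, i.e.\ exactly $R_p\,(N\,n_1)\,(N\,n_2)$. The reflexivity rule is discharged directly by reflexivity of the preorder $R$ (Definition~\ref{def:kripke_frame}), since $R_p\,(N\,n)\,(N\,n)$ unfolds to $R\,(N\,n)\,(N\,n)$ for either polarity. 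For the transitivity rule the two induction hypotheses give $R_p\,(N\,n_1)\,(N\,n')$ and $R_p\,(N\,n')\,(N\,n'')$; for $p = +$ transitivity of $R$ chains these directly, while for $p = -$ it chains the reversed facts $R\,(N\,n')\,(N\,n_1)$ and $R\,(N\,n'')\,(N\,n')$ to yield $R\,(N\,n'')\,(N\,n_1)$, which is precisely $R_p\,(N\,n_1)\,(N\,n'')$.

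The only case carrying any content is the flip rule, whose premise is a reachability judgment at the opposite polarity. Here the induction hypothesis supplies $R_{\bar{p}}\,(N\,n_2)\,(N\,n_1)$, while the goal is $R_p\,(N\,n_1)\,(N\,n_2)$; these two propositions are literally the same by the identity $R_{\bar{p}}\,x\,y = R_p\,y\,x$ recorded at the outset, so the case closes immediately. I expect no step to be genuinely difficult: the whole argument is powered by reflexivity and transitivity of the preorder $R$ plus the polarity identity. The subtle point --- and the reason the lemma bundles both polarities together --- is exactly the flip rule, which forces the simultaneous induction, since a proof attempting (i) alone would stall the moment it met a flip whose premise lives at polarity $-$.
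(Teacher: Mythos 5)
Your proof is correct and is essentially the intended argument: the paper gives no written proof for this lemma (it is discharged in the accompanying Agda formalization), but the evident proof is exactly your simultaneous induction on the reachability derivation, using the conjunctive structure of $\interp{G}_N$ for the axiom case, reflexivity and transitivity of the preorder $R$ for those rules, and the polarity-swap identity for the flip rule. Your observation that (i) and (ii) must be proved together because the flip rule crosses polarities is the one genuinely necessary insight, and you have it.
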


We have everything we need to interpret abstract Kripke models. The
final ingredient to the interpretation of sequents is the
interpretation of contexts.
\begin{defi}
  \label{def:pol_interp}
  If $F$ is some meta-logical formula, we define $p\, F$ as follows:
  \[
  \begin{array}{lll}
    \dttsym{+}\, F = F & \text{ and } &
    \dttsym{-}\, F = \lnot F.\\
  \end{array}
  \]
\end{defi}
\begin{defi}
  \label{def:ctx_interp}
  Suppose $(W, R, V)$ is a Kripke model, $\Gamma$ is a context, 
  and $ N $ is a node interpreter on the set of nodes in $\Gamma$.
  The interpretation of
  $\Gamma$ in the Kripke model is defined by recursion on the structure
  of the context as follows:
  \[
  \begin{array}{lll}
     \interp{  \cdot  }_{N}         & = &  \top \\
     \interp{ \dttnt{p} \, \dttnt{A}  \mathbin{@}  \dttnt{n}  \dttsym{,}  \Gamma }_{N}  & = &  \dttnt{p}  \interp{ \dttnt{A} }_{ \dttsym{(}   N\, \dttnt{n}   \dttsym{)} }  \land  \interp{ \Gamma }_{N} \\
  \end{array}
  \]
\end{defi}
Combining these interpretations results in the following
definition of validity.
\begin{defi}
  \label{def:validity}
  Suppose $(W, R, V)$ is a Kripke model, $\Gamma$ is a context, 
  and $ N $ is a node interpreter on the set of nodes in $\Gamma$.
  The interpretation of sequents is defined as follows:
  \[
  \begin{array}{lll}
     \interp{ G  \dttsym{;}  \Gamma  \vdash  \dttnt{p} \, \dttnt{A}  \mathbin{@}  \dttnt{n} }_{N}  & = & \text{ if }  \interp{ G }_{N}  \text{ and }  \interp{ \Gamma }_{N} , \text{ then }  \dttnt{p}  \interp{ \dttnt{A} }_{ \dttsym{(}   N\, \dttnt{n}   \dttsym{)} } .
  \end{array}
  \]
  Then a sequent $G  \dttsym{;}  \Gamma  \vdash  \dttnt{p} \, \dttnt{A}  \mathbin{@}  \dttnt{n}$ is valid when
  $ \interp{ G  \dttsym{;}  \Gamma  \vdash  \dttnt{p} \, \dttnt{A}  \mathbin{@}  \dttnt{n} }_{N} $ holds for any $N$ and in any Kripke model.
\end{defi}
Notice that in the definition of validity the graph $G$ is
interpreted as a set of constraints imposed on the set of Kripke models, thus
reinforcing the fact that the graphs on sequents really are abstract
Kripke models.  Finally, using the previous definition of validity we
can prove consistency.
\begin{thm}[Consistency]
  \label{thm:consistency}
  Suppose $G  \dttsym{;}  \Gamma  \vdash  \dttnt{p} \, \dttnt{A}  \mathbin{@}  \dttnt{n}$. 
  Then for any Kripke model $(W, R, V)$ and node interpreter $N$ on $|G|$, $ \interp{ G  \dttsym{;}  \Gamma  \vdash  \dttnt{p} \, \dttnt{A}  \mathbin{@}  \dttnt{n} }_{N} $.
\end{thm}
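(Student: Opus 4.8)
The plan is to prove the statement by induction on the derivation of $G  \dttsym{;}  \Gamma  \vdash  \dttnt{p} \, \dttnt{A}  \mathbin{@}  \dttnt{n}$. Unfolding Definition~\ref{def:validity}, it suffices to fix an arbitrary Kripke model $(W,R,V)$ and node interpreter $N$, assume $\interp{G}_N$ and $\interp{\Gamma}_N$, and establish $\dttnt{p}\,\interp{\dttnt{A}}_{(N\,\dttnt{n})}$ in the sense of Definition~\ref{def:pol_interp}. The induction splits on the last rule applied, and the argument is carried out in a classical metatheory, as is forced by the classical reading of $\lnot$ and $\exists$ in Definition~\ref{def:interpretation}.

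The propositional cases are immediate unfoldings. For \ifrName{unit} one checks that $\dttsym{+}\,\interp{\langle \dttsym{+} \rangle}_w = \top$ and $\dttsym{-}\,\interp{\langle \dttsym{-} \rangle}_w = \lnot\perp$ both hold. For \ifrName{and} and \ifrName{andBar} the interpretation of $\ndwedge{\dttnt{p}}$ switches between $\land$ and $\lor$ with the polarity: a positive conclusion conjoins the two inductive hypotheses, while a negative conclusion (whose interpretation is a disjunction) is refuted by refuting each disjunct, and dually for \ifrName{andBar} using the single inductive hypothesis selected by $\dttnt{d}$. The \ifrName{ax} case is where monotonicity first enters: from $G  \vdash  \dttnt{n} \,  \preccurlyeq^*_{ \dttnt{p} }  \, \dttnt{n'}$ and $\interp{G}_N$, Lemma~\ref{lemma:reachability_interpretation} yields $R\,(N\,\dttnt{n})\,(N\,\dttnt{n'})$ when $\dttnt{p}=\dttsym{+}$ and $R\,(N\,\dttnt{n'})\,(N\,\dttnt{n})$ when $\dttnt{p}=\dttsym{-}$; feeding the assumption $\dttnt{p}\,\interp{\dttnt{A}}_{(N\,\dttnt{n})}$ extracted from $\interp{\Gamma}_N$ through Lemma~\ref{lemma:monotonicity} (forwards in the positive case, contrapositively in the negative case) gives $\dttnt{p}\,\interp{\dttnt{A}}_{(N\,\dttnt{n'})}$.

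The key case is \ifrName{imp}, where the interpretation of $\ndto{\dttnt{p}}$ quantifies over worlds and the freshness side condition $\dttnt{n'}\not\in|G|,|\Gamma|$ is exactly what lets an arbitrary world be fed back into the inductive hypothesis. For $\dttnt{p}=\dttsym{+}$ one fixes $w'$ with $R\,(N\,\dttnt{n})\,w'$ and $\interp{\dttnt{A}}_{w'}$ and extends $N$ to $N'=N[\dttnt{n'}\mapsto w']$; freshness guarantees $N'$ agrees with $N$ on $G$ and $\Gamma$, so $\interp{(G,\dttnt{n}\, \preccurlyeq_{\dttsym{+}}\,\dttnt{n'})}_{N'}$ and $\interp{(\Gamma,\dttsym{+}\,\dttnt{A}\mathbin{@}\dttnt{n'})}_{N'}$ both hold, and the inductive hypothesis returns $\interp{\dttnt{B}}_{w'}$. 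The case $\dttnt{p}=\dttsym{-}$ is the exact dual, deriving $\bot$ from a hypothetical witness to $\interp{\dttnt{A} \ndto{\dttsym{-}} \dttnt{B}}_{(N\,\dttnt{n})}$. The \ifrName{impBar} case runs in the opposite direction: Lemma~\ref{lemma:reachability_interpretation} converts $G  \vdash  \dttnt{n} \,  \preccurlyeq^*_{ \bar{\dttnt{p}} }  \, \dttnt{n'}$ into the reachability fact that makes $N\,\dttnt{n'}$ the existential witness (for $\dttnt{p}=\dttsym{+}$) or the counterexample to the universal (for $\dttnt{p}=\dttsym{-}$), with the two inductive hypotheses supplying the $\dttnt{A}$- and $\dttnt{B}$-components of the appropriate polarity.

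Finally, \ifrName{cut} is where classicality is unavoidable. Instantiating the two premises at $N$ under the additional hypothesis $\bar{\dttnt{p}}\,\interp{\dttnt{A}}_{(N\,\dttnt{n})}$ yields $\interp{\dttnt{B}}_{(N\,\dttnt{n'})}$ and $\lnot\interp{\dttnt{B}}_{(N\,\dttnt{n'})}$, hence $\bot$; so under $\interp{G}_N$ and $\interp{\Gamma}_N$ we obtain $\lnot\,\bar{\dttnt{p}}\,\interp{\dttnt{A}}_{(N\,\dttnt{n})}$. Since negating $\bar{\dttnt{p}}\,\interp{\dttnt{A}}$ reproduces $\dttnt{p}\,\interp{\dttnt{A}}$ outright when $\dttnt{p}=\dttsym{-}$, that subcase closes immediately; but when $\dttnt{p}=\dttsym{+}$ it yields only $\lnot\lnot\interp{\dttnt{A}}_{(N\,\dttnt{n})}$, so double-negation elimination is needed to conclude $\interp{\dttnt{A}}_{(N\,\dttnt{n})}$. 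I therefore expect the \ifrName{cut} case --- and specifically the discharge of this double negation, which is handled by a classical postulate in the Agda development --- to be the main obstacle; the node-interpreter extension in \ifrName{imp} is the main technical, rather than conceptual, difficulty.
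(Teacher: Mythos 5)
Your proposal is correct and follows exactly the route the paper intends: the paper gives no written proof of Theorem~\ref{thm:consistency} (deferring to its Agda formalization), but the supporting lemmas it does state --- Monotonicity (Lemma~\ref{lemma:monotonicity}) for the \ifrName{ax} case and Reachability Interpretation (Lemma~\ref{lemma:reachability_interpretation}) for \ifrName{ax}/\ifrName{impBar} --- are precisely the ones your induction on the derivation uses, and your handling of the fresh-node extension of $N$ in \ifrName{imp} and of the double negation arising in \ifrName{cut} (which does force a classical postulate in the formalization) is accurate.
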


\subsection{Completeness of DIL}
\label{subsec:completeness}

DIL has a tight correspondence with Pinto and Uustalu's L.  In
\cite{Pinto:2009} it is shown that L is complete with respect to
Kripke models using a counter-model construction; see Corollary~1
on p. 13 of ibid.  We will exploit their completeness result to show that
DIL is complete. First, we will give a pair of translations: one from
L to DIL (Definition~\ref{def:L-to-DIL}), and one from DIL to L
(Definition~\ref{def:DIL-form-to-L-form}).  Using these translations
we will show that if a L-sequent\footnote{We will call a sequent in L
  a L-sequent and a sequent in DIL a DIL-sequent.} is derivable, then
its translation to DIL is also derivable
(Lemma~\ref{lemma:containment-l-in-dil}), and vice versa
(Lemma~\ref{lemma:containment_of_dil_in_l}).  Next we will relate
validity of DIL with validity of L, and show that if a DIL-sequent is
valid with respect to the semantics of DIL, then its translation to L
is valid with respect to the semantics of L
(Lemma~\ref{lemma:dil-validity_is_l-validity}).  Finally, we can use
the previous result to show completeness of DIL
(Theorem~\ref{thm:completeness}).

Throughout this section we assume without loss of generality that all
L-sequents have non-empty right-hand sides.  That is, for every
L-sequent, $ \Gamma  \vdash_{ G }  \Delta $, we assume that $\Delta \not=  \cdot $.  We do
not loose generality because it is possible to prove that
$ \Gamma  \vdash_{ G }   \cdot  $ holds if and only if $ \Gamma  \vdash_{ G }  \Lmv{n}  \Lsym{:}   \perp  $ for any node $\Lmv{n}$
(proof omitted).

We proved DIL consistent when DIL contained the general cut rule, but
we prove DIL complete when the cut rule has been replaced with the
following two inference rules, which can be seen as restricted
instances of the cut rule:
\begin{center}
  \footnotesize
  \begin{math}
    \begin{array}{c}
      \dttdruleaxCut{} \\
      \\
      \dttdruleaxCutBar{}
    \end{array}
  \end{math}
\end{center}

\subsubsection{A L to DIL Translation}
\label{subsec:a_l_to_dil_translation}
In this section we show that every derivable L-sequent can be
translated into a derivable DIL-sequent.  Before giving the
translation we will first show several admissibility results for DIL
of inference rules that are similar to the ones we mentioned in
Section~\ref{sec:dualized_intuitionistic_logic_(dil)}.  These two
rules are required for the crucial left-to-right lemma.  This lemma
depends on the following admissible rule:
\begin{lem}[Weakening]
  \label{lemma:weakening}
  If $G  \dttsym{;}  \Gamma  \vdash  \dttnt{p_{{\mathrm{2}}}} \, \dttnt{B}  \mathbin{@}  \dttnt{n}$ is derivable, then $G  \dttsym{;}  \Gamma  \dttsym{,}  \dttnt{p_{{\mathrm{1}}}} \, \dttnt{A}  \mathbin{@}  \dttnt{n_{{\mathrm{1}}}}  \vdash  \dttnt{p_{{\mathrm{2}}}} \, \dttnt{B}  \mathbin{@}  \dttnt{n_{{\mathrm{1}}}}$ is derivable.
\end{lem}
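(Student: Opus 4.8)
The plan is to induct on the derivation $\mathcal{D}$ of $G;\Gamma \vdash p_2\,B @ n$, proving the claim in a form general over the graph and context so that the induction hypothesis may be reinstantiated at whatever graph and context appear in a premise. Because DIL contexts are multisets, exchange is free and the position at which the new hypothesis $p_1\,A @ n_1$ is inserted is immaterial, so I will only track that it is present. Most rules go through mechanically. For $\ifrName{ax}$ the displayed hypothesis survives in the enlarged context and the side premise $G \vdash n \preccurlyeq^*_{p} n'$ is untouched, since weakening never alters the graph $G$, so $\ifrName{ax}$ reapplies; $\ifrName{unit}$ has no dependence on the context at all. For $\ifrName{and}$, $\ifrName{andBar}$, and $\ifrName{cut}$ I would apply the induction hypothesis to each premise and reapply the rule, noting that $\ifrName{cut}$ carries no side condition and its cut node may freely coincide with $n_1$. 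The $\ifrName{impBar}$ case is similar, with the added observation that its reachability premise $G \vdash n \preccurlyeq^*_{\bar{p}} n'$ speaks only of the graph and hence is unaffected by enlarging $\Gamma$.

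The one delicate case is $\ifrName{imp}$, whose premise introduces an eigenvariable $n'$ under the freshness condition $n' \notin |G|, |\Gamma|$. Enlarging the context to $\Gamma, p_1\,A @ n_1$ introduces the extra node $n_1$, so to reapply $\ifrName{imp}$ I must ensure $n' \neq n_1$; but the derivation at hand fixes a particular $n'$ that may clash with $n_1$. The hard part will be resolving this clash, and I would do so exactly as one avoids variable capture: by the variable convention I may assume the eigenvariable has been chosen fresh for $n_1$, a step justified by a routine \emph{renaming lemma} stating that if a sequent is derivable and $m$ occurs nowhere in the derivation, then uniformly replacing $n'$ by $m$ yields another derivation. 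With $n'$ thus chosen fresh, the induction hypothesis applies to the premise $(G, n \preccurlyeq_p n');\,\Gamma, p\,A @ n' \vdash p\,B @ n'$ — this is where the generality over graphs is used, since the premise lives over $G, n \preccurlyeq_p n'$ rather than $G$ — and the freshness side condition for the enlarged context follows from $n' \notin |G|, |\Gamma|$ together with $n' \neq n_1$, so $\ifrName{imp}$ reapplies.

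In short, the only genuine obstacle is eigenvariable management in the $\ifrName{imp}$ case; every other rule either ignores the context or touches it only through premises that the induction hypothesis handles directly. I expect the renaming lemma to be the single reusable ingredient making the $\ifrName{imp}$ case go through, and in a formalized development it is cleanest to establish it first and invoke it here.
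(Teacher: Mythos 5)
Your proof follows the same route as the paper, which disposes of this lemma with a one-line appeal to straightforward induction on the assumed typing derivation. Your extra attention to the $\ifrName{imp}$ case --- renaming the eigenvariable $n'$ away from the freshly added node $n_1$ via a renaming lemma so that the side condition $n' \notin |G|,|\Gamma, p_1\,A@n_1|$ can be re-established --- correctly fills in the one genuinely non-mechanical detail that the paper's proof leaves implicit.
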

\begin{proof}
  This holds by straightforward induction on the assumed typing
  derivation.    
\end{proof}
Note that we will use admissible rules as if they are inference rules
of the logic throughout the sequel.
\begin{lem}[Left-to-Right]
  \label{lemma:refocus}
  If $G  \dttsym{;}  \Gamma_{{\mathrm{1}}}  \dttsym{,}   \bar{  \dttnt{p}  }  \, \dttnt{A}  \mathbin{@}  \dttnt{n}  \dttsym{,}  \Gamma_{{\mathrm{2}}}  \vdash   \bar{  \dttnt{p'}  }  \, \dttnt{B}  \mathbin{@}  \dttnt{n'}$ is derivable, then \\
  so is $G  \dttsym{;}  \Gamma_{{\mathrm{1}}}  \dttsym{,}  \Gamma_{{\mathrm{2}}}  \dttsym{,}  \dttnt{p'} \, \dttnt{B}  \mathbin{@}  \dttnt{n'}  \vdash  \dttnt{p} \, \dttnt{A}  \mathbin{@}  \dttnt{n}$.  
\end{lem}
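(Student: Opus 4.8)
The plan is to derive the goal by a single application of an axiom cut (\ifrName{axCut} or \ifrName{axCutBar}) with $A \mathbin{@} n$ as the cut formula, after first reconciling the two contexts with Weakening (Lemma~\ref{lemma:weakening}). The intuition is that the assumed derivation already carries $\bar{p}\,A \mathbin{@} n$ among its hypotheses and produces $\bar{p'}\,B \mathbin{@} n'$ on the right, so it is essentially the ``derive'' premise that an axiom cut on $A \mathbin{@} n$ requires; the only missing ingredient is the membership premise, which is supplied by the extra hypothesis $p'\,B \mathbin{@} n'$ that the goal's context carries.

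First I would normalize the context. Because contexts are multisets, the assumption $G ; \Gamma_1, \bar{p}\,A \mathbin{@} n, \Gamma_2 \vdash \bar{p'}\,B \mathbin{@} n'$ may be rewritten as $G ; \Gamma_1, \Gamma_2, \bar{p}\,A \mathbin{@} n \vdash \bar{p'}\,B \mathbin{@} n'$. Weakening this by the hypothesis $p'\,B \mathbin{@} n'$ --- added at the node $n'$, which is exactly the node of the conclusion, so that the node change built into Lemma~\ref{lemma:weakening} is vacuous here --- produces
\[
  G ; \Gamma_1, \Gamma_2, p'\,B \mathbin{@} n', \bar{p}\,A \mathbin{@} n \vdash \bar{p'}\,B \mathbin{@} n'.
\]
Writing $\Gamma = \Gamma_1, \Gamma_2, p'\,B \mathbin{@} n'$ for the goal's context, this is a derivation of $B \mathbin{@} n'$ from the cut-extended context $\Gamma, \bar{p}\,A \mathbin{@} n$, which is precisely what the second premise of an axiom cut on $A \mathbin{@} n$ needs.

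Next I would split on the relationship between $p$ and $p'$; there are only two cases. If $p = p'$, I apply \ifrName{axCut}: its membership premise requires $p\,B \mathbin{@} n' \in (\Gamma, \bar{p}\,A \mathbin{@} n)$, which holds since $p\,B \mathbin{@} n' = p'\,B \mathbin{@} n' \in \Gamma$, and its derive premise requires $\bar{p}\,B \mathbin{@} n'$, which is the weakened derivation because $\bar{p} = \bar{p'}$. If instead $p = \bar{p'}$ (equivalently $p' = \bar{p}$), I apply \ifrName{axCutBar}: its membership premise requires $\bar{p}\,B \mathbin{@} n' \in (\Gamma, \bar{p}\,A \mathbin{@} n)$, which holds since $\bar{p}\,B \mathbin{@} n' = p'\,B \mathbin{@} n' \in \Gamma$, and its derive premise requires $p\,B \mathbin{@} n'$, which is the weakened derivation because $p = \bar{p'}$. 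In both cases the conclusion is $G ; \Gamma \vdash p\,A \mathbin{@} n$, as required.

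The step I expect to be the main obstacle is lining up the polarities. The assumed derivation fixes the right-hand polarity of $B$ at $\bar{p'}$, whereas each axiom-cut rule is rigid: it forces the side formula $B$ to carry the polarity $p$ (for \ifrName{axCut}) or $\bar{p}$ (for \ifrName{axCutBar}) dictated by the cut formula $A$. The case analysis on $p$ versus $p'$ is exactly what lets one of the two rules absorb whichever of $p, \bar{p}$ the value $\bar{p'}$ turns out to be, while the companion membership premise is discharged automatically by the hypothesis $p'\,B \mathbin{@} n'$ introduced by Weakening. No induction is required; the content is entirely in matching these polarities and in checking that the node added by Weakening coincides with $n'$, so that the conclusion node is left unchanged.
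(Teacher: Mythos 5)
Your proof is correct and follows essentially the same route as the paper's: move the cut hypothesis to the end of the context, weaken by $p'\,B \mathbin{@} n'$ (noting, as you do, that the node shift in Lemma~\ref{lemma:weakening} is vacuous here), and close with an axiom cut whose membership premise is discharged by the freshly added hypothesis. You are in fact slightly more careful than the paper, which applies \textsc{axCut} without comment even though that rule ties the side formula's polarities to $p$ and $\bar{p}$, so the case $p = \bar{p'}$ genuinely requires \textsc{axCutBar}; your explicit case split on $p$ versus $p'$ supplies that missing detail.
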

\begin{proof}
  Suppose $G  \dttsym{;}  \Gamma_{{\mathrm{1}}}  \dttsym{,}   \bar{  \dttnt{p}  }  \, \dttnt{A}  \mathbin{@}  \dttnt{n}  \dttsym{,}  \Gamma_{{\mathrm{2}}}  \vdash   \bar{  \dttnt{p'}  }  \, \dttnt{B}  \mathbin{@}  \dttnt{n'}$ is derivable and
  $\Gamma_{{\mathrm{3}}} =^{\text{def}} \Gamma_{{\mathrm{1}}}  \dttsym{,}   \bar{  \dttnt{p}  }  \, \dttnt{A}  \mathbin{@}  \dttnt{n}  \dttsym{,}  \Gamma_{{\mathrm{2}}}$.  Then we derive 
  $G  \dttsym{;}  \Gamma_{{\mathrm{1}}}  \dttsym{,}  \Gamma_{{\mathrm{2}}}  \dttsym{,}  \dttnt{p'} \, \dttnt{B}  \mathbin{@}  \dttnt{n'}  \vdash  \dttnt{p} \, \dttnt{A}  \mathbin{@}  \dttnt{n}$ as follows:
  \begin{center}
    \footnotesize
    \begin{math}
      $$\mprset{flushleft}
      \inferrule* [right=\footnotesize axCut] {
         \dttnt{p'} \, \dttnt{B}  \mathbin{@}  \dttnt{n'}  \in  \dttsym{(}  \Gamma_{{\mathrm{3}}}  \dttsym{,}  \dttnt{p'} \, \dttnt{B}  \mathbin{@}  \dttnt{n'}  \dttsym{)} 
        \\
        $$\mprset{flushleft}
        \inferrule* [right=\footnotesize Weakening] {
          G  \dttsym{;}  \Gamma_{{\mathrm{3}}}  \vdash   \bar{  \dttnt{p'}  }  \, \dttnt{B}  \mathbin{@}  \dttnt{n'}
        }{G  \dttsym{;}  \Gamma_{{\mathrm{3}}}  \dttsym{,}  \dttnt{p'} \, \dttnt{B}  \mathbin{@}  \dttnt{n'}  \vdash   \bar{  \dttnt{p'}  }  \, \dttnt{B}  \mathbin{@}  \dttnt{n'}}
      }{G  \dttsym{;}  \Gamma_{{\mathrm{1}}}  \dttsym{,}  \Gamma_{{\mathrm{2}}}  \dttsym{,}  \dttnt{p'} \, \dttnt{B}  \mathbin{@}  \dttnt{n'}  \vdash  \dttnt{p} \, \dttnt{A}  \mathbin{@}  \dttnt{n}}
    \end{math}
  \end{center}
  Thus, we obtain our result.
\end{proof}
We mentioned in the introduction that DIL avoids having to have rules
like the monotonicity rules and other similar rules from L.  To be
able to translate every derivable sequent of L to DIL, we must show
admissibility of those rules in DIL.  The first of these admissible
rules are the rules for reflexivity and transitivity.
\begin{lem}[Reflexivity]
  \label{lemma:reflexivity}
  If $G  \dttsym{,}  \dttnt{m} \,  \preccurlyeq_{ \dttnt{p'} }  \, \dttnt{m}  \dttsym{;}  \Gamma  \vdash  \dttnt{p} \, \dttnt{A}  \mathbin{@}  \dttnt{n}$ is derivable, then so is $G  \dttsym{;}  \Gamma  \vdash  \dttnt{p} \, \dttnt{A}  \mathbin{@}  \dttnt{n}$.
\end{lem}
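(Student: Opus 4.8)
The plan is to proceed by induction on the derivation of $G, m \preccurlyeq_{p'} m ; \Gamma \vdash p\, A \mathbin{@} n$, showing that the extra reflexive edge is never essential. The guiding observation is that the reachability judgment already contains \ifrName{rel\_refl}, so the edge $m \preccurlyeq_{p'} m$ carries no information that is not already derivable over the smaller graph. Because the \ifrName{imp} rule extends the graph on the right, the reflexive edge gets pushed into the interior of the graph in that rule's premise; to keep the induction hypothesis applicable I would prove the slightly more general statement that the reflexive edge may sit at an arbitrary position, namely: if $G_1, m \preccurlyeq_{p'} m, G_2 ; \Gamma \vdash p\, A \mathbin{@} n$ is derivable, then so is $G_1, G_2 ; \Gamma \vdash p\, A \mathbin{@} n$. (Since graphs are multisets one could instead simply reorder, but generalizing the statement is cleaner for a formal induction.)

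First I would establish the analogous fact for reachability as an auxiliary lemma: if $G_1, m \preccurlyeq_{p'} m, G_2 \vdash n \preccurlyeq^*_{p} n'$, then $G_1, G_2 \vdash n \preccurlyeq^*_{p} n'$. This goes by induction on the reachability derivation. The only interesting case is \ifrName{rel\_ax}: if the witnessing edge is one of the edges of $G_1, G_2$, then \ifrName{rel\_ax} applies directly over $G_1, G_2$; if instead it is the reflexive edge $m \preccurlyeq_{p'} m$, then necessarily $n = n' = m$ and $p = p'$, and we conclude $G_1, G_2 \vdash m \preccurlyeq^*_{p'} m$ by \ifrName{rel\_refl}. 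The \ifrName{rel\_refl}, \ifrName{rel\_trans}, and \ifrName{rel\_flip} cases follow immediately by re-applying the same rule to the outputs of the induction hypothesis.

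With this auxiliary lemma in hand, the main induction is routine for most rules. The rules that do not touch the graph --- \ifrName{unit}, \ifrName{and}, \ifrName{andBar}, and the cut-style rule(s) --- follow immediately by applying the induction hypothesis to each typing premise and re-applying the rule (vacuously so for \ifrName{unit}), since their graph is unchanged. The \ifrName{ax} case and the \ifrName{impBar} case appeal to the reachability auxiliary lemma to transport their reachability premise ($G_1, m \preccurlyeq_{p'} m, G_2 \vdash n \preccurlyeq^*_{p} n'$, resp.\ $\preccurlyeq^*_{\bar{p}}$) down to $G_1, G_2$, combined with the induction hypothesis on any typing premises, after which the rule re-applies.

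The main obstacle is the \ifrName{imp} case, which is precisely why the generalized statement is needed. Here the premise has the form $(G_1, m \preccurlyeq_{p'} m, G_2), n \preccurlyeq_{p} n' ; \Gamma, p\, A \mathbin{@} n' \vdash p\, B \mathbin{@} n'$, which I regroup as $G_1, m \preccurlyeq_{p'} m, (G_2, n \preccurlyeq_{p} n')$ so that the induction hypothesis applies with $G_2$ replaced by $G_2, n \preccurlyeq_{p} n'$, yielding $G_1, G_2, n \preccurlyeq_{p} n' ; \Gamma, p\, A \mathbin{@} n' \vdash p\, B \mathbin{@} n'$. Re-applying \ifrName{imp} then gives the conclusion over $G_1, G_2$. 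The only point requiring care is the freshness side condition of \ifrName{imp}: the premise carries $n' \notin |G_1, m \preccurlyeq_{p'} m, G_2|, |\Gamma|$, and since deleting an edge can only shrink the node list, we still have $n' \notin |G_1, G_2|, |\Gamma|$, so the side condition is met. This closes the induction and yields the lemma as stated (taking $G_2$ empty).
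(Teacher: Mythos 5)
Your proof is correct and follows the same route the paper takes (the paper dismisses this as ``a straightforward induction on the form of the assumed derivation''): you carry out that induction, correctly identifying the two points that make it work, namely the auxiliary fact that a reflexive edge can be dropped from reachability derivations by replacing \ifrName{rel\_ax} with \ifrName{rel\_refl}, and the strengthening of the statement to an arbitrary edge position so that the \ifrName{imp} case, which grows the graph, remains within reach of the induction hypothesis. The handling of the freshness side condition in \ifrName{imp} is also right, so nothing is missing.
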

\begin{proof}
  This holds by a straightforward induction on the form of the assumed derivation.
\end{proof}

\begin{lem}[Transitivity]
  \label{lemma:transitivity}
  If $G  \dttsym{,}  \dttnt{n_{{\mathrm{1}}}} \,  \preccurlyeq_{ \dttnt{p'} }  \, \dttnt{n_{{\mathrm{3}}}}  \dttsym{;}  \Gamma  \vdash  \dttnt{p} \, \dttnt{A}  \mathbin{@}  \dttnt{n}$ is derivable, $ \dttnt{n_{{\mathrm{1}}}}    \preccurlyeq_{ \dttnt{p'} }    \dttnt{n_{{\mathrm{2}}}}  \in  G $ and $ \dttnt{n_{{\mathrm{2}}}}    \preccurlyeq_{ \dttnt{p'} }    \dttnt{n_{{\mathrm{3}}}}  \in  G $, 
  then $G  \dttsym{;}  \Gamma  \vdash  \dttnt{p} \, \dttnt{A}  \mathbin{@}  \dttnt{n}$ is derivable.
\end{lem}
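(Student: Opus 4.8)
The plan is to prove the statement by induction on the assumed typing derivation of $G  \dttsym{,}  \dttnt{n_{{\mathrm{1}}}} \,  \preccurlyeq_{ \dttnt{p'} }  \, \dttnt{n_{{\mathrm{3}}}}  \dttsym{;}  \Gamma  \vdash  \dttnt{p} \, \dttnt{A}  \mathbin{@}  \dttnt{n}$, mirroring the structure used for Lemma~\ref{lemma:reflexivity}. The graph labelling a sequent is only ever consulted in the reachability side conditions $G \vdash n \preccurlyeq^*_{p} n'$ attached to the \ifrName{ax} and \ifrName{impBar} rules (and, in the extended form, inside \ifrName{imp}), so the heart of the argument is a companion statement about the reachability judgment, which I would isolate first.

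First I would establish the following auxiliary claim by induction on the reachability derivation: if $G  \dttsym{,}  \dttnt{n_{{\mathrm{1}}}} \,  \preccurlyeq_{ \dttnt{p'} }  \, \dttnt{n_{{\mathrm{3}}}}  \vdash  \dttnt{m} \,  \preccurlyeq^*_{ \dttnt{q} }  \, \dttnt{m'}$ while $ \dttnt{n_{{\mathrm{1}}}}    \preccurlyeq_{ \dttnt{p'} }    \dttnt{n_{{\mathrm{2}}}}  \in  G $ and $ \dttnt{n_{{\mathrm{2}}}}    \preccurlyeq_{ \dttnt{p'} }    \dttnt{n_{{\mathrm{3}}}}  \in  G $, then $G  \vdash  \dttnt{m} \,  \preccurlyeq^*_{ \dttnt{q} }  \, \dttnt{m'}$. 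The \ifrName{rel\_refl}, \ifrName{rel\_trans}, and \ifrName{rel\_flip} cases follow immediately by the induction hypothesis (the witnessing edges remain in $G$ across the recursive calls) followed by the corresponding rule. The essential case is \ifrName{rel\_ax}: if the edge justifying reachability already lies in $G$ we simply reapply \ifrName{rel\_ax}; if it is the appended edge $\dttnt{n_{{\mathrm{1}}}} \,  \preccurlyeq_{ \dttnt{p'} }  \, \dttnt{n_{{\mathrm{3}}}}$ (so $\dttnt{m} = \dttnt{n_{{\mathrm{1}}}}$, $\dttnt{m'} = \dttnt{n_{{\mathrm{3}}}}$, $\dttnt{q} = \dttnt{p'}$) we instead derive $G  \vdash  \dttnt{n_{{\mathrm{1}}}} \,  \preccurlyeq^*_{ \dttnt{p'} }  \, \dttnt{n_{{\mathrm{2}}}}$ and $G  \vdash  \dttnt{n_{{\mathrm{2}}}} \,  \preccurlyeq^*_{ \dttnt{p'} }  \, \dttnt{n_{{\mathrm{3}}}}$ from the two hypotheses via \ifrName{rel\_ax} and combine them with \ifrName{rel\_trans}. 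This is exactly the syntactic transitivity step after which the lemma is named.

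With the auxiliary claim in hand I would run the main induction on the typing derivation. For \ifrName{unit} there is nothing to do, and for \ifrName{and}, \ifrName{andBar}, and \ifrName{cut} the graph is unchanged in every premise, so the induction hypothesis applies directly and the same rule reassembles the conclusion over $G$. For \ifrName{ax} and \ifrName{impBar} the premises carry a reachability side condition over $G  \dttsym{,}  \dttnt{n_{{\mathrm{1}}}} \,  \preccurlyeq_{ \dttnt{p'} }  \, \dttnt{n_{{\mathrm{3}}}}$; each of these I discharge by the auxiliary claim, then apply the induction hypothesis to any typing premises and reapply the rule over $G$.

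The one case requiring care, and the step I expect to be the main obstacle, is \ifrName{imp}, where the premise is typed over the extended graph $(G  \dttsym{,}  \dttnt{n_{{\mathrm{1}}}} \,  \preccurlyeq_{ \dttnt{p'} }  \, \dttnt{n_{{\mathrm{3}}}})  \dttsym{,}  \dttnt{n} \,  \preccurlyeq_{ \dttnt{p} }  \, \dttnt{n'}$ for a fresh $\dttnt{n'}$. Here I would use the fact that graphs are multisets to reorder this as $(G  \dttsym{,}  \dttnt{n} \,  \preccurlyeq_{ \dttnt{p} }  \, \dttnt{n'})  \dttsym{,}  \dttnt{n_{{\mathrm{1}}}} \,  \preccurlyeq_{ \dttnt{p'} }  \, \dttnt{n_{{\mathrm{3}}}}$ and apply the induction hypothesis with the ambient graph taken to be $G  \dttsym{,}  \dttnt{n} \,  \preccurlyeq_{ \dttnt{p} }  \, \dttnt{n'}$. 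This is legitimate because the two witnessing edges $\dttnt{n_{{\mathrm{1}}}} \,  \preccurlyeq_{ \dttnt{p'} }  \, \dttnt{n_{{\mathrm{2}}}}$ and $\dttnt{n_{{\mathrm{2}}}} \,  \preccurlyeq_{ \dttnt{p'} }  \, \dttnt{n_{{\mathrm{3}}}}$ still belong to $G  \dttsym{,}  \dttnt{n} \,  \preccurlyeq_{ \dttnt{p} }  \, \dttnt{n'}$ (their nodes are all drawn from $G$ and hence distinct from the fresh $\dttnt{n'}$). Finally, the freshness side condition $\dttnt{n'} \not\in |G  \dttsym{,}  \dttnt{n_{{\mathrm{1}}}} \,  \preccurlyeq_{ \dttnt{p'} }  \, \dttnt{n_{{\mathrm{3}}}}|, |\Gamma|$ of the original derivation implies $\dttnt{n'} \not\in |G|, |\Gamma|$, since $|G| \subseteq |G  \dttsym{,}  \dttnt{n_{{\mathrm{1}}}} \,  \preccurlyeq_{ \dttnt{p'} }  \, \dttnt{n_{{\mathrm{3}}}}|$, so \ifrName{imp} can be reapplied over $G$ to conclude.
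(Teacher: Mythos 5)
Your proof is correct and follows the same route as the paper, which simply states that the result holds ``by a straightforward induction on the form of the assumed derivation'' and omits all details. Your write-up supplies exactly the content that makes that induction go through --- in particular the reachability sublemma handling the \textsc{rel\_ax} case via \textsc{rel\_trans}, and the graph-reordering argument needed to apply the induction hypothesis in the \textsc{imp} case.
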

\begin{proof}
  This holds by a straightforward induction on the form of the assumed derivation.
\end{proof}
There is not a trivial correspondence between conjunction in DIL and
conjunction in L, because of the use of polarities in DIL.  Hence,
we must show that L's left rule for conjunction is indeed admissible
in DIL.
\begin{lem}[AndL]
  \label{lemma:andl}
  If $G  \dttsym{;}  \Gamma  \dttsym{,}   \bar{  \dttnt{p}  }  \, \dttnt{A}  \mathbin{@}  \dttnt{n}  \vdash  \dttnt{p} \, \dttnt{B}  \mathbin{@}  \dttnt{n}$ is derivable, then 
  $G  \dttsym{;}  \Gamma  \vdash  \dttnt{p} \, \dttsym{(}   \dttnt{A}  \ndwedge{  \bar{  \dttnt{p}  }  }  \dttnt{B}   \dttsym{)}  \mathbin{@}  \dttnt{n}$ is derivable.
\end{lem}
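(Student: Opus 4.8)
The plan is to exhibit an explicit DIL derivation of the conclusion directly from the assumed derivation $\mathcal{D}$ of $G;\Gamma, \bar{p}\,A \mathbin{@} n \vdash p\,B \mathbin{@} n$, using only the $\ifrName{andBar}$ rule, the admissible Weakening rule (Lemma~\ref{lemma:weakening}), and two nested instances of the axiom cut $\ifrName{axCutBar}$. Writing $C \equiv A \ndwedge{\bar{p}} B$ for the conjunction, the strategy is a double reductio: one against the goal $C$ and one against its first component $A$. Since none of the steps inspects the value of $p$, the whole argument is uniform in the polarity, so no case split on $p = {+}$ versus $p = {-}$ is required.

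First I would apply $\ifrName{axCutBar}$ at the root, taking the cut formula to be $C$ itself at node $n$. Its side condition $\bar{p}\,C \mathbin{@} n \in (\Gamma, \bar{p}\,C \mathbin{@} n)$ is immediate, so the goal $G;\Gamma \vdash p\,C \mathbin{@} n$ reduces to $G;\Gamma, \bar{p}\,C \mathbin{@} n \vdash p\,C \mathbin{@} n$. I would then decompose the conjunction on the right with $\ifrName{andBar}$, selecting the first component ($d = 1$), which leaves $G;\Gamma, \bar{p}\,C \mathbin{@} n \vdash p\,A \mathbin{@} n$.

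On this sequent I would apply $\ifrName{axCutBar}$ a second time, again with cut formula $C$; because the conclusion is now $p\,A \mathbin{@} n$, this application adjoins the hypothesis $\bar{p}\,A \mathbin{@} n$, while its side condition is discharged by the $\bar{p}\,C \mathbin{@} n$ introduced at the root. The resulting subgoal is $G;\Gamma, \bar{p}\,C \mathbin{@} n, \bar{p}\,A \mathbin{@} n \vdash p\,C \mathbin{@} n$. A final $\ifrName{andBar}$ on the second component ($d = 2$) turns this into $G;\Gamma, \bar{p}\,C \mathbin{@} n, \bar{p}\,A \mathbin{@} n \vdash p\,B \mathbin{@} n$, which is precisely $\mathcal{D}$ after a single application of Weakening (Lemma~\ref{lemma:weakening}) to adjoin $\bar{p}\,C \mathbin{@} n$, completing the derivation.

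The one genuine subtlety is that $\ifrName{axCutBar}$ is far more restrictive than the general cut of Figure~\ref{fig:dil-ifr}: when concluding $p\,X \mathbin{@} m$ it adjoins $\bar{p}\,X \mathbin{@} m$, but it can only fire if its cut formula already occurs negated in the context (the side condition $\bar{p}\,B \mathbin{@} n' \in \Gamma$). Over an arbitrary $\Gamma$ no suitable formula is present at the outset, which is exactly why the otherwise idle-looking outer cut on $C$ is indispensable: it seeds $\bar{p}\,C \mathbin{@} n$ into the context so that the inner cut — whose real purpose is to make $\bar{p}\,A \mathbin{@} n$ available for $\mathcal{D}$ — has a cut formula to discharge. (A symmetric route selecting the second component first would instead finish via the Left-to-Right lemma (Lemma~\ref{lemma:refocus}) applied to $\mathcal{D}$; I prefer the derivation above since it uses $\mathcal{D}$ directly and needs no appeal to Lemma~\ref{lemma:refocus}.)
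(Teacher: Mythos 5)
Your derivation is correct: every rule application checks out against Figure~\ref{fig:dil-ifr}, and the final subgoal $G;\Gamma,\bar{p}\,(A \ndwedge{\bar{p}} B)\mathbin{@}n,\bar{p}\,A\mathbin{@}n \vdash p\,B\mathbin{@}n$ is indeed the hypothesis after weakening (plus an implicit exchange, which the paper explicitly licenses you to suppress). The route is genuinely different from the paper's, and arguably cleaner. The paper also opens with an axiom cut on $C \equiv A \ndwedge{\bar{p}} B$ to seed $\bar{p}\,C\mathbin{@}n$ into the context, but it then applies $\ifrName{andBar}$ to the \emph{second} component, reducing the goal to $p\,B\mathbin{@}n$, and discharges that with an instance of the \emph{general} $\ifrName{cut}$ rule on $A\mathbin{@}n$: one premise recovers $p\,A\mathbin{@}n$ by an axiom cut against the adjoined $\bar{p}\,B\mathbin{@}n$ (feeding in the weakened hypothesis derivation), and the other recovers $\bar{p}\,A\mathbin{@}n$ by an axiom cut against $\bar{p}\,C\mathbin{@}n$ followed by $\ifrName{andBar}$ on the first component. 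Your version instead decomposes $C$ on its first component, re-cuts on $C$ to make $\bar{p}\,A\mathbin{@}n$ available, and decomposes on the second component -- two axiom cuts, two $\ifrName{andBar}$s, one weakening, and no general cut. That last point is worth noting: Lemma~\ref{lemma:andl} lives in the completeness development, where the paper has announced that the general cut rule is replaced by $\ifrName{axCut}$/$\ifrName{axCutBar}$, so a proof that uses only the axiom cuts (as yours does) fits that restricted system more directly than the paper's own, which still invokes the unrestricted $\ifrName{cut}$.
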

\begin{proof}
  This proof holds by directly deriving
  $G  \dttsym{;}  \Gamma  \vdash  \dttnt{p} \, \dttsym{(}   \dttnt{A}  \ndwedge{  \bar{  \dttnt{p}  }  }  \dttnt{B}   \dttsym{)}  \mathbin{@}  \dttnt{n}$ in DIL.
  For the complete proof see
  Appendix~\ref{subsec:proof_of_lemma:andl}.
\end{proof}
L has several structural rules.  The following lemmata show that all
of these are admissible in DIL.
\begin{lem}[Exchange]
  \label{lemma:exchange}
  If $G  \dttsym{;}  \Gamma  \vdash  \dttnt{p} \, \dttnt{A}  \mathbin{@}  \dttnt{n}$ is derivable and $ \pi $ is a permutation of $\Gamma$, then
  $G  \dttsym{;}  \pi \, \Gamma  \vdash  \dttnt{p} \, \dttnt{A}  \mathbin{@}  \dttnt{n}$ is derivable.
\end{lem}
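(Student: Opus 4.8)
The plan is to prove this by a straightforward induction on the derivation of $G  \dttsym{;}  \Gamma  \vdash  \dttnt{p} \, \dttnt{A}  \mathbin{@}  \dttnt{n}$, with the induction hypothesis quantified over \emph{all} permutations of the context. That is, I would show that for every derivation and every permutation $\pi$ of the relevant context the permuted sequent is derivable. Generalizing over $\pi$ is necessary because the context-extending rules force us to apply the induction hypothesis to a premise whose context is a one-element extension of $\Gamma$, and we must then use a correspondingly extended permutation. In each inductive case the graph $G$ and the active formula $\dttnt{p} \, \dttnt{A}  \mathbin{@}  \dttnt{n}$ are untouched; only the bookkeeping on $\Gamma$ matters.

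First I would dispatch the axiom-like cases. For $\ifrName{ax}$ the derivation singles out a distinguished hypothesis $\dttnt{p} \, \dttnt{A}  \mathbin{@}  \dttnt{n}$ occurring in $\Gamma$; since $\pi$ merely rearranges $\Gamma$, this hypothesis still occurs in $\pi \, \Gamma$, and because the $\ifrName{ax}$ rule permits the distinguished formula to sit anywhere in the context, I can reapply $\ifrName{ax}$ to $\pi \, \Gamma$ with the same reachability premise $G  \vdash  \dttnt{n} \,  \preccurlyeq^*_{ \dttnt{p} }  \, \dttnt{n'}$, which does not mention $\Gamma$ at all. The $\ifrName{unit}$ case is immediate, as its conclusion places no constraint on the context. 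For the completeness-variant rules $\ifrName{axCut}$ and $\ifrName{axCutBar}$, the only context dependence is a membership hypothesis of the form $\dttnt{p} \, \dttnt{B}  \mathbin{@}  \dttnt{n'}  \in  \dttsym{(}  \Gamma  \dttsym{,}   \bar{  \dttnt{p}  }  \, \dttnt{A}  \mathbin{@}  \dttnt{n}  \dttsym{)}$, and membership is invariant under permutation; applying the induction hypothesis to the derivation premise with $\pi$ extended to fix the cut hypothesis $ \bar{  \dttnt{p}  }  \, \dttnt{A}  \mathbin{@}  \dttnt{n}$ then closes these cases.

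Next I would treat the logical rules whose premises share the conclusion's context, namely $\ifrName{and}$, $\ifrName{andBar}$, and $\ifrName{impBar}$: here I apply the induction hypothesis to each premise with the very same permutation $\pi$ and reassemble, observing that the reachability side-condition in $\ifrName{impBar}$ depends only on $G$ and is therefore unaffected. The rules that enlarge the context, $\ifrName{imp}$ together with the cut rules ($\ifrName{cut}$ in the consistency system, $\ifrName{axCut}$/$\ifrName{axCutBar}$ in the cut-free system), are the ones requiring a little care: for each I lift $\pi$ to a permutation $\pi'$ of the extended context that acts as $\pi$ on $\Gamma$ and fixes the freshly added hypothesis, apply the induction hypothesis to the premise, and reapply the rule. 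The only step that is more than pure transcription is the $\ifrName{imp}$ case, where the freshness side-condition $\dttnt{n'} \, \not\in \, \dttsym{\mbox{$\mid$}}  G  \dttsym{\mbox{$\mid$}}  \dttsym{,}  \dttsym{\mbox{$\mid$}}  \Gamma  \dttsym{\mbox{$\mid$}}$ must be re-established for the permuted context; this is resolved at once by the observation that $\dttsym{\mbox{$\mid$}}  \pi \, \Gamma  \dttsym{\mbox{$\mid$}}$ and $\dttsym{\mbox{$\mid$}}  \Gamma  \dttsym{\mbox{$\mid$}}$ denote the same set of nodes, so the condition holds verbatim, and the edge $\dttnt{n} \,  \preccurlyeq_{ \dttnt{p} }  \, \dttnt{n'}$ added to $G$ is likewise unchanged. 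I expect no genuine obstacle beyond this bookkeeping; the result is a routine structural lemma.
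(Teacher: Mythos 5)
Your proposal is correct and takes essentially the same route as the paper, whose proof of this lemma is simply the remark that it ``holds by a straightforward induction on the form of the assumed derivation''; your case analysis (permutation-invariance of membership for the axiom-like rules, reuse of $\pi$ for context-sharing rules, lifting $\pi$ for context-extending rules, and noting that $\dttsym{\mbox{$\mid$}} \pi\,\Gamma \dttsym{\mbox{$\mid$}}$ contains the same nodes as $\dttsym{\mbox{$\mid$}} \Gamma \dttsym{\mbox{$\mid$}}$ for the freshness condition in \ifrName{imp}) is exactly the bookkeeping that remark leaves implicit.
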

\begin{proof}
  This holds by a straightforward induction on the form of the assumed derivation.
\end{proof}
\noindent
Note that we often leave the application of exchange implicit for
readability.

\begin{lem}[Contraction]
  \label{lemma:contract}
  If $G  \dttsym{;}  \Gamma  \dttsym{,}  \dttnt{p} \, \dttnt{A}  \mathbin{@}  \dttnt{n}  \dttsym{,}  \dttnt{p} \, \dttnt{A}  \mathbin{@}  \dttnt{n}  \dttsym{,}  \Gamma'  \vdash  \dttnt{p'} \, \dttnt{B}  \mathbin{@}  \dttnt{n'}$, then
  $G  \dttsym{;}  \Gamma  \dttsym{,}  \dttnt{p} \, \dttnt{A}  \mathbin{@}  \dttnt{n}  \dttsym{,}  \Gamma'  \vdash  \dttnt{p'} \, \dttnt{B}  \mathbin{@}  \dttnt{n'}$.
\end{lem}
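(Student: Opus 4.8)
The plan is to prove contraction by structural induction on the derivation $\mathcal{D}$ of the hypothesis $G ; \Gamma, p\,A \mathbin{@} n, p\,A \mathbin{@} n, \Gamma' \vdash p'\,B \mathbin{@} n'$. For each inference rule that could conclude this sequent I contract the two copies of $p\,A \mathbin{@} n$ in the premises by the induction hypothesis and re-apply the same rule, thereby obtaining a derivation whose end-context is $\Gamma, p\,A \mathbin{@} n, \Gamma'$. Throughout, the graph $G$ and the reachability judgements over it are left untouched, since the formula being contracted is a logical hypothesis and contraction introduces no new nodes.

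The base cases are \textsc{unit} and \textsc{ax}. The rule \textsc{unit} has no premises and an arbitrary context, so it applies verbatim to the contracted end-context. The rule \textsc{ax} is the first genuinely interesting case: it matches a single hypothesis of the context to the conclusion by way of a reachability premise that depends only on $G$. If the matched hypothesis is one of the two copies of $p\,A \mathbin{@} n$, then after contraction the surviving copy still matches and the reachability premise is unchanged; if instead the matched hypothesis lies in $\Gamma$ or $\Gamma'$, it is untouched by the contraction. In either case \textsc{ax} re-derives the goal directly.

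The logical rules \textsc{and}, \textsc{andBar}, \textsc{imp}, and \textsc{impBar} thread the context unchanged into their premises, so the induction hypothesis applies to each premise and the rule is re-applied; in \textsc{imp} a locally fresh hypothesis $p\,A \mathbin{@} n'$ is adjoined, but the two copies being contracted remain in the ambient context, so the induction hypothesis still applies. The only side condition to verify is the freshness requirement $n' \notin |G|, |\Gamma|$ of \textsc{imp}: contraction removes one occurrence of $p\,A \mathbin{@} n$ but leaves node $n$ present in the surviving copy, so the node list of the context is unchanged and $n'$ remains fresh. The cut rules \textsc{axCut} and \textsc{axCutBar} adjoin a fresh hypothesis $\bar{p}\,A \mathbin{@} n$ to the context in their typing premise, again leaving the two contracted copies in place for the induction hypothesis; their membership side condition, of the form $p\,B \mathbin{@} n' \in (\Gamma, \bar{p}\,A \mathbin{@} n)$, is preserved because contracting two identical formulas always leaves a witness to any such membership.

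The main obstacle --- and it is a mild one --- is that contraction is precisely the structural operation that interacts with the rules which \emph{refer to a hypothesis by its content}, namely \textsc{ax} and the membership premises of \textsc{axCut} and \textsc{axCutBar}. At each of these points I must observe that a hypothesis is identified by its polarity, formula, and node, not by its position in the multiset; since the two contracted occurrences are literally identical, any reference that pointed at one of them is satisfied by the surviving copy. Everywhere else the context is merely passed through, so the induction is otherwise routine, exactly as in the proof of Exchange (Lemma~\ref{lemma:exchange}).
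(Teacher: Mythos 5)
Your proposal is correct and takes essentially the same route as the paper, which simply asserts that contraction "holds by a straightforward induction on the form of the assumed derivation"; your case analysis (in particular the observations about \textsc{ax}, the freshness side condition of \textsc{imp}, and the membership premises of \textsc{axCut}/\textsc{axCutBar}) fills in exactly the details that make that induction go through.
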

\begin{proof}
  This holds by a straightforward induction on the form of the assumed derivation.
\end{proof}

Monotonicity is taken as a primitive in L, but we have decided to
leave monotonicity as an admissible rule in DIL.  To show that it is
admissible in DIL we need to be able to move nodes forward in the
abstract Kripke graph.  This is necessary to be able to satisfy the
graph constraints in the rules $\dttdrulename{imp}$ and
$\dttdrulename{impBar}$ when proving general monotonicity
(Lemma~\ref{lemma:genmono}).  The next result is just weakening for
the reachability judgment.
\begin{lem}[Graph Weakening]
  \label{lemma:graph_weakening}
  If $G  \vdash  \dttnt{n_{{\mathrm{1}}}} \,  \preccurlyeq^*_{ \dttnt{p} }  \, \dttnt{n_{{\mathrm{2}}}}$, then $G  \dttsym{,}  \dttnt{n_{{\mathrm{3}}}} \,  \preccurlyeq_{ \dttnt{p'} }  \, \dttnt{n_{{\mathrm{4}}}}  \vdash  \dttnt{n_{{\mathrm{1}}}} \,  \preccurlyeq^*_{ \dttnt{p} }  \, \dttnt{n_{{\mathrm{2}}}}$.
\end{lem}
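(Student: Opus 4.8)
The plan is to proceed by induction on the derivation of the reachability judgment $G \vdash n_1 \preccurlyeq^*_p n_2$, using the four rules of Figure~\ref{fig:dil-reach}. Intuitively, appending an edge to the graph can only enlarge the set of derivable reachability facts, so every case will close by re-applying the very same rule — after invoking the induction hypothesis on any subderivations — now with the enlarged graph $G, n_3 \preccurlyeq_{p'} n_4$ in place of $G$.

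First I would dispatch the two base cases. For \dttdrulename{rel\_refl} the goal $G, n_3 \preccurlyeq_{p'} n_4 \vdash n_1 \preccurlyeq^*_p n_1$ follows immediately by \dttdrulename{rel\_refl} again, since reflexivity imposes no condition on the graph. For \dttdrulename{rel\_ax} the hypothesis forces a decomposition $G = G'', n_1 \preccurlyeq_p n_2, G'''$; appending the new edge rewrites $G, n_3 \preccurlyeq_{p'} n_4$ as $G'', n_1 \preccurlyeq_p n_2, (G''', n_3 \preccurlyeq_{p'} n_4)$, which is still exactly the shape demanded by \dttdrulename{rel\_ax}, so the edge $n_1 \preccurlyeq_p n_2$ remains a member and the axiom applies.

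For the two inductive cases I would simply weaken each premise. In the \dttdrulename{rel\_trans} case the premises $G \vdash n_1 \preccurlyeq^*_p n'$ and $G \vdash n' \preccurlyeq^*_p n_2$ become, by the induction hypothesis, $G, n_3 \preccurlyeq_{p'} n_4 \vdash n_1 \preccurlyeq^*_p n'$ and $G, n_3 \preccurlyeq_{p'} n_4 \vdash n' \preccurlyeq^*_p n_2$, and re-applying \dttdrulename{rel\_trans} closes the case. The \dttdrulename{rel\_flip} case is analogous: its single premise $G \vdash n_2 \preccurlyeq^*_{\bar{p}} n_1$ is weakened by the induction hypothesis, and \dttdrulename{rel\_flip} is re-applied to recover $G, n_3 \preccurlyeq_{p'} n_4 \vdash n_1 \preccurlyeq^*_p n_2$.

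I do not expect a genuine obstacle here, since this is a pure weakening statement for an inductively defined judgment and contains no interaction between formulas and the graph. The only point requiring any care is the membership bookkeeping in the \dttdrulename{rel\_ax} case — observing that concatenating the new edge onto $G$ preserves the list decomposition that witnesses $n_1 \preccurlyeq_p n_2 \in G$ — and everything else is a mechanical re-application of the same rule, matching the ``straightforward induction'' pattern used for the neighbouring structural lemmata.
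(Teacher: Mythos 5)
Your proof is correct and matches the paper's approach: the paper dismisses this lemma with exactly the "straightforward induction on the form of the assumed derivation" that you carry out, and your case analysis over \dttdrulename{rel\_ax}, \dttdrulename{rel\_refl}, \dttdrulename{rel\_trans}, and \dttdrulename{rel\_flip} is the intended argument, including the list-decomposition observation in the axiom case.
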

\begin{proof}
  This holds by a straightforward induction on the form of the assumed derivation.
\end{proof}
The function $ \mathsf{raise} $ is an operation on abstract Kripke graphs
that takes in two nodes $\dttnt{n_{{\mathrm{1}}}}$ and $\dttnt{n_{{\mathrm{2}}}}$, where $\dttnt{n_{{\mathrm{2}}}}$ is
reachable from $\dttnt{n_{{\mathrm{1}}}}$, and then moves all the edges in an abstract
Kripke graph forward to $\dttnt{n_{{\mathrm{2}}}}$.  This essentially performs
monotonicity on the given edges.  It will be used to show that nodes
in the context of a DIL-sequent can be moved forward using
monotonicity resulting in a lemma called raising the lower bound
logically (Lemma~\ref{lemma:raising_the_lower_bound_logically}).
\begin{defi}
  \label{def:raise}
  We define the function $ \mathsf{raise} $ on abstract graphs as follows:
  \begin{center}
    \begin{math}
      \begin{array}{lll}
        \mathsf{raise} \, \dttsym{(}  \dttnt{n_{{\mathrm{1}}}}  \dttsym{,}  \dttnt{n_{{\mathrm{2}}}}  \dttsym{,}   \cdot   \dttsym{)} & = &  \cdot \\
        \mathsf{raise} \, \dttsym{(}  \dttnt{n_{{\mathrm{1}}}}  \dttsym{,}  \dttnt{n_{{\mathrm{2}}}}  \dttsym{,}  \dttsym{(}  \dttnt{n_{{\mathrm{1}}}} \,  \preccurlyeq_{ \dttnt{p} }  \, \dttnt{m}  \dttsym{,}  G  \dttsym{)}  \dttsym{)} & = & \dttnt{n_{{\mathrm{2}}}} \,  \preccurlyeq_{ \dttnt{p} }  \, \dttnt{m}  \dttsym{,}  \mathsf{raise} \, \dttsym{(}  \dttnt{n_{{\mathrm{1}}}}  \dttsym{,}  \dttnt{n_{{\mathrm{2}}}}  \dttsym{,}  G  \dttsym{)}\\
        \mathsf{raise} \, \dttsym{(}  \dttnt{n_{{\mathrm{1}}}}  \dttsym{,}  \dttnt{n_{{\mathrm{2}}}}  \dttsym{,}  \dttsym{(}  \dttnt{m} \,  \preccurlyeq_{  \bar{  \dttnt{p}  }  }  \, \dttnt{n_{{\mathrm{1}}}}  \dttsym{,}  G  \dttsym{)}  \dttsym{)} & = & \dttnt{m} \,  \preccurlyeq_{  \bar{  \dttnt{p}  }  }  \, \dttnt{n_{{\mathrm{2}}}}  \dttsym{,}  \mathsf{raise} \, \dttsym{(}  \dttnt{n_{{\mathrm{1}}}}  \dttsym{,}  \dttnt{n_{{\mathrm{2}}}}  \dttsym{,}  G  \dttsym{)}\\
        \mathsf{raise} \, \dttsym{(}  \dttnt{n_{{\mathrm{1}}}}  \dttsym{,}  \dttnt{n_{{\mathrm{2}}}}  \dttsym{,}  \dttsym{(}  \dttnt{m} \,  \preccurlyeq_{ \dttnt{p} }  \, \dttnt{m'}  \dttsym{,}  G  \dttsym{)}  \dttsym{)} & = & \dttnt{m} \,  \preccurlyeq_{ \dttnt{p} }  \, \dttnt{m'}  \dttsym{,}  \mathsf{raise} \, \dttsym{(}  \dttnt{n_{{\mathrm{1}}}}  \dttsym{,}  \dttnt{n_{{\mathrm{2}}}}  \dttsym{,}  G  \dttsym{)}, 
                                                      \text{where } \dttnt{m} \not\equiv \dttnt{n_{{\mathrm{1}}}} \text{ and } \dttnt{m'} \not\equiv \dttnt{n_{{\mathrm{1}}}}.\\
        \mathsf{raise} \, \dttsym{(}  \dttnt{n_{{\mathrm{1}}}}  \dttsym{,}  \dttnt{n_{{\mathrm{2}}}}  \dttsym{,}  \dttsym{(}  \dttnt{m} \,  \preccurlyeq_{  \bar{  \dttnt{p}  }  }  \, \dttnt{m'}  \dttsym{,}  G  \dttsym{)}  \dttsym{)} & = &\dttnt{m} \,  \preccurlyeq_{  \bar{  \dttnt{p}  }  }  \, \dttnt{m'}  \dttsym{,}  \mathsf{raise} \, \dttsym{(}  \dttnt{n_{{\mathrm{1}}}}  \dttsym{,}  \dttnt{n_{{\mathrm{2}}}}  \dttsym{,}  G  \dttsym{)},
                                                          \text{where } \dttnt{m} \not\equiv \dttnt{n_{{\mathrm{1}}}} \text{ and } \dttnt{m'} \not\equiv \dttnt{n_{{\mathrm{1}}}}.\\
      \end{array}
    \end{math}
  \end{center}
\end{defi}
\begin{lem}[Raising the Lower Bound]
  \label{lemma:raise_lower}
  If $G  \vdash  \dttnt{n_{{\mathrm{1}}}} \,  \preccurlyeq^*_{ \dttnt{p} }  \, \dttnt{n_{{\mathrm{2}}}}$ and $G  \dttsym{,}  G_{{\mathrm{1}}}  \vdash  \dttnt{m} \,  \preccurlyeq^*_{ \dttnt{p'} }  \, \dttnt{m'}$, then $G  \dttsym{,}  \mathsf{raise} \, \dttsym{(}  \dttnt{n_{{\mathrm{1}}}}  \dttsym{,}  \dttnt{n_{{\mathrm{2}}}}  \dttsym{,}  G_{{\mathrm{1}}}  \dttsym{)}  \vdash  \dttnt{m} \,  \preccurlyeq^*_{ \dttnt{p'} }  \, \dttnt{m'}$.
\end{lem}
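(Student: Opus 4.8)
The plan is to argue by induction on the form of the derivation of the second hypothesis, $G  \dttsym{,}  G_{{\mathrm{1}}}  \vdash  \dttnt{m} \,  \preccurlyeq^*_{ \dttnt{p'} }  \, \dttnt{m'}$, keeping the first hypothesis $G  \vdash  \dttnt{n_{{\mathrm{1}}}} \,  \preccurlyeq^*_{ \dttnt{p} }  \, \dttnt{n_{{\mathrm{2}}}}$ fixed throughout. The three closure rules are immediate. For \textsc{rel\_refl} the goal $\dttnt{m} \,  \preccurlyeq^*_{ \dttnt{p'} }  \, \dttnt{m}$ is re-derived by \textsc{rel\_refl} over the new graph $G  \dttsym{,}  \mathsf{raise} \, \dttsym{(}  \dttnt{n_{{\mathrm{1}}}}  \dttsym{,}  \dttnt{n_{{\mathrm{2}}}}  \dttsym{,}  G_{{\mathrm{1}}}  \dttsym{)}$. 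For \textsc{rel\_trans} and \textsc{rel\_flip} I would apply the induction hypothesis to each premise — note that these premises carry the same $\dttnt{p'}$ and the same $G_{{\mathrm{1}}}$, so the hypothesis applies verbatim — and then reassemble with the same rule.

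The only genuine work is in the base case \textsc{rel\_ax}, where the judgment holds because the edge $\dttnt{m} \,  \preccurlyeq_{ \dttnt{p'} }  \, \dttnt{m'}$ literally occurs in the list $G  \dttsym{,}  G_{{\mathrm{1}}}$. I would split on where this edge sits and on whether $\dttnt{n_{{\mathrm{1}}}}$ is one of its endpoints, reading off the behaviour of $\mathsf{raise}$ from Definition~\ref{def:raise}. If the edge lies in $G$, or lies in $G_{{\mathrm{1}}}$ with neither endpoint equal to $\dttnt{n_{{\mathrm{1}}}}$, then $\mathsf{raise}$ copies it unchanged (the base case and clauses four and five), so the identical edge is present in $G  \dttsym{,}  \mathsf{raise} \, \dttsym{(}  \dttnt{n_{{\mathrm{1}}}}  \dttsym{,}  \dttnt{n_{{\mathrm{2}}}}  \dttsym{,}  G_{{\mathrm{1}}}  \dttsym{)}$ and a single use of \textsc{rel\_ax} closes the case.

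The interesting subcase is when the matched edge lies in $G_{{\mathrm{1}}}$ with $\dttnt{n_{{\mathrm{1}}}}$ as its $\dttnt{p}$-lower endpoint, which $\mathsf{raise}$ records in one of two guises: clause two, an edge $\dttnt{n_{{\mathrm{1}}}} \,  \preccurlyeq_{ \dttnt{p} }  \, \dttnt{m'}$ raised to $\dttnt{n_{{\mathrm{2}}}} \,  \preccurlyeq_{ \dttnt{p} }  \, \dttnt{m'}$, or clause three, an edge $\dttnt{m} \,  \preccurlyeq_{  \bar{  \dttnt{p}  }  }  \, \dttnt{n_{{\mathrm{1}}}}$ raised to $\dttnt{m} \,  \preccurlyeq_{  \bar{  \dttnt{p}  }  }  \, \dttnt{n_{{\mathrm{2}}}}$. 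In both, the original edge is gone and only its raised form survives, so \textsc{rel\_ax} no longer delivers the goal on its own. In the clause-two guise I would read off $G  \dttsym{,}  \mathsf{raise} \, \dttsym{(}  \dttnt{n_{{\mathrm{1}}}}  \dttsym{,}  \dttnt{n_{{\mathrm{2}}}}  \dttsym{,}  G_{{\mathrm{1}}}  \dttsym{)}  \vdash  \dttnt{n_{{\mathrm{2}}}} \,  \preccurlyeq^*_{ \dttnt{p} }  \, \dttnt{m'}$ by \textsc{rel\_ax}, lift the hypothesis to the enlarged graph by repeated Graph Weakening (Lemma~\ref{lemma:graph_weakening}), and compose the two with \textsc{rel\_trans} to recover $\dttnt{n_{{\mathrm{1}}}} \,  \preccurlyeq^*_{ \dttnt{p} }  \, \dttnt{m'}$. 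In the clause-three guise the goal is the flipped statement $\dttnt{m} \,  \preccurlyeq^*_{  \bar{  \dttnt{p}  }  }  \, \dttnt{n_{{\mathrm{1}}}}$; there I would instead apply \textsc{rel\_flip} to the lifted hypothesis to turn $\dttnt{n_{{\mathrm{1}}}} \,  \preccurlyeq^*_{ \dttnt{p} }  \, \dttnt{n_{{\mathrm{2}}}}$ into $\dttnt{n_{{\mathrm{2}}}} \,  \preccurlyeq^*_{  \bar{  \dttnt{p}  }  }  \, \dttnt{n_{{\mathrm{1}}}}$, obtain $\dttnt{m} \,  \preccurlyeq^*_{  \bar{  \dttnt{p}  }  }  \, \dttnt{n_{{\mathrm{2}}}}$ from the raised edge by \textsc{rel\_ax}, and compose them with \textsc{rel\_trans}.

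I expect this last subcase — specifically keeping the polarities straight — to be the main obstacle. The function $\mathsf{raise}$ registers $\dttnt{n_{{\mathrm{1}}}}$ as a $\dttnt{p}$-lower bound either on the left at polarity $\dttnt{p}$ or on the right at polarity $ \bar{  \dttnt{p}  } $, and in each guise the hypothesis $\dttnt{n_{{\mathrm{1}}}} \,  \preccurlyeq^*_{ \dttnt{p} }  \, \dttnt{n_{{\mathrm{2}}}}$ must be oriented to match before \textsc{rel\_trans} will fire: used directly in the first guise, and routed through \textsc{rel\_flip} in the second (relying on the equivalence $\dttnt{n_{{\mathrm{1}}}} \,  \preccurlyeq^*_{ \dttnt{p} }  \, \dttnt{n_{{\mathrm{2}}}} \Leftrightarrow \dttnt{n_{{\mathrm{2}}}} \,  \preccurlyeq^*_{  \bar{  \dttnt{p}  }  }  \, \dttnt{n_{{\mathrm{1}}}}$). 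Everything else is routine inductive bookkeeping on the list structure of the graph.
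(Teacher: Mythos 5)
Your proposal is correct and follows essentially the same route as the paper's proof: induction on the derivation of $G  \dttsym{,}  G_{{\mathrm{1}}}  \vdash  \dttnt{m} \,  \preccurlyeq^*_{ \dttnt{p'} }  \, \dttnt{m'}$, with the only real work in the \textsc{rel\_ax} case, split exactly as you describe into the unchanged-edge subcases and the two raised guises resolved by \textsc{rel\_trans} (with \textsc{rel\_flip} in the second). Your explicit appeal to Graph Weakening to lift $G  \vdash  \dttnt{n_{{\mathrm{1}}}} \,  \preccurlyeq^*_{ \dttnt{p} }  \, \dttnt{n_{{\mathrm{2}}}}$ to the enlarged graph before applying \textsc{rel\_trans} is a step the paper leaves implicit, so if anything you are slightly more careful.
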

\begin{proof}
  This proof holds by induction on the form of $G  \dttsym{,}  G_{{\mathrm{1}}}  \vdash  \dttnt{m} \,  \preccurlyeq^*_{ \dttnt{p'} }  \, \dttnt{m'}$.  For the full proof see
  Appendix~\ref{subsec:proof_of_raising_the_lower_bound}.
\end{proof}

\begin{lem}[Graph Node Containment]
  \label{lemma:graph_node_containment}
  If $G  \vdash  \dttnt{n_{{\mathrm{1}}}} \,  \preccurlyeq^*_{ \dttnt{p} }  \, \dttnt{n_{{\mathrm{2}}}}$ and $\dttnt{n_{{\mathrm{1}}}}$ and $\dttnt{n_{{\mathrm{2}}}}$ are unique, then
  $\dttnt{n_{{\mathrm{1}}}},\dttnt{n_{{\mathrm{2}}}} \in \dttsym{\mbox{$\mid$}}  G  \dttsym{\mbox{$\mid$}}$.
\end{lem}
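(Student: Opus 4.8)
The plan is to proceed by induction on the derivation of $G \vdash n_1 \preccurlyeq^*_p n_2$, reading the hypothesis that ``$n_1$ and $n_2$ are unique'' as the assumption $n_1 \not\equiv n_2$. This distinctness is exactly what the lemma needs: the reflexivity rule \ifrName{rel\_refl} concludes $G \vdash n \preccurlyeq^*_p n$ for an arbitrary node $n$, including nodes absent from $G$, so without distinctness the statement would plainly fail. There are four rules of the reachability judgment to consider.

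First I would dispatch the three straightforward cases. For \ifrName{rel\_ax} the graph has the shape $G = G', n_1 \preccurlyeq_p n_2, G''$, so the edge $n_1 \preccurlyeq_p n_2$ occurs literally in $G$ and hence both endpoints lie in $|G|$ by the definition of the node list; note this case does not even invoke distinctness. For \ifrName{rel\_refl} we would have $n_1 \equiv n_2$, contradicting $n_1 \not\equiv n_2$, so the case is vacuous. For \ifrName{rel\_flip} the premise is $G \vdash n_2 \preccurlyeq^*_{\bar{p}} n_1$; since $n_2 \not\equiv n_1$ the induction hypothesis applies and yields $n_2, n_1 \in |G|$, which is exactly what we want, as the rule leaves the graph $G$ unchanged.

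The main obstacle is the transitivity case \ifrName{rel\_trans}, whose premises are $G \vdash n_1 \preccurlyeq^*_p n'$ and $G \vdash n' \preccurlyeq^*_p n_2$ for some intermediate node $n'$. Here I cannot feed the subderivations to the induction hypothesis blindly, because $n'$ may coincide with $n_1$ or with $n_2$, in which case the relevant subderivation fails to have distinct endpoints. The key observation is that, since $n_1 \not\equiv n_2$, the node $n'$ can equal at most one of $n_1, n_2$. To obtain $n_1 \in |G|$ I case split on whether $n' \equiv n_1$: if $n' \not\equiv n_1$ the left premise has distinct endpoints and the induction hypothesis gives $n_1 \in |G|$; if instead $n' \equiv n_1$, then the right premise reads $G \vdash n_1 \preccurlyeq^*_p n_2$ with $n_1 \not\equiv n_2$, and the induction hypothesis again gives $n_1 \in |G|$. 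A symmetric case split on whether $n' \equiv n_2$ yields $n_2 \in |G|$, and combining the two memberships closes the case.

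In summary, the only delicate point is routing the transitivity argument through whichever of the two subderivations is guaranteed to carry distinct endpoints; everything else follows immediately from the shape of the rules and the definition of $|G|$. I expect the formal (Agda) version to make the two case splits on $n' \equiv n_1$ and $n' \equiv n_2$ explicit, relying on decidable equality of nodes, with the \ifrName{rel\_refl} branch discharged by the distinctness premise.
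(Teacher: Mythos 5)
Your proof is correct and follows the same route as the paper, which simply asserts the result "by straightforward induction on the form of $G \vdash \dttnt{n_{{\mathrm{1}}}} \, \preccurlyeq^*_{\dttnt{p}} \, \dttnt{n_{{\mathrm{2}}}}$" without detail. Your careful handling of the transitivity case via the two case splits on whether the intermediate node coincides with $\dttnt{n_{{\mathrm{1}}}}$ or $\dttnt{n_{{\mathrm{2}}}}$ supplies exactly the missing detail that makes the induction go through.
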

\begin{proof}
  This holds by straightforward induction on the form of $G  \vdash  \dttnt{n_{{\mathrm{1}}}} \,  \preccurlyeq^*_{ \dttnt{p} }  \, \dttnt{n_{{\mathrm{2}}}}$.
\end{proof}
Finally, we arrive to raising the lower bound logically and general
monotonicity.  The latter depending on the former.  These are the last
of the admissibility results before showing that all translations of
derivable L-sequents are derivable in DIL.
\begin{lem}[Raising the Lower Bound Logically]
  \label{lemma:raising_the_lower_bound_logically}
  If $G  \dttsym{,}  G_{{\mathrm{1}}}  \dttsym{,}  G'  \dttsym{;}  \Gamma  \vdash  \dttnt{p} \, \dttnt{A}  \mathbin{@}  \dttnt{n}$ and $G  \dttsym{,}  G'  \vdash  \dttnt{n_{{\mathrm{1}}}} \,  \preccurlyeq^*_{ \dttnt{p} }  \, \dttnt{n_{{\mathrm{2}}}}$, then
  $G  \dttsym{,}  \mathsf{raise} \, \dttsym{(}  \dttnt{n_{{\mathrm{1}}}}  \dttsym{,}  \dttnt{n_{{\mathrm{2}}}}  \dttsym{,}  G_{{\mathrm{1}}}  \dttsym{)}  \dttsym{,}  G'  \dttsym{;}  \Gamma  \vdash  \dttnt{p} \, \dttnt{A}  \mathbin{@}  \dttnt{n}$.
\end{lem}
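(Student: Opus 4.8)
The plan is to argue by induction on the derivation of $G, G_{1}, G'; \Gamma \vdash p\,A \mathbin{@} n$, generalizing over the active sequent (its formula, polarity, node, and context) and over the component $G'$, while holding the raising data $n_{1}, n_{2}$, the polarity $p$, the base graph $G$, and the raised component $G_{1}$ fixed. Throughout I treat graphs as multisets, so the three pieces $G$, $\mathsf{raise}(n_{1}, n_{2}, G_{1})$, and $G'$ may be reassociated and permuted freely. The key structural fact is that only \dttdrulename{ax}, \dttdrulename{imp}, and \dttdrulename{impBar} inspect the graph; for the graph-blind rules \dttdrulename{unit}, \dttdrulename{and}, \dttdrulename{andBar}, and \dttdrulename{cut} I would apply the induction hypothesis to each premise, carrying the fixed hypothesis $G, G' \vdash n_{1} \preccurlyeq^*_{p} n_{2}$ along unchanged, and reapply the rule.

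For \dttdrulename{ax} the premise is a reachability judgment $G, G_{1}, G' \vdash m \preccurlyeq^*_{q} m'$, where the polarity $q$ of the active formula need not equal $p$. Reassociating the graph as $(G, G'), G_{1}$, I would invoke Raising the Lower Bound (Lemma~\ref{lemma:raise_lower}) with $G, G'$ playing the role of the base graph: the fixed hypothesis $G, G' \vdash n_{1} \preccurlyeq^*_{p} n_{2}$ is precisely its first premise, and since that lemma lets the raised polarity differ from the polarity of the raising edge, it yields $(G, G'), \mathsf{raise}(n_{1}, n_{2}, G_{1}) \vdash m \preccurlyeq^*_{q} m'$, which after permutation is exactly what is needed to reapply \dttdrulename{ax}. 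The reachability premise of \dttdrulename{impBar} (which carries polarity $\bar{p}$, again harmless by the same remark) is handled identically, and its two logical premises are discharged by the induction hypothesis.

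The main obstacle is the \dttdrulename{imp} case, since that rule both extends the graph with a fresh edge and carries a freshness side condition. Its premise is $(G, G_{1}, G', n \preccurlyeq_{p} n'); \Gamma, p\,A \mathbin{@} n' \vdash p\,B \mathbin{@} n'$ with $n' \notin |G, G_{1}, G'|, |\Gamma|$. I would regroup the graph as $G, G_{1}, G''$ with $G'' := G', n \preccurlyeq_{p} n'$, so the new edge is absorbed into the outer component rather than into $G_{1}$, which is the part rewritten by $\mathsf{raise}$. Applying the induction hypothesis then requires $G, G'' \vdash n_{1} \preccurlyeq^*_{p} n_{2}$, which follows from the hypothesis $G, G' \vdash n_{1} \preccurlyeq^*_{p} n_{2}$ by Graph Weakening (Lemma~\ref{lemma:graph_weakening}). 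The induction hypothesis yields $G, \mathsf{raise}(n_{1}, n_{2}, G_{1}), G', n \preccurlyeq_{p} n'; \Gamma, p\,A \mathbin{@} n' \vdash p\,B \mathbin{@} n'$, and because $\mathsf{raise}$ leaves the edge $n \preccurlyeq_{p} n'$ untouched, I can reapply \dttdrulename{imp} once the freshness condition is re-established.

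It remains to check that $n' \notin |G, \mathsf{raise}(n_{1}, n_{2}, G_{1}), G'|$. By Definition~\ref{def:raise} the nodes of $\mathsf{raise}(n_{1}, n_{2}, G_{1})$ lie in $|G_{1}| \cup \{n_{2}\}$, so the only node raising can introduce is $n_{2}$. In the trivial case $n_{1} \equiv n_{2}$ the operation $\mathsf{raise}$ acts as the identity and the lemma is immediate; otherwise, Graph Node Containment (Lemma~\ref{lemma:graph_node_containment}) applied to $G, G' \vdash n_{1} \preccurlyeq^*_{p} n_{2}$ gives $n_{2} \in |G, G'|$, and since $n' \notin |G, G'|$ we conclude $n' \not\equiv n_{2}$. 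Hence no node produced by raising clashes with $n'$, freshness is preserved, and the \dttdrulename{imp} case---and with it the induction---goes through.
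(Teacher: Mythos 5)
Your proposal is correct and follows essentially the same route as the paper's proof: induction on the typing derivation, discharging the reachability premises of \dttdrulename{ax} and \dttdrulename{impBar} via Lemma~\ref{lemma:raise_lower} after reassociating the graph, handling the graph-blind rules by reapplying the induction hypothesis, and re-establishing the freshness side condition of \dttdrulename{imp} via Lemma~\ref{lemma:graph_node_containment} together with the trivial case $n_1 \equiv n_2$. You are in fact slightly more explicit than the paper in noting that Graph Weakening (Lemma~\ref{lemma:graph_weakening}) is needed to carry the hypothesis $G, G' \vdash n_1 \preccurlyeq^*_{p} n_2$ past the fresh edge added in the \dttdrulename{imp} case.
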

\begin{proof}
  This proof holds by induction on the form of $G  \dttsym{,}  G_{{\mathrm{1}}}  \dttsym{,}  G'  \dttsym{;}  \Gamma  \vdash  \dttnt{p} \, \dttnt{A}  \mathbin{@}  \dttnt{n}$.  For the full proof see
  Appendix~\ref{subsec:proof_of_raising_the_lower_bound_logically}.
\end{proof}

\begin{lem}[General Monotonicity]
  \label{lemma:genmono}
  If $G  \vdash  \dttnt{n_{{\mathrm{1}}}} \,  \preccurlyeq^*_{ \dttnt{p_{{\mathrm{1}}}} }  \, \dttnt{n'_{{\mathrm{1}}}}$, \ldots, $G  \vdash  \dttnt{n_{\dttmv{i}}} \,  \preccurlyeq^*_{ \dttnt{p_{\dttmv{i}}} }  \, \dttnt{n'_{\dttmv{i}}}$, $G  \vdash  \dttnt{m} \,  \preccurlyeq^*_{ \dttnt{p} }  \, \dttnt{m'}$, and
  $G  \dttsym{;}   \bar{  \dttnt{p_{{\mathrm{1}}}}  }  \, \dttnt{A_{{\mathrm{1}}}}  \mathbin{@}  \dttnt{n_{{\mathrm{1}}}}  \dttsym{,} \, ... \, \dttsym{,}   \bar{  \dttnt{p_{\dttmv{i}}}  }  \, \dttnt{A_{\dttmv{i}}}  \mathbin{@}  \dttnt{n_{\dttmv{i}}}  \vdash  \dttnt{p} \, \dttnt{B}  \mathbin{@}  \dttnt{m}$, then 
  $G  \dttsym{;}    \bar{  \dttnt{p_{{\mathrm{1}}}}  }    \dttnt{A_{{\mathrm{1}}}}  @  \dttnt{n'_{{\mathrm{1}}}}  , \ldots ,   \bar{  \dttnt{p_{\dttmv{i}}}  }    \dttnt{A_{\dttmv{i}}}  @  \dttnt{n'_{\dttmv{i}}}   \vdash  \dttnt{p} \, \dttnt{B}  \mathbin{@}  \dttnt{m'}$.
\end{lem}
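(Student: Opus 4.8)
The plan is to prove the statement by induction on the derivation of the hypothesis sequent $G; \bar{p_1} A_1 @ n_1, \ldots, \bar{p_i} A_i @ n_i \vdash p\, B @ m$, generalizing over the entire multiset of hypotheses together with their target nodes $n'_j$ and the conclusion node $m'$ simultaneously. Carrying all the hypotheses at once is essential rather than cosmetic: the axiom-cut cases feed the active formula back into the context and tie the conclusion node of a premise to the node of a hypothesis, so an induction that moved only the conclusion, or only a fixed hypothesis, would not be strong enough to re-establish the cut at the raised nodes. The three congruence cases \textsc{unit}, \textsc{and}, and \textsc{andBar} are immediate: \textsc{unit} re-derives $p\,\langle p\rangle$ at $m'$ in the moved context directly, while \textsc{and} and \textsc{andBar} invoke the induction hypothesis on their premises (which share the conclusion node $m$, moved to $m'$, and the same hypotheses) and reassemble with the same rule.

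The cases whose side conditions are reachability facts are discharged using the structural rules of the reachability judgment (Figure~\ref{fig:dil-reach}) together with Graph Weakening (Lemma~\ref{lemma:graph_weakening}). In the \textsc{ax} case the active hypothesis is some $\bar{p_j} A_j @ n_j$ with $\bar{p_j}=p$ (so $p_j=\bar p$), $A_j=B$, and side condition $G \vdash n_j \preccurlyeq^*_p m$. To re-derive \textsc{ax} with this hypothesis now at $n'_j$ and the conclusion at $m'$ I must produce $G \vdash n'_j \preccurlyeq^*_p m'$; this follows by flipping the hypothesis reachability $n_j \preccurlyeq^*_{\bar p} n'_j$ into $n'_j \preccurlyeq^*_p n_j$ via \textsc{rel\_flip} and composing with $n_j \preccurlyeq^*_p m$ and $m \preccurlyeq^*_p m'$ through \textsc{rel\_trans}. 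The \textsc{impBar} case is analogous: its witnessing reachability $m \preccurlyeq^*_{\bar p} n'$ is recombined with the flipped conclusion reachability ($m \preccurlyeq^*_p m'$ becomes $m' \preccurlyeq^*_{\bar p} m$ by \textsc{rel\_flip}) to yield $m' \preccurlyeq^*_{\bar p} n'$ via \textsc{rel\_trans}, after which the induction hypothesis on the two premises — keeping $n'$ fixed by \textsc{rel\_refl} and moving only the $n_j$ — finishes the case.

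The cut rules, which at this point in the development are the axiom cuts \textsc{axCut} and \textsc{axCutBar} replacing the general \textsc{cut}, are where the simultaneous generalization is indispensable. In \textsc{axCut} the conclusion is $p\, A @ m$ and the premise $G; \Gamma, \bar p\, A @ m \vdash \bar p\, B @ n'$ carries the extra hypothesis $\bar p\, A @ m$, whose formula polarity $\bar p$ is matched exactly by the conclusion reachability $m \preccurlyeq^*_p m'$, so the cut formula moves together with the conclusion. Moreover the membership side condition $p\, B @ n' \in (\Gamma, \bar p\, A @ m)$ necessarily selects one of the moved hypotheses, say $\bar{p_k} A_k @ n_k$ with $n_k = n'$ and $p_k = \bar p$; applying the induction hypothesis to the premise then moves the $n_j \mapsto n'_j$, the cut node $m \mapsto m'$, and the premise's active node $n' = n_k \mapsto n'_k$ all at once, the reachability required for the active node being precisely the hypothesis reachability $n_k \preccurlyeq^*_{\bar p} n'_k$. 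Reassembling with \textsc{axCut}, and symmetrically \textsc{axCutBar}, closes the case; an induction that did not move the whole context in lockstep could not keep the membership witness aligned with the active conclusion.

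The genuinely delicate case, and the main obstacle, is \textsc{imp}. Its premise lives in the extended graph $G, m \preccurlyeq_p n'$ with $n'$ fresh, so moving the conclusion to $m'$ forces me to redirect the freshly introduced edge so that it emanates from $m'$ rather than $m$. This is exactly what the $\mathsf{raise}$ operation (Definition~\ref{def:raise}) and Raising the Lower Bound Logically (Lemma~\ref{lemma:raising_the_lower_bound_logically}) accomplish: instantiating that lemma with the subgraph $G_1 = (m \preccurlyeq_p n')$ and the reachability $m \preccurlyeq^*_p m'$ rewrites the premise into one over $G, \mathsf{raise}(m, m', (m \preccurlyeq_p n')) = G, m' \preccurlyeq_p n'$. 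I then apply the induction hypothesis to move the $n_j \mapsto n'_j$ — transporting their reachabilities into the extended graph by Lemma~\ref{lemma:graph_weakening}, and keeping the bound hypothesis $p\, A @ n'$ and the conclusion at $n'$ fixed by \textsc{rel\_refl} — and close with \textsc{imp} at $m'$. The only bookkeeping is freshness: should some $n'_j$ collide with the bound node $n'$, I first rename $n'$ to a node fresh for $G$, the raised context, and all the $n'_j$, which is harmless since $n'$ is bound by the rule. Getting the edge-redirection in this case to line up with the rule's freshness discipline is the crux of the argument; every other case reduces to composing reachabilities with \textsc{rel\_trans} and \textsc{rel\_flip}.
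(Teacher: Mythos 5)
Your proof is correct and follows essentially the same route as the paper's: induction on the hypothesis derivation with the entire context generalized, reachability composition via \textsc{rel\_flip}/\textsc{rel\_trans} in the \textsc{ax} and \textsc{impBar} cases, the $\mathsf{raise}$ machinery (Lemma~\ref{lemma:raising_the_lower_bound_logically}) together with Graph Weakening in the \textsc{imp} case, and simultaneous movement of the cut formula with the conclusion in the axiom-cut cases. The only cosmetic differences are that the paper applies the induction hypothesis before raising the lower bound in the \textsc{imp} case, and it discharges the freshness concern by assuming the $n_j, n'_j$ unique (hence contained in $\dttsym{\mbox{$\mid$}}  G  \dttsym{\mbox{$\mid$}}$ by Lemma~\ref{lemma:graph_node_containment}) rather than by renaming the bound node.
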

\begin{proof}
  This proof holds by induction on the form of $G  \dttsym{;}   \bar{  \dttnt{p_{{\mathrm{1}}}}  }  \, \dttnt{A_{{\mathrm{1}}}}  \mathbin{@}  \dttnt{n_{{\mathrm{1}}}}  \dttsym{,} \, ... \, \dttsym{,}   \bar{  \dttnt{p_{\dttmv{i}}}  }  \, \dttnt{A_{\dttmv{i}}}  \mathbin{@}  \dttnt{n_{\dttmv{i}}}  \vdash  \dttnt{p} \, \dttnt{B}  \mathbin{@}  \dttnt{m}$.
  For the full proof see Appendix~\ref{subsec:proof_of_general_monotonicity}.
\end{proof}

The following are corollaries of general monotonicity.  The latter
two corollaries show that the monotonicity rules of L are admissible
in DIL.
\begin{cor}[Monotonicity]
  \label{coro:mono}
  Suppose $G  \vdash  \dttnt{n_{{\mathrm{1}}}} \,  \preccurlyeq^*_{ \dttnt{p} }  \, \dttnt{n_{{\mathrm{2}}}}$.  Then
  \begin{enumerate}[label=\roman*.]
  \item[i.]  if $G  \dttsym{;}  \Gamma  \dttsym{,}   \bar{  \dttnt{p}  }  \, \dttnt{A}  \mathbin{@}  \dttnt{n_{{\mathrm{1}}}}  \dttsym{,}  \Gamma'  \vdash  \dttnt{p'} \, \dttnt{B}  \mathbin{@}  \dttnt{n'}$, then 
    $G  \dttsym{;}  \Gamma  \dttsym{,}   \bar{  \dttnt{p}  }  \, \dttnt{A}  \mathbin{@}  \dttnt{n_{{\mathrm{2}}}}  \dttsym{,}  \Gamma'  \vdash  \dttnt{p'} \, \dttnt{B}  \mathbin{@}  \dttnt{n'}$, and
  \item[ii.] if $G  \dttsym{;}  \Gamma  \vdash  \dttnt{p} \, \dttnt{A}  \mathbin{@}  \dttnt{n_{{\mathrm{1}}}}$, then $G  \dttsym{;}  \Gamma  \vdash  \dttnt{p} \, \dttnt{A}  \mathbin{@}  \dttnt{n_{{\mathrm{2}}}}$.    
  \end{enumerate}
\end{cor}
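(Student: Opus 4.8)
The plan is to derive both parts as direct instances of General Monotonicity (Lemma~\ref{lemma:genmono}), exploiting the fact that the reachability judgment is reflexive (rule \dttdrulename{rel\_refl}): any node we wish to leave untouched can be ``moved'' to itself, so that the only genuine displacement uses the hypothesis $G \vdash \dttnt{n_{{\mathrm{1}}}} \, \preccurlyeq^*_p \, \dttnt{n_{{\mathrm{2}}}}$. A preliminary observation makes this work for arbitrary contexts: although Lemma~\ref{lemma:genmono} is stated for a context whose hypotheses all have the shape $\bar{\dttnt{p_j}}\,\dttnt{A_j} \mathbin{@} \dttnt{n_j}$, every hypothesis $\dttnt{q}\,\dttnt{C} \mathbin{@} \dttnt{k}$ can be presented in this form by taking $\dttnt{p_j} = \bar{\dttnt{q}}$, so that $\bar{\dttnt{p_j}} = \dttnt{q}$. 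Hence any $\Gamma$ (and $\Gamma'$) can be written as a hypothesis list $\bar{\dttnt{p_1}}\,\dttnt{A_1} \mathbin{@} \dttnt{k_1}, \ldots, \bar{\dttnt{p_i}}\,\dttnt{A_i} \mathbin{@} \dttnt{k_i}$ to which General Monotonicity applies.

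For part~(ii) I would instantiate Lemma~\ref{lemma:genmono} with $\Gamma$ presented as above, assigning to each hypothesis the \emph{same} target node $\dttnt{k_j}$ and discharging the side condition $G \vdash \dttnt{k_j} \, \preccurlyeq^*_{\dttnt{p_j}} \, \dttnt{k_j}$ by \dttdrulename{rel\_refl}. For the conclusion I would set $\dttnt{m} = \dttnt{n_{{\mathrm{1}}}}$, $\dttnt{m'} = \dttnt{n_{{\mathrm{2}}}}$, $\dttnt{B} = \dttnt{A}$, with ambient polarity $\dttnt{p}$, feeding in the assumed reachability $G \vdash \dttnt{n_{{\mathrm{1}}}} \, \preccurlyeq^*_p \, \dttnt{n_{{\mathrm{2}}}}$. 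Since no context node is displaced, the conclusion of the lemma is exactly $G \dttsym{;} \Gamma \vdash \dttnt{p}\,\dttnt{A} \mathbin{@} \dttnt{n_{{\mathrm{2}}}}$.

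For part~(i) the displacement is applied to a single hypothesis instead of the conclusion. I would present the whole context $\Gamma \dttsym{,} \bar{\dttnt{p}}\,\dttnt{A} \mathbin{@} \dttnt{n_{{\mathrm{1}}}} \dttsym{,} \Gamma'$ as a hypothesis list, assign the distinguished hypothesis $\bar{\dttnt{p}}\,\dttnt{A} \mathbin{@} \dttnt{n_{{\mathrm{1}}}}$ the target node $\dttnt{n_{{\mathrm{2}}}}$ (justified by the assumption $G \vdash \dttnt{n_{{\mathrm{1}}}} \, \preccurlyeq^*_p \, \dttnt{n_{{\mathrm{2}}}}$), and assign every other hypothesis its own node by \dttdrulename{rel\_refl}. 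The conclusion $\dttnt{p'}\,\dttnt{B} \mathbin{@} \dttnt{n'}$ is kept fixed by supplying $G \vdash \dttnt{n'} \, \preccurlyeq^*_{\dttnt{p'}} \, \dttnt{n'}$. Applying Lemma~\ref{lemma:genmono} then yields $G \dttsym{;} \Gamma \dttsym{,} \bar{\dttnt{p}}\,\dttnt{A} \mathbin{@} \dttnt{n_{{\mathrm{2}}}} \dttsym{,} \Gamma' \vdash \dttnt{p'}\,\dttnt{B} \mathbin{@} \dttnt{n'}$, invoking Exchange (Lemma~\ref{lemma:exchange}) where needed to put the context back into the stated order.

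There is no new induction to perform — the single inductive effort already resides in Lemma~\ref{lemma:genmono}, and the corollary is pure bookkeeping. The one point I expect to require care, and hence the only real obstacle, is the polarity matching: because the distinguished hypothesis in part~(i) is written $\bar{\dttnt{p}}\,\dttnt{A}$, it corresponds in the General Monotonicity template to an index $j$ with $\dttnt{p_j} = \dttnt{p}$ (so that $\bar{\dttnt{p_j}} = \bar{\dttnt{p}}$), whence the required side condition $G \vdash \dttnt{n_j} \, \preccurlyeq^*_{\dttnt{p_j}} \, \dttnt{n'_j}$ becomes \emph{exactly} the hypothesis $G \vdash \dttnt{n_{{\mathrm{1}}}} \, \preccurlyeq^*_p \, \dttnt{n_{{\mathrm{2}}}}$ rather than its bar. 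Lining up the polarities correctly, instead of their opposites, is the detail most likely to trip up a careless instantiation.
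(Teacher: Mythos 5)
Your proof is correct and follows exactly the route the paper intends: the paper states this result as an immediate corollary of General Monotonicity (Lemma~\ref{lemma:genmono}) with no further argument, and your instantiation — writing every hypothesis in the form $\bar{\dttnt{p_j}}\,\dttnt{A_j}\mathbin{@}\dttnt{n_j}$, using \dttdrulename{rel\_refl} to keep undisplaced nodes fixed, and checking that the polarity on the reachability side condition is $\dttnt{p_j}$ (the bar of the hypothesis polarity) so that the assumption $G \vdash \dttnt{n_{{\mathrm{1}}}} \, \preccurlyeq^*_{\dttnt{p}} \, \dttnt{n_{{\mathrm{2}}}}$ is used unbarred — supplies precisely the missing bookkeeping.
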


\begin{cor}[MonoL]
  \label{corollary:monol}
  If $G  \dttsym{;}  \Gamma  \dttsym{,}  \dttnt{p} \, \dttnt{A}  \mathbin{@}  \dttnt{n_{{\mathrm{1}}}}  \dttsym{,}  \dttnt{p} \, \dttnt{A}  \mathbin{@}  \dttnt{n_{{\mathrm{2}}}}  \dttsym{,}  \Gamma'  \vdash  \dttnt{p'} \, \dttnt{B}  \mathbin{@}  \dttnt{n'}$ is derivable and $ \dttnt{n_{{\mathrm{1}}}}    \preccurlyeq_{ \dttnt{p} }    \dttnt{n_{{\mathrm{2}}}}  \in  G $, then
  $G  \dttsym{;}  \Gamma  \dttsym{,}  \dttnt{p} \, \dttnt{A}  \mathbin{@}  \dttnt{n_{{\mathrm{1}}}}  \dttsym{,}  \Gamma'  \vdash  \dttnt{p'} \, \dttnt{B}  \mathbin{@}  \dttnt{n'}$ is derivable.
\end{cor}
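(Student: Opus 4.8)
The plan is to reduce MonoL to the already-established Monotonicity corollary (Corollary~\ref{coro:mono}) together with Contraction (Lemma~\ref{lemma:contract}), with no further induction. The idea is to use monotonicity to transport the second copy of $A$ from $n_2$ down onto $n_1$, so that the two hypotheses coincide, and then contract the resulting duplicate.

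First I would promote the graph membership $ \dttnt{n_{{\mathrm{1}}}}    \preccurlyeq_{ \dttnt{p} }    \dttnt{n_{{\mathrm{2}}}}  \in  G $ to a reachability judgment. By \dttdrulename{rel\_ax} we obtain $G  \vdash  \dttnt{n_{{\mathrm{1}}}} \,  \preccurlyeq^*_{ \dttnt{p} }  \, \dttnt{n_{{\mathrm{2}}}}$, and then by \dttdrulename{rel\_flip} we get $G  \vdash  \dttnt{n_{{\mathrm{2}}}} \,  \preccurlyeq^*_{  \bar{  \dttnt{p}  }  }  \, \dttnt{n_{{\mathrm{1}}}}$, since turning $ \preccurlyeq^*_{ \dttnt{p} } $ into $ \preccurlyeq^*_{  \bar{  \dttnt{p}  }  } $ reverses the orientation of the edge. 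This flipped reachability is exactly the hypothesis required to invoke part~(i) of Corollary~\ref{coro:mono} on the formula sitting at $n_2$: instantiating the relation of that corollary as $G  \vdash  \dttnt{n_{{\mathrm{2}}}} \,  \preccurlyeq^*_{  \bar{  \dttnt{p}  }  }  \, \dttnt{n_{{\mathrm{1}}}}$, the moved hypothesis carries polarity $\overline{\bar p}=p$ and is transported from its source node $n_2$ back to the target node $n_1$.

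Concretely, I would read the assumed derivation $G  \dttsym{;}  \Gamma  \dttsym{,}  \dttnt{p} \, \dttnt{A}  \mathbin{@}  \dttnt{n_{{\mathrm{1}}}}  \dttsym{,}  \dttnt{p} \, \dttnt{A}  \mathbin{@}  \dttnt{n_{{\mathrm{2}}}}  \dttsym{,}  \Gamma'  \vdash  \dttnt{p'} \, \dttnt{B}  \mathbin{@}  \dttnt{n'}$ with the left context of the moved formula taken to be $\Gamma  \dttsym{,}  \dttnt{p} \, \dttnt{A}  \mathbin{@}  \dttnt{n_{{\mathrm{1}}}}$ and the right context taken to be $\Gamma'$. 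Corollary~\ref{coro:mono}(i) then rewrites it into a derivation of $G  \dttsym{;}  \Gamma  \dttsym{,}  \dttnt{p} \, \dttnt{A}  \mathbin{@}  \dttnt{n_{{\mathrm{1}}}}  \dttsym{,}  \dttnt{p} \, \dttnt{A}  \mathbin{@}  \dttnt{n_{{\mathrm{1}}}}  \dttsym{,}  \Gamma'  \vdash  \dttnt{p'} \, \dttnt{B}  \mathbin{@}  \dttnt{n'}$, which now carries two identical copies of $\dttnt{p} \, \dttnt{A}  \mathbin{@}  \dttnt{n_{{\mathrm{1}}}}$. Finally, Contraction (Lemma~\ref{lemma:contract}) merges these two copies and yields the desired $G  \dttsym{;}  \Gamma  \dttsym{,}  \dttnt{p} \, \dttnt{A}  \mathbin{@}  \dttnt{n_{{\mathrm{1}}}}  \dttsym{,}  \Gamma'  \vdash  \dttnt{p'} \, \dttnt{B}  \mathbin{@}  \dttnt{n'}$.

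Since both ingredients are already proved, the argument is just a composition of three admissible steps, so the only real obstacle is bookkeeping of polarities and orientation. One must notice that the hypothesis moved by Corollary~\ref{coro:mono}(i) carries the \emph{opposite} polarity to its relation, so that the edge polarity $p$ in $G$ must be matched against the relation polarity $\bar p$; and one must apply \dttdrulename{rel\_flip} in the correct direction so that the transport is from $n_2$ \emph{down} to $n_1$ rather than from $n_1$ up to $n_2$. Getting this orientation right is precisely what makes both copies land on $n_1$, so that the final contraction is applicable.
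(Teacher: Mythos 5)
Your proposal is correct and matches the paper's own proof, which likewise derives \textsc{MonoL} by applying part (i) of Corollary~\ref{coro:mono} followed by Contraction (Lemma~\ref{lemma:contract}). You have in fact spelled out the one subtlety the paper leaves implicit — that the edge must be flipped to $G \vdash \dttnt{n_{{\mathrm{2}}}} \, \preccurlyeq^*_{\bar{\dttnt{p}}} \, \dttnt{n_{{\mathrm{1}}}}$ so that the copy at $\dttnt{n_{{\mathrm{2}}}}$ is transported onto $\dttnt{n_{{\mathrm{1}}}}$ (the opposite direction is not derivable), making contraction applicable.
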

\begin{proof}
  This result easily follows by part one of Corollary~\ref{coro:mono}, and contraction (Lemma~\ref{lemma:contract}).
\end{proof}

\begin{cor}[MonoR]
  \label{corollary:monor}
  If $G  \dttsym{;}  \Gamma  \dttsym{,}   \bar{  \dttnt{p}  }  \, \dttnt{A}  \mathbin{@}  \dttnt{n_{{\mathrm{1}}}}  \dttsym{,}  \Gamma'  \vdash  \dttnt{p} \, \dttnt{A}  \mathbin{@}  \dttnt{n_{{\mathrm{2}}}}$ and $\dttnt{n_{{\mathrm{1}}}} \,  \preccurlyeq_{ \dttnt{p} }  \, \dttnt{n_{{\mathrm{2}}}} \in G$, then 
  $G  \dttsym{;}  \Gamma  \dttsym{,}  \Gamma'  \vdash  \dttnt{p} \, \dttnt{A}  \mathbin{@}  \dttnt{n_{{\mathrm{2}}}}$ is derivable.
\end{cor}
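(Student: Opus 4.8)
The plan is to read \textsc{MonoR} as the dual of \textsc{MonoL}: in \textsc{MonoL} the two copies of the formula share the active polarity and are merged by contraction, whereas here the extra formula $\bar{p}\,A @ n_1$ carries the \emph{opposite} polarity of the active conclusion $p\,A @ n_2$, so it behaves as an alternate conclusion. Such a formula cannot be removed by contraction; instead I would discharge it by an axiom cut against the active conclusion. The whole argument is a short chain built from Corollary~\ref{coro:mono}, Exchange, and \textsc{axCutBar}.

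First I would convert the graph membership $n_1 \preccurlyeq_p n_2 \in G$ into a reachability derivation $G \vdash n_1 \preccurlyeq^*_p n_2$ by one use of \textsc{rel\_ax}. Because the hypothesis $\bar{p}\,A @ n_1$ has polarity opposite to the edge polarity $p$, part~(i) of Corollary~\ref{coro:mono} applies verbatim and slides this hypothesis forward along the edge, turning the assumed derivation of $G; \Gamma, \bar{p}\,A @ n_1, \Gamma' \vdash p\,A @ n_2$ into one of $G; \Gamma, \bar{p}\,A @ n_2, \Gamma' \vdash p\,A @ n_2$. A use of Exchange (Lemma~\ref{lemma:exchange}) to bring the relocated formula to the end of the context then yields $G; \Gamma, \Gamma', \bar{p}\,A @ n_2 \vdash p\,A @ n_2$.

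Now the hypothesis $\bar{p}\,A @ n_2$ and the active conclusion $p\,A @ n_2$ are the same formula at the same node with dual polarities, which is exactly the configuration that \textsc{axCutBar} collapses. I would instantiate that rule with the distinguished formula $\bar{p}\,A @ n_2$ (that is, taking its $B := A$ and $n' := n_2$): its membership premise $\bar{p}\,A @ n_2 \in (\Gamma, \Gamma', \bar{p}\,A @ n_2)$ holds immediately, and its remaining premise is exactly the sequent obtained in the previous step. The conclusion of the rule is $G; \Gamma, \Gamma' \vdash p\,A @ n_2$, the desired statement.

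The only place demanding care --- and hence the main obstacle, modest as it is --- is the polarity bookkeeping: one must confirm that the hypothesis polarity $\bar{p}$ and edge polarity $p$ match the hypotheses of Corollary~\ref{coro:mono}(i), and that the choice $B := A$, $n' := n_2$ simultaneously satisfies the membership premise and the second premise of \textsc{axCutBar}. Once these alignments are checked, every step is an immediate appeal to an already-established rule or lemma.
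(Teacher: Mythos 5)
Your proof is correct and follows the paper's own argument exactly: apply part (i) of Corollary~\ref{coro:mono} to relocate the hypothesis from $\dttnt{n_{{\mathrm{1}}}}$ to $\dttnt{n_{{\mathrm{2}}}}$, then discharge it against the active conclusion with an axiom cut. The paper merely says ``the axiom cut rule'' and leaves Exchange implicit; your identification of \textsc{axCutBar} with $B := A$, $n' := n_2$ is the correct instantiation.
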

\begin{proof}
  Suppose $G  \dttsym{;}  \Gamma  \dttsym{,}   \bar{  \dttnt{p}  }  \, \dttnt{A}  \mathbin{@}  \dttnt{n_{{\mathrm{1}}}}  \dttsym{,}  \Gamma'  \vdash  \dttnt{p} \, \dttnt{A}  \mathbin{@}  \dttnt{n_{{\mathrm{2}}}}$ and $\dttnt{n_{{\mathrm{1}}}} \,  \preccurlyeq_{ \dttnt{p} }  \, \dttnt{n_{{\mathrm{2}}}} \in G$.
  Then by part one of monotonicity (Corollary~\ref{coro:mono}) we know 
  $G  \dttsym{;}  \Gamma  \dttsym{,}   \bar{  \dttnt{p}  }  \, \dttnt{A}  \mathbin{@}  \dttnt{n_{{\mathrm{2}}}}  \dttsym{,}  \Gamma'  \vdash  \dttnt{p} \, \dttnt{A}  \mathbin{@}  \dttnt{n_{{\mathrm{2}}}}$.  Finally, we know by the 
  axiom cut rule that $G  \dttsym{;}  \Gamma  \dttsym{,}  \Gamma'  \vdash  \dttnt{p} \, \dttnt{A}  \mathbin{@}  \dttnt{n_{{\mathrm{2}}}}$.    
\end{proof}

We now have everything we need to prove that every derivable sequent
of L can be translated to a derivable sequent in DIL.  The following
definition defines the translation from L into DIL.
\begin{defi}
  \label{def:L-to-DIL}
  The following defines a translation of formulas of L to formulas of DIL:
  \begin{center}
    \begin{tabular}{cccccccccccccc}
      \begin{math}
      \begin{array}{lll}
         \mathsf{D}(  \top  )    & = &  \langle  \dttsym{+} \rangle \\
         \mathsf{D}(  \perp  )  & = &  \langle  \dttsym{-} \rangle \\
      \end{array}
    \end{math}
      & \ \   
      \begin{math}
      \begin{array}{lll}        
         \mathsf{D}(  \Lnt{A}  \land  \Lnt{B}  )  & = &   \mathsf{D}( \dttnt{A} )   \ndwedge{ \dttsym{+} }   \mathsf{D}( \dttnt{B} )  \\
         \mathsf{D}(  \Lnt{A}  \lor  \Lnt{B}  )  & = &   \mathsf{D}( \dttnt{A} )   \ndwedge{ \dttsym{-} }   \mathsf{D}( \dttnt{B} )  \\
      \end{array}
    \end{math}
      & \ \   
      \begin{math}
      \begin{array}{lll}
         \mathsf{D}(  \Lnt{A}  \supset  \Lnt{B}  )  & = &   \mathsf{D}( \dttnt{A} )   \ndto{ \dttsym{+} }   \mathsf{D}( \dttnt{B} )  \\
         \mathsf{D}(  \Lnt{B}  \prec  \Lnt{A}  )  & = &   \mathsf{D}( \dttnt{A} )   \ndto{ \dttsym{-} }   \mathsf{D}( \dttnt{B} )  \\
      \end{array}
    \end{math}
    \end{tabular}
  \end{center}

  \ \\
  \noindent 
  Next we extend the previous definition to contexts:\\
  \begin{center}
    \begin{math}
      \begin{array}{lll}
         \mathsf{D}(  \cdot  )^{ \Lnt{p} }  & = &  \cdot \\
         \mathsf{D}( \Lmv{n}  \Lsym{:}  \Lnt{A}  \Lsym{,}  \Gamma )^{ \Lnt{p} }  & = & \dttnt{p} \,  \mathsf{D}( \dttnt{A} )   \mathbin{@}  \dttnt{n}  \dttsym{,}   \mathsf{D}( \Gamma )^{ \dttnt{p} } \\
      \end{array}
    \end{math}
  \end{center}

  \ \\
  \noindent 
  The following defines the translation of graphs:\\
  \begin{center}
    \begin{math}
      \begin{array}{lll}
         \mathsf{D}(  \cdot  )  & = &  \cdot \\
         \mathsf{D}( \Lsym{(}  \Lmv{n_{{\mathrm{1}}}}  \Lsym{,}  \Lmv{n_{{\mathrm{2}}}}  \Lsym{)}  \Lsym{,}  G )  & = & \dttnt{n_{{\mathrm{1}}}} \,  \preccurlyeq_{ \dttsym{+} }  \, \dttnt{n_{{\mathrm{2}}}}  \dttsym{,}   \mathsf{D}( G ) \\
      \end{array}
    \end{math}
  \end{center}
  The translation of a L-sequent is a DIL-sequent that requires a
  particular formula as the active formula.  The following defines
  such a translation:
  \begin{itemize}
  \item[] An activation of a L-sequent $ \Gamma  \vdash_{ G }  \Delta $ is a
    DIL-sequent\\ $ \mathsf{D}( G )   \dttsym{;}   \mathsf{D}( \Gamma )^{ \dttsym{+} }   \dttsym{,}   \mathsf{D}( \Delta_{{\mathrm{1}}}  \dttsym{,}  \Delta_{{\mathrm{2}}} )^{ \dttsym{-} }   \vdash  \dttsym{+} \,  \mathsf{D}( \dttnt{A} )   \mathbin{@}  \dttnt{n}$, where $\Delta =
    \Delta_{{\mathrm{1}}}  \Lsym{,}  \Lmv{n}  \Lsym{:}  \Lnt{A}  \Lsym{,}  \Delta_{{\mathrm{2}}}$.
  \end{itemize}
\end{defi}
\noindent
The previous definition implies the following result:
\begin{lem}[Reachability]
  \label{lemma:reach}
  If $ \Lmv{n_{{\mathrm{1}}}}   G   \Lmv{n_{{\mathrm{2}}}} $, then $ \mathsf{D}( G )   \vdash  \dttnt{n_{{\mathrm{1}}}} \,  \preccurlyeq^*_{ \dttsym{+} }  \, \dttnt{n_{{\mathrm{2}}}}$.
\end{lem}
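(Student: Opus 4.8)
The plan is to prove this by a straightforward induction on the list structure of the L-graph $G$, exploiting the fact that $ \Lmv{n_{{\mathrm{1}}}}   G   \Lmv{n_{{\mathrm{2}}}} $ is by definition the membership $(n_1, n_2) \in G$. The key observation is that the graph translation $\mathsf{D}$ sends \emph{every} L-edge $(n_1, n_2)$ to the positively polarized DIL-edge $n_1 \preccurlyeq_{+} n_2$; no L-edge ever produces a negative edge. Consequently, whenever $(n_1, n_2)$ occurs in $G$, the edge $n_1 \preccurlyeq_{+} n_2$ occurs in $\mathsf{D}(G)$, and the reachability judgment can be discharged by a single application of \textsc{rel\_ax}.

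Concretely, I would argue as follows. In the base case $G = \cdot$ the hypothesis $(n_1, n_2) \in \cdot$ is vacuously false, so there is nothing to prove. In the inductive step $G = (m_1, m_2), G'$ I split on whether the membership is witnessed by the head edge or by the tail. If $(m_1, m_2) = (n_1, n_2)$, then by definition of $\mathsf{D}$ we have $\mathsf{D}(G) = n_1 \preccurlyeq_{+} n_2, \mathsf{D}(G')$, and \textsc{rel\_ax} with empty prefix yields $\mathsf{D}(G) \vdash n_1 \preccurlyeq^{*}_{+} n_2$ directly. If instead $(n_1, n_2) \in G'$, the induction hypothesis gives $\mathsf{D}(G') \vdash n_1 \preccurlyeq^{*}_{+} n_2$, and I prepend the head edge $m_1 \preccurlyeq_{+} m_2$ using Graph Weakening (Lemma~\ref{lemma:graph_weakening}), which preserves reachability under the addition of an edge.

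The only point requiring a little care---and the nearest thing to an obstacle---is the bookkeeping that identifies the multiset membership $(n_1, n_2) \in G$ with the position at which $n_1 \preccurlyeq_{+} n_2$ appears in the list-recursively defined $\mathsf{D}(G)$, together with the direction in which Graph Weakening adds edges. Since graphs are treated up to exchange and \textsc{rel\_ax} admits an arbitrary prefix and suffix, prepending the head edge is unproblematic; equivalently, one may simply observe that $\mathsf{D}(G)$ decomposes as $G_a, n_1 \preccurlyeq_{+} n_2, G_b$ for suitable $G_a, G_b$ and appeal to \textsc{rel\_ax} in a single step, dispensing with the induction altogether. I expect the whole argument to be very short, with all the substance residing in the fact that the translation keeps every edge positive, so that the $+$-polarized reachability judgment is exactly what \textsc{rel\_ax} supplies.
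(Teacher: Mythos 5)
Your proposal is correct and matches what the paper intends: the paper states this lemma without proof as an immediate consequence of Definition~\ref{def:L-to-DIL}, and your one-step argument---that $\mathsf{D}$ sends every L-edge to a positive DIL-edge, so $\mathsf{D}(G)$ decomposes as $G_a, n_1 \preccurlyeq_{+} n_2, G_b$ and \textsc{rel\_ax} applies directly---is exactly that justification. The auxiliary induction and the appeal to Graph Weakening are harmless but unnecessary, as you yourself observe.
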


The following result shows that every derivable L-sequent can be
translated into a derivable DIL-sequent.  We do this by considering an
arbitrary activation of the L-sequent, and then show that this
arbitrary activation is derivable in DIL, but if it so happens that
this is not the correct activation, then we can always get the correct
one by using the left-to-right lemma (Lemma~\ref{lemma:refocus}) to
switch out the active formula.
\begin{lem}[Containment of L in DIL]
  \label{lemma:containment-l-in-dil}
  If $ \mathsf{D}( G )   \dttsym{;}  \Gamma'  \vdash  \dttsym{+} \, \dttnt{A}  \mathbin{@}  \dttnt{n}$ is an activation of the derivable L-sequent 
  $ \Gamma  \vdash_{ G }  \Delta $, then $ \mathsf{D}( G )   \dttsym{;}  \Gamma'  \vdash  \dttsym{+} \, \dttnt{A}  \mathbin{@}  \dttnt{n}$ is derivable.
\end{lem}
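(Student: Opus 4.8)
The plan is to proceed by induction on the derivation of the L-sequent $\Gamma \vdash_{G} \Delta$. Because the lemma is stated for an \emph{arbitrary} activation, the induction hypothesis already supplies every activation of each sub-derivation, which is exactly what the argument needs: the two premises of a binary L-rule must in general be activated at different right-hand formulas. The central tension is that L is multi-conclusion whereas a DIL-sequent singles out one active formula and pushes every other right-hand formula into the context as a negative hypothesis. The device that reconciles the two presentations is the Left-to-Right lemma (Lemma~\ref{lemma:refocus}): it exchanges the active conclusion with any negative hypothesis while flipping polarities, and so lets us re-activate whichever right-hand formula a given L-rule happens to act on. Two facts keep the side conditions manageable throughout: $\mathsf{D}$ preserves the underlying node sets, so freshness conditions transfer verbatim, and edge directions are mediated by the Reachability lemma (Lemma~\ref{lemma:reach}) together with the $\ifrName{rel\_flip}$ rule.

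First I would dispatch the structural and graph rules. The graph rules $\ifrName{refl}$ and $\ifrName{trans}$ are handled by the admissible Reflexivity and Transitivity lemmas (Lemmas~\ref{lemma:reflexivity} and~\ref{lemma:transitivity}), and $\ifrName{monL}$, $\ifrName{monR}$ by Corollaries~\ref{corollary:monol} and~\ref{corollary:monor}. The rules $\ifrName{trueL}$ and $\ifrName{falseR}$ merely weaken, so the IH followed by Weakening (Lemma~\ref{lemma:weakening}) suffices, while $\ifrName{trueR}$ and $\ifrName{falseL}$ give the activation directly via the $\ifrName{unit}$ rule. For $\ifrName{hyp}$, i.e. $\Gamma, n:T \vdash_{G} n:T, \Delta$, there are two sub-cases: if the target activates the right occurrence of $n:T$, then the positive copy of $T$ sits in the context and the $\ifrName{ax}$ rule closes the goal using reflexivity of reachability ($\ifrName{rel\_refl}$); if the target activates some other $B \in \Delta$, then the context contains both $+T$ and $-T$ at $n$, a contradictory pair from which $\ifrName{axCut}$ (with $\ifrName{ax}$) derives the activation. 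This is precisely why the cut-free presentation with $\ifrName{axCut}$/$\ifrName{axCutBar}$ is strong enough here.

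The logical rules are the heart of the argument, and each splits on whether the target's active formula is the principal formula of the rule. When it is, I apply the matching DIL rule directly to the appropriately activated IH: L's $\ifrName{andR}$ becomes the DIL $\ifrName{and}$ rule, $\ifrName{impR}$ becomes $\ifrName{imp}$ (reusing L's fresh node and edge), $\ifrName{disjR}$ comes from the AndL lemma (Lemma~\ref{lemma:andl})---activating one disjunct turns the other into a negative hypothesis, which is exactly the premise shape Lemma~\ref{lemma:andl} consumes to build the negative conjunction $\ndwedge{-}$---and $\ifrName{subR}$ becomes $\ifrName{impBar}$, with the required reachability supplied by Lemma~\ref{lemma:reach} and $\ifrName{rel\_flip}$. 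When the active formula is \emph{not} the principal formula, the uniform move is to refocus the target (Lemma~\ref{lemma:refocus}) so that the principal formula becomes active, refocus each IH the same way, apply the DIL rule as above, and refocus back to restore the intended active formula, with Exchange (Lemma~\ref{lemma:exchange}) tidying the context order. The left rules $\ifrName{andL}$, $\ifrName{disjL}$, $\ifrName{impL}$, $\ifrName{subL}$ act on a positive hypothesis and are simulated by first refocusing that hypothesis into active position with flipped polarity and then applying the appropriate rule---for instance $\ifrName{impL}$ is simulated by $\ifrName{impBar}$ run at negative polarity, and $\ifrName{subL}$ (the dual of $\ifrName{impR}$) by the $\ifrName{imp}$ rule run at negative polarity---again closing by refocusing back.

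The hard part will be the bookkeeping in these off-focus and left-rule cases: getting the polarities of the refocused sequents to line up with the premises delivered by the induction hypothesis, and matching the edge an L-rule installs in its graph (always of the form $(n,n')$, hence positive under $\mathsf{D}$) against the edge the corresponding DIL rule expects when it is run at the opposite polarity. These agree only up to the $\preccurlyeq_{+}/\preccurlyeq_{-}$ orientation convention and $\ifrName{rel\_flip}$, and I expect verifying that agreement---most acutely for the co-implication rules $\ifrName{subL}$/$\ifrName{subR}$, where past-world edges meet the negative connective $\ndto{-}$---to be exactly where the reachability machinery (Lemma~\ref{lemma:reach} and $\ifrName{rel\_flip}$) must be deployed with care. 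The freshness conditions, by contrast, should be routine, since the fresh node chosen by an L-rule remains fresh for the translated graph and context.
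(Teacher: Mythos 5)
Your proposal is correct and follows essentially the same route as the paper's proof: induction on the L-derivation with the induction hypothesis quantified over all activations, the Left-to-Right lemma (Lemma~\ref{lemma:refocus}) as the device for refocusing the active formula, the admissible Reflexivity, Transitivity, MonoL/MonoR, Weakening, Exchange, and AndL lemmas for the graph, structural, and conjunction/disjunction rules, Lemma~\ref{lemma:reach} with $\ifrName{rel\_flip}$ for the reachability side conditions, and the axiom-cut trick for the off-focus sub-case of $\ifrName{hyp}$. The only small slip is $\ifrName{falseL}$, which is not closed directly by $\ifrName{unit}$ but by $\ifrName{axCutBar}$ playing the context hypothesis $\dttsym{+} \,  \langle  \dttsym{-} \rangle   \mathbin{@}  \dttnt{n}$ against a $\ifrName{unit}$-derived $\dttsym{-} \,  \langle  \dttsym{-} \rangle   \mathbin{@}  \dttnt{n}$ --- the same contradictory-pair pattern you already describe for $\ifrName{hyp}$.
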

\begin{proof}
  This proof holds by induction on the form of the sequent
  $ \Gamma  \vdash_{ G }  \Delta $.  For the full proof see Appendix~\ref{subsec:proof_of_containment_L_in_DIL}.
\end{proof}

\subsubsection{A DIL to L Translation}
\label{subsec:a_dil_to_l_translation}
This section is similar to the previous one, but we give a translation
of DIL-sequents to L-sequents.  We first have the definition of the
translation from DIL to L.
\begin{defi}
  \label{def:DIL-form-to-L-form}
  The following defines a translation of formulas of DIL to formulas of L:\\
  \begin{center}
    \begin{tabular}{cccccccccccccc}
      \begin{math}
        \begin{array}{lll}
           \mathsf{L}(  \langle  \dttsym{+} \rangle  )  & = &  \top \\
           \mathsf{L}(  \langle  \dttsym{-} \rangle  )  & = &  \perp \\
        \end{array}
      \end{math}
      & \ \  
      \begin{math}
        \begin{array}{lll}
           \mathsf{L}(  \dttnt{A}  \ndwedge{ \dttsym{+} }  \dttnt{B}  )  & = &   \mathsf{L}( \Lnt{A} )   \land   \mathsf{L}( \Lnt{B} )  \\
           \mathsf{L}(  \dttnt{A}  \ndwedge{ \dttsym{-} }  \dttnt{B}  )  & = &   \mathsf{L}( \Lnt{A} )   \lor   \mathsf{L}( \Lnt{B} )  \\          
        \end{array}
      \end{math}
      & \ \  
      \begin{math}
        \begin{array}{lll}
           \mathsf{L}(  \dttnt{A}  \ndto{ \dttsym{+} }  \dttnt{B}  )  & = &   \mathsf{L}( \Lnt{A} )   \supset   \mathsf{L}( \Lnt{B} )  \\
           \mathsf{L}(  \dttnt{B}  \ndto{ \dttsym{-} }  \dttnt{A}  )  & = &   \mathsf{L}( \Lnt{A} )   \prec   \mathsf{L}( \Lnt{B} )  \\
        \end{array}
      \end{math}
    \end{tabular}
  \end{center}

  \ \\
  \noindent 
  Next we extend the previous definition to positive and negative
  contexts:\\
  \begin{center}
    \begin{math}
      \begin{array}{lll}
         \mathsf{L}( \dttsym{+} \, \dttnt{A}  \mathbin{@}  \dttnt{n}  \dttsym{,}  \Gamma )^{ \dttsym{+} }  & = & \Lmv{n}  \Lsym{:}   \mathsf{L}( \Lnt{A} )   \Lsym{,}   \mathsf{L}( \Gamma )^{ \Lsym{+} } \\
         \mathsf{L}( \dttsym{-} \, \dttnt{A}  \mathbin{@}  \dttnt{n}  \dttsym{,}  \Gamma )^{ \dttsym{+} }  & = &  \mathsf{L}( \Gamma )^{ \Lsym{+} } \\\\
         \mathsf{L}( \dttsym{-} \, \dttnt{A}  \mathbin{@}  \dttnt{n}  \dttsym{,}  \Gamma )^{ \dttsym{-} }  & = & \Lmv{n}  \Lsym{:}   \mathsf{L}( \Lnt{A} )   \Lsym{,}   \mathsf{L}( \Gamma )^{ \Lsym{-} } \\
         \mathsf{L}( \dttsym{+} \, \dttnt{A}  \mathbin{@}  \dttnt{n}  \dttsym{,}  \Gamma )^{ \dttsym{-} }  & = &  \mathsf{L}( \Gamma )^{ \Lsym{-} } \\
      \end{array}
    \end{math}
  \end{center}
  
  \ \\
  \noindent 
  The following defines the translation of graphs:\\
  \begin{center}
    \begin{math}
      \begin{array}{lll}
         \mathsf{L}( \dttnt{n_{{\mathrm{1}}}} \,  \preccurlyeq_{ \dttsym{+} }  \, \dttnt{n_{{\mathrm{2}}}}  \dttsym{,}  G )  & = & \Lsym{(}  \Lmv{n_{{\mathrm{1}}}}  \Lsym{,}  \Lmv{n_{{\mathrm{2}}}}  \Lsym{)}  \Lsym{,}   \mathsf{L}( G ) \\
         \mathsf{L}( \dttnt{n_{{\mathrm{2}}}} \,  \preccurlyeq_{ \dttsym{-} }  \, \dttnt{n_{{\mathrm{1}}}}  \dttsym{,}  G )  & = & \Lsym{(}  \Lmv{n_{{\mathrm{1}}}}  \Lsym{,}  \Lmv{n_{{\mathrm{2}}}}  \Lsym{)}  \Lsym{,}   \mathsf{L}( G ) \\
      \end{array}
    \end{math}
  \end{center}

  \ \\
  \noindent 
  Finally, the following defines the translation of DIL sequents:\\
  \begin{center}
    \begin{math}
      \begin{array}{lll}
        \mathsf{L}(G  \dttsym{;}  \Gamma  \vdash  \dttsym{+} \, \dttnt{A}  \mathbin{@}  \dttnt{n}) & = &   \mathsf{L}( \Gamma )^{ \Lsym{+} }   \vdash_{  \mathsf{L}( G )  }  \Lmv{n}  \Lsym{:}  \Lnt{A}  \Lsym{,}   \mathsf{L}( \Gamma )^{ \Lsym{-} }  \\
        \mathsf{L}(G  \dttsym{;}  \Gamma  \vdash  \dttsym{-} \, \dttnt{A}  \mathbin{@}  \dttnt{n}) & = &   \mathsf{L}( \Gamma )^{ \Lsym{+} }   \Lsym{,}  \Lmv{n}  \Lsym{:}  \Lnt{A}  \vdash_{  \mathsf{L}( G )  }   \mathsf{L}( \Gamma )^{ \Lsym{-} }  \\        
      \end{array}
    \end{math}
  \end{center}  
\end{defi}
\noindent
Next we have a few admissible rules that are needed to complete the
proof of containment of DIL in L.
\begin{lem}[Left and Right Weakening in L]
  \label{lemma:right_weakening_in_l}
  \begin{itemize}
    \item[]
    \item[] $\dttdrulename{weakL}$: If $ \Gamma  \Lsym{,}  \Lmv{n}  \Lsym{:}  \Lnt{A}  \vdash_{ G }  \Delta $, then $ \Gamma  \Lsym{,}  \Lmv{n}  \Lsym{:}  \Lnt{A}  \Lsym{,}  \Lmv{n}  \Lsym{:}  \Lnt{B}  \vdash_{ G }  \Delta $.
    \item[] $\dttdrulename{weakR}$: If $ \Gamma  \vdash_{ G }  \Lmv{n}  \Lsym{:}  \Lnt{A}  \Lsym{,}  \Delta $, then $ \Gamma  \vdash_{ G }  \Lmv{n}  \Lsym{:}  \Lnt{A}  \Lsym{,}  \Lmv{n}  \Lsym{:}  \Lnt{B}  \Lsym{,}  \Delta $.
  \end{itemize}
\end{lem}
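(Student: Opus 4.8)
The plan is to prove both halves by induction on the derivation of the hypothesised L-sequent, the two halves being entirely parallel. It is cleanest to prove the slightly stronger statement in which $\Lmv{n} : \Lnt{B}$ may be adjoined at an \emph{arbitrary} node, to the left or to the right of any derivable sequent; the lemma is then the special case in which the chosen node already carries a formula on the relevant side. For each case one inspects the last rule, applies the induction hypothesis to every premise, and reassembles the same rule.

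For the axiom \Ldrulename{hyp} and the propositional rules \Ldrulename{trueL}, \Ldrulename{trueR}, \Ldrulename{falseL}, \Ldrulename{falseR}, \Ldrulename{andL}, \Ldrulename{andR}, \Ldrulename{disjL}, \Ldrulename{disjR} the weakening formula does not touch the principal formula, so the rule reapplies verbatim to the inductively weakened premises. The graph rules \Ldrulename{refl}, \Ldrulename{trans} and the monotonicity rules \Ldrulename{monL}, \Ldrulename{monR} act only on the graph or duplicate a formula under a reachability hypothesis, and their side conditions --- all of the shape $\Lmv{n}\,G\,\Lmv{n'}$ --- never mention the added formula; the same holds for the reachability premises of \Ldrulename{impL} and \Ldrulename{subR}. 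All of these cases therefore go through by a bare appeal to the induction hypothesis followed by reapplication of the rule.

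The delicate cases, and the main obstacle, are the two eigenvariable rules \Ldrulename{impR} and \Ldrulename{subL}, each of which discharges a fresh node $\Lmv{n'}$ under the condition $\Lmv{n'} \not\in | G |, | \Gamma |, | \Delta |$. To reassemble such a rule around the enlarged sequent one must re-establish freshness of $\Lmv{n'}$ with the extra formula present, i.e.\ ensure that $\Lmv{n'}$ differs from the weakening node. This is handled by choosing (renaming) the discharged eigenvariable so as to avoid the weakening node as well, which is legitimate because eigenvariables range over fresh nodes and a derivation is stable under renaming of its eigenvariables. In the exact situation of the lemma no renaming is even needed: since the weakening node already labels a formula on the side being weakened, it lies in the node list that $\Lmv{n'}$ is required to avoid and is automatically distinct from $\Lmv{n'}$. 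The strengthening to an arbitrary node is nevertheless what makes the induction close --- for instance, pushing a left weakening through \Ldrulename{trueL} leaves a premise from which that node may have disappeared --- so it is this eigenvariable bookkeeping that constitutes the only real content of the argument.
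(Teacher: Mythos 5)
Your proof is correct and follows the same route as the paper, which simply asserts that both parts hold ``by straightforward induction on the assumed derivation.'' Your additional care --- strengthening the induction hypothesis to weakening by a formula at an arbitrary node, and renaming the eigenvariable in the \Ldrulename{impR} and \Ldrulename{subL} cases so that the freshness condition $\Lmv{n'} \not\in |G|,|\Gamma|,|\Delta|$ survives the enlarged context --- is exactly the bookkeeping that such a ``straightforward'' induction must contain, so nothing is missing.
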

\begin{proof}
  Both parts of this result hold by straightforward induction on the assumed derivation.
\end{proof}

\begin{lem}[Left and Right Contraction in L]
  \label{lemma:contraction_in_l}
  \begin{itemize}
    \item[]
    \item[] $\dttdrulename{contrL}$: If $ \Gamma_{{\mathrm{1}}}  \Lsym{,}  \Lmv{n}  \Lsym{:}  \Lnt{A}  \Lsym{,}  \Gamma_{{\mathrm{2}}}  \Lsym{,}  \Lmv{n}  \Lsym{:}  \Lnt{A}  \Lsym{,}  \Gamma_{{\mathrm{3}}}  \vdash_{ G }  \Delta $, then $ \Gamma_{{\mathrm{1}}}  \Lsym{,}  \Lmv{n}  \Lsym{:}  \Lnt{A}  \Lsym{,}  \Gamma_{{\mathrm{2}}}  \Lsym{,}  \Gamma_{{\mathrm{3}}}  \vdash_{ G }  \Delta $.
    \item[] $\dttdrulename{contrR}$: If $ \Gamma  \vdash_{ G }  \Delta_{{\mathrm{1}}}  \Lsym{,}  \Lmv{n}  \Lsym{:}  \Lnt{A}  \Lsym{,}  \Delta_{{\mathrm{2}}}  \Lsym{,}  \Lmv{n}  \Lsym{:}  \Lnt{A}  \Lsym{,}  \Delta_{{\mathrm{3}}} $, then $ \Gamma  \vdash_{ G }  \Delta_{{\mathrm{1}}}  \Lsym{,}  \Delta_{{\mathrm{2}}}  \Lsym{,}  \Lmv{n}  \Lsym{:}  \Lnt{A}  \Lsym{,}  \Delta_{{\mathrm{3}}} $.
  \end{itemize}
\end{lem}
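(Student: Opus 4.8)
The plan is to prove both parts simultaneously by induction on the height of the assumed derivation, in the style of the G3/Negri-type structural analysis that underlies L. As with weakening (Lemma~\ref{lemma:right_weakening_in_l}), the two statements \textsc{contrL} and \textsc{contrR} must be carried together, since the principal cases of one invoke the induction hypothesis of the other. The one prerequisite I would establish first is \emph{height-preserving invertibility} of each logical rule of L: from a derivation of a sequent whose principal (left or right) formula is compound, one can extract derivations of the corresponding rule premises of no greater height. For the context-only rules (\textsc{andL}, \textsc{disjL}, \textsc{andR}, \textsc{disjR}, \textsc{trueL}, \textsc{falseR}) this is routine; for \textsc{impL}/\textsc{impR} and \textsc{subL}/\textsc{subR} it requires the graph bookkeeping to be handled with care, using height-preserving weakening wherever a premise needs an extra labeled formula or an extra edge. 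These invertibility lemmas, proved by induction on height exactly as weakening was, are the real engine of the argument.

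Given invertibility, I would argue by cases on the last rule of the derivation. In every case where neither contracted copy of $n:A$ is the principal formula, the two copies survive intact in each premise; I apply the induction hypothesis to the premises and re-apply the rule. The only thing to verify is that side conditions still hold. For the graph structural rules (\textsc{refl}, \textsc{trans}, \textsc{monL}, \textsc{monR}) the edges used are unaffected by deleting a duplicated formula, and for the eigenvariable rules \textsc{impR} and \textsc{subL} the freshness condition $n' \notin |G|,|\Gamma|,|\Delta|$ is preserved, because contraction only \emph{removes} a labeled formula and hence can only shrink the node lists $|\Gamma|$ and $|\Delta|$; no renaming is needed in these non-principal cases.

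The genuinely delicate cases, and the main obstacle, are those in which one copy of $n:A$ is principal in the last rule. Here I would decompose the surviving copy by invertibility, contract the resulting duplicated subformulas using the induction hypothesis at strictly smaller height, and then re-apply the rule once. For the two-premise left rule this runs, schematically, as follows for $A = T_1 \supset T_2$: from the two premises of the \textsc{impL} that consumed the first copy---each of which still contains the second copy of $n : T_1 \supset T_2$ together with the witnessing edge $n\,G\,n'$---invert \textsc{impL} on that second copy to obtain, respectively, a sequent with two right-occurrences of $n':T_1$ and a sequent with two left-occurrences of $n':T_2$; contract each by the (mutual) induction hypothesis, i.e.\ \textsc{contrR} and \textsc{contrL} respectively; and finish with a single application of \textsc{impL} using the edge $n\,G\,n'$. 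The hardest instances will be the eigenvariable rules \textsc{impR} and \textsc{subL}: inverting the second copy introduces a \emph{second} fresh node with its own graph edge, so before contracting at the premise level I must rename it to the node used for the first copy and check that the two graph extensions coincide, after which the induction hypothesis applies. Once these principal cases are dispatched, the remaining connective and structural cases are entirely analogous, and both statements close under the simultaneous induction.
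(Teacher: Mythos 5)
Your overall architecture---a simultaneous induction for \textsc{contrL} and \textsc{contrR}, with non-principal cases closed by the induction hypothesis plus re-application of the last rule, and principal cases closed by inverting the surviving copy, contracting the exposed subformulas, and re-applying the rule once---is the standard Negri-style argument that the paper's one-line ``straightforward induction on the assumed derivation'' is presumably gesturing at, and it works for the invertible rules (\textsc{andL}, \textsc{andR}, \textsc{disjL}, \textsc{disjR}, and, with the eigenvariable renaming you describe, \textsc{impR} and \textsc{subL}). The genuine gap is your treatment of the principal case for \textsc{impL} (and dually \textsc{subR}): these rules are \emph{not} invertible, not even non-height-preservingly. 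Their premises pin $T_1$ (resp.\ $T_2$) to one \emph{specific} related node $n'$, which is strictly stronger than the conclusion. Concretely, $n : T_1 \supset T_2 \vdash_{G} n : T_1 \supset T_2$ is derivable by \textsc{hyp} for any $G$ with an edge out of $n$, but the ``inverted'' sequent $\cdot \vdash_{G} n' : T_1 , n : T_1 \supset T_2$ is refuted by any model in which $T_1$ fails at $n'$ while $T_1 \supset T_2$ fails at $n$ via some \emph{other} successor; by soundness it is underivable. So the step ``invert \textsc{impL} on that second copy'' cannot be carried out, and the subsequent \textsc{contrR}/\textsc{contrL}-then-\textsc{impL} chain never gets started.

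The way this case actually closes is different: in G3-style labelled calculi of this kind (and in Pinto and Uustalu's L) the rules for $\supset$ on the left and $\prec$ on the right \emph{retain} their principal formula in both premises. Then, when one copy of $n : T_1 \supset T_2$ is principal in the last rule, both copies still occur in each premise, the mutual induction hypothesis contracts them there at strictly smaller height, and a single re-application of \textsc{impL} finishes---no inversion of a non-invertible rule is ever needed. As the rules are printed in Figure~\ref{fig:L-ifr} the principal formula is not repeated, so you must either read the rules with that repetition (which is what makes the induction ``straightforward'') or supply a separate argument for these two cases; your proof as written does not. A secondary, fixable point: because \textsc{hyp} is not restricted to atomic formulas, even the genuinely invertible rules are not invertible \emph{height}-preservingly (inverting under a \textsc{hyp} on a compound formula costs extra rule applications), so the induction should be on the lexicographic pair consisting of the size of the contracted formula and the height of the derivation, rather than on height alone.
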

\begin{proof}
  Both parts of this result hold by straightforward induction on the assumed derivation.
\end{proof}

\begin{lem}[Reachability Weakening in DIL]
  \label{lemma:reachability_weakening_in_dil}
  For any $\dttnt{n_{{\mathrm{1}}}}  \dttsym{,}  \dttnt{n_{{\mathrm{2}}}} \, \in \, \dttsym{\mbox{$\mid$}}  \dttnt{n} \,  \preccurlyeq_{ \dttnt{p} }  \, \dttnt{n}  \dttsym{,}  G  \dttsym{\mbox{$\mid$}}  \dttsym{,}  \dttsym{\mbox{$\mid$}}  \Gamma  \dttsym{\mbox{$\mid$}}$ if $G  \vdash  \dttnt{n_{{\mathrm{1}}}} \,  \preccurlyeq^*_{ \dttnt{p} }  \, \dttnt{n_{{\mathrm{2}}}}$ and $G  \dttsym{;}  \Gamma  \vdash  \dttnt{p} \, \dttnt{A}  \mathbin{@}  \dttnt{n}$,
  then $G  \dttsym{,}  \dttnt{n_{{\mathrm{1}}}} \,  \preccurlyeq_{ \dttnt{p} }  \, \dttnt{n_{{\mathrm{2}}}}  \dttsym{;}  \Gamma  \vdash  \dttnt{p} \, \dttnt{A}  \mathbin{@}  \dttnt{n}$.
\end{lem}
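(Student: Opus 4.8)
The plan is to prove a mild generalization by induction on the derivation of $G ; \Gamma \vdash p\,A @ n$, and then read off the stated lemma as a special case. The generalization decouples the polarity of the \emph{edge} being added from the polarity $p$ of the active formula: for any polarity $q$ and any $n_1, n_2 \in \{n\} \cup |G| \cup |\Gamma|$, if $G \vdash n_1 \preccurlyeq^*_q n_2$ and $G ; \Gamma \vdash p\,A @ n$, then $G, n_1 \preccurlyeq_q n_2 ; \Gamma \vdash p\,A @ n$. The stated lemma is the instance $q = p$ (note $|n \preccurlyeq_p n, G| = \{n\} \cup |G|$, so the node side-condition is insensitive to the polarity written on the reflexive edge). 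This decoupling is forced by the \ifrName{impBar} case, whose two premises carry opposite active polarities yet both must receive the \emph{same} added edge $n_1 \preccurlyeq_q n_2$; I single this out below as the crux. Throughout the induction the edge $n_1 \preccurlyeq_q n_2$, the reachability witness $G \vdash n_1 \preccurlyeq^*_q n_2$, and the node condition are the invariants that must be re-established for each premise.

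The leaf and structural cases are routine. For \ifrName{ax} the sole premise is a reachability judgment $G \vdash m \preccurlyeq^*_p n$, which I re-derive over the enlarged graph by Graph Weakening (Lemma~\ref{lemma:graph_weakening}); \ifrName{unit} needs nothing. For \ifrName{and} and \ifrName{andBar} the premises share the graph $G$, context $\Gamma$, and active node $n$ of the conclusion, so the reachability witness and node condition carry over verbatim; I apply the induction hypothesis to each premise and reapply the rule. The \ifrName{cut} case is equally direct: each premise has active node $n'$ but its context is $\Gamma, \bar{p}\,A @ n$, so $n$ is available and the node condition $n_1, n_2 \in \{n'\} \cup |G| \cup |\Gamma| \cup \{n\}$ holds a fortiori; I apply the induction hypothesis to both premises (the graph is unchanged) and reapply \ifrName{cut}.

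The \ifrName{impBar} case is the main obstacle. Its premises are $G ; \Gamma \vdash \bar{p}\,A @ n'$ and $G ; \Gamma \vdash p\,B @ n'$, discharged under the reachability premise $G \vdash n \preccurlyeq^*_{\bar{p}} n'$. The first premise has active polarity $\bar{p}$, so only the \emph{generalized} induction hypothesis --- in which the added edge's polarity $q$ is independent of the active polarity --- can be applied to it while still adding our edge $n_1 \preccurlyeq_q n_2$; this is exactly why the generalization is needed. To invoke the induction hypothesis I must re-establish the node condition for active node $n'$, i.e. $n_1, n_2 \in \{n'\} \cup |G| \cup |\Gamma|$. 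The only element of $\{n\} \cup |G| \cup |\Gamma|$ possibly missing from this set is $n$; if $n = n'$ it is present trivially, and otherwise Graph Node Containment (Lemma~\ref{lemma:graph_node_containment}) applied to $G \vdash n \preccurlyeq^*_{\bar{p}} n'$ places $n$ in $|G|$. The reachability witness $G \vdash n_1 \preccurlyeq^*_q n_2$ is unchanged (the graph is still $G$ in both premises), so the induction hypothesis yields $G, n_1 \preccurlyeq_q n_2 ; \Gamma \vdash \bar{p}\,A @ n'$ and $G, n_1 \preccurlyeq_q n_2 ; \Gamma \vdash p\,B @ n'$. Finally I weaken $G \vdash n \preccurlyeq^*_{\bar{p}} n'$ to the enlarged graph by Lemma~\ref{lemma:graph_weakening} and reapply \ifrName{impBar}.

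The remaining subtlety is the \ifrName{imp} case, where the graph genuinely grows. Its premise is $(G, n \preccurlyeq_p n') ; \Gamma, p\,A @ n' \vdash p\,B @ n'$ with freshness side-condition $n' \notin |G|, |\Gamma|$. I apply the induction hypothesis to this premise, adding $n_1 \preccurlyeq_q n_2$: the required reachability $(G, n \preccurlyeq_p n') \vdash n_1 \preccurlyeq^*_q n_2$ follows from $G \vdash n_1 \preccurlyeq^*_q n_2$ by Graph Weakening, and the node condition holds because the premise's graph and context only enlarge the available nodes. After reordering the edges (the graph is a multiset, so order is immaterial) this gives $(G, n_1 \preccurlyeq_q n_2, n \preccurlyeq_p n') ; \Gamma, p\,A @ n' \vdash p\,B @ n'$. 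To close with \ifrName{imp} I need $n'$ fresh for the enlarged sequent, namely $n' \notin |G, n_1 \preccurlyeq_q n_2|, |\Gamma| = |G| \cup \{n_1, n_2\} \cup |\Gamma|$. Since $n' \notin |G|, |\Gamma|$ already, the only possible clash is $n' \in \{n_1, n_2\}$, which can arise only when $n_1$ or $n_2$ equals $n$ and $n \notin |G|, |\Gamma|$. I dispatch this by first $\alpha$-renaming the bound fresh node $n'$ to one chosen apart from $n, n_1, n_2$ (sound precisely because $n'$ satisfies the freshness premise), after which \ifrName{imp} applies and the induction is complete.
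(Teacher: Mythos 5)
Your proof is correct and follows the same route as the paper, which disposes of this lemma with the single sentence ``by straightforward induction on the form of $G  \dttsym{;}  \Gamma  \vdash  \dttnt{p} \, \dttnt{A}  \mathbin{@}  \dttnt{n}$'' and supplies no further detail. The two refinements you add --- strengthening the induction hypothesis so that the added edge's polarity $q$ is decoupled from the active formula's polarity $p$ (forced by \ifrName{impBar} and the cut-style rules, whose premises flip the active polarity), and the $\alpha$-renaming of the fresh node in the \ifrName{imp} case --- are precisely the points a fully detailed version of the paper's ``straightforward induction'' would have to address.
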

\begin{proof}
  By straightforward induction on the form of $G  \dttsym{;}  \Gamma  \vdash  \dttnt{p} \, \dttnt{A}  \mathbin{@}  \dttnt{n}$.
\end{proof}

Finally, the next two results show that every derivable DIL-sequent
can be translated into a derivable L-sequent.  One interesting aspect
of these results is that DIL inference rules where the active formula
is positive correspond to the right-inference rules of L, and when the
active formula is negative correspond to left-inference rules of L.
In addition, the use of axiom cuts in DIL correspond to uses of
contraction in L.
\begin{lem}
  \label{lemma:containment_of_dil_in_l_part1}
  Suppose $G  \dttsym{;}  \Gamma  \vdash  \dttnt{p} \, \dttnt{A}  \mathbin{@}  \dttnt{n}$ is a derivable DIL-sequent such that
  for any $\dttnt{n_{{\mathrm{1}}}}  \dttsym{,}  \dttnt{n_{{\mathrm{2}}}} \, \in \, \dttsym{\mbox{$\mid$}}  \dttnt{n} \,  \preccurlyeq_{ \dttnt{p'} }  \, \dttnt{n}  \dttsym{,}  G  \dttsym{\mbox{$\mid$}}  \dttsym{,}  \dttsym{\mbox{$\mid$}}  \Gamma  \dttsym{\mbox{$\mid$}}$ if $G  \vdash  \dttnt{n_{{\mathrm{1}}}} \,  \preccurlyeq^*_{ \dttnt{p'} }  \, \dttnt{n_{{\mathrm{2}}}}$,
  then $\dttnt{n_{{\mathrm{1}}}} \,  \preccurlyeq_{ \dttnt{p'} }  \, \dttnt{n_{{\mathrm{2}}}} \in G$.  Then by using the definition of
  the translation of DIL-sequents we have that $\mathsf{L}(G  \dttsym{;}  \Gamma  \vdash  \dttnt{p} \, \dttnt{A}  \mathbin{@}  \dttnt{n})$ is a derivable L-sequent.
\end{lem}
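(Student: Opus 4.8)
The plan is to argue by induction on the derivation of the DIL-sequent $G; \Gamma \vdash p\,A @ n$, splitting in each case on the last rule applied and on the polarity $p$, and in every case producing the corresponding L-derivation by unfolding the sequent translation $\mathsf{L}(-)$ from Definition~\ref{def:DIL-form-to-L-form}. The guiding principle, already announced before the statement, is that a positive active formula translates to a right-hand L-formula and a negative one to a left-hand L-formula, so each DIL rule maps to a right- or left-rule of L according to $p$: \textsc{unit} becomes \textsc{trueR} (when $p=+$) or \textsc{falseL} (when $p=-$); \textsc{and} becomes \textsc{andR} or \textsc{disjL}; \textsc{andBar} becomes \textsc{disjR} or \textsc{andL}, preceded by a use of right/left weakening (Lemma~\ref{lemma:right_weakening_in_l}) to supply the disjunct/conjunct not named by $d$; \textsc{imp} becomes \textsc{impR} or \textsc{subL}; and \textsc{impBar} becomes \textsc{subR} or \textsc{impL}. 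In each case the polarity bookkeeping of the context translation $\mathsf{L}(-)^{+}$, $\mathsf{L}(-)^{-}$ makes the premises of the chosen L-rule coincide exactly with the images under $\mathsf{L}(-)$ of the DIL premises, so the induction hypothesis applies to those premises.

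First I would dispatch the axiom and cut rules. For \textsc{ax} the active formula already occurs in $\Gamma$ (on the matching side after translation), and the reachability premise $G \vdash n \preccurlyeq^*_p n'$ is converted, using the saturation hypothesis of the lemma, into a \emph{direct} edge of $\mathsf{L}(G)$; the L-derivation is then a single use of \textsc{monL} followed by \textsc{hyp}. For the two axiom-cut rules \textsc{axCut} and \textsc{axCutBar}, the side condition places the cut formula $p\,B @ n'$ (resp.\ $\bar p\,B @ n'$) already inside $\Gamma$, so after applying the induction hypothesis to the premise the formula $n' : \mathsf{L}(B)$ appears twice on one side of the translated sequent; a single application of \textsc{contrL} or \textsc{contrR} (Lemma~\ref{lemma:contraction_in_l}) then removes the duplicate and yields exactly the translation of the conclusion. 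This realizes the slogan that axiom cuts in DIL are contractions in L.

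The reachability-dependent cases \textsc{ax} and \textsc{impBar} are where the saturation hypothesis does real work: the DIL premises only assert $G \vdash n \preccurlyeq^*_{p'} n'$, whereas the L-rules \textsc{monL}, \textsc{subR}, \textsc{impL} require a literal edge $n' \mathrel{\mathsf{L}(G)} n$ or $n \mathrel{\mathsf{L}(G)} n'$. Here I would first note, via Lemma~\ref{lemma:graph_node_containment}, that the two endpoints lie in $|G|,|\Gamma|$ (or coincide), so the saturation hypothesis applies and delivers the required direct edge; translating that edge through the graph translation $\mathsf{L}(-)$ (which sends $n_1 \preccurlyeq_+ n_2$ and $n_2 \preccurlyeq_- n_1$ to the same pair $(n_1,n_2)$) gives precisely the membership premise the L-rule demands. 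I would also check that applying the induction hypothesis to the \textsc{impBar} premises is legitimate, i.e.\ that saturation still holds when the active node moves from $n$ to $n'$; since $n' \in |G|,|\Gamma|$ this is immediate because the relevant node set does not grow.

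The main obstacle is the \textsc{imp} case (and its dual \textsc{subL}), the only rule that enlarges the graph. There the DIL premise carries the graph $G, n \preccurlyeq_p n'$ with $n'$ fresh, and its translation matches the premise of \textsc{impR}/\textsc{subL} exactly because those L-rules add the same edge $(n,n')$; the difficulty is that to invoke the induction hypothesis on this premise one needs it to satisfy the saturation condition, and adding the edge $n \preccurlyeq_p n'$ can create new reachabilities (for instance $m \preccurlyeq^*_p n'$ from an existing $m \preccurlyeq_p n \in G$) for which no literal edge is present. I expect the resolution to hinge on the freshness side condition $n' \notin |G|,|\Gamma|$: because $n'$ is the endpoint of a single edge, the genuinely new reachabilities are confined to the $n$--$n'$ pair and are accounted for, up to the flip equivalence built into the graph translation, by the added edge itself; any residual gap among the old nodes is repaired in L by the structural rules \textsc{refl}, \textsc{trans}, \textsc{monL}, \textsc{monR}, which L retains even though DIL discards them. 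Making this freshness/flip analysis precise --- and thereby justifying the single inductive appeal in the \textsc{imp} case --- is the step I would spend the most care on; all remaining cases are routine unfoldings of the translation.
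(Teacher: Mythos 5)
Your proposal follows essentially the same route as the paper's proof: induction on the DIL derivation, with each rule translated to the matching left- or right-rule of L according to the polarity of the active formula (\textsc{unit} to \textsc{trueR}/\textsc{falseL}, \textsc{and} to \textsc{andR}/\textsc{disjL}, \textsc{andBar} to \textsc{disjR}/\textsc{andL} after a weakening, \textsc{imp} to \textsc{impR}/\textsc{subL}, \textsc{impBar} to \textsc{subR}/\textsc{impL}), with the saturation hypothesis converting reachability premises into literal edges in the \textsc{ax} and \textsc{impBar} cases, and with contraction absorbing the duplicated formula in the two axiom-cut rules --- exactly the ``axiom cuts are contractions'' slogan the paper states. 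The one place you go beyond the paper is the \textsc{imp}/\textsc{subL} case: you correctly observe that the inductive call is made on a premise whose graph is $G, n \preccurlyeq_p n'$, and that adding this edge can create reachabilities (e.g.\ $m \preccurlyeq^*_p n'$ from an existing $m \preccurlyeq_p n \in G$) with no corresponding literal edge, so the saturation hypothesis of the lemma is not obviously preserved. The paper's own proof simply invokes the induction hypothesis on that premise without checking the side condition, so your concern flags a step the paper elides rather than a defect peculiar to your argument; your sketched repair via the freshness of $n'$ (and, if needed, the structural rules L retains) is the right direction, but as you yourself note it is not carried out, and it is the only part of the proof that is not a routine unfolding of the translation.
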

\begin{proof}
This proof holds by induction on the assumed derivation.  For the full
proof see
Appendix~\ref{subsec:proof_of_lemma:containment_of_dil_in_l_part1}.
\end{proof}

\begin{lem}[Containment of DIL in L]
  \label{lemma:containment_of_dil_in_l}
  Suppose $G  \dttsym{;}  \Gamma  \vdash  \dttnt{p} \, \dttnt{A}  \mathbin{@}  \dttnt{n}$ is a derivable DIL-sequent.  Then
  there exists an abstract Kripke graph $G'$, such that,
  $\mathsf{L(G'  \dttsym{;}  \Gamma  \vdash  \dttnt{p} \, \dttnt{A}  \mathbin{@}  \dttnt{n})}$.
\end{lem}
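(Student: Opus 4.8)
The plan is to reduce the statement to Lemma~\ref{lemma:containment_of_dil_in_l_part1}, which already yields derivability of the translated L-sequent under the extra hypothesis that the graph is \emph{saturated} in the sense that, among the nodes appearing in $G$, $\Gamma$, and $n$, every pair connected by a reachability proof is already connected by an edge of the corresponding polarity. Hence the only remaining work is to produce, from $G$, a graph $G'$ that (i) still derives the given DIL-sequent and (ii) is saturated, after which Lemma~\ref{lemma:containment_of_dil_in_l_part1} applied to $G'; \Gamma \vdash p\,A\,@\,n$ finishes the proof. To build $G'$, note that $G$, $\Gamma$, and $n$ mention only finitely many nodes, so there are finitely many triples $(n_1, n_2, p')$ with $n_1, n_2$ among these nodes and $G \vdash n_1 \preccurlyeq^*_{p'} n_2$. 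I would define $G'$ to be $G$ extended by the finite list of all edges $n_1 \preccurlyeq_{p'} n_2$ arising from such triples.

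Next I would show that $G'; \Gamma \vdash p\,A\,@\,n$ remains derivable. This follows by iterating Reachability Weakening in DIL (Lemma~\ref{lemma:reachability_weakening_in_dil}): each edge to be added is reachable in the graph constructed so far, and Graph Weakening (Lemma~\ref{lemma:graph_weakening}) guarantees that a reachability derivation valid in a smaller graph stays valid once a further edge is adjoined. Thus the side condition of Reachability Weakening is satisfied at every stage, and adding the edges one at a time preserves derivability until the full graph $G'$ is obtained. Where the orientation of an edge does not match the polarity available to Reachability Weakening, I would use the interdefinability of $n_1 \preccurlyeq_{-} n_2$ with $n_2 \preccurlyeq_{+} n_1$ together with the flip rule of the reachability judgment to add the edge in the correct form.

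It then remains to verify that $G'$ is saturated. The crucial observation is that reachability in $G'$ coincides with reachability in $G$: every edge of $G'$ that is not already in $G$ is, by construction, reachable in $G$, and the reachability judgment is closed under \textsc{rel\_refl}, \textsc{rel\_trans}, and \textsc{rel\_flip}. Therefore an induction on a reachability derivation over $G'$ rewrites each appeal to a newly added edge as the corresponding $G$-reachability proof, giving $G \vdash n_1 \preccurlyeq^*_{p'} n_2$ whenever $G' \vdash n_1 \preccurlyeq^*_{p'} n_2$. Combined with the defining property of $G'$, this shows that any pair reachable in $G'$ is already connected by an edge of $G'$, i.e.\ $G'$ is saturated, so Lemma~\ref{lemma:containment_of_dil_in_l_part1} applies and yields derivability of $\mathsf{L}(G'; \Gamma \vdash p\,A\,@\,n)$.

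The main obstacle I anticipate is the idempotency argument in the last step, namely that closing $G$ under reachability does not create genuinely new reachable pairs. This requires care because the reachability judgment interleaves forward edges, reflexive and transitive closure, and polarity flips; the induction must handle the flip rule, which swaps polarities, and must track the identification of $n_1 \preccurlyeq_{-} n_2$ with $n_2 \preccurlyeq_{+} n_1$ uniformly, so that both polarity orientations of each reachable pair are actually present as syntactic members of $G'$ rather than merely derivable by reachability.
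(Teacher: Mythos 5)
Your proposal is correct and follows essentially the same route as the paper: saturate the graph by adjoining edges for all reachable pairs via Reachability Weakening (Lemma~\ref{lemma:reachability_weakening_in_dil}), then apply Lemma~\ref{lemma:containment_of_dil_in_l_part1} to the saturated sequent. The only cosmetic difference is that the paper iterates Reachability Weakening until a fixed point is reached rather than constructing the closure in one pass and proving idempotency, but these amount to the same argument.
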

\begin{proof}
  Suppose $G  \dttsym{;}  \Gamma  \vdash  \dttnt{p} \, \dttnt{A}  \mathbin{@}  \dttnt{n}$ is a derivable DIL-sequent.  Then by
  repeatedly applying Reachability Weakening in DIL
  (Lemma~\ref{lemma:reachability_weakening_in_dil}), which can only be
  applied a finite number of times before reaching a fixed point, we
  will obtain a derivation $G''  \dttsym{;}  \Gamma  \vdash  \dttnt{p} \, \dttnt{A}  \mathbin{@}  \dttnt{n}$ satisfying the
  condition:
  \[
  \text{for any } \dttnt{n_{{\mathrm{1}}}}  \dttsym{,}  \dttnt{n_{{\mathrm{2}}}} \, \in \, \dttsym{\mbox{$\mid$}}  \dttnt{n} \,  \preccurlyeq_{ \dttnt{p'} }  \, \dttnt{n}  \dttsym{,}  G''  \dttsym{\mbox{$\mid$}}  \dttsym{,}  \dttsym{\mbox{$\mid$}}  \Gamma  \dttsym{\mbox{$\mid$}}, \text{ if } G''  \vdash  \dttnt{n_{{\mathrm{1}}}} \,  \preccurlyeq^*_{ \dttnt{p'} }  \, \dttnt{n_{{\mathrm{2}}}}\text{, then } \dttnt{n_{{\mathrm{1}}}} \,  \preccurlyeq_{ \dttnt{p'} }  \, \dttnt{n_{{\mathrm{2}}}} \in G''
  \]
  Choose $G' = G''$.  Then we obtain our result by applying
  Lemma~\ref{lemma:containment_of_dil_in_l_part1} to
  $G'  \dttsym{;}  \Gamma  \vdash  \dttnt{p} \, \dttnt{A}  \mathbin{@}  \dttnt{n}$.
\end{proof}

\subsubsection{Completeness}
\label{subsubsec:completeness}

We now use the previous translations as a means to exploit the
completeness result of L.  The following definition and lemma relate
the two translations that will be needed by our main results of this
section.
\begin{defi}
  \label{def:graph-iso}
  We say two abstract Kripke graphs, $G_{{\mathrm{1}}}$ and $G_{{\mathrm{2}}}$,  are \textbf{isomorphic} iff
  for any $\dttnt{n_{{\mathrm{1}}}} \,  \preccurlyeq_{ \dttnt{p} }  \, \dttnt{n_{{\mathrm{2}}}} \in G_{{\mathrm{1}}}$, $\dttnt{n_{{\mathrm{1}}}} \,  \preccurlyeq_{ \dttnt{p} }  \, \dttnt{n_{{\mathrm{2}}}} \in G_{{\mathrm{2}}}$ or $\dttnt{n_{{\mathrm{2}}}} \,  \preccurlyeq_{  \bar{  \dttnt{p}  }  }  \, \dttnt{n_{{\mathrm{1}}}} \in G_{{\mathrm{2}}}$, and
  for any $\dttnt{n_{{\mathrm{1}}}} \,  \preccurlyeq_{ \dttnt{p} }  \, \dttnt{n_{{\mathrm{2}}}} \in G_{{\mathrm{2}}}$, $\dttnt{n_{{\mathrm{1}}}} \,  \preccurlyeq_{ \dttnt{p} }  \, \dttnt{n_{{\mathrm{2}}}} \in G_{{\mathrm{1}}}$ or $\dttnt{n_{{\mathrm{2}}}} \,  \preccurlyeq_{  \bar{  \dttnt{p}  }  }  \, \dttnt{n_{{\mathrm{1}}}} \in G_{{\mathrm{1}}}$.
\end{defi}

\begin{lem}[L and D Relationships]
  \label{lemma:L-D-relations}\hfill
  \begin{enumerate}[label=\roman*.]
  \item For any abstract Kripke graph $G$, $ \mathsf{D}(  \mathsf{L}( G )  ) $ is isomorphic to $G$.
  \item For any abstract Kripke graph, $ \mathsf{L}(  \mathsf{D}( G )  )  = G$.    
  \item For any DIL-formula $\dttnt{A}$, $ \mathsf{D}(  \mathsf{L}( \dttnt{A} )  )  = \dttnt{A}$.
  \item For any L-formula $\dttnt{A}$, $ \mathsf{L}(  \mathsf{D}( \dttnt{A} )  )  = \dttnt{A}$.
  \end{enumerate}
\end{lem}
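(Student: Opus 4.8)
The plan is to establish all four parts by straightforward structural induction: parts (iii) and (iv) by induction on the formula, and parts (i) and (ii) by induction on the graph. In every case the inductive hypotheses discharge the compound cases as soon as the relevant clauses of $\mathsf{D}$ and $\mathsf{L}$ are unfolded, so the only points demanding attention are the co-implication clause, where the two arguments get swapped, and the handling of negatively polarized edges in part (i).

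For part (iv) I would induct on the L-formula $A$. The base cases $\top$ and $\perp$ are immediate, since $\mathsf{L}(\mathsf{D}(\top)) = \mathsf{L}(\langle + \rangle) = \top$ and dually for $\perp$. The conjunction, disjunction, and implication cases compose clause by clause, so each follows directly from the induction hypotheses. The case to check carefully is co-implication $A = B \prec C$: unfolding gives $\mathsf{D}(B \prec C) = \mathsf{D}(C) \to_- \mathsf{D}(B)$ and then $\mathsf{L}(\mathsf{D}(C) \to_- \mathsf{D}(B)) = \mathsf{L}(\mathsf{D}(B)) \prec \mathsf{L}(\mathsf{D}(C))$, which is $B \prec C$ by the induction hypotheses. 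The argument swap performed by $\mathsf{D}$ is undone by the swap performed by $\mathsf{L}$, so the round trip really is the identity. Part (iii) is the mirror image, inducting on the DIL-formula $A$ and running the same computation through the co-implication case $A = B \to_- C$; here the only additional base cases are the units $\langle + \rangle$, $\langle - \rangle$ and the atoms, all of which both translations leave fixed.

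For part (ii) I would induct on the graph $G$. Because $\mathsf{D}$ produces only positively polarized edges, each edge $(n_1, n_2)$ of $G$ is sent to $n_1 \preccurlyeq_+ n_2$ and returned unchanged to $(n_1, n_2)$ by the positive clause of $\mathsf{L}$, so $\mathsf{L}(\mathsf{D}(G)) = G$ follows at once from the induction hypothesis on the tail.

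Part (i) is where real care is needed, and I expect it to be the main obstacle. Now $G$ is a DIL-graph, which may contain negative edges $n_2 \preccurlyeq_- n_1$; such an edge is translated by $\mathsf{L}$ to $(n_1, n_2)$ and then by $\mathsf{D}$ to the positive edge $n_1 \preccurlyeq_+ n_2$, which in general differs from the original. This asymmetry --- $\mathsf{D}$ cannot manufacture negative edges --- is exactly why the statement claims only isomorphism and not equality. To conclude I would check the two conditions of Definition~\ref{def:graph-iso} edge by edge: writing an original edge as $a \preccurlyeq_p b$, the round trip yields $a \preccurlyeq_p b$ when $p = +$ and $b \preccurlyeq_{\bar{p}} a$ when $p = -$, and in both cases this is one of the two disjuncts permitted by the definition; the reverse inclusion is symmetric, since every edge of $\mathsf{D}(\mathsf{L}(G))$ is positive and arises from a uniquely determined edge of $G$.
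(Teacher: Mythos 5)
Your proposal is correct and follows exactly the paper's approach: the paper proves parts (i) and (ii) by induction on $G$ and parts (iii) and (iv) by induction on $\dttnt{A}$, exactly as you do. Your additional care with the argument swap in the co-implication clauses and with the loss of negative edge polarity in part (i) — which is precisely why that part claims only isomorphism rather than equality — correctly fills in the details the paper leaves implicit.
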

\begin{proof}
  Part i and ii follow directly by induction on $G$, and part iii
  and iv follow directly by induction on $\dttnt{A}$.
\end{proof}
\noindent
It is straightforward to extend the previous result to contexts in
both DIL and L.

The interpretation of L-formulas into a Kripke model is identical to
the interpretation of DIL-formulas.  Thus, we use the same syntax to
denote the interpretation of an L-formula.  In fact, we have the
following straightforward result.

\begin{lem}
  \label{lemma:DIL-interp-L-interps}
  Suppose $(W,R,V)$ is a Kripke model and $N$ is a node interpreter.
  Then the following hold:
  \begin{enumerate}[label=\roman*.]
  \item $ \interp{ \dttnt{A} }_{ \dttsym{(}   N\, \dttnt{n}   \dttsym{)} } $ iff $\interp{ \mathsf{L}( \Lnt{A} ) }_{ N\, \dttnt{n} }$.
  \item $ \interp{ G }_{N} $ iff for any $ \Lmv{n_{{\mathrm{1}}}}    \mathsf{L}( G )    \Lmv{n_{{\mathrm{2}}}} $, $ R\, \dttsym{(}   N\, \dttnt{n_{{\mathrm{1}}}}   \dttsym{)} \, \dttsym{(}   N\, \dttnt{n_{{\mathrm{2}}}}   \dttsym{)} $.
  \item $ \interp{ \Gamma }_{N} $ iff for any $\Lmv{n}  \Lsym{:}   \mathsf{L}( \Lnt{A} )  \in  \mathsf{L}( \Gamma )^{ \Lsym{+} } , \interp{ \mathsf{L}( \Lnt{A} ) }_{ N\, \dttnt{n} }$, and
    for any $\Lmv{n}  \Lsym{:}   \mathsf{L}( \Lnt{A} )  \in  \mathsf{L}( \Gamma )^{ \Lsym{-} } ,$ \\ $\lnot \interp{ \mathsf{L}( \Lnt{A} ) }_{ N\, \dttnt{n} }$.
  \end{enumerate}
\end{lem}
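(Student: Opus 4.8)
The plan is to prove the three parts in order by structural induction---on the DIL-formula $\dttnt{A}$ for part i, on the graph $G$ for part ii, and on the context $\Gamma$ for part iii---since parts ii and iii both invoke part i. For part i I would first strengthen the statement to hold at an \emph{arbitrary} world $w \in W$, i.e.\ prove $ \interp{ \dttnt{A} }_{ w }  \iff \interp{ \mathsf{L}( \dttnt{A} ) }_{ w }$ for all $w$, rather than only at $w = N\, \dttnt{n} $; this strengthening is necessary because the clauses for (co-)implication in Definition~\ref{def:interpretation} recurse into subformulas evaluated at other worlds $w'$. The base cases $ \langle  \dttsym{+} \rangle $, $ \langle  \dttsym{-} \rangle $, and atomic $\dttkw{a}$ are immediate by comparing Definition~\ref{def:interpretation} with the first column of Definition~\ref{def:DIL-form-to-L-form}, and the (co-)conjunction cases follow by matching each interpretation clause with its translation clause and applying the induction hypothesis to the immediate subformulas.

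The one case in part i that requires attention is co-implication, because the translation $ \mathsf{L}(  \dttnt{B}  \ndto{ \dttsym{-} }  \dttnt{A}  )  =  \mathsf{L}( \dttnt{A} )   \prec   \mathsf{L}( \dttnt{B} ) $ swaps the order of the two arguments. Here I would check that $ \interp{  \dttnt{B}  \ndto{ \dttsym{-} }  \dttnt{A}  }_{ w }  = \exists w'.\, R\,w'\,w \land \lnot \interp{ \dttnt{B} }_{ w' } \land  \interp{ \dttnt{A} }_{ w' } $ coincides, after applying the induction hypothesis to $\dttnt{A}$ and $\dttnt{B}$, with the L-interpretation of $ \mathsf{L}( \dttnt{A} )   \prec   \mathsf{L}( \dttnt{B} ) $, namely $\exists w'.\, R\,w'\,w \land \lnot\interp{ \mathsf{L}( \dttnt{B} ) }_{ w' } \land \interp{ \mathsf{L}( \dttnt{A} ) }_{ w' }$; the argument swap in the translation exactly compensates for the built-in asymmetry of subtraction, so the two existentials agree. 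Since the paper already notes that the interpretation of L-formulas is identical to that of DIL-formulas, no separate recursion equations for L need to be introduced.

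For part ii I would induct on $G$, with the empty graph giving $ \top $ on both sides. For a positive edge the translation $ \mathsf{L}( \dttnt{n_{{\mathrm{1}}}} \,  \preccurlyeq_{ \dttsym{+} }  \, \dttnt{n_{{\mathrm{2}}}}  \dttsym{,}  G ) $ produces the L-edge whose reachability is $ R\, \dttsym{(}   N\, \dttnt{n_{{\mathrm{1}}}}   \dttsym{)} \, \dttsym{(}   N\, \dttnt{n_{{\mathrm{2}}}}   \dttsym{)} $, matching the clause of Definition~\ref{def:graph_interp} for $\dttnt{n_{{\mathrm{1}}}} \,  \preccurlyeq_{ \dttsym{+} }  \, \dttnt{n_{{\mathrm{2}}}}$; for a negative edge $\dttnt{n_{{\mathrm{2}}}} \,  \preccurlyeq_{ \dttsym{-} }  \, \dttnt{n_{{\mathrm{1}}}}$ the translation again yields $ \Lsym{(}  \Lmv{n_{{\mathrm{1}}}}  \Lsym{,}  \Lmv{n_{{\mathrm{2}}}}  \Lsym{)} $, and both the DIL clause and the L reachability unfold to the same $ R\, \dttsym{(}   N\, \dttnt{n_{{\mathrm{1}}}}   \dttsym{)} \, \dttsym{(}   N\, \dttnt{n_{{\mathrm{2}}}}   \dttsym{)} $, so the orientations agree. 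The conjunctive structure of $ \interp{ G }_{N} $ corresponds exactly to the ``for any edge'' quantification on the L side, giving the biconditional. For part iii I would induct on $\Gamma$, using part i at each step: by Definitions~\ref{def:pol_interp} and~\ref{def:ctx_interp} the clause $ \interp{ \dttnt{p} \, \dttnt{A}  \mathbin{@}  \dttnt{n}  \dttsym{,}  \Gamma }_{N}  = \dttnt{p}\, \interp{ \dttnt{A} }_{ \dttsym{(}   N\, \dttnt{n}   \dttsym{)} }  \land  \interp{ \Gamma }_{N} $ splits on $\dttnt{p}$; when $\dttnt{p} = \dttsym{+}$ the hypothesis is placed by the translation into $ \mathsf{L}( \Gamma )^{ \Lsym{+} } $ and contributes $\interp{ \mathsf{L}( \dttnt{A} ) }_{ N\, \dttnt{n} }$, matching $ \interp{ \dttnt{A} }_{ \dttsym{(}   N\, \dttnt{n}   \dttsym{)} } $ by part i, while when $\dttnt{p} = \dttsym{-}$ it is placed into $ \mathsf{L}( \Gamma )^{ \Lsym{-} } $ and the clause contributes $\lnot \interp{ \dttnt{A} }_{ \dttsym{(}   N\, \dttnt{n}   \dttsym{)} } $, matching the required $\lnot\interp{ \mathsf{L}( \dttnt{A} ) }_{ N\, \dttnt{n} }$. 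In each translation the hypotheses of the opposite polarity are dropped, consistent with their absence from the conjunction on the corresponding side.

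The proof is in essence a bookkeeping exercise in matching definitions, and the only genuine subtlety is the argument-order swap in the subtraction clause of part i, where one must be careful to apply the induction hypothesis to the right subformula on each side. I would also remark that $N$ must be defined on every node occurring in $\dttnt{A}$, $G$, or $\Gamma$, which is guaranteed by the hypothesis that $N$ is a node interpreter on the relevant set of nodes, and that all three biconditionals are then immediate consequences of the clause-by-clause matching together with the strengthened version of part i.
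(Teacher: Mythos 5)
Your proof is correct and is exactly the clause-by-clause induction the paper has in mind when it asserts this result without proof as ``straightforward''; the two points of genuine substance --- strengthening part~i to an arbitrary world $w\in W$ so the induction hypothesis is usable inside the quantifiers of the (co-)implication clauses, and checking that the argument swap in $ \mathsf{L}(  \dttnt{B}  \ndto{ \dttsym{-} }  \dttnt{A}  )  =  \mathsf{L}( \Lnt{A} )  \prec  \mathsf{L}( \Lnt{B} ) $ cancels against the asymmetry of the subtraction clause --- are both identified and handled correctly. No gaps.
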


\noindent
We recall the definition of validity in L due to Pinto and Uustalu
\cite{Pinto:2009}.
\begin{defi}[Counter Models and L-validity (p. 6, Definition~1, \cite{Pinto:2009})]
  \label{def:L-counter-model}
  A Kripke model $(W,R,V)$ and node interpreter $N$ is a
  \textbf{counter-model} to a L-sequent $ \Gamma  \vdash_{ G }  \Delta $, if
  \begin{enumerate}[label=\roman*.]
  \item[i.] for any $ \Lmv{n_{{\mathrm{1}}}}   G   \Lmv{n_{{\mathrm{2}}}} $, $ R\, \dttsym{(}   N\, \dttnt{n_{{\mathrm{1}}}}   \dttsym{)} \, \dttsym{(}   N\, \dttnt{n_{{\mathrm{2}}}}   \dttsym{)} $;
  \item[ii.] for any $\Lmv{n}  \Lsym{:}  \Lnt{A} \in \Gamma, \interp{\Lnt{A}}_{ N\, \dttnt{n} }$; and
  \item[iii.] for any $\Lmv{n}  \Lsym{:}  \Lnt{B} \in \Delta, \lnot \interp{\Lnt{B}}_{ N\, \dttnt{n} }$.
  \end{enumerate}
  The L-sequent is \textbf{L-valid} if it has no counter-models.
\end{defi}
The following lemma relates validity of DIL to validity of L, and is
the key to proving completeness of DIL.
\begin{lem}[DIL-validity is L-validity]
  \label{lemma:dil-validity_is_l-validity}
  Suppose $ \interp{ G  \dttsym{;}  \Gamma  \vdash  \dttnt{p} \, \dttnt{A}  \mathbin{@}  \dttnt{n} }_{N} $ holds for some Kripke model
  $(W,R,V)$ and node interpreter $N$ on $|G|$.  Then by using the
  translation of DIL-sequents from Definition
  \ref{subsec:a_dil_to_l_translation} we have that $\mathsf{L}(G  \dttsym{;}  \Gamma  \vdash  \dttnt{p} \, \dttnt{A}  \mathbin{@}  \dttnt{n})$ is L-valid.
\end{lem}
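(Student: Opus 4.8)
The plan is to prove L-validity directly, by ruling out counter-models: to show that the translated sequent $\mathsf{L}(G \dttsym{;} \Gamma \vdash \dttnt{p}\,\dttnt{A} \mathbin{@} \dttnt{n})$ has no counter-model (Definition~\ref{def:L-counter-model}), I fix an arbitrary Kripke model $(W,R,V)$ together with a node interpreter $N$, assume toward a contradiction that this pair \emph{is} a counter-model, and then derive a contradiction with the assumed interpretation $\interp{G \dttsym{;} \Gamma \vdash \dttnt{p}\,\dttnt{A} \mathbin{@} \dttnt{n}}_N$ read at this very $(W,R,V)$ and $N$. Because both the counter-model conditions and the interpretation of DIL sequents (Definition~\ref{def:validity}) speak about the same model and interpreter, the whole argument stays local to a single model.

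The first step is to split on the polarity $\dttnt{p}$ of the active formula, since the DIL-to-L translation of a sequent (Definition~\ref{def:DIL-form-to-L-form}) puts the active formula on the right of the L-sequent when $\dttnt{p} = \dttsym{+}$ and on the left when $\dttnt{p} = \dttsym{-}$. In the positive case the translated sequent is $\mathsf{L}(\Gamma)^{+} \vdash_{\mathsf{L}(G)} \dttnt{n} : \mathsf{L}(\dttnt{A}), \mathsf{L}(\Gamma)^{-}$, and in the negative case it is $\mathsf{L}(\Gamma)^{+}, \dttnt{n} : \mathsf{L}(\dttnt{A}) \vdash_{\mathsf{L}(G)} \mathsf{L}(\Gamma)^{-}$.

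The heart of the proof is to convert the three counter-model conditions into the antecedents of the DIL interpretation using Lemma~\ref{lemma:DIL-interp-L-interps}. Condition~(i) of Definition~\ref{def:L-counter-model} asserts $R\,(N\,n_1)\,(N\,n_2)$ for every $\mathsf{L}(G)$-edge, which by part~(ii) of Lemma~\ref{lemma:DIL-interp-L-interps} is exactly $\interp{G}_N$. Conditions~(ii) and~(iii) say that every left formula is satisfied and every right formula is refuted under $N$; recalling that $\mathsf{L}(\Gamma)^{+}$ collects the positive hypotheses and $\mathsf{L}(\Gamma)^{-}$ the negative ones, part~(iii) of Lemma~\ref{lemma:DIL-interp-L-interps} shows these two conditions together are precisely $\interp{\Gamma}_N$ (here I use Definition~\ref{def:pol_interp}, $\dttsym{+}\,F = F$ and $\dttsym{-}\,F = \lnot F$, to reconcile the polarity markers with the satisfaction/refutation clauses). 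Finally, the status of the active formula $\dttnt{n} : \mathsf{L}(\dttnt{A})$ yields, via part~(i) of Lemma~\ref{lemma:DIL-interp-L-interps}, that $\dttnt{p}\,\interp{\dttnt{A}}_{(N\,n)}$ \emph{fails}: when $\dttnt{p} = \dttsym{+}$ the active formula is on the right and is refuted, giving $\lnot\interp{\dttnt{A}}_{(N\,n)}$; when $\dttnt{p} = \dttsym{-}$ it is on the left and is satisfied, giving $\interp{\dttnt{A}}_{(N\,n)}$, i.e. $\dttsym{-}\,\interp{\dttnt{A}}_{(N\,n)}$ fails. In either case the counter-model makes $\interp{G}_N$ and $\interp{\Gamma}_N$ true while making $\dttnt{p}\,\interp{\dttnt{A}}_{(N\,n)}$ false, directly contradicting $\interp{G \dttsym{;} \Gamma \vdash \dttnt{p}\,\dttnt{A} \mathbin{@} \dttnt{n}}_N$. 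Hence no counter-model exists and the translated sequent is L-valid.

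I expect the main obstacle to be bookkeeping rather than anything conceptual: carefully matching the polarity-indexed context translations $\mathsf{L}(-)^{+}$ and $\mathsf{L}(-)^{-}$ against the two satisfaction clauses~(ii) and~(iii) of the counter-model definition, and checking in each polarity case that the active formula lands on the correct side of the turnstile with the correct refutation/satisfaction obligation. The conceptual bridge is already supplied by Lemma~\ref{lemma:DIL-interp-L-interps}, which aligns the interpretation of L-syntax with that of the corresponding DIL-syntax; once that is in place, each branch closes with a one-line contradiction.
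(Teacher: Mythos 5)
Your proposal is correct and follows essentially the same route as the paper's proof: a case split on the polarity of the active formula, followed by using Lemma~\ref{lemma:DIL-interp-L-interps} to identify counter-model conditions (i)--(iii) with $\interp{G}_{N}$, $\interp{\Gamma}_{N}$, and the failure of $\dttnt{p}\,\interp{\dttnt{A}}_{(N\,n)}$, which clashes with the assumed DIL interpretation. The only difference is presentational -- you phrase it as a refutation of an assumed counter-model, while the paper argues directly that at least one of the three conditions must fail -- and these are logically the same argument.
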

\begin{proof}
  This result holds essentially by definition.  For the full proof see
  Appendix~\ref{subsec:proof_of_dil-validity_is_l-validity}.
\end{proof}
\noindent
Finally, we have completeness of DIL by connecting all of the results
of this section.
\begin{thm}[Completeness]
  \label{thm:completeness}
  Suppose $(W,R,V)$ is a Kripke model and $N$ is a node interpreter.  If
  $ \interp{ G  \dttsym{;}  \Gamma  \vdash  \dttnt{p} \, \dttnt{A}  \mathbin{@}  \dttnt{n} }_{N} $ holds, then $G  \dttsym{;}  \Gamma  \vdash  \dttnt{p} \, \dttnt{A}  \mathbin{@}  \dttnt{n}$ is
  derivable.
\end{thm}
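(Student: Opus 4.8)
The plan is to prove completeness by reduction to the completeness of Pinto and Uustalu's L, chaining together the two translations and the admissibility results developed above. First I would read the hypothesis as the statement that $G ; \Gamma \vdash p\,A @ n$ is valid in the sense of Definition~\ref{def:validity}, i.e. that $\interp{G ; \Gamma \vdash p\,A @ n}_N$ holds in every Kripke model for every node interpreter $N$. Applying Lemma~\ref{lemma:dil-validity_is_l-validity} then yields that the translated L-sequent $\mathsf{L}(G ; \Gamma \vdash p\,A @ n)$ is L-valid, i.e. has no counter-model in the sense of Definition~\ref{def:L-counter-model}. Since L is complete with respect to Rauszer's Kripke semantics (Pinto and Uustalu, Corollary~1 of \cite{Pinto:2009}), the L-sequent $\mathsf{L}(G ; \Gamma \vdash p\,A @ n)$ is therefore derivable in L. The remaining task is to transport this derivable L-sequent back to a DIL derivation of the original sequent, which I would organize by a case split on the polarity $p$.

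When $p = +$, the translation places the active formula on the right: the L-sequent is $\mathsf{L}(\Gamma)^+ \vdash_{\mathsf{L}(G)} n : \mathsf{L}(A), \mathsf{L}(\Gamma)^-$. I would take the activation of this sequent whose active formula is $n : \mathsf{L}(A)$ and apply Lemma~\ref{lemma:containment-l-in-dil}, obtaining a derivation of $\mathsf{D}(\mathsf{L}(G)) ; \mathsf{D}(\mathsf{L}(\Gamma)^+)^+, \mathsf{D}(\mathsf{L}(\Gamma)^-)^- \vdash +\,\mathsf{D}(\mathsf{L}(A)) @ n$. The round-trip identities of Lemma~\ref{lemma:L-D-relations}, extended to contexts as noted there, give $\mathsf{D}(\mathsf{L}(A)) = A$ and collapse $\mathsf{D}(\mathsf{L}(\Gamma)^+)^+, \mathsf{D}(\mathsf{L}(\Gamma)^-)^-$ to $\Gamma$ up to a permutation absorbed by exchange (Lemma~\ref{lemma:exchange}); so this derivation already proves the goal apart from the shape of the graph.

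When $p = -$, the translation instead places $\mathsf{L}(A)$ on the left, as in $\mathsf{L}(\Gamma)^+, n : \mathsf{L}(A) \vdash_{\mathsf{L}(G)} \mathsf{L}(\Gamma)^-$, so $A$ cannot be activated directly. Here I would activate the L-sequent at an arbitrary right-hand formula $m : C$ (invoking the equivalence of $\Gamma \vdash_G \cdot$ with $\Gamma \vdash_G n : \perp$ recorded at the start of Section~\ref{subsec:completeness} in case $\mathsf{L}(\Gamma)^-$ is empty) and apply Lemma~\ref{lemma:containment-l-in-dil}; because $n : \mathsf{L}(A)$ lay on the left, the resulting DIL-sequent carries $+\,A @ n$ as a hypothesis. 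I would then apply the Left-to-Right lemma (Lemma~\ref{lemma:refocus}) to refocus onto this hypothesis, switching the active formula to $-\,A @ n$ and returning the previously active formula $-\,\mathsf{D}(C) @ m$ to the context; since $C$ came from a negative hypothesis of $\Gamma$, Lemma~\ref{lemma:L-D-relations} identifies $\mathsf{D}(C)$ with the corresponding DIL formula, so the context reconstitutes $\Gamma$ up to exchange.

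In both cases the derivation produced carries the graph $\mathsf{D}(\mathsf{L}(G))$, which by Lemma~\ref{lemma:L-D-relations}(i) is only isomorphic to $G$, not literally equal. The final step, which I expect to be the main obstacle, is to show that DIL-derivability is invariant under replacing the graph by an isomorphic one. This holds because isomorphic graphs induce the same reachability judgment: every edge of $\mathsf{D}(\mathsf{L}(G))$ arises from an edge of $G$ with its direction and polarity simultaneously flipped, and the rule $\ifrName{rel\_flip}$ exactly absorbs this flip, so each reachability side-condition occurring in the DIL derivation can be rederived over $G$ and the whole derivation replayed with graph $G$. Combining this with the two cases above yields a derivation of $G ; \Gamma \vdash p\,A @ n$, as required.
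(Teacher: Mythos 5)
Your proposal follows essentially the same route as the paper's own proof: reduce to L-validity via Lemma~\ref{lemma:dil-validity_is_l-validity}, invoke completeness of L, pull the derivation back with Lemma~\ref{lemma:containment-l-in-dil}, identify the result via Lemma~\ref{lemma:L-D-relations}, and handle $p = -$ by an extra application of Left-to-Right (Lemma~\ref{lemma:refocus}). The only difference is that you explicitly discharge the graph-isomorphism wrinkle (that $\mathsf{D}(\mathsf{L}(G))$ is only isomorphic to $G$) via $\ifrName{rel\_flip}$, a point the paper's proof compresses into the word ``equivalent''; this is a welcome refinement rather than a different argument.
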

\begin{proof}
  Suppose $(W,R,V)$ is a Kripke model and $N$ is a node interpreter. Furthermore,
  suppose $ \interp{ G  \dttsym{;}  \Gamma  \vdash  \dttnt{p} \, \dttnt{A}  \mathbin{@}  \dttnt{n} }_{N} $ holds.  Let $\dttnt{p} = \dttsym{+}$.  By
  Lemma~\ref{lemma:dil-validity_is_l-validity}
  we know $  \mathsf{L}( \Gamma )^{ \Lsym{+} }   \vdash_{  \mathsf{L}( G )  }  \Lmv{n}  \Lsym{:}   \mathsf{L}( \Lnt{A} )   \Lsym{,}   \mathsf{L}( \Gamma )^{ \Lsym{-} }  $ is valid, and by completeness of L
  (Corollary 1, p. 13, \cite{Pinto:2009}) we know $  \mathsf{L}( \Gamma )^{ \Lsym{+} }   \vdash_{  \mathsf{L}( G )  }  \Lmv{n}  \Lsym{:}   \mathsf{L}( \Lnt{A} )   \Lsym{,}   \mathsf{L}( \Gamma )^{ \Lsym{-} }  $
  is derivable.  By containment of L in DIL (Lemma~\ref{lemma:containment-l-in-dil}) we
  know that the activation $ \mathsf{D}(  \mathsf{L}( G )  )   \dttsym{;}   \mathsf{D}(  \mathsf{L}( \Gamma )^{ \dttsym{+} }  )^{ \dttsym{+} }   \dttsym{,}   \mathsf{D}(  \mathsf{L}( \Gamma )^{ \dttsym{-} }  )^{ \dttsym{-} }   \vdash  \dttsym{+} \,  \mathsf{D}(  \mathsf{L}( \dttnt{A} )  )   \mathbin{@}  \dttnt{n}$ is derivable.
  Finally, by Lemma~\ref{lemma:L-D-relations} we can see that the former sequent is
  equivalent to $G  \dttsym{;}  \Gamma  \vdash  \dttnt{p} \, \dttnt{A}  \mathbin{@}  \dttnt{n}$, and thus, we obtain our result.  The case when
  $\dttnt{p} = \dttsym{-}$ is similar, but before using Lemma~\ref{lemma:L-D-relations} one must
  first use the left-to-right admissible rule (Lemma~\ref{lemma:refocus}).
\end{proof}




\section{Dualized Type Theory (DTT)}
\label{sec:dualized_type_theory_(dtt)}

In this section we give DIL a term assignment yielding Dualized Type
Theory (DTT).  First, we introduce DTT, and give several examples
illustrating how to program in DTT.  Then we present the metatheory of
DTT.

The syntax for DTT is defined in Figure~\ref{fig:dtt-syntax}.
\begin{figure}[t]
  
  \begin{center}
    \begin{math}
      \begin{array}{crllllllllllllllllllll}
        (\text{indices})    & \dttnt{d} & ::= & 1\,|\,2\\
        (\text{polarities}) & \dttnt{p} & ::= & \dttsym{+} \,|\, \dttsym{-}\\
        (\text{types})      & \dttnt{A},\dttnt{B},\dttnt{C} & ::= &  \langle  \dttnt{p} \rangle \,|\,  \dttnt{A}  \ndto{ \dttnt{p} }  \dttnt{B} \,|\, \dttnt{A}  \ndwedge{ \dttnt{p} }  \dttnt{B} \\
        (\text{terms})      & \dttnt{t}  & ::= & 
                                 \dttmv{x}\,|\,\dttkw{triv}\,|\,\dttsym{(}  \dttnt{t}  \dttsym{,}  \dttnt{t'}  \dttsym{)}\,|\, \mathbf{in}_{ \dttnt{d} }\, \dttnt{t} \,|\,\lambda  \dttmv{x}  \dttsym{.}  \dttnt{t}\,|\,\langle  \dttnt{t}  \dttsym{,}  \dttnt{t'}  \rangle\,|\,\nu \, \dttmv{x}  \dttsym{.}  \dttnt{t}  \mathbin{\Cdot[2]}  \dttnt{t'}\\
   (\text{canonical terms}) & \dttnt{c} & ::= & \dttmv{x} \mid \dttkw{triv} \mid \dttsym{(}  \dttnt{t}  \dttsym{,}  \dttnt{t'}  \dttsym{)} \mid  \mathbf{in}_{ \dttnt{d} }\, \dttnt{t}  \mid \lambda  \dttmv{x}  \dttsym{.}  \dttnt{t} \mid \langle  \dttnt{t}  \dttsym{,}  \dttnt{t'}  \rangle\\
        (\text{graphs})     & G & ::= &  \cdot  \,|\, \dttnt{n} \,  \preccurlyeq_{ \dttnt{p} }  \, \dttnt{n'} \,|\, G  \dttsym{,}  G'\\
        (\text{contexts})   & \Gamma & ::= &  \cdot  \,|\, \dttmv{x}  \dttsym{:}  \dttnt{p} \, \dttnt{A}  \mathbin{@}  \dttnt{n} \,|\, \Gamma  \dttsym{,}  \Gamma'\\
      \end{array}
    \end{math}
  \end{center}

  \caption{Syntax for DTT.}
  \label{fig:dtt-syntax}
\end{figure}
Polarities, types, and graphs are all the same as they were in DIL.
Contexts differ only by the addition of labeling each hypothesis with
a variable.  Terms, denoted $\dttnt{t}$, consist of introduction forms,
together with cut terms $\nu \, \dttmv{x}  \dttsym{.}  \dttnt{t}  \mathbin{\Cdot[2]}  \dttnt{t'}$\footnote{In classical type
  theories the symbol $\mu$ usually denotes cut, but we have reserved
  that symbol -- indexed by a polarity -- to be used with inductive
  (positive polarity) and coinductive (negative polarity) types in
  future work.}.  We denote variables as $\dttmv{x}$, $\dttmv{y}$, $\dttmv{z}$,
\ldots. The term $\dttkw{triv}$ is the introduction form for units,
$\dttsym{(}  \dttnt{t}  \dttsym{,}  \dttnt{t'}  \dttsym{)}$ is the introduction form for pairs, similarly the terms
$ \mathbf{in}_{ \dttsym{1} }\, \dttnt{t} $ and $ \mathbf{in}_{ \dttsym{2} }\, \dttnt{t} $ introduce disjunctions, $\lambda  \dttmv{x}  \dttsym{.}  \dttnt{t}$
introduces implication, and $\langle  \dttnt{t}  \dttsym{,}  \dttnt{t'}  \rangle$ introduces co-implication.
The type-assignment rules are defined in Figure~\ref{fig:dtt-ifr}, and
result from a simple term assignment to the rules for DIL.
\begin{figure*}[t]
    \begin{mathpar}
      \dttdruleAx{}     \and
      \dttdruleUnit{}   \and
      \dttdruleAnd{}    \and
      \dttdruleAndBar{} \and
      \dttdruleImp{}    \and 
      \dttdruleImpBar{} \and
      \dttdruleCut{}    
    \end{mathpar}
  
  \caption{Type-Assignment Rules for DTT.}
  \label{fig:dtt-ifr}
\end{figure*}
Finally, the reduction rules for DTT are defined in
Figure~\ref{fig:dtt-red}.  
\begin{figure*}[t]
  \begin{mathpar}
    \dttdruleRImp{}        \and
    \dttdruleRImpBar{}     \and
    \dttdruleRAndOne{}     \and
    \dttdruleRAndTwo{}     \and
    \dttdruleRAndBarOne{}  \and
    \dttdruleRAndBarTwo{}  \and
    \dttdruleRRet{}        \and
    \dttdruleRBetaL{}      \and 
    \dttdruleRBetaR{}      
  \end{mathpar}
  
  \caption{Reduction Rules for DTT.}
  \label{fig:dtt-red}
\end{figure*}
The reduction rules should be considered rewrite rules that can be
applied anywhere within a term.  (The congruence rules are omitted.)

Programming in DTT is not functional programming as usual, so we now
give several illustrative examples.  The reader familiar with type
theories based on sequent calculi will find the following very
familiar. The encodings are similar to that of Curien and Herbelin's
$\bar\lambda\mu\tilde\mu$-calculus \cite{Curien:2000}.  The locus of
computation is the cut term, so naturally, function application is
modeled using cuts.  Suppose
\[
\begin{array}{lll}
D_1 & =^{\text{def}} & G  \dttsym{;}  \Gamma  \vdash  \lambda  \dttmv{x}  \dttsym{.}  \dttnt{t}  \dttsym{:}  \dttsym{+} \, \dttsym{(}   \dttnt{A}  \ndto{ \dttsym{+} }  \dttnt{B}   \dttsym{)}  \mathbin{@}  \dttnt{n}\\
D_2 & =^{\text{def}} & G  \dttsym{;}  \Gamma  \vdash  \dttnt{t'}  \dttsym{:}  \dttsym{+} \, \dttnt{A}  \mathbin{@}  \dttnt{n}\\
\Gamma' & =^{\text{def}} & \Gamma  \dttsym{,}  \dttmv{y}  \dttsym{:}  \dttsym{-} \, \dttnt{B}  \mathbin{@}  \dttnt{n}
\end{array}
\]
Then we can construct the following typing derivation:
\begin{center}
  \footnotesize
  \begin{math}
    $$\mprset{flushleft}
    \inferrule* [right=\tiny cut] {
      D_1
      \\
      $$\mprset{flushleft}
      \inferrule* [right=\tiny impBar] {
        D_2
        \\
        $$\mprset{flushleft}
        \inferrule* [right=\tiny ax] {
          \,
        }{G  \dttsym{;}  \Gamma'  \vdash  \dttmv{y}  \dttsym{:}  \dttsym{-} \, \dttnt{B}  \mathbin{@}  \dttnt{n}}
      }{G  \dttsym{;}  \Gamma'  \vdash  \langle  \dttnt{t'}  \dttsym{,}  \dttmv{y}  \rangle  \dttsym{:}  \dttsym{-} \, \dttsym{(}   \dttnt{A}  \ndto{ \dttsym{+} }  \dttnt{B}   \dttsym{)}  \mathbin{@}  \dttnt{n}}
    }{G  \dttsym{;}  \Gamma  \vdash  \nu \, \dttmv{y}  \dttsym{.}  \lambda  \dttmv{x}  \dttsym{.}  \dttnt{t}  \mathbin{\Cdot[2]}  \langle  \dttnt{t'}  \dttsym{,}  \dttmv{y}  \rangle  \dttsym{:}  \dttsym{+} \, \dttnt{B}  \mathbin{@}  \dttnt{n}}
  \end{math}
\end{center}
Implication was indeed eliminated, yielding the conclusion.

There is some intuition one can use while thinking about this style of
programming that is based on the encoding of classical logic --
Parigot's $\lambda\mu$-calculus -- into the $\pi$-calculus.  See for
example \cite{Thielecke:1997,Honda:2014}.  We can think of positive
variables as input ports, and negative variables as output ports.
Clearly, these notions are dual.  Then a cut of the form $\nu \, \dttmv{z}  \dttsym{.}  \dttnt{t}  \mathbin{\Cdot[2]}  \dttnt{t'}$ can be intuitively understood as a device capable of routing
information.  We think of this term as first running the term $\dttnt{t}$,
and then plugging its value into the continuation $\dttnt{t'}$.  Thus,
negative terms are continuations. Now consider the instance of the
previous term $\nu \, \dttmv{z}  \dttsym{.}  \dttnt{t}  \mathbin{\Cdot[2]}  \dttmv{y}$ where $\dttnt{t}$ is a positive term and
$\dttmv{y}$ is a negative variable (an output port).  This can be
intuitively understood as after running $\dttnt{t}$, route its value
through the output port $\dttmv{y}$.  Now consider the instance $\nu \, \dttmv{z}  \dttsym{.}  \dttnt{t}  \mathbin{\Cdot[2]}  \dttmv{z}$.  This term can be understood as after running the term
$\dttnt{t}$, route its value through the output port $\dttmv{z}$, but then
capture this value as the return value.  Thus, the cut term reroutes
output ports into the actual return value of the cut.

%

There is one additional bit of intuition we can use when thinking about
programming in DTT. We can think of cuts of the form
$\nu z.(\lambda x_1\cdots\lambda x_i.t) \mathbin{\Cdot[2]} \langle t_1, \langle t_2, \cdots \langle t_i, z \rangle \cdots \rangle$ 
as an abstract machine, where $\lambda x_1\cdots\lambda x_i.t$ is the 
functional part of the machine, and $\langle t_1, \langle t_2, \cdots \langle t_i, z \rangle \cdots \rangle$ is 
the stack of inputs the abstract machine will apply the function to
ultimately routing the final result of the application through
$\dttmv{z}$, but rerouting this into the return value. 
This intuition is not new, but was first observed by Curien and
Herbelin in \cite{Curien:2000}; see also \cite{Curien:2002}.

Similarly to the eliminator for implication we can define the eliminator for disjunction in the form
of the usual case analysis. Suppose $G  \dttsym{;}  \Gamma  \vdash  \dttnt{t}  \dttsym{:}  \dttsym{+} \, \dttsym{(}   \dttnt{A}  \ndwedge{ \dttsym{-} }  \dttnt{B}   \dttsym{)}  \mathbin{@}  \dttnt{n}$, $G  \dttsym{;}  \Gamma  \dttsym{,}  \dttmv{x}  \dttsym{:}  \dttsym{+} \, \dttnt{A}  \mathbin{@}  \dttnt{n}  \vdash  \dttnt{t_{{\mathrm{1}}}}  \dttsym{:}  \dttsym{+} \, \dttnt{C}  \mathbin{@}  \dttnt{n}$, and
$G  \dttsym{;}  \Gamma  \dttsym{,}  \dttmv{x}  \dttsym{:}  \dttsym{+} \, \dttnt{B}  \mathbin{@}  \dttnt{n}  \vdash  \dttnt{t_{{\mathrm{2}}}}  \dttsym{:}  \dttsym{+} \, \dttnt{C}  \mathbin{@}  \dttnt{n}$ are all admissible.  Then we can
derive the usual eliminator for disjunction.  Define 
$ \mathbf{case}\, \dttnt{t} \,\mathbf{of}\, \dttmv{x} . \dttnt{t_{{\mathrm{1}}}} , \dttmv{x} . \dttnt{t_{{\mathrm{2}}}}  =^{\text{def}} \nu \, \dttmv{z_{{\mathrm{0}}}}  \dttsym{.}  \dttsym{(}  \nu \, \dttmv{z_{{\mathrm{1}}}}  \dttsym{.}  \dttsym{(}  \nu \, \dttmv{z_{{\mathrm{2}}}}  \dttsym{.}  \dttnt{t}  \mathbin{\Cdot[2]}  \dttsym{(}  \dttmv{z_{{\mathrm{1}}}}  \dttsym{,}  \dttmv{z_{{\mathrm{2}}}}  \dttsym{)}  \dttsym{)}  \mathbin{\Cdot[2]}  \dttsym{(}  \nu \, \dttmv{x}  \dttsym{.}  \dttnt{t_{{\mathrm{2}}}}  \mathbin{\Cdot[2]}  \dttmv{z_{{\mathrm{0}}}}  \dttsym{)}  \dttsym{)}  \mathbin{\Cdot[2]}  \dttsym{(}  \nu \, \dttmv{x}  \dttsym{.}  \dttnt{t_{{\mathrm{1}}}}  \mathbin{\Cdot[2]}  \dttmv{z_{{\mathrm{0}}}}  \dttsym{)}$.
Then we have the following result.
\begin{lem}
  \label{lemma:disj-elim-adm}
  The following rule is derivable:
  \begin{center}
    \begin{math}
      $$\mprset{flushleft}
      \inferrule* [right=case] {
        G  \dttsym{;}  \Gamma  \dttsym{,}  \dttmv{x}  \dttsym{:}  \dttnt{p} \, \dttnt{A}  \mathbin{@}  \dttnt{n}  \vdash  \dttnt{t_{{\mathrm{1}}}}  \dttsym{:}  \dttnt{p} \, \dttnt{C}  \mathbin{@}  \dttnt{n}
        \\\\
        G  \dttsym{;}  \Gamma  \dttsym{,}  \dttmv{x}  \dttsym{:}  \dttnt{p} \, \dttnt{B}  \mathbin{@}  \dttnt{n}  \vdash  \dttnt{t_{{\mathrm{2}}}}  \dttsym{:}  \dttnt{p} \, \dttnt{C}  \mathbin{@}  \dttnt{n}
        \\
        G  \dttsym{;}  \Gamma  \vdash  \dttnt{t}  \dttsym{:}  \dttnt{p} \, \dttsym{(}   \dttnt{A}  \ndwedge{  \bar{  \dttnt{p}  }  }  \dttnt{B}   \dttsym{)}  \mathbin{@}  \dttnt{n}
      }{G  \dttsym{;}  \Gamma  \vdash   \mathbf{case}\, \dttnt{t} \,\mathbf{of}\, \dttmv{x} . \dttnt{t_{{\mathrm{1}}}} , \dttmv{x} . \dttnt{t_{{\mathrm{2}}}}   \dttsym{:}  \dttnt{p} \, \dttnt{C}  \mathbin{@}  \dttnt{n}}
    \end{math}
  \end{center}
\end{lem}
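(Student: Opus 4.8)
The plan is to prove the rule admissible by directly exhibiting a typing derivation for the defining term of $\mathbf{case}$, reading its four nested cuts from the outside in and choosing, at each application of \textsc{Cut}, the existentially quantified middle formula and node so that the types thread together. I will carry out the construction for the representative polarity $\dttnt{p} = \dttsym{+}$, where $ \dttnt{A}  \ndwedge{  \bar{  \dttnt{p}  }  }  \dttnt{B} $ is a genuine disjunction, and obtain $\dttnt{p} = \dttsym{-}$ from the polarity symmetry of the rules. Throughout, the three hypotheses of the lemma — call them $D_1 \equiv G  \dttsym{;}  \Gamma  \dttsym{,}  \dttmv{x}  \dttsym{:}  \dttnt{p} \, \dttnt{A}  \mathbin{@}  \dttnt{n}  \vdash  \dttnt{t_{{\mathrm{1}}}}  \dttsym{:}  \dttnt{p} \, \dttnt{C}  \mathbin{@}  \dttnt{n}$, $D_2$ for $\dttnt{t_{{\mathrm{2}}}}$, and $D_t \equiv G  \dttsym{;}  \Gamma  \vdash  \dttnt{t}  \dttsym{:}  \dttnt{p} \, \dttsym{(}   \dttnt{A}  \ndwedge{  \bar{  \dttnt{p}  }  }  \dttnt{B}   \dttsym{)}  \mathbin{@}  \dttnt{n}$ — serve as leaves, each transported into the enlarged context by Weakening (Lemma~\ref{lemma:weakening}) and reordered by Exchange (Lemma~\ref{lemma:exchange}).

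First I would set up the continuation variables. The outermost \textsc{Cut} on $\dttmv{z_{{\mathrm{0}}}}$ introduces $\dttmv{z_{{\mathrm{0}}}}  \dttsym{:}   \bar{ \dttnt{p} }  \, \dttnt{C}  \mathbin{@}  \dttnt{n}$ as the return continuation and forces the conclusion $\dttnt{p} \, \dttnt{C}  \mathbin{@}  \dttnt{n}$; choosing its middle formula to be $\dttnt{A}$ at $\dttnt{n}$ makes its left premise $M_1$ prove $\dttsym{+} \, \dttnt{A}  \mathbin{@}  \dttnt{n}$ and its right premise $M_2 \equiv \nu \, \dttmv{x}  \dttsym{.}  \dttnt{t_{{\mathrm{1}}}}  \mathbin{\Cdot[2]}  \dttmv{z_{{\mathrm{0}}}}$ prove $\dttsym{-} \, \dttnt{A}  \mathbin{@}  \dttnt{n}$. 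The cut inside $M_2$ then types $\dttmv{x}$ at $\dttnt{p} \, \dttnt{A}  \mathbin{@}  \dttnt{n}$, exactly the hypothesis needed to reuse $D_1$ as its left ($\dttsym{+}$) premise, while its right ($\dttsym{-}$) premise $\dttmv{z_{{\mathrm{0}}}}$ is closed by \textsc{Ax} together with reflexivity of reachability (\textsc{rel\_refl}), giving $G  \vdash  \dttnt{n} \,  \preccurlyeq^*_{  \bar{ \dttnt{p} }  }  \, \dttnt{n}$. The branch $M_1 \equiv \nu \, \dttmv{z_{{\mathrm{1}}}}  \dttsym{.}  N_1  \mathbin{\Cdot[2]}  N_2$ is handled one level down in the same way: its \textsc{Cut} on $\dttmv{z_{{\mathrm{1}}}}$ (middle formula $\dttnt{B}$ at $\dttnt{n}$) makes $N_2 \equiv \nu \, \dttmv{x}  \dttsym{.}  \dttnt{t_{{\mathrm{2}}}}  \mathbin{\Cdot[2]}  \dttmv{z_{{\mathrm{0}}}}$ its right premise, reusing $D_2$ just as above, and leaves $N_1 \equiv \nu \, \dttmv{z_{{\mathrm{2}}}}  \dttsym{.}  \dttnt{t}  \mathbin{\Cdot[2]}  \dttsym{(}  \dttmv{z_{{\mathrm{1}}}}  \dttsym{,}  \dttmv{z_{{\mathrm{2}}}}  \dttsym{)}$ to prove $\dttsym{+} \, \dttnt{B}  \mathbin{@}  \dttnt{n}$.

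The heart of the derivation is the innermost cut $N_1$. I would choose its middle formula to be $ \dttnt{A}  \ndwedge{  \bar{  \dttnt{p}  }  }  \dttnt{B} $ at $\dttnt{n}$, so that its left premise is exactly $D_t$ and its right premise is the pair $\dttsym{(}  \dttmv{z_{{\mathrm{1}}}}  \dttsym{,}  \dttmv{z_{{\mathrm{2}}}}  \dttsym{)}$, which must be typed at $\dttsym{-} \, \dttsym{(}   \dttnt{A}  \ndwedge{  \bar{  \dttnt{p}  }  }  \dttnt{B}   \dttsym{)}  \mathbin{@}  \dttnt{n}$. This is precisely an application of the \textsc{And} rule at polarity $ \bar{ \dttnt{p} } $, whose premises $\dttmv{z_{{\mathrm{1}}}}  \dttsym{:}   \bar{ \dttnt{p} }  \, \dttnt{A}  \mathbin{@}  \dttnt{n}$ and $\dttmv{z_{{\mathrm{2}}}}  \dttsym{:}   \bar{ \dttnt{p} }  \, \dttnt{B}  \mathbin{@}  \dttnt{n}$ are again discharged by \textsc{Ax} and \textsc{rel\_refl}; here the ``a continuation for a disjunction is a pair of continuations'' duality is what makes the types line up. Assembling these five nodes yields the conclusion.

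I expect the main obstacle to be the bookkeeping of polarities and of the existentially chosen formula/node at each \textsc{Cut}: one must verify that $\dttmv{x}$ receives the same type in the two scopes where $D_1$ and $D_2$ are inserted, that every occurrence of $\dttmv{z_{{\mathrm{0}}}}$ can be closed at the required polarity by \textsc{Ax} plus reflexivity (so that no extra graph edges are needed and all reasoning stays at the single node $\dttnt{n}$), and that the chosen middle formulas $\dttnt{A}$, $\dttnt{B}$, and $ \dttnt{A}  \ndwedge{  \bar{  \dttnt{p}  }  }  \dttnt{B} $ are mutually consistent across the nested cuts. The one genuinely delicate point is that the defining term places $\dttnt{t}$, $\dttnt{t_{{\mathrm{1}}}}$, $\dttnt{t_{{\mathrm{2}}}}$ in left ($\dttsym{+}$) premise position, so the explicit derivation is the one for $\dttnt{p} = \dttsym{+}$; the case $\dttnt{p} = \dttsym{-}$ is its exact dual and should be stated as following from the polarity symmetry of DIL, i.e.\ by running the same construction with all polarities flipped.
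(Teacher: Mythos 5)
Your construction is essentially identical to the paper's own proof: the same nested-cut term read from the outside in, the same choice of middle formulas ($A$ and $C$ at the $z_0$ level, $B$ and $C$ at the $z_1$ level, $ \dttnt{A}  \ndwedge{  \bar{  \dttnt{p}  }  }  \dttnt{B} $ at the $z_2$ level), and the same discharging of leaves via Weakening, Exchange, \textsc{Ax}, and \textsc{rel\_refl}; the paper's appendix derivation is exactly your $p = \dttsym{+}$ case. One caution on your final remark: the \textsc{Cut} rule is \emph{not} polarity-symmetric in its premises (the first component of $\nu \, \dttmv{x}  \dttsym{.}  \dttnt{t}  \mathbin{\Cdot[2]}  \dttnt{t'}$ must always prove a $\dttsym{+}$ formula and the second a $\dttsym{-}$ formula, for every conclusion polarity $\dttnt{p}$), so ``running the same construction with all polarities flipped'' would type a \emph{different} term, with the components of each cut swapped, rather than the syntactically fixed $\mathbf{case}$ term; e.g.\ for $\dttnt{p} = \dttsym{-}$ the subterm $\nu \, \dttmv{x}  \dttsym{.}  \dttnt{t_{{\mathrm{1}}}}  \mathbin{\Cdot[2]}  \dttmv{z_{{\mathrm{0}}}}$ would need $\dttnt{t_{{\mathrm{1}}}}$ in positive position while $D_2$ gives it a negative type. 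The paper's own proof silently exhibits only the $\dttsym{+}$ derivation, so you are no worse off, but the duality appeal as you state it does not close the $\dttsym{-}$ case for this particular term.
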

\begin{proof}
  A full derivation in DTT can be found in
  Appendix~\ref{subsec:proof_of_lemma:disj-elim-adm}.
\end{proof}

Now consider the term $\nu \, \dttmv{x}  \dttsym{.}   \mathbf{in}_{ \dttsym{1} }\, \dttsym{(}  \nu \, \dttmv{y}  \dttsym{.}   \mathbf{in}_{ \dttsym{2} }\, \langle  \dttmv{y}  \dttsym{,}  \dttkw{triv}  \rangle   \mathbin{\Cdot[2]}  \dttmv{x}  \dttsym{)}   \mathbin{\Cdot[2]}  \dttmv{x}$.  This term is the inhabitant of the type $ \dttnt{A}  \ndwedge{ \dttsym{-} }  \mathop{\sim}  \dttnt{A} $, and its
typing derivation follows from the derivation given in
Section~\ref{sec:dualized_intuitionistic_logic_(dil)}.  We can see by
looking at the syntax that the cuts involved are indeed on the axiom
$\dttmv{x}$, thus this term has no canonical form.  In \cite{Crolard:2004}
Crolard shows that inhabitants such as these amount to a constructive
coroutine.  That is, it is a restricted form of a continuation.

We now consider several example reductions in DTT. In the following
examples we underline non-top-level redexes. The first example simply
$\alpha$-converts the function $\lambda  \dttmv{x}  \dttsym{.}  \dttmv{x}$ into $\lambda  \dttmv{z}  \dttsym{.}  \dttmv{z}$ as follows:
\begin{center}
  \begin{math}
    \begin{array}{lcl}    
      \lambda  \dttmv{z}  \dttsym{.}   \underline{ \nu \, \dttmv{y}  \dttsym{.}  \lambda  \dttmv{x}  \dttsym{.}  \dttmv{x}  \mathbin{\Cdot[2]}  \langle  \dttmv{z}  \dttsym{,}  \dttmv{y}  \rangle }  & \redtoby{RImp}   & \lambda  \dttmv{z}  \dttsym{.}   \underline{ \nu \, \dttmv{y}  \dttsym{.}  \dttmv{z}  \mathbin{\Cdot[2]}  \dttmv{y} } \\
                                    & \redtoby{RRet}   & \lambda  \dttmv{z}  \dttsym{.}  \dttmv{z}\\
    \end{array}
\end{math}
\end{center}
A more involved example is the application of the function
$\lambda  \dttmv{x}  \dttsym{.}  \dttsym{(}  \lambda  \dttmv{y}  \dttsym{.}  \dttmv{y}  \dttsym{)}$ to the arguments $\dttkw{triv}$ and $\dttkw{triv}$. 
\begin{center}
  \begin{math}
    \begin{array}{lcl}
      \nu \, \dttmv{z}  \dttsym{.}  \lambda  \dttmv{x}  \dttsym{.}  \dttsym{(}  \lambda  \dttmv{y}  \dttsym{.}  \dttmv{y}  \dttsym{)}  \mathbin{\Cdot[2]}  \langle  \dttkw{triv}  \dttsym{,}  \langle  \dttkw{triv}  \dttsym{,}  \dttmv{z}  \rangle  \rangle & \redtoby{RImp} & \nu \, \dttmv{z}  \dttsym{.}  \lambda  \dttmv{y}  \dttsym{.}  \dttmv{y}  \mathbin{\Cdot[2]}  \langle  \dttkw{triv}  \dttsym{,}  \dttmv{z}  \rangle\\
      & \redtoby{RImp} & \nu \, \dttmv{z}  \dttsym{.}  \dttkw{triv}  \mathbin{\Cdot[2]}  \dttmv{z}\\
      & \redtoby{RRet} & \dttkw{triv}\\
    \end{array}
  \end{math}
\end{center}

\section{Metatheory of DTT}
\label{sec:metatheory_of_dtt}

We now present the basic metatheory of DTT, starting with type
preservation. We begin with the inversion lemma, which is necessary for
proving type preservation.

\begin{lem}[Inversion]
  \label{lemma:inverstion}
  \begin{itemize}
  \item[] 
  \item[i.] If $G  \dttsym{;}  \Gamma  \vdash  \dttsym{(}  \dttnt{t_{{\mathrm{1}}}}  \dttsym{,}  \dttnt{t_{{\mathrm{2}}}}  \dttsym{)}  \dttsym{:}  \dttnt{p} \, \dttsym{(}   \dttnt{A}  \ndwedge{ \dttnt{p} }  \dttnt{B}   \dttsym{)}  \mathbin{@}  \dttnt{n}$, then 
    $G  \dttsym{;}  \Gamma  \vdash  \dttnt{t_{{\mathrm{1}}}}  \dttsym{:}  \dttnt{p} \, \dttnt{A}  \mathbin{@}  \dttnt{n}$ and $G  \dttsym{;}  \Gamma  \vdash  \dttnt{t_{{\mathrm{2}}}}  \dttsym{:}  \dttnt{p} \, \dttnt{B}  \mathbin{@}  \dttnt{n}$.
  \item[ii.] If $G  \dttsym{;}  \Gamma  \vdash   \mathbf{in}_{ \dttnt{d} }\, \dttnt{t}   \dttsym{:}  \dttnt{p} \, \dttsym{(}   \dttnt{A_{{\mathrm{1}}}}  \ndwedge{  \bar{  \dttnt{p}  }  }  \dttnt{A_{{\mathrm{2}}}}   \dttsym{)}  \mathbin{@}  \dttnt{n}$, then 
    $G  \dttsym{;}  \Gamma  \vdash  \dttnt{t}  \dttsym{:}  \dttnt{p} \,  \dttnt{A} _{ \dttnt{d} }   \mathbin{@}  \dttnt{n}$.
  \item[iii.] If $G  \dttsym{;}  \Gamma  \vdash  \lambda  \dttmv{x}  \dttsym{.}  \dttnt{t}  \dttsym{:}  \dttnt{p} \, \dttsym{(}   \dttnt{A}  \ndto{ \dttnt{p} }  \dttnt{B}   \dttsym{)}  \mathbin{@}  \dttnt{n}$, then 
    $\dttsym{(}  G  \dttsym{,}  \dttnt{n} \,  \preccurlyeq_{ \dttnt{p} }  \, \dttnt{n'}  \dttsym{)}  \dttsym{;}  \Gamma  \dttsym{,}  \dttmv{x}  \dttsym{:}  \dttnt{p} \, \dttnt{A}  \mathbin{@}  \dttnt{n'}  \vdash  \dttnt{t}  \dttsym{:}  \dttnt{p} \, \dttnt{B}  \mathbin{@}  \dttnt{n'}$ for any
    $\dttnt{n'} \, \not\in \, \dttsym{\mbox{$\mid$}}  G  \dttsym{\mbox{$\mid$}}  \dttsym{,}  \dttsym{\mbox{$\mid$}}  \Gamma  \dttsym{\mbox{$\mid$}}$.
  \item[iv.] If $G  \dttsym{;}  \Gamma  \vdash  \langle  \dttnt{t_{{\mathrm{1}}}}  \dttsym{,}  \dttnt{t_{{\mathrm{2}}}}  \rangle  \dttsym{:}  \dttnt{p} \, \dttsym{(}   \dttnt{A}  \ndto{  \bar{  \dttnt{p}  }  }  \dttnt{B}   \dttsym{)}  \mathbin{@}  \dttnt{n}$, then
    $G  \vdash  \dttnt{n} \,  \preccurlyeq^*_{  \bar{  \dttnt{p}  }  }  \, \dttnt{n'}$, $G  \dttsym{;}  \Gamma  \vdash  \dttnt{t_{{\mathrm{1}}}}  \dttsym{:}   \bar{  \dttnt{p}  }  \, \dttnt{A}  \mathbin{@}  \dttnt{n'}$, and
    $G  \dttsym{;}  \Gamma  \vdash  \dttnt{t_{{\mathrm{2}}}}  \dttsym{:}  \dttnt{p} \, \dttnt{B}  \mathbin{@}  \dttnt{n'}$ for some node $\dttnt{n'}$.
  \end{itemize}
\end{lem}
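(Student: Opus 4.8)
The plan is to prove all four clauses by a single case analysis on the last rule of the assumed typing derivation, with no induction required. The key observation is that the type-assignment system of Figure~\ref{fig:dtt-ifr} is entirely syntax-directed on the \emph{head constructor} of the subject term: the rule \textsc{Ax} concludes only judgments whose subject is a variable $\dttmv{x}$, \textsc{Unit} only judgments whose subject is $\dttkw{triv}$, and \textsc{Cut} only judgments whose subject is a $\nu$-abstraction $\nu \, \dttmv{x} \dttsym{.} \dttnt{t_{{\mathrm{1}}}} \mathbin{\Cdot[2]} \dttnt{t_{{\mathrm{2}}}}$. Hence each of the four introduction forms appearing in the statement can have been typed by exactly one rule: \textsc{And} for $\dttsym{(} \dttnt{t_{{\mathrm{1}}}} \dttsym{,} \dttnt{t_{{\mathrm{2}}}} \dttsym{)}$, \textsc{AndBar} for $\mathbf{in}_{\dttnt{d}}\, \dttnt{t}$, \textsc{Imp} for $\lambda \dttmv{x} \dttsym{.} \dttnt{t}$, and \textsc{ImpBar} for $\langle \dttnt{t_{{\mathrm{1}}}} \dttsym{,} \dttnt{t_{{\mathrm{2}}}} \rangle$. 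Since the subject term and its type together pin down the concluding rule uniquely, the premises of that rule are precisely the judgments we must produce.

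For clauses (i), (ii) and (iv) this settles the matter at once. In (i), inversion on \textsc{And} yields its two premises $G \dttsym{;} \Gamma \vdash \dttnt{t_{{\mathrm{1}}}} \dttsym{:} \dttnt{p} \, \dttnt{A} \mathbin{@} \dttnt{n}$ and $G \dttsym{;} \Gamma \vdash \dttnt{t_{{\mathrm{2}}}} \dttsym{:} \dttnt{p} \, \dttnt{B} \mathbin{@} \dttnt{n}$; in (ii), inversion on \textsc{AndBar} yields its single premise $G \dttsym{;} \Gamma \vdash \dttnt{t} \dttsym{:} \dttnt{p} \, \dttnt{A}_{\dttnt{d}} \mathbin{@} \dttnt{n}$; and in (iv), inversion on \textsc{ImpBar} supplies the reachability witness $G \vdash \dttnt{n} \preccurlyeq^*_{\bar{\dttnt{p}}} \dttnt{n'}$ together with $G \dttsym{;} \Gamma \vdash \dttnt{t_{{\mathrm{1}}}} \dttsym{:} \bar{\dttnt{p}} \, \dttnt{A} \mathbin{@} \dttnt{n'}$ and $G \dttsym{;} \Gamma \vdash \dttnt{t_{{\mathrm{2}}}} \dttsym{:} \dttnt{p} \, \dttnt{B} \mathbin{@} \dttnt{n'}$. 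Note that in (iv) no handling of bound names is needed: the node $\dttnt{n'}$ is existentially used by \textsc{ImpBar} (it occurs in the premises but not the conclusion, with no freshness condition), so the $\dttnt{n'}$ delivered by inversion is exactly the one the rule instance chose, matching the ``for some node $\dttnt{n'}$'' of the statement.

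Clause (iii) is the only point requiring care, and is where I expect the main obstacle. Inversion on \textsc{Imp} returns the body derivation at the \emph{particular} fresh node, say $\dttnt{m}$, that the rule instance happened to use, giving $\dttsym{(} G \dttsym{,} \dttnt{n} \preccurlyeq_{\dttnt{p}} \dttnt{m} \dttsym{)} \dttsym{;} \Gamma \dttsym{,} \dttmv{x} \dttsym{:} \dttnt{p} \, \dttnt{A} \mathbin{@} \dttnt{m} \vdash \dttnt{t} \dttsym{:} \dttnt{p} \, \dttnt{B} \mathbin{@} \dttnt{m}$, whereas the statement demands the premise for \emph{any} $\dttnt{n'} \, \not\in \, \dttsym{\mbox{$\mid$}} G \dttsym{\mbox{$\mid$}} \dttsym{,} \dttsym{\mbox{$\mid$}} \Gamma \dttsym{\mbox{$\mid$}}$. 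To close this gap I would establish (or appeal to) an equivariance property of the type system: derivations are stable under any injective renaming of nodes, so a fresh node may be renamed to any other fresh node. Applying the renaming that sends $\dttnt{m}$ to the chosen $\dttnt{n'}$ in the body derivation is legitimate precisely because the side-condition of \textsc{Imp} guarantees $\dttnt{m}$ does not occur in $G$ or $\Gamma$, so the renaming fixes $G$, $\Gamma$, $\dttnt{n}$ and the type while acting only on the bound node. The nontrivial work is therefore concentrated entirely in this renaming lemma; once it is available, clause (iii) follows as directly as the other three.
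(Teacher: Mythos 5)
Your proposal is correct and follows the same basic route as the paper, which disposes of this lemma in one line (``a trivial proof by induction on the assumed typing derivation''); since the last rule is determined by the head constructor of the subject term, that induction degenerates into exactly the case analysis you describe, and clauses (i), (ii) and (iv) are immediate. The one place where you add genuine content is clause (iii): you are right that inversion on \textsc{Imp} only yields the body derivation at the particular fresh node chosen by that rule instance, whereas the statement quantifies over \emph{any} fresh $\dttnt{n'}$, so some renaming principle is needed. The paper never states the equivariance lemma you invoke; its closest analogues are Node Renaming (Lemma~\ref{lemma:renaming_nodes_in_graph}), which covers only the reachability judgment, and Node Substitution for Typing (Lemma~\ref{lemma:node_substitution_for_typing}), which does not apply here because it requires the target node to be reachable from the source of the substituted edge, which fails for a fresh $\dttnt{n'}$. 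So your observation that the freshness side-condition of \textsc{Imp} is what licenses an injective renaming fixing $G$, $\Gamma$ and $\dttnt{n}$ is exactly the missing justification for the paper's ``trivial'' claim, and stating and proving that renaming lemma (by a routine induction mirroring Lemma~\ref{lemma:renaming_nodes_in_graph}) would make the argument fully rigorous.
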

  \begin{proof}
    Each case of the above lemma holds by a trivial proof by induction
    on the assumed typing derivation.
  \end{proof}

\noindent The results node substitution and substitution for
typing are essential for the cases of type preservation that reduce a
top-level redex. Node substitution, denoted $\dttsym{[}  \dttnt{n_{{\mathrm{1}}}}  \dttsym{/}  \dttnt{n_{{\mathrm{2}}}}  \dttsym{]}  \dttnt{n}$, is defined as
follows: \begin{center}
  \begin{math}
    \begin{array}{lll}
      \dttsym{[}  \dttnt{n_{{\mathrm{1}}}}  \dttsym{/}  \dttnt{n_{{\mathrm{2}}}}  \dttsym{]}  \dttnt{n_{{\mathrm{2}}}} & = & \dttnt{n_{{\mathrm{1}}}}\\
      \dttsym{[}  \dttnt{n_{{\mathrm{1}}}}  \dttsym{/}  \dttnt{n_{{\mathrm{2}}}}  \dttsym{]}  \dttnt{n} & = & \dttnt{n} \text{ where } \dttnt{n} \text{ is
        distinct from } \dttnt{n_{{\mathrm{2}}}}
    \end{array}
  \end{math}
\end{center} The following lemmas are necessary in the proof of
node substitution for typing.
\begin{lem}[Node Renaming]
  \label{lemma:renaming_nodes_in_graph}
  If $G_{{\mathrm{1}}}  \dttsym{,}  G_{{\mathrm{2}}}  \vdash  \dttnt{n_{{\mathrm{1}}}} \,  \preccurlyeq^*_{ \dttnt{p} }  \, \dttnt{n_{{\mathrm{3}}}}$, then for any nodes $\dttnt{n_{{\mathrm{4}}}}$ and
  $\dttnt{n_{{\mathrm{5}}}}$, we have $\dttsym{[}  \dttnt{n_{{\mathrm{4}}}}  \dttsym{/}  \dttnt{n_{{\mathrm{5}}}}  \dttsym{]}  G_{{\mathrm{1}}}  \dttsym{,}  \dttsym{[}  \dttnt{n_{{\mathrm{4}}}}  \dttsym{/}  \dttnt{n_{{\mathrm{5}}}}  \dttsym{]}  G_{{\mathrm{2}}}  \vdash  \dttsym{[}  \dttnt{n_{{\mathrm{4}}}}  \dttsym{/}  \dttnt{n_{{\mathrm{5}}}}  \dttsym{]}  \dttnt{n_{{\mathrm{1}}}} \,  \preccurlyeq^*_{ \dttnt{p} }  \, \dttsym{[}  \dttnt{n_{{\mathrm{4}}}}  \dttsym{/}  \dttnt{n_{{\mathrm{5}}}}  \dttsym{]}  \dttnt{n_{{\mathrm{3}}}}$.
\end{lem}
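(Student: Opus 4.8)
The plan is to prove this by structural induction on the derivation of the reachability judgment $G_{1}, G_{2} \vdash \dttnt{n_1} \preccurlyeq^*_{\dttnt{p}} \dttnt{n_3}$, using the rules of Figure~\ref{fig:dil-reach}. First I would record the elementary fact that node substitution acts homomorphically: it distributes over graph concatenation, so that $\dttsym{[} \dttnt{n_4} \dttsym{/} \dttnt{n_5} \dttsym{]}(G_{1}, G_{2})$ equals $\dttsym{[} \dttnt{n_4} \dttsym{/} \dttnt{n_5} \dttsym{]}G_{1}, \dttsym{[} \dttnt{n_4} \dttsym{/} \dttnt{n_5} \dttsym{]}G_{2}$, and it acts edgewise on each edge $\dttnt{m} \preccurlyeq_{\dttnt{p'}} \dttnt{m'}$ by replacing both endpoints while leaving the polarity $\dttnt{p'}$ untouched. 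Because of this, it suffices to prove the single-graph form of the statement, namely that $G \vdash \dttnt{n_1} \preccurlyeq^*_{\dttnt{p}} \dttnt{n_3}$ implies $\dttsym{[} \dttnt{n_4} \dttsym{/} \dttnt{n_5} \dttsym{]}G \vdash \dttsym{[} \dttnt{n_4} \dttsym{/} \dttnt{n_5} \dttsym{]}\dttnt{n_1} \preccurlyeq^*_{\dttnt{p}} \dttsym{[} \dttnt{n_4} \dttsym{/} \dttnt{n_5} \dttsym{]}\dttnt{n_3}$; the lemma as stated then follows immediately by taking $G = G_{1}, G_{2}$ and splitting the substituted graph back along the concatenation.

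Next I would carry out the case analysis on the last rule of the derivation. For \ifrName{rel\_refl} the conclusion has $\dttnt{n_1} = \dttnt{n_3}$, so the substituted nodes coincide as well and a single application of \ifrName{rel\_refl} gives the result. For \ifrName{rel\_ax} the judgment holds because the edge $\dttnt{n_1} \preccurlyeq_{\dttnt{p}} \dttnt{n_3}$ occurs in the graph; since substitution acts edgewise, the substituted edge $\dttsym{[} \dttnt{n_4} \dttsym{/} \dttnt{n_5} \dttsym{]}\dttnt{n_1} \preccurlyeq_{\dttnt{p}} \dttsym{[} \dttnt{n_4} \dttsym{/} \dttnt{n_5} \dttsym{]}\dttnt{n_3}$ occurs at the corresponding position in the substituted graph, and \ifrName{rel\_ax} re-applies. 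For \ifrName{rel\_trans} I would apply the induction hypothesis to each of the two subderivations $G \vdash \dttnt{n_1} \preccurlyeq^*_{\dttnt{p}} \dttnt{m}$ and $G \vdash \dttnt{m} \preccurlyeq^*_{\dttnt{p}} \dttnt{n_3}$ and then re-apply \ifrName{rel\_trans} through the shared intermediate node $\dttsym{[} \dttnt{n_4} \dttsym{/} \dttnt{n_5} \dttsym{]}\dttnt{m}$. For \ifrName{rel\_flip} I would apply the induction hypothesis to the premise $G \vdash \dttnt{n_3} \preccurlyeq^*_{\bar{\dttnt{p}}} \dttnt{n_1}$ and re-apply \ifrName{rel\_flip}, relying on the observation that substitution leaves the polarity $\bar{\dttnt{p}}$ unchanged so that the polarity bookkeeping transfers verbatim.

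The proof is a routine structural induction with no genuinely hard step; in particular there is no freshness or variable-capture concern, because node substitution here is a first-order textual replacement on node labels and the reachability judgment binds no nodes. The only points requiring explicit mention are the two just noted: that edge membership is preserved by the edgewise action of substitution (needed in the \ifrName{rel\_ax} case) and that polarities are never rewritten (needed in the \ifrName{rel\_flip} case, and implicitly throughout). Both are immediate from the definition of the substitution operation, so I expect the entire argument to be short once the reduction to the single-graph form is in place.
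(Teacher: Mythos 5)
Your proof is correct and follows essentially the same route as the paper's: a structural induction on the reachability derivation with the four cases \textsc{rel\_ax}, \textsc{rel\_refl}, \textsc{rel\_trans}, and \textsc{rel\_flip}, where the first two are immediate from the edgewise action of node substitution and the last two re-apply the induction hypothesis. The preliminary remarks about substitution distributing over graph concatenation and preserving polarities are observations the paper leaves implicit, but they do not change the argument.
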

\begin{proof}
  This proof holds by induction on the assumed reachability
  derivation.  For the full proof see
  Appendix~\ref{subsec:proof_of_lemma_node_renaming}.
\end{proof}

\begin{lem}[Node Substitution for Reachability]
  \label{lemma:node_substitution_for_reachability}
  If $G  \dttsym{,}  \dttnt{n_{{\mathrm{1}}}} \,  \preccurlyeq_{ \dttnt{p_{{\mathrm{1}}}} }  \, \dttnt{n_{{\mathrm{2}}}}  \dttsym{,}  G'  \vdash  \dttnt{n_{{\mathrm{4}}}} \,  \preccurlyeq^*_{ \dttnt{p} }  \, \dttnt{n_{{\mathrm{5}}}}$ and $G  \dttsym{,}  G'  \vdash  \dttnt{n_{{\mathrm{1}}}} \,  \preccurlyeq^*_{ \dttnt{p_{{\mathrm{1}}}} }  \, \dttnt{n_{{\mathrm{3}}}}$, then 
  $\dttsym{[}  \dttnt{n_{{\mathrm{3}}}}  \dttsym{/}  \dttnt{n_{{\mathrm{2}}}}  \dttsym{]}  G  \dttsym{,}  \dttsym{[}  \dttnt{n_{{\mathrm{3}}}}  \dttsym{/}  \dttnt{n_{{\mathrm{2}}}}  \dttsym{]}  G'  \vdash  \dttsym{[}  \dttnt{n_{{\mathrm{3}}}}  \dttsym{/}  \dttnt{n_{{\mathrm{2}}}}  \dttsym{]}  \dttnt{n_{{\mathrm{4}}}} \,  \preccurlyeq^*_{ \dttnt{p} }  \, \dttsym{[}  \dttnt{n_{{\mathrm{3}}}}  \dttsym{/}  \dttnt{n_{{\mathrm{2}}}}  \dttsym{]}  \dttnt{n_{{\mathrm{5}}}}$. 
\end{lem}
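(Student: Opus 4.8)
The plan is to proceed by induction on the derivation of the first hypothesis, $G, n_1 \preccurlyeq_{p_1} n_2, G' \vdash n_4 \preccurlyeq^*_p n_5$, keeping the second hypothesis $G, G' \vdash n_1 \preccurlyeq^*_{p_1} n_3$ fixed throughout (so that in the statement only $n_4$, $n_5$, and the active polarity $p$ vary, while $G$, $G'$, $n_1$, $n_2$, $n_3$, $p_1$ stay constant). The substitution $[n_3/n_2]$ acts uniformly on every node occurring in a graph or a reachability judgment, so most rules are preserved by it essentially mechanically. The one genuinely interesting case is when reachability is witnessed by the distinguished edge $n_1 \preccurlyeq_{p_1} n_2$ itself, and there the second hypothesis together with Node Renaming (Lemma~\ref{lemma:renaming_nodes_in_graph}) does the work.

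First I would treat the structural cases. For \textsc{rel\_refl} we have $n_4 = n_5$; since $[n_3/n_2]n_4 = [n_3/n_2]n_5$, a single application of \textsc{rel\_refl} reproves the goal. For \textsc{rel\_trans} the two premises $G, n_1 \preccurlyeq_{p_1} n_2, G' \vdash n_4 \preccurlyeq^*_p m$ and $G, n_1 \preccurlyeq_{p_1} n_2, G' \vdash m \preccurlyeq^*_p n_5$ are smaller derivations over the same graph with the same distinguished edge, so the induction hypothesis applies to each (reusing the unchanged second hypothesis), yielding the two substituted reachabilities through $[n_3/n_2]m$; \textsc{rel\_trans} then recombines them. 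For \textsc{rel\_flip} the premise $G, n_1 \preccurlyeq_{p_1} n_2, G' \vdash n_5 \preccurlyeq^*_{\bar p} n_4$ flips only the active polarity, leaving both the distinguished edge's polarity $p_1$ and the second hypothesis untouched, so the induction hypothesis applies with $(n_4,n_5,p) := (n_5,n_4,\bar p)$ and \textsc{rel\_flip} restores the original polarity after substitution.

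The remaining case is \textsc{rel\_ax}, where the conclusion $n_4 \preccurlyeq^*_p n_5$ is justified by an edge $n_4 \preccurlyeq_p n_5$ lying in $G, n_1 \preccurlyeq_{p_1} n_2, G'$. If this edge lies in $G$ or $G'$, then applying $[n_3/n_2]$ to it produces the edge $[n_3/n_2]n_4 \preccurlyeq_p [n_3/n_2]n_5$ in $[n_3/n_2]G, [n_3/n_2]G'$, and one use of \textsc{rel\_ax} closes the goal. The subtle subcase, which I expect to be the main obstacle, is when the edge used is the distinguished one, i.e. $n_4 = n_1$, $n_5 = n_2$, and $p = p_1$: this is precisely the edge removed by the substitution, so \textsc{rel\_ax} is no longer available and I must instead appeal to the second hypothesis $G, G' \vdash n_1 \preccurlyeq^*_{p_1} n_3$. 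Applying Node Renaming (Lemma~\ref{lemma:renaming_nodes_in_graph}) with $[n_3/n_2]$ to this judgment yields $[n_3/n_2]G, [n_3/n_2]G' \vdash [n_3/n_2]n_1 \preccurlyeq^*_{p_1} [n_3/n_2]n_3$, and since $[n_3/n_2]n_2 = n_3$ this is exactly the required goal $[n_3/n_2]G, [n_3/n_2]G' \vdash [n_3/n_2]n_1 \preccurlyeq^*_{p_1} [n_3/n_2]n_2$.

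Care is needed only for the two degenerate possibilities. If $n_1 = n_2$, then $[n_3/n_2]n_1 = n_3$ and the goal collapses to $n_3 \preccurlyeq^*_{p_1} n_3$, discharged by \textsc{rel\_refl}. If $n_3 = n_2$, then $[n_3/n_2]$ is the identity and the goal is the second hypothesis verbatim. This \textsc{rel\_ax} subcase is where all the content lies, as it is the point at which the assumption that $n_1$ already reaches $n_3$ in $G, G'$ is used to \emph{route around} the edge being eliminated; every other case is a routine propagation of the substitution through a rule.
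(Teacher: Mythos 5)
Your proposal is correct and follows essentially the same route as the paper's proof: induction on the first reachability derivation, mechanical propagation of the substitution through \textsc{rel\_refl}, \textsc{rel\_trans}, and \textsc{rel\_flip}, and in the \textsc{rel\_ax} case splitting on whether the edge used is the distinguished one, routing around it via the second hypothesis and Node Renaming (Lemma~\ref{lemma:renaming_nodes_in_graph}), with the same treatment of the degenerate cases $n_1 = n_2$ and $n_3 = n_2$.
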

\begin{proof}
  This proof holds by by induction on the form of the assumed
  reachability derivation. For the full proof see
  Appendix~\ref{subsec:proof_of_lemma_node_substitution_for_reachability}.
\end{proof}

\begin{lem}[Node Substitution for Typing]
  \label{lemma:node_substitution_for_typing}
  If $G  \dttsym{,}  \dttnt{n_{{\mathrm{1}}}} \,  \preccurlyeq_{ \dttnt{p_{{\mathrm{1}}}} }  \, \dttnt{n_{{\mathrm{2}}}}  \dttsym{,}  G'  \dttsym{;}  \Gamma  \vdash  \dttnt{t}  \dttsym{:}  \dttnt{p_{{\mathrm{2}}}} \, \dttnt{A}  \mathbin{@}  \dttnt{n_{{\mathrm{3}}}}$ and $G  \dttsym{,}  G'  \vdash  \dttnt{n_{{\mathrm{1}}}} \,  \preccurlyeq^*_{ \dttnt{p_{{\mathrm{1}}}} }  \, \dttnt{n_{{\mathrm{4}}}}$, then
  $\dttsym{[}  \dttnt{n_{{\mathrm{4}}}}  \dttsym{/}  \dttnt{n_{{\mathrm{2}}}}  \dttsym{]}  G  \dttsym{,}  \dttsym{[}  \dttnt{n_{{\mathrm{4}}}}  \dttsym{/}  \dttnt{n_{{\mathrm{2}}}}  \dttsym{]}  G'  \dttsym{;}  \dttsym{[}  \dttnt{n_{{\mathrm{4}}}}  \dttsym{/}  \dttnt{n_{{\mathrm{2}}}}  \dttsym{]}  \Gamma  \vdash  \dttnt{t}  \dttsym{:}  \dttnt{p_{{\mathrm{2}}}} \, \dttnt{A}  \mathbin{@}  \dttsym{[}  \dttnt{n_{{\mathrm{4}}}}  \dttsym{/}  \dttnt{n_{{\mathrm{2}}}}  \dttsym{]}  \dttnt{n_{{\mathrm{3}}}}$. 
\end{lem}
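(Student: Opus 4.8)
The plan is to argue by induction on the assumed typing derivation $G, n_1 \preccurlyeq_{p_1} n_2, G' ; \Gamma \vdash t : p_2\,A @ n_3$, keeping $G'$ (and hence $\Gamma$, $n_3$, $t$, $A$, $p_2$) universally quantified so that the induction hypothesis applies to subderivations living over \emph{extended} graphs. Since node substitution $[n_4/n_2]$ only renames nodes and is otherwise structural — it commutes with pairs, injections, $\lambda$-abstraction, angle-pairs, and cuts, and leaves the type $A$ and the term skeleton of $t$ untouched up to the renaming — every case reduces to applying the induction hypothesis and/or the reachability machinery to the premises and reassembling with the same rule. The statement is, fortunately, already stated in the form needed for induction, since the trailing suffix $G'$ can absorb the edge that the \textsc{Imp} rule adds to the graph.

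The \textsc{Unit} case is immediate. For \textsc{And}, \textsc{AndBar}, and \textsc{Cut} the edge $n_1 \preccurlyeq_{p_1} n_2$ sits in the same position of the graph in each premise and the reachability hypothesis $G, G' \vdash n_1 \preccurlyeq^*_{p_1} n_4$ is unchanged, so I apply the induction hypothesis to each subderivation and close with the corresponding rule. The \textsc{Ax} case is where reachability first matters: its premise is $G, n_1 \preccurlyeq_{p_1} n_2, G' \vdash n \preccurlyeq^*_{p_2} n_3$ with $x : p_2\,A @ n$ in $\Gamma$; I push the substitution through it using Node Substitution for Reachability (Lemma~\ref{lemma:node_substitution_for_reachability}), note that the renamed hypothesis $x : p_2\,A @ [n_4/n_2]n$ lies in $[n_4/n_2]\Gamma$, and conclude by \textsc{Ax}. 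The \textsc{ImpBar} case combines both phenomena: its reachability premise is handled by Lemma~\ref{lemma:node_substitution_for_reachability} and its two typing premises by the induction hypothesis.

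The genuinely delicate case is \textsc{Imp}. Here a fresh node $n'$ with $n' \notin |G, n_1 \preccurlyeq_{p_1} n_2, G'|, |\Gamma|$ is introduced, and the premise is derived over $G, n_1 \preccurlyeq_{p_1} n_2, G', n_3 \preccurlyeq_{p_2} n'$. Before touching it I would, using Inversion part~(iii) (Lemma~\ref{lemma:inverstion}) — which supplies the premise for \emph{any} admissible fresh node — re-choose $n'$ so that it is additionally distinct from both $n_2$ and $n_4$; this guarantees $[n_4/n_2]n' = n'$ and prevents the bound node from colliding with the substitution's target. I then set $G'_{\ast} := G', n_3 \preccurlyeq_{p_2} n'$, lift the hypothesis $G, G' \vdash n_1 \preccurlyeq^*_{p_1} n_4$ to $G, G'_{\ast} \vdash n_1 \preccurlyeq^*_{p_1} n_4$ by Graph Weakening (Lemma~\ref{lemma:graph_weakening}), and apply the induction hypothesis to the premise. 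Because $n' \neq n_2, n_4$, the result is $[n_4/n_2]G, [n_4/n_2]G', [n_4/n_2]n_3 \preccurlyeq_{p_2} n' ; [n_4/n_2]\Gamma, x : p_2\,A @ n' \vdash t : p_2\,B @ n'$, which is exactly the premise of an \textsc{Imp} step whose conclusion node is $[n_4/n_2]n_3$. Finally I re-establish the freshness side condition for the conclusion: since substitution only replaces $n_2$ by $n_4$ and $n'$ differs from both, $n'$ remains absent from $|[n_4/n_2]G, [n_4/n_2]G'|$ and $|[n_4/n_2]\Gamma|$, so \textsc{Imp} applies and yields the desired sequent.

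The main obstacle is therefore confined to this \textsc{Imp} case, and specifically to the freshness bookkeeping around the bound node $n'$. The subtlety is that the node $n_4$ brought in by the substitution may already occur in the graph, so the fresh successor $n'$ chosen in the original derivation could coincide with $n_4$; this is what forces the preliminary re-choice of $n'$ (via Inversion~(iii), ultimately justified by Node Renaming, Lemma~\ref{lemma:renaming_nodes_in_graph}) together with the use of Graph Weakening to move the reachability hypothesis into the enlarged graph $G'_{\ast}$. Every other case is routine, because node substitution is purely structural and the edge $n_1 \preccurlyeq_{p_1} n_2$ never migrates out of the fixed graph prefix during the induction.
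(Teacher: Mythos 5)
Your proof is correct and follows the same overall plan as the paper: induction on the typing derivation, Node Substitution for Reachability (Lemma~\ref{lemma:node_substitution_for_reachability}) for the \textsc{Ax} and \textsc{ImpBar} premises, the induction hypothesis plus rule reapplication everywhere else, with all the real work in \textsc{Imp}. The one place you diverge is how you defuse the collision between the fresh node $n'$ and $n_4$. You re-choose $n'$ via Inversion~(iii) to be distinct from $n_2$ and $n_4$; the paper instead shows the \emph{original} $n'$ already cannot equal $n_4$: if it did, then since $n'$ is fresh for $G, G'$, the hypothesis $G  \dttsym{,}  G'  \vdash  \dttnt{n_{{\mathrm{1}}}} \,  \preccurlyeq^*_{ \dttnt{p_{{\mathrm{1}}}} }  \, \dttnt{n_{{\mathrm{4}}}}$ could only be an instance of reflexivity, forcing $n_1 = n_4 = n'$, contradicting $n' \notin \dttsym{\mbox{$\mid$}}  G  \dttsym{,}  \dttnt{n_{{\mathrm{1}}}} \,  \preccurlyeq_{ \dttnt{p_{{\mathrm{1}}}} }  \, \dttnt{n_{{\mathrm{2}}}}  \dttsym{,}  G'  \dttsym{\mbox{$\mid$}}$. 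The paper's route keeps the induction strictly structural; your route works too, but note that the premise Inversion~(iii) hands you for a fresh $n'$ of your choosing is obtained by renaming and is not literally a subderivation of the one you are inducting on, so you would need to phrase the induction on derivation height (or observe that node renaming preserves derivation size) for the IH to apply. Your explicit appeal to Graph Weakening to lift the reachability hypothesis into $G'_{\ast}$ is a detail the paper leaves implicit but which is indeed needed to instantiate the IH with the enlarged suffix.
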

\begin{proof}
  This holds by induction on the form of the assumed typing
  derivation.  See
  Appendix~\ref{subsec:proof_of_lemma_node_substitution_for_typing}
  for the full proof.
\end{proof}
\noindent
The next lemma is crucial for type preservation.
\begin{lem}[Substitution for Typing]
  \label{lemma:substitution_for_typing}
  If $G  \dttsym{;}  \Gamma  \vdash  \dttnt{t_{{\mathrm{1}}}}  \dttsym{:}  \dttnt{p_{{\mathrm{1}}}} \, \dttnt{A}  \mathbin{@}  \dttnt{n_{{\mathrm{1}}}}$ and $G  \dttsym{;}  \Gamma  \dttsym{,}  \dttmv{x}  \dttsym{:}  \dttnt{p_{{\mathrm{1}}}} \, \dttnt{A}  \mathbin{@}  \dttnt{n_{{\mathrm{1}}}}  \dttsym{,}  \Gamma'  \vdash  \dttnt{t_{{\mathrm{2}}}}  \dttsym{:}  \dttnt{p_{{\mathrm{2}}}} \, \dttnt{B}  \mathbin{@}  \dttnt{n_{{\mathrm{2}}}}$, then
  $G  \dttsym{;}  \Gamma  \dttsym{,}  \Gamma'  \vdash  \dttsym{[}  \dttnt{t_{{\mathrm{1}}}}  \dttsym{/}  \dttmv{x}  \dttsym{]}  \dttnt{t_{{\mathrm{2}}}}  \dttsym{:}  \dttnt{p_{{\mathrm{2}}}} \, \dttnt{B}  \mathbin{@}  \dttnt{n_{{\mathrm{2}}}}$.
\end{lem}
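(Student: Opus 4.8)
The plan is to proceed by induction on the derivation of the second hypothesis, $G; \Gamma, x{:}p_1 A@n_1, \Gamma' \vdash t_2 : p_2 B@n_2$, keeping the first derivation $G; \Gamma \vdash t_1 : p_1 A@n_1$ fixed throughout. Since capture-avoiding substitution $[t_1/x](-)$ is defined by recursion on the term so that it commutes with every term constructor, each rule case reduces to pushing the substitution under the head constructor, invoking the induction hypothesis on the subderivations, and reapplying the same rule. Before starting I would establish the term-level analogues of the structural results already proved for DIL, namely weakening (analogue of Lemma~\ref{lemma:weakening}), exchange (analogue of Lemma~\ref{lemma:exchange}), monotonicity (analogue of Corollary~\ref{coro:mono}), and graph weakening (analogue of Lemma~\ref{lemma:graph_weakening}); each follows by the same straightforward induction on the typing derivation, carrying the term along unchanged.

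The non-binding cases are routine. For \dttdrulename{Unit} the term is $\dttkw{triv}$, which is unaffected by substitution, so \dttdrulename{Unit} reapplies directly. For \dttdrulename{And} and \dttdrulename{AndBar} the substitution distributes over the pair or the injection, so I apply the induction hypothesis to each premise and reassemble with the same rule. The \dttdrulename{ImpBar} case is similar: its reachability premise $G \vdash n_2 \preccurlyeq^*_{\bar{p_2}} n'$ mentions neither $x$ nor $t_1$ and is carried over verbatim, while the two typing premises are discharged by the induction hypothesis.

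The heart of the argument is the \dttdrulename{Ax} case together with the two binder cases. In \dttdrulename{Ax} the conclusion is a variable $v : p_2 B@n_2$ obtained from a hypothesis $v : p_2 B@m$ with $G \vdash m \preccurlyeq^*_{p_2} n_2$. If $v \neq x$ the hypothesis survives in $\Gamma, \Gamma'$, the substitution leaves $v$ unchanged, and \dttdrulename{Ax} reapplies. If $v = x$ then $p_2 = p_1$, $B = A$, $m = n_1$, and $[t_1/x]x = t_1$; here I take the given derivation $G; \Gamma \vdash t_1 : p_1 A@n_1$, move it from $n_1$ to the reachable node $n_2$ using term-level monotonicity (analogue of Corollary~\ref{coro:mono}(ii)), and weaken by $\Gamma'$, yielding $G; \Gamma, \Gamma' \vdash t_1 : p_1 A@n_2$ as required. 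For \dttdrulename{Cut}, $t_2 = \nu y.\, t_{21} \mathbin{\Cdot[2]} t_{22}$ and both premises add a fresh hypothesis $y : \bar{p_2} B_0@n_2$ to the context; after $\alpha$-renaming $y$ to avoid $x$ and the free variables of $t_1$, I apply exchange so that $x$ occupies the same position as in the statement, invoke the induction hypothesis on each premise, and reapply \dttdrulename{Cut}.

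The most delicate case, and the step I expect to be the main obstacle, is \dttdrulename{Imp}, where $t_2 = \lambda y.\, t$ and the premise is typed under the extended graph $G, n_2 \preccurlyeq_{p_2} n'$ subject to the eigen-node side condition $n' \notin |G|,\, |\Gamma, x{:}p_1 A@n_1, \Gamma'|$. Two things must be arranged carefully. First, to apply the induction hypothesis I need a typing of $t_1$ under the same extended graph, which I obtain from $G; \Gamma \vdash t_1 : p_1 A@n_1$ by term-level graph weakening. Second, I must $\alpha$-rename the eigen-node $n'$ (via node substitution, Lemma~\ref{lemma:node_substitution_for_typing}) so that it stays fresh for $t_1$ and for the reduced context $\Gamma, \Gamma'$, thereby re-establishing the side condition of \dttdrulename{Imp} after the substitution; only then does reapplying \dttdrulename{Imp} produce $\lambda y.\, [t_1/x]t$ with the correct type. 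Managing this freshness bookkeeping for the bound node, simultaneously with the ordinary capture-avoidance for the bound term variable, is the one point that requires genuine care; the remaining cases are mechanical.
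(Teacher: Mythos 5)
Your proof is correct and follows the same route as the paper's: induction on the second typing derivation, pushing the substitution under each term constructor and reapplying the rule. If anything you are more careful than the paper's own very terse proof, which dismisses the \dttdrulename{Ax} case as ``Trivial'' and the binder cases as ``similar,'' whereas you correctly flag the term-level weakening and monotonicity needed when the looked-up variable is $x$ itself (to move $t_1$ from $n_1$ to the reachable node $n_2$ and absorb $\Gamma'$) and the graph weakening of the $t_1$ derivation (with its attendant eigen-node freshness bookkeeping) required before the induction hypothesis can be applied in the \dttdrulename{Imp} case.
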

\begin{proof}
This proof holds by induction on the second assumed typing relation.
For the full proof see
Appendix~\ref{subsec:proof_of_lemma_substitution_for_typing}.
\end{proof}
\begin{lem}[Type Preservation]
  \label{lemma:type_preservation}
  If $G  \dttsym{;}  \Gamma  \vdash  \dttnt{t}  \dttsym{:}  \dttnt{p} \, \dttnt{A}  \mathbin{@}  \dttnt{n}$, and $\dttnt{t} \redto \dttnt{t'}$, then $G  \dttsym{;}  \Gamma  \vdash  \dttnt{t'}  \dttsym{:}  \dttnt{p} \, \dttnt{A}  \mathbin{@}  \dttnt{n}$.
\end{lem}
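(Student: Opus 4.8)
The plan is to argue by induction on the typing derivation $G;\Gamma \vdash t : p\,A@n$, with a case analysis on the last rule applied and, inside each case, on the shape of the step $t \redto t'$. The observation that organizes the whole argument is that the left-hand side of every reduction rule in Figure~\ref{fig:dtt-red} is a cut $\nu\,z.\,s_1 \mathbin{\Cdot[2]} s_2$. Hence a \emph{top-level} redex can only arise in the \textsc{Cut} case: in \textsc{Ax} and \textsc{Unit} the term is a variable or $\dttkw{triv}$ and admits no reduction (these cases are vacuous), while in \textsc{And}, \textsc{AndBar}, \textsc{Imp} and \textsc{ImpBar} the subject is a canonical form that is not itself a redex, so the step must reduce a proper subterm. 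For these congruence subcases I would apply the induction hypothesis to the premise typing the reduced subterm and re-apply the same introduction rule; the freshness and reachability side conditions are untouched because the type, node, graph and context are all preserved by the hypothesis.

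It remains to treat \textsc{Cut}, where $t = \nu\,z.\,s_1 \mathbin{\Cdot[2]} s_2$ and the two premises give, for some $B$ and $n_0$, the typings $G;\Gamma, z:\bar p\,A@n \vdash s_1 : +\,B@n_0$ and $G;\Gamma, z:\bar p\,A@n \vdash s_2 : -\,B@n_0$. If the step reduces inside $s_1$ or $s_2$, I again use the induction hypothesis and re-apply \textsc{Cut}. Otherwise $t$ is a top-level redex and I split on which of the nine rules fires. The conjunction/disjunction reductions (\textsc{RAnd1}, \textsc{RAnd2}, \textsc{RAndBar1}, \textsc{RAndBar2}) are the easiest: applying the Inversion lemma (Lemma~\ref{lemma:inverstion}.i, .ii) to the two premises decomposes $B$ as $A_1 \ndwedge{p'} A_2$ and exposes the component typings at the \emph{same} node $n_0$, which I reassemble directly with \textsc{Cut}, with no substitution needed. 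The administrative reductions \textsc{RBetaL} and \textsc{RBetaR} follow from a further inversion of the inner cut (which, since \textsc{Cut} is the only rule whose subject is a $\nu$-term, is immediate) followed by two applications of Substitution for Typing (Lemma~\ref{lemma:substitution_for_typing}) to push the remaining cut half through the two components of the inner cut.

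The main obstacle is the implication pair \textsc{RImp} and \textsc{RImpBar}, where the two sides of the cut are typed at \emph{different} nodes and a change of node is forced. Consider \textsc{RImp}, with $t = \nu\,z.\,\lambda x.t_0 \mathbin{\Cdot[2]} \langle t_1, t_2 \rangle$ reducing to $\nu\,z.\,[t_1/x]t_0 \mathbin{\Cdot[2]} t_2$. Inversion on the $\lambda$-half (Lemma~\ref{lemma:inverstion}.iii) yields $B = A_1 \ndto{+} B_1$ and a typing of $t_0$ at a \emph{fresh} node $n_4$ in the enlarged graph $G, n_0 \preccurlyeq_{+} n_4$, with $x:+\,A_1@n_4$ in context. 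Inversion on the $\langle-,-\rangle$-half (Lemma~\ref{lemma:inverstion}.iv) instead produces an \emph{existential} reachable node $n_5$ with $G \vdash n_0 \preccurlyeq^{*}_{+} n_5$, together with $t_1 : +\,A_1@n_5$ and $t_2 : -\,B_1@n_5$. The crux is to transport the typing of $t_0$ from the abstract node $n_4$ to the concrete node $n_5$: I would apply Node Substitution for Typing (Lemma~\ref{lemma:node_substitution_for_typing}), using exactly the reachability $G \vdash n_0 \preccurlyeq^{*}_{+} n_5$ to discharge the added edge $n_0 \preccurlyeq_{+} n_4$ and specialize $n_4 \mapsto n_5$. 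This is where the side condition that $n_4$ be fresh is essential, so that the substitution leaves $G$, $\Gamma$ and the term $t_0$ untouched, producing $t_0 : +\,B_1@n_5$ with $x:+\,A_1@n_5$ in context. Substitution for Typing (Lemma~\ref{lemma:substitution_for_typing}) with $t_1 : +\,A_1@n_5$ then gives $[t_1/x]t_0 : +\,B_1@n_5$, and a final \textsc{Cut} with the already-available $t_2 : -\,B_1@n_5$ reconstructs the type of the reduct. The case \textsc{RImpBar} is the exact dual, with the two components and the polarities $+$ and $-$ interchanged.

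Finally, \textsc{RRet} ($\nu\,x.t_0 \mathbin{\Cdot[2]} x \redto t_0$ with $x \notin \mathsf{FV}(t_0)$) is handled by inverting the right premise, whose subject is the variable $x$ typed by \textsc{Ax}: this forces $p = +$, identifies $B$ with $A$, and supplies a reachability $G \vdash n \preccurlyeq^{*}_{-} n_0$. Since $x$ is not free in $t_0$, the hypothesis $x:-\,A@n$ may be dropped (strengthening, an easy induction), leaving $t_0 : +\,A@n_0$; flipping the reachability to $G \vdash n_0 \preccurlyeq^{*}_{+} n$ and applying the term-level form of monotonicity (Corollary~\ref{coro:mono}.ii, which lifts verbatim to typed terms since the subject is unchanged) yields $t_0 : +\,A@n$, as required. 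Collecting the cases completes the induction. The anticipated difficulty, and the place where the graph-labelled structure of DTT really bites, is the node reconciliation in \textsc{RImp}/\textsc{RImpBar}; the remaining cases are routine applications of inversion and substitution.
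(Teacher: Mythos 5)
Your proposal is correct and matches the paper's proof: the same induction on the typing derivation, with the only substantive work arising when a top-level redex sits under \textsc{Cut}, and your treatment of \textsc{RImp}/\textsc{RImpBar} (inversion on both halves, then Node Substitution for Typing along the reachability witness to collapse the fresh node onto the concrete one, then Substitution for Typing, then re-applying \textsc{Cut}) is exactly the argument in Appendix~\ref{subsec:proof_of_lemma:type_preservation}. The paper omits the remaining redex cases as routine, and your sketches of them are sound, though note that your \textsc{RRet} case relies on strengthening of an unused hypothesis and a term-level analogue of Corollary~\ref{coro:mono}, neither of which the paper establishes for the typed system, so they would need separate (easy) proofs.
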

  \begin{proof}
    This proof holds by induction on the form of the assumed typing
    derivation.  For the full proof see
    Appendix~\ref{subsec:proof_of_lemma:type_preservation}.
  \end{proof}
\begin{figure}[t]
    \begin{mathpar}
      \dttdruleClassAx{}     \and
      \dttdruleClassUnit{}   \and
      \dttdruleClassAnd{}    \and
      \dttdruleClassAndBar{} \and
      \dttdruleClassImp{}    \and 
      \dttdruleClassImpBar{} \and
      \dttdruleClassCut{}    
    \end{mathpar}
\caption{Classical typing of DTT terms}
\label{fig:classtp}
\end{figure}
A more substantial property is strong normalization of reduction for
typed terms.  To prove this result, we will prove a stronger property,
namely strong normalization for reduction of terms that are typable
using the system of classical typing rules in Figure~\ref{fig:classtp}
\cite{crolard01}.  This is justified by the following easy result
(proof omitted), where $ \mathsf{DN}( \Gamma ) $ just drops the world annotations
from assumptions in $\Gamma$:
\begin{thm}
\label{thm:inttoclass}
If $G  \dttsym{;}  \Gamma  \vdash  \dttnt{t}  \dttsym{:}  \dttnt{p} \, \dttnt{A}  \mathbin{@}  \dttnt{n}$, then $ \mathsf{DN}( \Gamma )   \vdash_c  \dttnt{t}  \dttsym{:}  \dttnt{p} \, \dttnt{A}$
\end{thm}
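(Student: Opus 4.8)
The plan is to argue by induction on the derivation of $G ; \Gamma \vdash t : p\, A @ n$, showing that each typing rule of DTT (Figure~\ref{fig:dtt-ifr}) is mirrored by the corresponding classical rule of Figure~\ref{fig:classtp} once the graph $G$, the node $n$, and every reachability side condition are discarded. The forgetful map $\mathsf{DN}$ distributes over context concatenation and sends each assumption $x : p'\, B @ n'$ to $x : p'\, B$, so that $\mathsf{DN}(\Gamma, \Gamma') = \mathsf{DN}(\Gamma), \mathsf{DN}(\Gamma')$; I would use this silently throughout. Since the term $t$ is carried unchanged from $\vdash$ to $\vdash_c$, the translation is purely at the level of judgements.

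First I would dispatch the rules with little content. For \textsc{Unit}, the classical rule \textsc{ClassUnit} applies directly. For the introduction rules \textsc{And} and \textsc{AndBar}, and for \textsc{Cut}, I would apply the induction hypothesis to each premise and then close with \textsc{ClassAnd}, \textsc{ClassAndBar}, and \textsc{ClassCut} respectively; in each case the node $n$ of the conclusion is simply forgotten, and after applying $\mathsf{DN}$ the premise contexts already have the required shape (for \textsc{Cut} the hypothesis $x : \bar p\, A @ n$ becomes $x : \bar p\, A$, matching \textsc{ClassCut} exactly).

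The implication cases carry whatever genuine work there is. For \textsc{Imp}, the premise is typed in the enlarged graph $(G, n \preccurlyeq_p n')$ and the enlarged context $\Gamma, x : p\, A @ n'$; the induction hypothesis yields $\mathsf{DN}(\Gamma), x : p\, A \vdash_c t : p\, B$, and \textsc{ClassImp} then gives the result, the freshness condition $n' \notin |G|, |\Gamma|$ and the new edge playing no classical role. For \textsc{ImpBar}, the reachability premise $G \vdash n \preccurlyeq^*_{\bar p} n'$ is dropped and the induction hypothesis turns the two typing premises into $\mathsf{DN}(\Gamma) \vdash_c t_1 : \bar p\, A$ and $\mathsf{DN}(\Gamma) \vdash_c t_2 : p\, B$, so \textsc{ClassImpBar} closes the case.

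The only step requiring a moment's care — and hence the closest thing to an obstacle — is \textsc{Ax}. There the DTT conclusion assigns $x$ the node $n'$ while the matching hypothesis in $\Gamma$ carries the node $n$, the two being linked only by the premise $G \vdash n \preccurlyeq^*_p n'$. After applying $\mathsf{DN}$, however, both annotations vanish: the hypothesis becomes $x : p\, A$ and the conclusion becomes $x : p\, A$, so the now-superfluous reachability premise can be discarded and \textsc{ClassAx} applies verbatim. This collapse of two distinct-node occurrences of a variable into a single classical hypothesis is the one place where the passage to $\vdash_c$ does any real work; every other case is a mechanical rule-for-rule translation. With all cases in hand the induction goes through and the theorem follows.
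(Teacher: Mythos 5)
Your proof is correct and is exactly the intended argument: the paper omits the proof of Theorem~\ref{thm:inttoclass} entirely, calling it an ``easy result,'' and the rule-by-rule induction you give, discarding graphs, nodes, and reachability premises and noting that the \textsc{Ax} case collapses the two node annotations, is the standard way to fill that gap.
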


Let $\SN$ be the set of terms that are strongly normalizing with
respect to the reduction relation.  Let \textit{Var} be the set of
term variables, and let us use $x$ and $y$ as metavariables for variables.  We
will prove strong normalization for classically typed terms using a
version of Krivine's classical realizability~\cite{krivine09}.  We
define three interpretations of types in Figure~\ref{fig:classreal}.
The definition is by mutual induction, and can easily be seen to
be well-founded, as the definition of $\interp{A}^+$ invokes the
definition of $\interp{A}^-$ with the same type, which in turn invokes
the definition of $\interp{A}^{+c}$ with the same type; and the
definition of $\interp{A}^{+c}$ may invoke either of the other
definitions at a strictly smaller type.  The reader familiar with
such proofs will also recognize the debt owed to Girard~\cite{gtl90}.

\begin{figure}
\small
\[
\begin{array}{lll}
t \in \interp{A}^+ & \Leftrightarrow & \forall x \in \textit{Var}.\ \forall t'\in\interp{A}^-.\ \nu \, \dttmv{x}  \dttsym{.}  \dttnt{t}  \mathbin{\Cdot[2]}  \dttnt{t'} \in \SN\\
t \in \interp{A}^- & \Leftrightarrow & \forall x \in \textit{Var}.\ \forall t'\in\interp{A}^{+c}.\ \nu \, \dttmv{x}  \dttsym{.}  \dttnt{t'}  \mathbin{\Cdot[2]}  \dttnt{t} \in \SN\\
t \in \interp{ \langle  \dttsym{+} \rangle }^{+c} & \Leftrightarrow & t \in \textit{Var}\ \vee\ t \equiv \dttkw{triv} \\
t \in \interp{ \langle  \dttsym{-} \rangle }^{+c} & \Leftrightarrow & t \in \textit{Var}\\
t \in \interp{ \dttnt{A}  \ndto{ \dttsym{+} }  \dttnt{B} }^{+c} & \Leftrightarrow & t \in \textit{Var}\ \vee\ \exists x, t'. t \equiv \lambda x.\, t'\ \wedge\ \forall t''\in\interp{A}^+.\ [t''/x] t'\in\interp{B}^+\\
t \in \interp{ \dttnt{A}  \ndto{ \dttsym{-} }  \dttnt{B} }^{+c} & \Leftrightarrow & t \in \textit{Var}\ \vee\ \exists t_1\in\interp{A}^-, t_2\in\interp{B}^+.\ t \equiv \langle  \dttnt{t_{{\mathrm{1}}}}  \dttsym{,}  \dttnt{t_{{\mathrm{2}}}}  \rangle \\
t \in \interp{ \dttnt{A}  \ndwedge{ \dttsym{+} }  \dttnt{B} }^{+c} & \Leftrightarrow & t \in \textit{Var}\ \vee\ \exists t_1\in\interp{A}^+, t_2\in\interp{B}^+.\ t \equiv \dttsym{(}  \dttnt{t_{{\mathrm{1}}}}  \dttsym{,}  \dttnt{t_{{\mathrm{2}}}}  \dttsym{)} \\
t \in \interp{ \dttnt{A_{{\mathrm{1}}}}  \ndwedge{ \dttsym{-} }  \dttnt{A_{{\mathrm{2}}}} }^{+c} & \Leftrightarrow & t \in \textit{Var}\ \vee\ \exists d. \exists t'\in\interp{A_d}^+.\ t \equiv  \mathbf{in}_{ \dttnt{d} }\, \dttnt{t'}  
\end{array}
\]
\caption{Interpretations of types}
\label{fig:classreal}
\end{figure}

\begin{lem}[Step interpretations]
\label{lem:stepinterp}
If $t\in\interp{A}^+$ and $t\leadsto t'$, then
$t'\in\interp{A}^+$; and similarly if $t\in\interp{A}^-$ or $t\in\interp{A}^{+c}$.
\end{lem}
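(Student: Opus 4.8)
The plan is to prove the three closure properties essentially separately, exploiting the fact that the apparent circularity in the mutual definition of $\interp{A}^+$, $\interp{A}^-$, and $\interp{A}^{+c}$ is harmless here. The key external facts I will use are: (1) $\SN$ is closed under $\leadsto$ (an infinite reduction from a reduct would lift to one from the original term); (2) $\leadsto$ is a congruence, i.e.\ a step in a subterm induces a step in the whole term, and conversely every $\leadsto$-step decomposes as a step in some subterm, since by the remark following Figure~\ref{fig:dtt-red} the reduction rules may be applied anywhere; and (3) substitution commutes with reduction, so $s \leadsto s'$ implies $[u/x]s \leadsto [u/x]s'$.

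First I would dispatch the $+$ and $-$ cases, which require no induction on the type. Suppose $t \in \interp{A}^+$ and $t \leadsto t'$; to show $t' \in \interp{A}^+$ I must show $\nu x.\, t' \mathbin{\Cdot[2]} u \in \SN$ for every variable $x$ and every $u \in \interp{A}^-$. By hypothesis $\nu x.\, t \mathbin{\Cdot[2]} u \in \SN$, and by congruence $\nu x.\, t \mathbin{\Cdot[2]} u \leadsto \nu x.\, t' \mathbin{\Cdot[2]} u$, so closure of $\SN$ under $\leadsto$ gives the claim. The $-$ case is symmetric, reducing the right component of the cut $\nu x.\, u \mathbin{\Cdot[2]} t$ instead. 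Crucially, neither argument inspects the interpretation ($\interp{A}^-$ or $\interp{A}^{+c}$) that it quantifies over; it uses only that $\SN$ is closed under reduction. Hence both facts hold for every type $A$ outright.

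Next I would prove the $+c$ case by induction on the type $A$ (equivalently, by case analysis on the head constructor), freely invoking the $+$ and $-$ closure facts just established. The decisive observation is that every top-level redex in Figure~\ref{fig:dtt-red} has a cut $\nu \ldots$ at its head, so none of the canonical forms that populate $\interp{A}^{+c}$ ($\dttkw{triv}$, $\lambda x.\, s$, $(t_1,t_2)$, $\langle t_1,t_2\rangle$, $ \mathbf{in}_{ \dttnt{d} }\, t $) is itself a redex; moreover variables and $\dttkw{triv}$ are normal. Consequently, if $t \in \interp{A}^{+c}$ and $t \leadsto t'$, the step must occur in a proper subterm and preserves the head constructor, so the variable and $\dttkw{triv}$ subcases are vacuous. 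For $ \mathbf{in}_{ \dttnt{d} }\, s $, a pair, or a subtractive pair $\langle t_1,t_2\rangle$, the reduct has the same shape with one component reduced, and that component stays in its interpretation by the already-proven $+$ or $-$ closure. For $\lambda x.\, s$ at type $ \dttnt{A}  \ndto{ \dttsym{+} }  \dttnt{B} $, the reduct is $\lambda x.\, s'$ with $s \leadsto s'$; I must re-establish $[t''/x]s' \in \interp{B}^+$ for every $t'' \in \interp{A}^+$, which follows because $[t''/x]s \leadsto [t''/x]s'$ by substitution/reduction commutation and $[t''/x]s \in \interp{B}^+$ by hypothesis, using $+$ closure at $B$.

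The main obstacle is not computational but structural: making sure the mutual recursion does not create a vicious circle. The resolution is the observation above that $+$ and $-$ closure are provable uniformly, with no appeal to the interpretations over which they range, so the $+c$ argument may use them at will (including at the same type, as in the $\lambda$ subcase). The only genuinely technical ingredient is the substitution/reduction commutation lemma needed to preserve the universally quantified membership condition in the implication subcase; everything else is routine congruence bookkeeping together with closure of $\SN$.
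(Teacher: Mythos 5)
Your proof is correct and follows essentially the same route as the paper's: the $+$ and $-$ cases reduce to closure of $\SN$ under reduction via congruence of $\leadsto$ on cuts, and the $+c$ case proceeds by case analysis on the head constructor, using substitution/reduction commutation for the $\lambda$ subcase. Your observation that the $+$ and $-$ parts need no induction at all (so the $+c$ induction can invoke them freely) is a slightly cleaner way of organizing what the paper phrases as a single mutual well-founded induction, but the mathematical content is the same.
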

\begin{proof}
The proof is by a mutual well-founded induction.
Assume $t\in\interp{A}^+$ and $t\leadsto t'$.  We must show $t'\in\interp{A}^+$.
For this, it suffices to assume $y\in\textit{Var}$ and $t''\in\interp{A}^-$,
and show $\nu \, \dttmv{y}  \dttsym{.}  \dttnt{t'}  \mathbin{\Cdot[2]}  \dttnt{t''}\in\SN$.  From the assumption that $t\in\interp{A}^+$,
we have 
\[
\nu \, \dttmv{y}  \dttsym{.}  \dttnt{t}  \mathbin{\Cdot[2]}  \dttnt{t''} \in\SN
\]
which indeed implies that 
\[
\nu \, \dttmv{y}  \dttsym{.}  \dttnt{t'}  \mathbin{\Cdot[2]}  \dttnt{t''} \in\SN
\]
A similar argument applies if $t\in\interp{A}^-$.  

For the last part of the lemma, assume $t\in\interp{A}^{+c}$ with
$t\leadsto t'$, and show $t'\in\interp{A}^{+c}$.  The only possible
cases are the following, where $t\not\in\textit{Vars}$.

If $A \equiv  \dttnt{A_{{\mathrm{1}}}}  \ndto{ \dttsym{+} }  \dttnt{A_{{\mathrm{2}}}} $, then $t$ is of the form $\lambda x.t_a$
for some $x$ and $t_a$, where for all $t_b\in\interp{A_1}^+$, we have
$[t_b/x]t_a\in\interp{A_2}^+$.  Since $t\leadsto t'$, $t'$ must be
$\lambda x.t_a'$ for some $t_a'$ with $t_a\leadsto t_a'$.  It suffices
now to assume an arbitrary $t_b\in\interp{A_1}^+$, and show
$[t_b/x]t_a'\in\interp{A_2}^+$.  But $[t_b/x]t_a\leadsto [t_b/x]t_a'$
follows from $t_a\leadsto t_a'$, so by our IH, we have
$[t_b/x]t_a'\in\interp{A_2}^+$, as required.

If $A\equiv  \dttnt{A_{{\mathrm{1}}}}  \ndto{ \dttsym{-} }  \dttnt{A_{{\mathrm{2}}}} $, then $t$ is of the form $\langle  \dttnt{t_{{\mathrm{1}}}}  \dttsym{,}  \dttnt{t_{{\mathrm{2}}}}  \rangle$
for some $t_1\in\interp{A_1}^-$ and $t_2\in\interp{A_2}^+$; and
$t'\equiv \langle  \dttnt{t'_{{\mathrm{1}}}}  \dttsym{,}  \dttnt{t'_{{\mathrm{2}}}}  \rangle$ where either $t_1'\equiv t_1$ and $t_2\leadsto t_2'$
or else $t_1\leadsto t_1'$ and $t_2'\equiv t_2$.  Either way, we have
$t_1'\in\interp{A_1}^-$ and $t_2'\in\interp{A_2}^+$ by our IH, so
we have $\langle  \dttnt{t'_{{\mathrm{1}}}}  \dttsym{,}  \dttnt{t'_{{\mathrm{2}}}}  \rangle\in\interp{ \dttnt{A_{{\mathrm{1}}}}  \ndto{ \dttsym{-} }  \dttnt{A_{{\mathrm{2}}}} }^{+c}$ as required.

The other cases for $A\equiv  \dttnt{A_{{\mathrm{1}}}}  \ndwedge{ \dttnt{p} }  \dttnt{A_{{\mathrm{2}}}} $ are similar to the previous one.
\end{proof}
\begin{lem}[SN interpretations]
  \label{lem:sninterp}
  
  \ \\
  \begin{tabular}{llllll}
    1. & $\interp{A}^+ \subseteq \SN$ & \hspace{2cm} & 3. & $\interp{A}^- \subseteq \SN$\\
    2. & $\textit{Vars}\subseteq \interp{A}^-$ & \hspace{2cm} & 4. & $\interp{A}^{+c} \subseteq \SN$
  \end{tabular}
\end{lem}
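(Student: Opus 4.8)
The plan is to prove all four inclusions simultaneously by well-founded induction on the structure of the type $A$, exploiting the stratification noted just after Figure~\ref{fig:classreal}: $\interp{A}^{+}$ is defined from $\interp{A}^{-}$, which is defined from $\interp{A}^{+c}$, which only mentions $\interp{\cdot}^{+}$ and $\interp{\cdot}^{-}$ at strictly smaller types. Within a fixed $A$ I would establish the parts in the order 3, 4, 2, 1, so that each appeals only to parts already proved at $A$ or to the induction hypothesis at smaller types. The single observation that drives everything is the purely definitional inclusion $\interp{A}^{+c}\subseteq\interp{A}^{+}$: if $c\in\interp{A}^{+c}$ then for any $x$ and any $t'\in\interp{A}^{-}$, unfolding the defining condition of $t'\in\interp{A}^{-}$ at the canonical term $c$ gives $\nu x.\, c \mathbin{\Cdot[2]} t'\in\SN$, which is exactly membership of $c$ in $\interp{A}^{+}$. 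In particular every variable lies in $\interp{A}^{+}$, since $\textit{Vars}\subseteq\interp{A}^{+c}$ by definition.

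For part~3, given $t\in\interp{A}^{-}$ I would instantiate its defining condition with a variable $z\in\interp{A}^{+c}$ to get $\nu x.\, z \mathbin{\Cdot[2]} t\in\SN$; since every reduction of $t$ lifts through the congruence rule to a reduction of this cut, $t$ is strongly normalizing. For part~4 I would case split on the canonical form of $c\in\interp{A}^{+c}$: variables and $\dttkw{triv}$ are normal, and for $(t_1,t_2)$, $\mathbf{in}_d\, t'$, and $\langle t_1,t_2\rangle$ the components lie in $\interp{\cdot}^{+}$ or $\interp{\cdot}^{-}$ at strictly smaller types, hence are strongly normalizing by the induction hypothesis (parts~1 and~3), so the constructor—which admits only internal reductions—is as well. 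The delicate case is $c\equiv\lambda x.t_a$ at type $A_1 \to_+ A_2$, where the only handle on $t_a$ is that $[t''/x]t_a\in\interp{A_2}^{+}$ for every $t''\in\interp{A_1}^{+}$. Choosing $t''$ to be a variable $z$—legitimate precisely because $z\in\interp{A_1}^{+c}\subseteq\interp{A_1}^{+}$—yields $[z/x]t_a\in\interp{A_2}^{+}\subseteq\SN$ by the induction hypothesis at the smaller type $A_2$; since substituting a variable for a variable reflects reductions, $t_a$, and therefore $\lambda x.t_a$, is strongly normalizing.

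For part~2, given a variable $y$ I must show $\nu x.\, c \mathbin{\Cdot[2]} y\in\SN$ for every $c\in\interp{A}^{+c}$. The crucial point is that with the variable $y$ on the right no principal reduction can fire: \textsc{RImp}, \textsc{RImpBar}, \textsc{RAnd1/2}, and \textsc{RAndBar1/2} all require a constructor on the right, \textsc{RBetaR} requires a cut on the right, and \textsc{RBetaL} requires a cut on the left (excluded since $c$ is canonical); the only possible top-level step is \textsc{RRet}, which yields $c$. Thus every reduct is either an internal reduct of $c$ or $c$ itself, and since $c\in\interp{A}^{+c}\subseteq\SN$ by part~4 at the same type, an induction on the strong normalization of $c$ gives $\nu x.\, c \mathbin{\Cdot[2]} y\in\SN$. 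Finally, part~1 mirrors part~3: given $t\in\interp{A}^{+}$, instantiate with a variable $z\in\interp{A}^{-}$ (available from part~2 at the same type) to obtain $\nu x.\, t \mathbin{\Cdot[2]} z\in\SN$, and lift reductions of $t$ through the cut to conclude $t\in\SN$.

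The main obstacle is the $\lambda$-case of part~4, where nothing is known about $t_a$ except its behaviour under substitution; it is resolved entirely by the inclusion $\interp{A_1}^{+c}\subseteq\interp{A_1}^{+}$, which makes a variable an admissible test input and hands the work to the smaller-type induction hypothesis. A secondary point demanding care is the reduction analysis in part~2, namely checking that a variable on the \emph{right} of a cut blocks every rule but \textsc{RRet}. This asymmetry is what keeps parts~1 and~3 from having to establish, by a direct normalization argument, that variables inhabit $\interp{A}^{+}$—a fact that does hold, but only via the definitional inclusion above and not through reduction-theoretic reasoning.
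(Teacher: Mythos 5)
Your proposal is correct and follows essentially the same route as the paper's proof: a mutual well-founded induction that is structural on $A$ and, within a fixed $A$, discharges the parts in the order $3,4,2,1$, with the same two key ingredients (the definitional inclusion $\interp{A}^{+c}\subseteq\interp{A}^{+}$ to make variables admissible test inputs in the $\lambda$-case of part~4, and the observation that a variable in the non-canonical position of a cut blocks all principal reductions in part~2). The only differences are cosmetic — the paper instantiates the $\lambda$-case with $x$ itself rather than a fresh variable, and invokes Lemma~\ref{lem:stepinterp} where you argue directly that reducts of canonical terms are not cuts.
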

\begin{proof}
  The proof holds by mutual well-founded induction on the pair
  $(A,n)$, where $n$ is the number of the proposition in the statement
  of the lemma; the well-founded ordering in question is the
  lexicographic combination of the structural ordering on types (for
  $A$) and the ordering $1 > 2 > 4 > 3$ (for $n$).  For the full proof
  see Appendix~\ref{subsec:proof_of_lemma_sn_interpretations}.
\end{proof}  

\begin{defi}[Interpretation of contexts]
$\interp{\Gamma}$ is the set of substitutions $\sigma$ such that
for all $\dttmv{x}  \dttsym{:}  \dttnt{p} \, \dttnt{A}\in\Gamma$, $\sigma(x)\in\interp{A}^p$.
\end{defi}

\begin{lem}[Canonical positive is positive]
\label{lem:canonpos}
$\interp{A}^{+c}\subseteq\interp{A}^+$
\end{lem}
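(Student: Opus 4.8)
The plan is to prove the inclusion directly from the definitions in Figure~\ref{fig:classreal}, with no induction and no appeal to the reduction relation. First I would fix an arbitrary $t \in \interp{A}^{+c}$ and unfold the goal $t \in \interp{A}^+$: it suffices to take an arbitrary variable $x \in \textit{Var}$ and an arbitrary $t' \in \interp{A}^-$, and to establish that $\nu \, x . \, t \mathbin{\Cdot[2]} t' \in \SN$.

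The key observation is that $t$ occupies the same side of the cut in the clause defining $\interp{A}^+$ as a canonical term does in the clause defining $\interp{A}^-$, namely the left. Concretely, unfolding the hypothesis $t' \in \interp{A}^-$ gives that for every variable $y \in \textit{Var}$ and every $s \in \interp{A}^{+c}$ we have $\nu \, y . \, s \mathbin{\Cdot[2]} t' \in \SN$. Instantiating this universally quantified statement with $y := x$ and $s := t$ --- which is legitimate precisely because $t \in \interp{A}^{+c}$ by assumption --- yields exactly $\nu \, x . \, t \mathbin{\Cdot[2]} t' \in \SN$, the required goal. Since $x$ and $t'$ were arbitrary, this establishes $t \in \interp{A}^+$, and hence the inclusion.

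There is essentially no obstacle here beyond correctly matching the cut positions and the bound variables: the result is pure definition chasing, which is exactly why it is isolated as a standalone lemma to be used freely later. The one point that warrants a sentence of care is the cut variable itself: because both $\interp{A}^+$ and $\interp{A}^-$ quantify universally over that variable, the instantiation $y := x$ is immediate and needs no $\alpha$-renaming. Unlike the surrounding results --- Lemma~\ref{lem:stepinterp}, which requires reasoning about single reduction steps, and Lemma~\ref{lem:sninterp}, which requires a well-founded induction on types --- this lemma invokes neither, and it serves as the base case connecting the canonical interpretation $\interp{A}^{+c}$ to the positive interpretation $\interp{A}^+$ that will be consumed in the adequacy argument for strong normalization.
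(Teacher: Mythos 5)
Your proof is correct and is essentially identical to the paper's: the paper also fixes arbitrary $x\in\textit{Var}$ and $t'\in\interp{A}^-$ and observes that $\nu \, \dttmv{x}  \dttsym{.}  \dttnt{t}  \mathbin{\Cdot[2]}  \dttnt{t'}\in\SN$ follows immediately from the definition of $\interp{A}^-$ applied to $t\in\interp{A}^{+c}$. You merely spell out the instantiation that the paper leaves implicit.
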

\begin{proof}
Assume $t\in\interp{A}^{+c}$ and show $t\in\interp{A}^+$.
For the latter, assume arbitrary $x\in\textit{Vars}$ and $t'\in\interp{A}^-$,
and show $\nu \, \dttmv{x}  \dttsym{.}  \dttnt{t}  \mathbin{\Cdot[2]}  \dttnt{t'}\in\SN$.  This follows immediately
from the assumption that $t'\in\interp{A}^-$.
\end{proof}

\begin{thm}[Soundness]
\label{thm:sndinterp}
If $\Gamma  \vdash_c  \dttnt{t}  \dttsym{:}  \dttnt{p} \, \dttnt{A}$ then for all $\sigma\in\interp{\Gamma}$, $\sigma t\in\interp{A}^p$.
\end{thm}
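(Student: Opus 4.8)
The plan is to prove Theorem~\ref{thm:sndinterp} by induction on the derivation of $\Gamma \vdash_c t : p\,A$, fixing at the outset an arbitrary $\sigma \in \interp{\Gamma}$ and showing $\sigma t \in \interp{A}^p$. The base case is the axiom rule \textsc{ClassAx}, where $t = x$ and $\sigma x = \sigma(x) \in \interp{A}^p$ holds directly from the definition of $\interp{\Gamma}$. All remaining cases reduce to establishing that certain $\nu$-cuts are in $\SN$, so the proof will lean heavily on the defining clauses of Figure~\ref{fig:classreal} together with Lemmas~\ref{lem:stepinterp}, \ref{lem:sninterp}, and~\ref{lem:canonpos}.

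For the introduction rules (\textsc{ClassUnit}, \textsc{ClassAnd}, \textsc{ClassAndBar}, \textsc{ClassImp}, \textsc{ClassImpBar}) I would split on the polarity $p$. When $p = +$, the substituted term $\sigma t$ is canonical, and the induction hypotheses place its immediate subterms in exactly the interpretations required by the relevant clause of $\interp{A}^{+c}$; hence $\sigma t \in \interp{A}^{+c}$, and Lemma~\ref{lem:canonpos} upgrades this to $\sigma t \in \interp{A}^{+}$. The \textsc{ClassImp} subcase additionally uses that extending $\sigma$ by $x \mapsto t''$ for $t'' \in \interp{A}^{+}$ stays in $\interp{\Gamma, x{:}{+}A}$, so the inductive hypothesis yields $[t''/x]\sigma t \in \interp{B}^{+}$ as the clause demands. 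When $p = -$, I would instead unfold $\interp{A}^{-}$ directly: for arbitrary $y \in \mathit{Var}$ and $s \in \interp{A}^{+c}$ I must show $\nu \, y\,.\, s \mathbin{\Cdot[2]} \sigma t \in \SN$. Since $A$ is headed by a connective, $s$ is either a variable (no top-level redex) or the matching canonical attacker, firing exactly one of the $R$-rules (\textsc{RAnd}$d$, \textsc{RAndBar}$d$, \textsc{RImp}, \textsc{RImpBar}) at top level; the hypotheses and Lemma~\ref{lem:sninterp} then put the reduct's subterms in complementary interpretations, so the reduct lies in $\SN$.

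The cut rule \textsc{ClassCut} is the heart of the argument. Here $\sigma t = \nu \, x\,.\, \sigma t_1 \mathbin{\Cdot[2]} \sigma t_2$ with $x : \bar p\,A$, and I must show it lies in $\interp{A}^p$. Unfolding the appropriate interpretation produces a term $\nu \, y\,.\,(\nu \, x\,.\, \sigma t_1 \mathbin{\Cdot[2]} \sigma t_2)\mathbin{\Cdot[2]} s$ (for $p = +$) or its mirror $\nu \, y\,.\, s \mathbin{\Cdot[2]} (\nu \, x\,.\, \sigma t_1 \mathbin{\Cdot[2]} \sigma t_2)$ (for $p = -$), where $s$ inhabits the interpretation of $\bar p\,A$. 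In the first case \textsc{RBetaL} applies with $s \in \interp{A}^{-}$; in the second \textsc{RBetaR} applies since $s \in \interp{A}^{+c}$ is canonical, and Lemma~\ref{lem:canonpos} shows moreover $s \in \interp{A}^{+}$. Either way the substitution $\sigma[x \mapsto s]$ lies in $\interp{\Gamma, x{:}\bar p\,A}$, so the two induction hypotheses give $[s/x]\sigma t_1 \in \interp{B}^{+}$ and $[s/x]\sigma t_2 \in \interp{B}^{-}$, and the defining clause of $\interp{B}^{+}$ puts the top-level reduct $\nu \, y\,.\, [s/x]\sigma t_1 \mathbin{\Cdot[2]} [s/x]\sigma t_2$ in $\SN$.

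The main obstacle, common to every non-axiom case, is that exhibiting one top-level reduct in $\SN$ does not by itself establish strong normalization of the whole cut, since reductions may also occur inside the immediate subterms $\sigma t_1$, $\sigma t_2$, and $s$. I would discharge this with an auxiliary inner induction on the lexicographic combination of the maximal reduction lengths of those subterms, each finite by Lemma~\ref{lem:sninterp}: every one-step reduct of the cut is either the single top-level redex, handled as above, or an internal reduction, which decreases the measure and preserves membership in the relevant interpretations by Lemma~\ref{lem:stepinterp}, so the inner hypothesis applies. Since a term is in $\SN$ precisely when all of its one-step reducts are, this yields the cut in $\SN$ and closes each case. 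The principal source of tedium will be matching the polarity $p$ and the index $d$ to the correct $R$-rule in every subcase while keeping the $\interp{\cdot}^{+}/\interp{\cdot}^{-}/\interp{\cdot}^{+c}$ bookkeeping straight.
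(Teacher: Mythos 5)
Your proposal is correct and follows essentially the same route as the paper's proof: an outer induction on the classical typing derivation with a polarity split, the positive introduction cases discharged via the canonical interpretation and Lemma~\ref{lem:canonpos}, and the negative and cut cases handled by unfolding the definitions and running an inner induction on the sum of maximal reduction lengths of the subterms (finite by Lemma~\ref{lem:sninterp}, preserved by Lemma~\ref{lem:stepinterp}), with the single top-level redex resolved by the induction hypotheses. The only details you elide — e.g.\ establishing $\sigma t\in\SN$ under a $\lambda$ in the negative implication case by instantiating the IH with $\sigma[x\mapsto x]$ — are routine and handled the same way in the paper.
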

\begin{proof}
  The proof holds by induction on the derivation of $\Gamma  \vdash_c  \dttnt{t}  \dttsym{:}  \dttnt{p} \, \dttnt{A}$.  For the full proof see
  Appendix~\ref{subsec:proof_of_soundness}.
\end{proof}

\begin{cor}[Strong Normalization]
  \label{thm:strong_normalization}
  If $G  \dttsym{;}  \Gamma  \vdash  \dttnt{t}  \dttsym{:}  \dttnt{p} \, \dttnt{A}  \mathbin{@}  \dttnt{n}$, then $t \in \SN$.
\end{cor}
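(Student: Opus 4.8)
The plan is to derive strong normalization as a short corollary of the classical realizability machinery already in place, instantiated at the identity substitution. First I would invoke Theorem~\ref{thm:inttoclass} to pass from the world-annotated typing judgment to a classical one: from $G  \dttsym{;}  \Gamma  \vdash  \dttnt{t}  \dttsym{:}  \dttnt{p} \, \dttnt{A}  \mathbin{@}  \dttnt{n}$ we obtain $ \mathsf{DN}( \Gamma )   \vdash_c  \dttnt{t}  \dttsym{:}  \dttnt{p} \, \dttnt{A}$. This reduces the task to showing that every classically typable term is strongly normalizing.

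Next I would apply the Soundness theorem (Theorem~\ref{thm:sndinterp}) with the \emph{identity} substitution. Soundness states that for every $\sigma \in \interp{\mathsf{DN}(\Gamma)}$ we have $\sigma t \in \interp{A}^{\dttnt{p}}$, so it suffices to verify that the identity substitution lies in $\interp{\mathsf{DN}(\Gamma)}$. By the definition of the interpretation of contexts, this amounts to checking, for each hypothesis $\dttmv{x}  \dttsym{:}  \dttnt{p'} \, \dttnt{B}$ appearing in $\mathsf{DN}(\Gamma)$, that $\dttmv{x} \in \interp{B}^{\dttnt{p'}}$.

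The one step requiring a small argument is this membership of variables in the interpretations. If $\dttnt{p'} = \dttsym{-}$, then $\dttmv{x} \in \interp{B}^-$ is immediate from part~2 of Lemma~\ref{lem:sninterp}. If $\dttnt{p'} = \dttsym{+}$, then by inspection of the definitions in Figure~\ref{fig:classreal} every clause defining $\interp{B}^{+c}$ begins with the disjunct $t \in \textit{Var}$, so $\textit{Var} \subseteq \interp{B}^{+c}$; combined with Lemma~\ref{lem:canonpos} (which gives $\interp{B}^{+c} \subseteq \interp{B}^+$) this yields $\dttmv{x} \in \interp{B}^+$. Hence the identity substitution always belongs to $\interp{\mathsf{DN}(\Gamma)}$, regardless of the polarities in the context.

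Finally, with the identity substitution Soundness gives $t = \mathrm{id}\,t \in \interp{A}^{\dttnt{p}}$, and an appeal to Lemma~\ref{lem:sninterp} (part~1 when $\dttnt{p} = \dttsym{+}$, part~3 when $\dttnt{p} = \dttsym{-}$) shows $\interp{A}^{\dttnt{p}} \subseteq \SN$, so $t \in \SN$, as required. I do not expect any genuine obstacle here: all the substantive work is already carried out in Lemma~\ref{lem:sninterp} and the Soundness theorem, and the corollary is simply the instantiation of Soundness at the identity substitution, with the variable-membership observation above justifying that instantiation.
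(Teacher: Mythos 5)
Your proposal is correct and follows exactly the route the paper intends: Theorem~\ref{thm:inttoclass} to pass to classical typing, Theorem~\ref{thm:sndinterp} instantiated at the identity substitution (justified by $\textit{Var}\subseteq\interp{B}^-$ from Lemma~\ref{lem:sninterp} and $\textit{Var}\subseteq\interp{B}^{+c}\subseteq\interp{B}^+$ via Lemma~\ref{lem:canonpos}), then Lemma~\ref{lem:sninterp} to conclude membership in $\SN$. The paper leaves these details implicit, so your write-up is simply a faithful elaboration of its one-line proof.
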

\begin{proof} This follows easily by putting together Theorems~\ref{thm:inttoclass} and~\ref{thm:sndinterp}, with
Lemma~\ref{lem:sninterp}.
\end{proof}

\begin{cor}[Cut Elimination]
If $G  \dttsym{;}  \Gamma  \vdash  \dttnt{t}  \dttsym{:}  \dttnt{p} \, \dttnt{A}  \mathbin{@}  \dttnt{n}$, then there is normal $t'$ with
$t\leadsto^* t'$ and $t'$ containing only cut terms of the form
$\nu \, \dttmv{x}  \dttsym{.}  \dttmv{y}  \mathbin{\Cdot[2]}  \dttnt{t}$ or $\nu \, \dttmv{x}  \dttsym{.}  \dttnt{t}  \mathbin{\Cdot[2]}  \dttmv{y}$, for $y$ a variable.
\end{cor}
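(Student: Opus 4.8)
The plan is to combine strong normalization, type preservation, and an analysis of the irreducible cuts. First I would invoke Strong Normalization (Corollary~\ref{thm:strong_normalization}): since $G  \dttsym{;}  \Gamma  \vdash  \dttnt{t}  \dttsym{:}  \dttnt{p} \, \dttnt{A}  \mathbin{@}  \dttnt{n}$ implies $t \in \SN$, the term $t$ admits no infinite reduction sequence, so any maximal reduction sequence starting from $t$ terminates in a normal form $t'$ with $t \leadsto^* t'$. By iterating Type Preservation (Lemma~\ref{lemma:type_preservation}) along this sequence, $t'$ is itself well-typed, $G  \dttsym{;}  \Gamma  \vdash  \dttnt{t'}  \dttsym{:}  \dttnt{p} \, \dttnt{A}  \mathbin{@}  \dttnt{n}$. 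It then remains to characterize the cut subterms of a well-typed normal term.

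The heart of the argument is the following structural claim, proved by induction on $t'$: every cut subterm $\nu \, \dttmv{x}  \dttsym{.}  \dttnt{s}  \mathbin{\Cdot[2]}  \dttnt{s'}$ occurring in a well-typed normal term has a variable on at least one side, i.e.\ $s$ or $s'$ is a variable. Because the typing rules are syntax-directed, every subterm of a well-typed term is again well-typed in an appropriate context and graph, and every subterm of a normal term is normal; so it suffices to treat a single well-typed normal cut $\nu \, \dttmv{x}  \dttsym{.}  \dttnt{s}  \mathbin{\Cdot[2]}  \dttnt{s'}$. Inverting the \textsc{Cut} rule on its typing derivation yields a formula $B$ and node $n'$ with $s : {+}\,B @ n'$ and $s' : {-}\,B @ n'$.

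Now I would argue by contradiction: assume neither $s$ nor $s'$ is a variable. Each is then either a cut or a non-variable introduction form. If $s$ is a cut, \textsc{RBetaL} applies; if $s$ is an introduction form while $s'$ is a cut, then $s$ is canonical and \textsc{RBetaR} applies; in either case the cut is a redex, contradicting normality. The remaining case has $s$ and $s'$ both non-variable introduction forms, and here I appeal to canonical forms read off from the syntax-directedness of the typing rules (the converse direction of Lemma~\ref{lemma:inverstion}): a non-variable introduction form of type ${+}\,B$ (resp.\ ${-}\,B$) has its head constructor determined by $B$. A case split on the shape of $B$ shows the two head constructors always form a matching cut redex --- $\lambda$ against $\langle  -  \dttsym{,}  -  \rangle$ firing \textsc{RImp}/\textsc{RImpBar}, and $\dttsym{(}  -  \dttsym{,}  -  \dttsym{)}$ against $ \mathbf{in}_{ \dttnt{d} }\, - $ firing \textsc{RAnd}/\textsc{RAndBar} --- the polarities lining up precisely because $s$ and $s'$ carry opposite polarities on the same $B$, so that the $\bar p$ in \textsc{ImpBar}/\textsc{AndBar} on one side matches the $p$ of \textsc{Imp}/\textsc{And} on the other. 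When $B$ is atomic or a (co-)unit $\langle  \dttnt{p} \rangle$, typing forces at least one side to be a variable (no introduction form inhabits, e.g., ${-}\,\langle  \dttsym{+} \rangle$ or an atomic type on the positive side), so the case is vacuous. In every sub-case we reach a redex or a contradiction, establishing the claim; applying it to all cut subterms of $t'$ yields exactly the stated shape.

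The main obstacle will be the clash analysis in the last case: it is precisely here that well-typedness is indispensable, since ill-typed stuck cuts such as $\nu \, \dttmv{x}  \dttsym{.}  \dttkw{triv}  \mathbin{\Cdot[2]}  \lambda  \dttmv{y}  \dttsym{.}  \dttnt{s}$ are normal yet violate the conclusion, and only the opposite-polarity typing $s : {+}\,B$, $s' : {-}\,B$ rules them out and forces the two head constructors into a reducible pair. Care is needed to verify, connective by connective, that the polarities induced by $\bar p$ in the \textsc{ImpBar} and \textsc{AndBar} rules are genuinely dual to those of \textsc{Imp} and \textsc{And}, so that the introduction forms appearing on the two sides of the cut are matched and indeed trigger one of the $\beta$-rules rather than becoming stuck.
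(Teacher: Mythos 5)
The paper states this corollary without proof, so there is nothing to compare against line by line; your proposal supplies exactly the argument the authors leave implicit, and it is correct. The decomposition --- Strong Normalization (Corollary~\ref{thm:strong_normalization}) to obtain a normal reduct $t'$, iterated Type Preservation (Lemma~\ref{lemma:type_preservation}) to keep $t'$ typed, and then a clash analysis showing every well-typed normal cut has a variable on one side --- is the standard and essentially only route. Your case split checks out against the rules: if either component of the cut is itself a cut, \textsc{RBetaL} or \textsc{RBetaR} fires (the latter because any non-cut, non-variable term is canonical); and when both components are introduction forms typed at $+B$ and $-B$ respectively, inspection of \textsc{Imp}/\textsc{ImpBar} and \textsc{And}/\textsc{AndBar} shows the head constructors are forced into one of the matching redex pairs, while for $B$ a unit type one of the two polarized judgments has no introduction form at all, so that subcase is vacuous. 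Two small remarks: the DTT type grammar (Figure~\ref{fig:dtt-syntax}) contains no atomic types, so the atomic subcase you mention does not arise; and your appeal to ``every subterm of a well-typed term is well-typed'' is an unstated but easily verified lemma, since every typing rule types all immediate subterms of its subject.
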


\begin{lem}[Local Confluence]
\label{lem:localconf}
The reduction relation of Figure~\ref{fig:dtt-red} is locally confluent.
\end{lem}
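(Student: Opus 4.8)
The plan is to prove local confluence by the critical-pair method: given two coinitial steps $t \redto t_1$ and $t \redto t_2$, I will exhibit a common reduct of $t_1$ and $t_2$, arguing by the relative positions of the two contracted redexes. I work with well-typed terms and invoke Type Preservation (Lemma~\ref{lemma:type_preservation}) throughout, so that every subterm carries a polarity; as will become clear, the polarity discipline is exactly what rules out the single overlap that would otherwise diverge. Before the case analysis I would record three routine facts about the meta-level substitution appearing in the right-hand sides of \textsc{RImp}, \textsc{RImpBar}, \textsc{RAnd}, \textsc{RAndBar}, \textsc{RBetaL}, and \textsc{RBetaR}: (i) if $t \redto t'$ then $[s/x]t \redto [s/x]t'$; (ii) if $s \redto s'$ then $[s/x]t \redto^{*} [s'/x]t$, reducing the zero or more copies of $s$; and (iii) the substitution-composition identity $[s/x]([a/w]r) = [[s/x]a/w]([s/x]r)$ whenever $w \neq x$ and $w \notin \mathsf{FV}(s)$, which holds under the usual freshness convention.

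The easy cases come first. When the two redexes occupy disjoint subterms, contracting the other redex in each of $t_1$ and $t_2$ yields a common reduct. When one redex lies strictly inside an argument (a metavariable slot of the other rule's left-hand side), the outer step may duplicate or erase the inner redex, and the diagram closes by (i) and (ii)—the join being genuinely multi-step precisely when the inner redex is duplicated. The only remaining situations are honest overlaps at a common cut $\nu z.\, H_L \mathbin{\Cdot[2]} H_R$.

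Inspecting the left-hand sides, the six logical rules are fixed by the pair of head constructors of $H_L$ and $H_R$ and are pairwise non-unifiable, and none unifies with \textsc{RBetaL}, \textsc{RBetaR}, or \textsc{RRet} (a constructor head is neither a cut nor the bound variable $z$); hence no two distinct rules apply at one cut except in the cases now listed. First, if $H_L$ is itself a cut then \textsc{RBetaL} fires and may overlap with a step inside that left cut: if the inner step is logical or a $\beta$-step the two contracta coincide by identity (iii), and if the inner step is \textsc{RRet} the freshness side condition makes the outer substitution act trivially, so the contracta are literally equal. Dually, \textsc{RBetaR} may overlap with a logical or $\beta$-step inside the right cut $H_R$, again closing by (iii); the potentially divergent overlap of \textsc{RBetaR} with an \textsc{RRet} redex inside $H_R$ cannot occur for typed terms, because the right component of a cut is negatively polarized whereas \textsc{RRet} fires only on a positively polarized cut (its right component must be the cut's own bound variable, which is negative exactly when the cut is positive). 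Finally, when $H_R$ is the bound variable $z$ and $H_L$ is a cut, both \textsc{RRet} and \textsc{RBetaL} apply; here \textsc{RBetaL} renames the inner binder to $z$, and the freshness condition of \textsc{RRet} makes this renaming capture-avoiding, so the two contracta are $\alpha$-equivalent.

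The main obstacle is not any one diagram but the bookkeeping forced by the meta-substitution: the composition identity (iii) and the multi-step joins supplied by (ii) must be tracked carefully, and one must notice that the sole overlap lacking a common reduct on raw terms—\textsc{RBetaR} against an \textsc{RRet} redex in the continuation—is excluded precisely by the polarity invariant furnished by Type Preservation. Everything else is a mechanical, if tedious, verification of joinability.
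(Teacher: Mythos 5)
Your proposal is correct in substance but takes a genuinely different route from the paper, and the difference is not cosmetic. The paper's proof is a one-line appeal to Nipkow's higher-order critical pair lemma together with the assertion that \emph{all} critical pairs of the rules in Figure~\ref{fig:dtt-red} are joinable. You instead carry out the case analysis by hand and, crucially, you observe that one critical pair is \emph{not} joinable on raw terms: the overlap of \textsc{RBetaR} with an \textsc{RRet} redex sitting in the continuation position. Your observation checks out. From the peak $\nu\,z.\,c \mathbin{\Cdot[2]} (\nu\,x.\,t \mathbin{\Cdot[2]} x)$ with $x\notin\mathsf{FV}(t)$, \textsc{RBetaR} yields $\nu\,z.\,t \mathbin{\Cdot[2]} c$ while \textsc{RRet} yields $\nu\,z.\,c \mathbin{\Cdot[2]} t$, and for instance $\nu\,z.\,\dttkw{triv} \mathbin{\Cdot[2]} (\nu\,x.\,(\lambda y.y) \mathbin{\Cdot[2]} x)$ reduces to the two distinct normal forms $\nu\,z.\,(\lambda y.y) \mathbin{\Cdot[2]} \dttkw{triv}$ and $\nu\,z.\,\dttkw{triv} \mathbin{\Cdot[2]} (\lambda y.y)$. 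So the lemma as literally stated (for the unrestricted relation) appears to fail, and the paper's claim that all critical pairs are joinable glosses over exactly this case. Your typing argument repairs it: by the \textsc{Cut}/\textsc{ClassCut} rules the right component of a cut is negative while the bound variable has the opposite polarity to the cut, so \textsc{RRet} can only fire on a positive cut, whereas the inner cut in an \textsc{RBetaR} redex is negative; hence the bad peak is untypable, and Type Preservation keeps it untypable along any reduction sequence. Since the subsequent Confluence theorem only ever applies local confluence to typable terms (via Newman's Lemma with strong normalization), your restricted statement is exactly what is needed.

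Two small points to tighten. First, state explicitly that what you prove is local confluence of the reduction relation \emph{restricted to typable terms}, since that is a strictly weaker claim than the lemma as written; as it stands your proof does not (and cannot) establish the unrestricted version. Second, your remaining diagrams are sound but rest on the substitution facts (i)--(iii) and on left-linearity of the left-hand sides (each metavariable occurs once, so a variable-position overlap determines a unique residual); it is worth saying the latter out loud, since it is what makes the duplication/erasure cases close with $\redto^*$ rather than something worse. With those caveats your verification of the genuine overlaps (logical or $\beta$ redex inside either cut of \textsc{RBetaL}/\textsc{RBetaR}, closed by the substitution composition identity; \textsc{RRet} inside the left cut of \textsc{RBetaL}, closed on the nose; \textsc{RRet} against \textsc{RBetaL} at the root, closed up to $\alpha$) is complete and correct.
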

\begin{proof} We may view the reduction rules as higher-order pattern
  rewrite rules.  It is easy to confirm that all critical pairs (e.g.,
  between \dttdrulename{RBetaR} and the rules \dttdrulename{RImp},
  \dttdrulename{RImpBar}, \dttdrulename{RAnd1},
  \dttdrulename{RAndBar1}, \dttdrulename{RAnd2}, and
  \dttdrulename{RAndBar2}) are joinable.  Local confluence then
  follows by the higher-order critical pair lemma~\cite{nipkow91}.
\end{proof}

\begin{thm}[Confluence for Typable Terms]
The reduction relation restricted to terms typable in DTT is confluent.
\end{thm}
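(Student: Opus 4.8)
The plan is to invoke Newman's Lemma, which states that a relation that is both terminating and locally confluent is confluent. First I would observe that by Type Preservation (Lemma~\ref{lemma:type_preservation}), the reduction relation carries typable terms to typable terms; hence restricting $\redto$ to the set of terms typable in DTT yields a well-defined rewrite system on a set that is closed under reduction. Second, by the Strong Normalization corollary (Corollary~\ref{thm:strong_normalization}), every typable term lies in $\SN$, so the restricted relation is terminating. Third, by Local Confluence (Lemma~\ref{lem:localconf}), the full reduction relation of Figure~\ref{fig:dtt-red} is locally confluent; since local confluence is inherited by the restriction of a relation to any subset closed under reduction, it holds in particular on the typable terms.

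With these three facts assembled, Newman's Lemma applies directly and delivers confluence of $\redto$ restricted to typable terms, which is exactly the statement. Concretely, I would spell out the standard well-founded induction underlying Newman's Lemma: given a typable $t$ with $t \redto^* t_1$ and $t \redto^* t_2$, one inducts on $t$ along the (well-founded) reduction ordering guaranteed by strong normalization, decomposes each reduction sequence into a first step followed by the rest, joins the two first steps using local confluence, and then closes the remaining smaller diagrams by the induction hypothesis. Type Preservation guarantees that every term occurring in this diagram is again typable, so each appeal to the induction hypothesis and to strong normalization remains within the domain where the hypotheses are available.

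The main obstacle, and really the only point demanding care, is precisely this closure requirement: Newman's Lemma needs termination and local confluence to hold on the \emph{same} set, and strong normalization is established only for typable terms. Without Type Preservation a term typable at the start of a reduction could in principle reduce to an untypable term, at which point the termination guarantee would no longer apply and the inductive descent could stall. Because Type Preservation supplies exactly the needed closure, every vertex of every joining diagram starting from a typable term is itself typable and hence strongly normalizing, so the argument never leaves the region where both of Newman's hypotheses are in force. I therefore expect the proof to be short, with the substantive work already discharged by Lemmas~\ref{lemma:type_preservation} and~\ref{lem:localconf} together with Corollary~\ref{thm:strong_normalization}.
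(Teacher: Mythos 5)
Your proposal matches the paper's proof exactly: both use Type Preservation to show the typable terms are closed under reduction, then combine Strong Normalization and Local Confluence via Newman's Lemma. The additional detail you give about the well-founded induction inside Newman's Lemma is standard and consistent with what the paper leaves implicit.
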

\begin{proof} Suppose $G  \dttsym{;}  \Gamma  \vdash  \dttnt{t}  \dttsym{:}  \dttnt{p} \, \dttnt{A}  \mathbin{@}  \dttnt{n}$ for some $G$, $\Gamma$, $\dttnt{p}$, and $\dttnt{A}$.
By Lemma~\ref{lemma:type_preservation}, any reductions in the unrestricted reduction
relation from $t$ are also in the reduction relation restricted to typable terms.
The result now follows from Newman's Lemma, using Lemma~\ref{lem:localconf} and
Theorem~\ref{thm:strong_normalization}.
\end{proof}




\section{Conclusion}

We have presented a new type theory for bi-intuitionistic logic.  We
began with a compact dualized formulation of the logic, Dualized
Intuitionistic Logic (DIL), and showed soundness with respect to a
standard Kripke semantics (in Agda), and completeness with respect to
Pinto and Uustalu's system L.  We then presented Dualized Type Theory
(DTT), and showed type preservation, strong normalization, and
confluence for typable terms.  Future work includes further additions
to DTT, for example with polymorphism and inductive types.  It would
also be interesting to obtain a Canonicity Theorem as in
\cite{Stump:2014:RPD:2541568.2541575}, identifying some set of types
where closed normal forms are guaranteed to be canonical values (as
canonicity fails in general in DIL/DTT, as in other bi-intuitionistic
systems).

\section*{Acknowledgments}

We thank the anonymous reviewers for their detailed reviews that
helped improve this paper.

\bibliographystyle{plain}

\appendix

\section{Proofs from Section~\ref{subsec:completeness}: Completeness of DIL}
\label{sec:proofs_from_section_completeness_of_dil}

\subsection{Proof of Lemma~\ref{lemma:andl}}
\label{subsec:proof_of_lemma:andl}
Suppose $G  \dttsym{;}  \Gamma  \dttsym{,}   \bar{  \dttnt{p}  }  \, \dttnt{A}  \mathbin{@}  \dttnt{n}  \vdash  \dttnt{p} \, \dttnt{B}  \mathbin{@}  \dttnt{n}$ is derivable. By weakening
we know $G  \dttsym{;}  \Gamma  \dttsym{,}   \bar{  \dttnt{p}  }  \, \dttsym{(}   \dttnt{A}  \ndwedge{  \bar{  \dttnt{p}  }  }  \dttnt{B}   \dttsym{)}  \mathbin{@}  \dttnt{n}  \dttsym{,}   \bar{  \dttnt{p}  }  \, \dttnt{B}  \mathbin{@}  \dttnt{n}  \dttsym{,}   \bar{  \dttnt{p}  }  \, \dttnt{A}  \mathbin{@}  \dttnt{n}  \vdash  \dttnt{p} \, \dttnt{B}  \mathbin{@}  \dttnt{n}$. 
Then $G  \dttsym{;}  \Gamma  \vdash  \dttnt{p} \, \dttsym{(}   \dttnt{A}  \ndwedge{  \bar{  \dttnt{p}  }  }  \dttnt{B}   \dttsym{)}  \mathbin{@}  \dttnt{n}$ is derivable as follows:
\begin{center}
  \small
  \begin{math}
    $$\mprset{flushleft}
    \inferrule* [right=\scriptsize axCut] {
       \bar{  \dttnt{p}  }  \, \dttsym{(}   \dttnt{A}  \ndwedge{  \bar{  \dttnt{p}  }  }  \dttnt{B}   \dttsym{)}  \mathbin{@}  \dttnt{n} \in \Gamma  \dttsym{,}   \bar{  \dttnt{p}  }  \, \dttsym{(}   \dttnt{A}  \ndwedge{  \bar{  \dttnt{p}  }  }  \dttnt{B}   \dttsym{)}  \mathbin{@}  \dttnt{n}
      \\
      $$\mprset{flushleft}
      \inferrule* [right=\scriptsize AndBar] {
        $$\mprset{flushleft}
        \inferrule* [right=\scriptsize Cut] {
          D_1
          \\
          D_2
        }{G  \dttsym{;}  \Gamma  \dttsym{,}   \bar{  \dttnt{p}  }  \, \dttsym{(}   \dttnt{A}  \ndwedge{  \bar{  \dttnt{p}  }  }  \dttnt{B}   \dttsym{)}  \mathbin{@}  \dttnt{n}  \vdash  \dttnt{p} \, \dttnt{B}  \mathbin{@}  \dttnt{n}}
      }{G  \dttsym{;}  \Gamma  \dttsym{,}   \bar{  \dttnt{p}  }  \, \dttsym{(}   \dttnt{A}  \ndwedge{  \bar{  \dttnt{p}  }  }  \dttnt{B}   \dttsym{)}  \mathbin{@}  \dttnt{n}  \vdash  \dttnt{p} \, \dttsym{(}   \dttnt{A}  \ndwedge{  \bar{  \dttnt{p}  }  }  \dttnt{B}   \dttsym{)}  \mathbin{@}  \dttnt{n}}      
    }{G  \dttsym{;}  \Gamma  \vdash  \dttnt{p} \, \dttsym{(}   \dttnt{A}  \ndwedge{  \bar{  \dttnt{p}  }  }  \dttnt{B}   \dttsym{)}  \mathbin{@}  \dttnt{n}}
  \end{math}
\end{center}
where we have the following subderivations:  

\begin{math}
  \tiny
  \begin{array}{lll}
    D_0: \\
    & $$\mprset{flushleft}
    \inferrule* [right=\scriptsize AndBar] {
      $$\mprset{flushleft}
      \inferrule* [right=\scriptsize ax] {
        \,
      }{G  \dttsym{;}  \Gamma  \dttsym{,}   \bar{  \dttnt{p}  }  \, \dttsym{(}   \dttnt{A}  \ndwedge{  \bar{  \dttnt{p}  }  }  \dttnt{B}   \dttsym{)}  \mathbin{@}  \dttnt{n}  \dttsym{,}   \bar{  \dttnt{p}  }  \, \dttnt{B}  \mathbin{@}  \dttnt{n}  \dttsym{,}  \dttnt{p} \, \dttnt{A}  \mathbin{@}  \dttnt{n}  \vdash  \dttnt{p} \, \dttnt{A}  \mathbin{@}  \dttnt{n}}
    }{G  \dttsym{;}  \Gamma  \dttsym{,}   \bar{  \dttnt{p}  }  \, \dttsym{(}   \dttnt{A}  \ndwedge{  \bar{  \dttnt{p}  }  }  \dttnt{B}   \dttsym{)}  \mathbin{@}  \dttnt{n}  \dttsym{,}   \bar{  \dttnt{p}  }  \, \dttnt{B}  \mathbin{@}  \dttnt{n}  \dttsym{,}  \dttnt{p} \, \dttnt{A}  \mathbin{@}  \dttnt{n}  \vdash  \dttnt{p} \, \dttsym{(}   \dttnt{A}  \ndwedge{  \bar{  \dttnt{p}  }  }  \dttnt{B}   \dttsym{)}  \mathbin{@}  \dttnt{n}}
  \end{array}
\end{math}

\begin{math}
  \tiny
  \begin{array}{lll}
    D_1: \\
    &
    $$\mprset{flushleft}
    \inferrule* [right=\scriptsize axCut] {
       \bar{  \dttnt{p}  }  \, \dttnt{B}  \mathbin{@}  \dttnt{n} \in \Gamma  \dttsym{,}   \bar{  \dttnt{p}  }  \, \dttsym{(}   \dttnt{A}  \ndwedge{  \bar{  \dttnt{p}  }  }  \dttnt{B}   \dttsym{)}  \mathbin{@}  \dttnt{n}  \dttsym{,}   \bar{  \dttnt{p}  }  \, \dttnt{B}  \mathbin{@}  \dttnt{n}  \dttsym{,}   \bar{  \dttnt{p}  }  \, \dttnt{A}  \mathbin{@}  \dttnt{n}
      \\
      G  \dttsym{;}  \Gamma  \dttsym{,}   \bar{  \dttnt{p}  }  \, \dttsym{(}   \dttnt{A}  \ndwedge{  \bar{  \dttnt{p}  }  }  \dttnt{B}   \dttsym{)}  \mathbin{@}  \dttnt{n}  \dttsym{,}   \bar{  \dttnt{p}  }  \, \dttnt{B}  \mathbin{@}  \dttnt{n}  \dttsym{,}   \bar{  \dttnt{p}  }  \, \dttnt{A}  \mathbin{@}  \dttnt{n}  \vdash  \dttnt{p} \, \dttnt{B}  \mathbin{@}  \dttnt{n}      
    }{G  \dttsym{;}  \Gamma  \dttsym{,}   \bar{  \dttnt{p}  }  \, \dttsym{(}   \dttnt{A}  \ndwedge{  \bar{  \dttnt{p}  }  }  \dttnt{B}   \dttsym{)}  \mathbin{@}  \dttnt{n}  \dttsym{,}   \bar{  \dttnt{p}  }  \, \dttnt{B}  \mathbin{@}  \dttnt{n}  \vdash  \dttnt{p} \, \dttnt{A}  \mathbin{@}  \dttnt{n}}
  \end{array}
\end{math}

\begin{math}
  \tiny
  \begin{array}{lll}
    D_2: \\
    & $$\mprset{flushleft}
    \inferrule* [right=\scriptsize axCut] {
       \bar{  \dttnt{p}  }  \, \dttsym{(}   \dttnt{A}  \ndwedge{  \bar{  \dttnt{p}  }  }  \dttnt{B}   \dttsym{)}  \mathbin{@}  \dttnt{n} \in \Gamma  \dttsym{,}   \bar{  \dttnt{p}  }  \, \dttsym{(}   \dttnt{A}  \ndwedge{  \bar{  \dttnt{p}  }  }  \dttnt{B}   \dttsym{)}  \mathbin{@}  \dttnt{n}  \dttsym{,}   \bar{  \dttnt{p}  }  \, \dttnt{B}  \mathbin{@}  \dttnt{n}  \dttsym{,}  \dttnt{p} \, \dttnt{A}  \mathbin{@}  \dttnt{n}
      \\
      D_0
    }{G  \dttsym{;}  \Gamma  \dttsym{,}   \bar{  \dttnt{p}  }  \, \dttsym{(}   \dttnt{A}  \ndwedge{  \bar{  \dttnt{p}  }  }  \dttnt{B}   \dttsym{)}  \mathbin{@}  \dttnt{n}  \dttsym{,}   \bar{  \dttnt{p}  }  \, \dttnt{B}  \mathbin{@}  \dttnt{n}  \vdash   \bar{  \dttnt{p}  }  \, \dttnt{A}  \mathbin{@}  \dttnt{n}}
  \end{array}
\end{math}

\subsection{Proof of Lemma~\ref{lemma:raise_lower}: Raising the Lower Bound}
\label{subsec:proof_of_raising_the_lower_bound}
This is a proof by induction on the form of $G  \dttsym{,}  G_{{\mathrm{1}}}  \vdash  \dttnt{m} \,  \preccurlyeq^*_{ \dttnt{p'} }  \, \dttnt{m'}$.
\begin{description}
\item[\cW]
  \[
  \mprset{flushleft}
  \inferrule* [right=\ifrName{ax}] {
    \ 
  }{G'  \dttsym{,}  \dttnt{m} \,  \preccurlyeq_{ \dttnt{p'} }  \, \dttnt{m'}  \dttsym{,}  G''  \vdash  \dttnt{m} \,  \preccurlyeq^*_{ \dttnt{p'} }  \, \dttnt{m'}}
  \leqno{\raise 8 pt\hbox{\textbf{Case}}}
  \]
  Note that it is the case that $G'  \dttsym{,}  \dttnt{m} \,  \preccurlyeq_{ \dttnt{p'} }  \, \dttnt{m'}  \dttsym{,}  G'' \equiv G  \dttsym{,}  G_{{\mathrm{1}}}$.  If $\dttnt{m} \,  \preccurlyeq_{ \dttnt{p'} }  \, \dttnt{m'} \in G$,
  then we obtain our result, so suppose $\dttnt{m} \,  \preccurlyeq_{ \dttnt{p'} }  \, \dttnt{m'} \in G_{{\mathrm{1}}}$.  Suppose $\dttnt{p} \equiv \dttnt{p'}$. 
  Now if $\dttnt{m} \not\equiv \dttnt{n_{{\mathrm{1}}}}$, then clearly, we obtain
  our result.  Consider the case where $\dttnt{m} \equiv \dttnt{n_{{\mathrm{1}}}}$.  Then it suffices to show
  $G  \dttsym{,}  \mathsf{raise} \, \dttsym{(}  \dttnt{n_{{\mathrm{1}}}}  \dttsym{,}  \dttnt{n_{{\mathrm{2}}}}  \dttsym{,}  G'_{{\mathrm{1}}}  \dttsym{)}  \dttsym{,}  \dttnt{n_{{\mathrm{2}}}} \,  \preccurlyeq_{ \dttnt{p} }  \, \dttnt{m'}  \dttsym{,}  \mathsf{raise} \, \dttsym{(}  \dttnt{n_{{\mathrm{1}}}}  \dttsym{,}  \dttnt{n_{{\mathrm{2}}}}  \dttsym{,}  G''_{{\mathrm{1}}}  \dttsym{)}  \vdash  \dttnt{n_{{\mathrm{1}}}} \,  \preccurlyeq^*_{ \dttnt{p} }  \, \dttnt{m'}$ where $G_{{\mathrm{1}}} \equiv G'_{{\mathrm{1}}}  \dttsym{,}  \dttnt{n_{{\mathrm{1}}}} \,  \preccurlyeq_{ \dttnt{p} }  \, \dttnt{m'}  \dttsym{,}  G''_{{\mathrm{1}}}$. 
  This holds by the following derivation:
  \begin{center}
    \tiny
    \begin{math}
      $$\mprset{flushleft}
      \inferrule* [right=\ifrName{rel\_trans}] {
        G  \vdash  \dttnt{n_{{\mathrm{1}}}} \,  \preccurlyeq^*_{ \dttnt{p} }  \, \dttnt{n_{{\mathrm{2}}}}
        \\
        $$\mprset{flushleft}
        \inferrule* [right=\ifrName{rel\_ax}] {
          \ 
        }{G  \dttsym{,}  \mathsf{raise} \, \dttsym{(}  \dttnt{n_{{\mathrm{1}}}}  \dttsym{,}  \dttnt{n_{{\mathrm{2}}}}  \dttsym{,}  G'_{{\mathrm{1}}}  \dttsym{)}  \dttsym{,}  \dttnt{n_{{\mathrm{2}}}} \,  \preccurlyeq_{ \dttnt{p} }  \, \dttnt{m'}  \dttsym{,}  \mathsf{raise} \, \dttsym{(}  \dttnt{n_{{\mathrm{1}}}}  \dttsym{,}  \dttnt{n_{{\mathrm{2}}}}  \dttsym{,}  G''_{{\mathrm{1}}}  \dttsym{)}  \vdash  \dttnt{n_{{\mathrm{2}}}} \,  \preccurlyeq^*_{ \dttnt{p} }  \, \dttnt{m'}}
      }{G  \dttsym{,}  \mathsf{raise} \, \dttsym{(}  \dttnt{n_{{\mathrm{1}}}}  \dttsym{,}  \dttnt{n_{{\mathrm{2}}}}  \dttsym{,}  G'_{{\mathrm{1}}}  \dttsym{)}  \dttsym{,}  \dttnt{n_{{\mathrm{2}}}} \,  \preccurlyeq_{ \dttnt{p} }  \, \dttnt{m'}  \dttsym{,}  \mathsf{raise} \, \dttsym{(}  \dttnt{n_{{\mathrm{1}}}}  \dttsym{,}  \dttnt{n_{{\mathrm{2}}}}  \dttsym{,}  G''_{{\mathrm{1}}}  \dttsym{)}  \vdash  \dttnt{n_{{\mathrm{1}}}} \,  \preccurlyeq^*_{ \dttnt{p} }  \, \dttnt{m'}}
    \end{math}
  \end{center}      
  Now suppose $\dttnt{p'} \equiv  \bar{  \dttnt{p}  } $.  if $\dttnt{m'} \not\equiv \dttnt{n_{{\mathrm{1}}}}$, then clearly, we obtain
  our result.  Consider the case where $\dttnt{m'} \equiv \dttnt{n_{{\mathrm{1}}}}$.  Then it suffices to show
  $G  \dttsym{,}  \mathsf{raise} \, \dttsym{(}  \dttnt{n_{{\mathrm{1}}}}  \dttsym{,}  \dttnt{n_{{\mathrm{2}}}}  \dttsym{,}  G'_{{\mathrm{1}}}  \dttsym{)}  \dttsym{,}  \dttnt{m} \,  \preccurlyeq_{  \bar{  \dttnt{p}  }  }  \, \dttnt{n_{{\mathrm{2}}}}  \dttsym{,}  \mathsf{raise} \, \dttsym{(}  \dttnt{n_{{\mathrm{1}}}}  \dttsym{,}  \dttnt{n_{{\mathrm{2}}}}  \dttsym{,}  G''_{{\mathrm{1}}}  \dttsym{)}  \vdash  \dttnt{m} \,  \preccurlyeq^*_{  \bar{  \dttnt{p}  }  }  \, \dttnt{n_{{\mathrm{1}}}}$ 
  where $G_{{\mathrm{1}}} \equiv G'_{{\mathrm{1}}}  \dttsym{,}  \dttnt{m} \,  \preccurlyeq_{  \bar{  \dttnt{p}  }  }  \, \dttnt{n_{{\mathrm{1}}}}  \dttsym{,}  G''_{{\mathrm{1}}}$. This holds by the following derivation:
  \begin{center}
    \tiny
    \begin{math}
      $$\mprset{flushleft}
      \inferrule* [right={\tiny \ifrName{rel\_trans}}] {                       
        $$\mprset{flushleft}
        \inferrule* [right={\tiny \ifrName{rel\_ax}}] {
          \ 
        }{G  \dttsym{,}  \mathsf{raise} \, \dttsym{(}  \dttnt{n_{{\mathrm{1}}}}  \dttsym{,}  \dttnt{n_{{\mathrm{2}}}}  \dttsym{,}  G'_{{\mathrm{1}}}  \dttsym{)}  \dttsym{,}  \dttnt{m} \,  \preccurlyeq_{  \bar{  \dttnt{p}  }  }  \, \dttnt{n_{{\mathrm{2}}}}  \dttsym{,}  \mathsf{raise} \, \dttsym{(}  \dttnt{n_{{\mathrm{1}}}}  \dttsym{,}  \dttnt{n_{{\mathrm{2}}}}  \dttsym{,}  G''_{{\mathrm{1}}}  \dttsym{)}  \vdash  \dttnt{m} \,  \preccurlyeq^*_{  \bar{  \dttnt{p}  }  }  \, \dttnt{n_{{\mathrm{2}}}}}
        \\
        $$\mprset{flushleft}
        \inferrule* [right={\tiny \ifrName{rel\_flip}}] {
          G  \vdash  \dttnt{n_{{\mathrm{1}}}} \,  \preccurlyeq^*_{ \dttnt{p} }  \, \dttnt{n_{{\mathrm{2}}}}
        }{G  \vdash  \dttnt{n_{{\mathrm{2}}}} \,  \preccurlyeq^*_{  \bar{  \dttnt{p}  }  }  \, \dttnt{n_{{\mathrm{1}}}}}
      }{G  \dttsym{,}  \mathsf{raise} \, \dttsym{(}  \dttnt{n_{{\mathrm{1}}}}  \dttsym{,}  \dttnt{n_{{\mathrm{2}}}}  \dttsym{,}  G'_{{\mathrm{1}}}  \dttsym{)}  \dttsym{,}  \dttnt{m} \,  \preccurlyeq_{  \bar{  \dttnt{p}  }  }  \, \dttnt{n_{{\mathrm{2}}}}  \dttsym{,}  \mathsf{raise} \, \dttsym{(}  \dttnt{n_{{\mathrm{1}}}}  \dttsym{,}  \dttnt{n_{{\mathrm{2}}}}  \dttsym{,}  G''_{{\mathrm{1}}}  \dttsym{)}  \vdash  \dttnt{m} \,  \preccurlyeq^*_{  \bar{  \dttnt{p}  }  }  \, \dttnt{n_{{\mathrm{1}}}}}
    \end{math}
  \end{center}      
  
\item[\cW] 
\[
      \mprset{flushleft}
      \inferrule* [right=\ifrName{refl}] {
        \ 
      }{G  \dttsym{,}  G_{{\mathrm{1}}}  \vdash  \dttnt{m} \,  \preccurlyeq^*_{ \dttnt{p'} }  \, \dttnt{m}}
      \leqno{\raise 8 pt\hbox{\textbf{Case}}}
\]
  Note that in this case $\dttnt{m'} \equiv \dttnt{m}$.  Our result follows from simply
  an application of the $\ifrName{rel\_refl}$ rule.

\item[\cW] 
\[
      \mprset{flushleft}
      \inferrule* [right=\ifrName{rel\_trans}] {
         G  \dttsym{,}  G_{{\mathrm{1}}}  \vdash  \dttnt{m} \,  \preccurlyeq^*_{ \dttnt{p'} }  \, \dttnt{m''}  \qquad  G  \dttsym{,}  G_{{\mathrm{1}}}  \vdash  \dttnt{m''} \,  \preccurlyeq^*_{ \dttnt{p'} }  \, \dttnt{m'} 
      }{G  \dttsym{,}  G_{{\mathrm{1}}}  \vdash  \dttnt{m} \,  \preccurlyeq^*_{ \dttnt{p'} }  \, \dttnt{m'}}
      \leqno{\raise 8 pt\hbox{\textbf{Case}}}
\]
  This case holds by two applications of the induction hypothesis followed by
  applying the $\ifrName{rel\_trans}$ rule.
  
\item[\cW] 
\[
      \mprset{flushleft}
      \inferrule* [right=\ifrName{flip}] {
        G  \dttsym{,}  G_{{\mathrm{1}}}  \vdash  \dttnt{m'} \,  \preccurlyeq^*_{  \bar{  \dttnt{p'}  }  }  \, \dttnt{m}
      }{G  \dttsym{,}  G_{{\mathrm{1}}}  \vdash  \dttnt{m} \,  \preccurlyeq^*_{ \dttnt{p'} }  \, \dttnt{m'}}
      \leqno{\raise 8 pt\hbox{\textbf{Case}}}
\]
  This case holds by an application of the induction hypothesis followed by
  applying the $\ifrName{rel\_flip}$ rule.

\end{description}

\subsection{Proof of Lemma~\ref{lemma:raising_the_lower_bound_logically}: Raising the Lower Bound Logically}
\label{subsec:proof_of_raising_the_lower_bound_logically}
This is a proof by induction on the form of $G  \dttsym{,}  G_{{\mathrm{1}}}  \dttsym{,}  G'  \dttsym{;}  \Gamma  \vdash  \dttnt{p} \, \dttnt{A}  \mathbin{@}  \dttnt{n}$.  We assume with out loss of generality that
$\dttnt{n_{{\mathrm{1}}}} \, \in \, \dttsym{\mbox{$\mid$}}  G_{{\mathrm{1}}}  \dttsym{\mbox{$\mid$}}$, and that $\dttnt{n_{{\mathrm{1}}}} \not\equiv \dttnt{n_{{\mathrm{2}}}}$.  
If this is not the case then $\mathsf{raise} \, \dttsym{(}  \dttnt{n_{{\mathrm{1}}}}  \dttsym{,}  \dttnt{n_{{\mathrm{2}}}}  \dttsym{,}  G_{{\mathrm{1}}}  \dttsym{)} = G_{{\mathrm{1}}}$, and the result holds trivially.

\begin{description}
\item[\cW] 
\[
      \mprset{flushleft}
      \inferrule* [right=\ifrName{ax}] {
        G  \dttsym{,}  G_{{\mathrm{1}}}  \dttsym{,}  G'  \vdash  \dttnt{n'} \,  \preccurlyeq^*_{ \dttnt{p} }  \, \dttnt{n}
      }{G  \dttsym{,}  G_{{\mathrm{1}}}  \dttsym{,}  G'  \dttsym{;}  \Gamma  \dttsym{,}  \dttnt{p} \, \dttnt{A}  \mathbin{@}  \dttnt{n'}  \vdash  \dttnt{p} \, \dttnt{A}  \mathbin{@}  \dttnt{n}}
      \leqno{\raise 8 pt\hbox{\textbf{Case}}}
\]
  Clearly, if $G  \dttsym{,}  G_{{\mathrm{1}}}  \dttsym{,}  G'  \vdash  \dttnt{n'} \,  \preccurlyeq^*_{ \dttnt{p} }  \, \dttnt{n}$, then $G  \dttsym{,}  G'  \dttsym{,}  G_{{\mathrm{1}}}  \vdash  \dttnt{n'} \,  \preccurlyeq^*_{ \dttnt{p} }  \, \dttnt{n}$.
  Thus, this case follows by raising the lower bound (Lemma~\ref{lemma:raise_lower}), and applying
  the $\ifrName{ax}$ rule.

\item[\cW] 
\[
      \mprset{flushleft}
      \inferrule* [right=\ifrName{unit}] {
        \ 
      }{G  \dttsym{,}  G_{{\mathrm{1}}}  \dttsym{,}  G'  \dttsym{;}  \Gamma  \vdash  \dttnt{p} \,  \langle  \dttnt{p} \rangle   \mathbin{@}  \dttnt{n}}
      \leqno{\raise 8 pt\hbox{\textbf{Case}}}
\]
  Trivial.

\item[\cW] 
\[
      \mprset{flushleft}
      \inferrule* [right=\ifrName{and}] {
         G  \dttsym{,}  G_{{\mathrm{1}}}  \dttsym{,}  G'  \dttsym{;}  \Gamma  \vdash  \dttnt{p} \, \dttnt{A_{{\mathrm{1}}}}  \mathbin{@}  \dttnt{n}  \qquad  G  \dttsym{,}  G_{{\mathrm{1}}}  \dttsym{,}  G'  \dttsym{;}  \Gamma  \vdash  \dttnt{p} \, \dttnt{A_{{\mathrm{2}}}}  \mathbin{@}  \dttnt{n} 
      }{G  \dttsym{,}  G_{{\mathrm{1}}}  \dttsym{,}  G'  \dttsym{;}  \Gamma  \vdash  \dttnt{p} \, \dttsym{(}   \dttnt{A_{{\mathrm{1}}}}  \ndwedge{ \dttnt{p} }  \dttnt{A_{{\mathrm{2}}}}   \dttsym{)}  \mathbin{@}  \dttnt{n}}
      \leqno{\raise 8 pt\hbox{\textbf{Case}}}
\]
  This case holds by two applications of the induction hypothesis, and then applying
  the $\ifrName{and}$ rule.

\item[\cW] 
\[
      \mprset{flushleft}
      \inferrule* [right=\ifrName{andBar}] {
        G  \dttsym{,}  G_{{\mathrm{1}}}  \dttsym{,}  G'  \dttsym{;}  \Gamma  \vdash  \dttnt{p} \,  \dttnt{A} _{ \dttnt{d} }   \mathbin{@}  \dttnt{n}
      }{G  \dttsym{,}  G_{{\mathrm{1}}}  \dttsym{,}  G'  \dttsym{;}  \Gamma  \vdash  \dttnt{p} \, \dttsym{(}   \dttnt{A_{{\mathrm{1}}}}  \ndwedge{  \bar{  \dttnt{p}  }  }  \dttnt{A_{{\mathrm{2}}}}   \dttsym{)}  \mathbin{@}  \dttnt{n}}
      \leqno{\raise 8 pt\hbox{\textbf{Case}}}
\]
  Similar to the previous case.

\item[\cW] 
\[
      \mprset{flushleft}
      \inferrule* [right=\ifrName{imp}] {
        \dttnt{n'} \, \not\in \, \dttsym{\mbox{$\mid$}}  G  \dttsym{,}  G_{{\mathrm{1}}}  \dttsym{,}  G'  \dttsym{\mbox{$\mid$}}  \dttsym{,}  \dttsym{\mbox{$\mid$}}  \Gamma  \dttsym{\mbox{$\mid$}}
        \\\\
            \dttsym{(}  G  \dttsym{,}  G_{{\mathrm{1}}}  \dttsym{,}  G'  \dttsym{,}  \dttnt{n} \,  \preccurlyeq_{ \dttnt{p} }  \, \dttnt{n'}  \dttsym{)}  \dttsym{;}  \Gamma  \dttsym{,}  \dttnt{p} \, \dttnt{A_{{\mathrm{1}}}}  \mathbin{@}  \dttnt{n'}  \vdash  \dttnt{p} \, \dttnt{A_{{\mathrm{2}}}}  \mathbin{@}  \dttnt{n'}
      }{G  \dttsym{,}  G_{{\mathrm{1}}}  \dttsym{,}  G'  \dttsym{;}  \Gamma  \vdash  \dttnt{p} \, \dttsym{(}   \dttnt{A_{{\mathrm{1}}}}  \ndto{ \dttnt{p} }  \dttnt{A_{{\mathrm{2}}}}   \dttsym{)}  \mathbin{@}  \dttnt{n}}
      \leqno{\raise 8 pt\hbox{\textbf{Case}}}
\]
  Since we know $\dttnt{n_{{\mathrm{1}}}} \not\equiv \dttnt{n_{{\mathrm{2}}}}$, then by Lemma~\ref{lemma:graph_node_containment} we know
  $\dttnt{n_{{\mathrm{1}}}},\dttnt{n_{{\mathrm{2}}}} \in \dttsym{\mbox{$\mid$}}  G  \dttsym{,}  G'  \dttsym{\mbox{$\mid$}}$. Thus, $\dttnt{n'} \not\equiv \dttnt{n_{{\mathrm{1}}}} \not\equiv \dttnt{n_{{\mathrm{2}}}}$.  Now by the
  induction hypothesis we know $\dttsym{(}  G  \dttsym{,}  \mathsf{raise} \, \dttsym{(}  \dttnt{n_{{\mathrm{1}}}}  \dttsym{,}  \dttnt{n_{{\mathrm{2}}}}  \dttsym{,}  G_{{\mathrm{1}}}  \dttsym{)}  \dttsym{,}  G'  \dttsym{,}  \dttnt{n} \,  \preccurlyeq_{ \dttnt{p} }  \, \dttnt{n'}  \dttsym{)}  \dttsym{;}  \Gamma  \dttsym{,}  \dttnt{p} \, \dttnt{A_{{\mathrm{1}}}}  \mathbin{@}  \dttnt{n'}  \vdash  \dttnt{p} \, \dttnt{A_{{\mathrm{2}}}}  \mathbin{@}  \dttnt{n'}$.
  This case then follows by the application of the $\ifrName{imp}$ rule to the former.

\item[\cW] 
\[
      \mprset{flushleft}
      \inferrule* [right=\ifrName{impBar}] {
        G  \dttsym{,}  G_{{\mathrm{1}}}  \dttsym{,}  G'  \vdash  \dttnt{n} \,  \preccurlyeq^*_{  \bar{  \dttnt{p}  }  }  \, \dttnt{n'}
        \\\\
            G  \dttsym{,}  G_{{\mathrm{1}}}  \dttsym{,}  G'  \dttsym{;}  \Gamma  \vdash   \bar{  \dttnt{p}  }  \, \dttnt{A_{{\mathrm{1}}}}  \mathbin{@}  \dttnt{n'}  
            \\
              G  \dttsym{,}  G_{{\mathrm{1}}}  \dttsym{,}  G'  \dttsym{;}  \Gamma  \vdash  \dttnt{p} \, \dttnt{A_{{\mathrm{2}}}}  \mathbin{@}  \dttnt{n'}
      }{G  \dttsym{,}  G_{{\mathrm{1}}}  \dttsym{,}  G'  \dttsym{;}  \Gamma  \vdash  \dttnt{p} \, \dttsym{(}   \dttnt{A_{{\mathrm{1}}}}  \ndto{  \bar{  \dttnt{p}  }  }  \dttnt{A_{{\mathrm{2}}}}   \dttsym{)}  \mathbin{@}  \dttnt{n}}
      \leqno{\raise 8 pt\hbox{\textbf{Case}}}
\]
  Clearly, $G  \dttsym{,}  G_{{\mathrm{1}}}  \dttsym{,}  G'  \vdash  \dttnt{n} \,  \preccurlyeq^*_{  \bar{  \dttnt{p}  }  }  \, \dttnt{n'}$ implies $G  \dttsym{,}  G'  \dttsym{,}  G_{{\mathrm{1}}}  \vdash  \dttnt{n} \,  \preccurlyeq^*_{  \bar{  \dttnt{p}  }  }  \, \dttnt{n'}$, and
  by raising the lower bound (Lemma~\ref{lemma:raise_lower}) we know $G  \dttsym{,}  G'  \dttsym{,}  \mathsf{raise} \, \dttsym{(}  \dttnt{n_{{\mathrm{1}}}}  \dttsym{,}  \dttnt{n_{{\mathrm{2}}}}  \dttsym{,}  G_{{\mathrm{1}}}  \dttsym{)}  \vdash  \dttnt{n} \,  \preccurlyeq^*_{  \bar{  \dttnt{p}  }  }  \, \dttnt{n'}$,
  which then implies $G  \dttsym{,}  \mathsf{raise} \, \dttsym{(}  \dttnt{n_{{\mathrm{1}}}}  \dttsym{,}  \dttnt{n_{{\mathrm{2}}}}  \dttsym{,}  G_{{\mathrm{1}}}  \dttsym{)}  \dttsym{,}  G'  \vdash  \dttnt{n} \,  \preccurlyeq^*_{  \bar{  \dttnt{p}  }  }  \, \dttnt{n'}$.  Thus, this case follows from applying
  \textsc{impBar} to the application of the induction hypothesis to each premise and \\
  $G  \dttsym{,}  \mathsf{raise} \, \dttsym{(}  \dttnt{n_{{\mathrm{1}}}}  \dttsym{,}  \dttnt{n_{{\mathrm{2}}}}  \dttsym{,}  G_{{\mathrm{1}}}  \dttsym{)}  \dttsym{,}  G'  \vdash  \dttnt{n} \,  \preccurlyeq^*_{  \bar{  \dttnt{p}  }  }  \, \dttnt{n'}$.

\item[\cW] 
\[
      \mprset{flushleft}
      \inferrule* [right=\ifrName{axCut}] {
          \dttnt{p} \, \dttnt{T'}  \mathbin{@}  \dttnt{n'}  \in  \Gamma   \qquad  G  \dttsym{,}  G_{{\mathrm{1}}}  \dttsym{,}  G'  \dttsym{;}  \Gamma  \dttsym{,}   \bar{  \dttnt{p}  }  \, \dttnt{T}  \mathbin{@}  \dttnt{n}  \vdash   \bar{  \dttnt{p}  }  \, \dttnt{T'}  \mathbin{@}  \dttnt{n'} 
      }{G  \dttsym{,}  G_{{\mathrm{1}}}  \dttsym{,}  G'  \dttsym{;}  \Gamma  \vdash  \dttnt{p} \, \dttnt{T}  \mathbin{@}  \dttnt{n}}
      \leqno{\raise 8 pt\hbox{\textbf{Case}}}
\]
  This case follows by a simple application of the induction hypothesis, and then reapplying
  the rule.

\item[\cW] 
\[
      \mprset{flushleft}
      \inferrule* [right=\ifrName{axCutBar}] {
           \bar{  \dttnt{p}  }  \, \dttnt{T'}  \mathbin{@}  \dttnt{n'}  \in  \Gamma   \qquad  G  \dttsym{,}  G_{{\mathrm{1}}}  \dttsym{,}  G'  \dttsym{;}  \Gamma  \dttsym{,}   \bar{  \dttnt{p}  }  \, \dttnt{T}  \mathbin{@}  \dttnt{n}  \vdash  \dttnt{p} \, \dttnt{T'}  \mathbin{@}  \dttnt{n'} 
      }{G  \dttsym{,}  G_{{\mathrm{1}}}  \dttsym{,}  G'  \dttsym{;}  \Gamma  \vdash  \dttnt{p} \, \dttnt{T}  \mathbin{@}  \dttnt{n}}
      \leqno{\raise 8 pt\hbox{\textbf{Case}}}
\]
Similar to the previous case.
\end{description}

\subsection{Proof of Lemma~\ref{lemma:genmono}: General Monotonicity}
\label{subsec:proof_of_general_monotonicity}
This is a proof by induction on the form of $G  \dttsym{;}   \bar{  \dttnt{p_{{\mathrm{1}}}}  }  \, \dttnt{A_{{\mathrm{1}}}}  \mathbin{@}  \dttnt{n_{{\mathrm{1}}}}  \dttsym{,} \, ... \, \dttsym{,}   \bar{  \dttnt{p_{\dttmv{i}}}  }  \, \dttnt{A_{\dttmv{i}}}  \mathbin{@}  \dttnt{n_{\dttmv{i}}}  \vdash  \dttnt{p} \, \dttnt{B}  \mathbin{@}  \dttnt{m}$.
We assume without loss of generality that all of $\dttnt{n_{{\mathrm{1}}}}, \dttnt{n'_{{\mathrm{1}}}} \ldots , \dttnt{n_{\dttmv{i}}}, \dttnt{n'_{\dttmv{i}}}$ are unique.  Thus, they
are all members of $\dttsym{\mbox{$\mid$}}  G  \dttsym{\mbox{$\mid$}}$ by Lemma~\ref{lemma:graph_node_containment}.

\begin{description}
\item[\cW] 
  \[
      \mprset{flushleft}
      \inferrule* [right=\ifrName{ax}] {
        G  \vdash  \dttnt{n_{\dttmv{j}}} \,  \preccurlyeq^*_{  \bar{  \dttnt{p_{\dttmv{j}}}  }  }  \, \dttnt{m}
      }{G  \dttsym{;}   \bar{  \dttnt{p_{{\mathrm{1}}}}  }  \, \dttnt{A_{{\mathrm{1}}}}  \mathbin{@}  \dttnt{n_{{\mathrm{1}}}}  \dttsym{,} \, ... \, \dttsym{,}   \bar{  \dttnt{p_{\dttmv{i}}}  }  \, \dttnt{A_{\dttmv{i}}}  \mathbin{@}  \dttnt{n_{\dttmv{i}}}  \vdash   \bar{  \dttnt{p_{\dttmv{j}}}  }  \, \dttnt{A_{\dttmv{j}}}  \mathbin{@}  \dttnt{m}}
      \leqno{\raise 8 pt\hbox{\textbf{Case}}}
\]
  It must be the case that $\dttnt{p} \, \dttnt{B}  \mathbin{@}  \dttnt{m} \equiv  \bar{  \dttnt{p_{\dttmv{j}}}  }  \, \dttnt{A_{\dttmv{j}}}  \mathbin{@}  \dttnt{m}$ for some $1 \leq j \leq i$.
  In addition, we know $G  \vdash  \dttnt{n_{\dttmv{j}}} \,  \preccurlyeq^*_{ \dttnt{p_{\dttmv{j}}} }  \, \dttnt{n'_{\dttmv{j}}}$, $G  \vdash  \dttnt{n_{\dttmv{j}}} \,  \preccurlyeq^*_{  \bar{  \dttnt{p_{\dttmv{j}}}  }  }  \, \dttnt{m}$, and $G  \vdash  \dttnt{m} \,  \preccurlyeq^*_{  \bar{  \dttnt{p_{\dttmv{j}}}  }  }  \, \dttnt{m'}$.
  It suffices to show $G  \dttsym{;}    \bar{  \dttnt{p_{{\mathrm{1}}}}  }    \dttnt{A_{{\mathrm{1}}}}  @  \dttnt{n'_{{\mathrm{1}}}}  , \ldots ,   \bar{  \dttnt{p_{\dttmv{i}}}  }    \dttnt{A_{\dttmv{i}}}  @  \dttnt{n'_{\dttmv{i}}}   \vdash   \bar{  \dttnt{p_{\dttmv{j}}}  }  \, \dttnt{A_{\dttmv{j}}}  \mathbin{@}  \dttnt{m'}$, but to obtain this result it suffices to show
  that $G  \vdash  \dttnt{n'_{\dttmv{j}}} \,  \preccurlyeq_{  \bar{  \dttnt{p_{\dttmv{j}}}  }  }  \, \dttnt{m'}$, but this holds by first using $\textsc{rel\_flip}$ to obtain
  $G  \vdash  \dttnt{n'_{\dttmv{j}}} \,  \preccurlyeq^*_{  \bar{  \dttnt{p_{\dttmv{j}}}  }  }  \, \dttnt{n_{\dttmv{j}}}$ followed by two applications of transitivity. 
  
\item[\cW] 
  \[
      \mprset{flushleft}
      \inferrule* [right=\ifrName{unit}] {
        \ 
      }{G  \dttsym{;}   \bar{  \dttnt{p_{{\mathrm{1}}}}  }  \, \dttnt{A_{{\mathrm{1}}}}  \mathbin{@}  \dttnt{n_{{\mathrm{1}}}}  \dttsym{,} \, ... \, \dttsym{,}   \bar{  \dttnt{p_{\dttmv{i}}}  }  \, \dttnt{A_{\dttmv{i}}}  \mathbin{@}  \dttnt{n_{\dttmv{i}}}  \vdash  \dttnt{p} \,  \langle  \dttnt{p} \rangle   \mathbin{@}  \dttnt{m_{{\mathrm{1}}}}}
      \leqno{\raise 8 pt\hbox{\textbf{Case}}}
\]
  Trivial.

\item[\cW] 
  \[
      \mprset{flushleft}
      \inferrule* [right=\ifrName{and}] {
        G  \dttsym{;}   \bar{  \dttnt{p_{{\mathrm{1}}}}  }  \, \dttnt{A_{{\mathrm{1}}}}  \mathbin{@}  \dttnt{n_{{\mathrm{1}}}}  \dttsym{,} \, ... \, \dttsym{,}   \bar{  \dttnt{p_{\dttmv{i}}}  }  \, \dttnt{A_{\dttmv{i}}}  \mathbin{@}  \dttnt{n_{\dttmv{i}}}  \vdash  \dttnt{p} \, \dttnt{B_{{\mathrm{1}}}}  \mathbin{@}  \dttnt{m} 
        \\\\
            G  \dttsym{;}   \bar{  \dttnt{p_{{\mathrm{1}}}}  }  \, \dttnt{A_{{\mathrm{1}}}}  \mathbin{@}  \dttnt{n_{{\mathrm{1}}}}  \dttsym{,} \, ... \, \dttsym{,}   \bar{  \dttnt{p_{\dttmv{i}}}  }  \, \dttnt{A_{\dttmv{i}}}  \mathbin{@}  \dttnt{n_{\dttmv{i}}}  \vdash  \dttnt{p} \, \dttnt{B_{{\mathrm{2}}}}  \mathbin{@}  \dttnt{m}
      }{G  \dttsym{;}   \bar{  \dttnt{p_{{\mathrm{1}}}}  }  \, \dttnt{A_{{\mathrm{1}}}}  \mathbin{@}  \dttnt{n_{{\mathrm{1}}}}  \dttsym{,} \, ... \, \dttsym{,}   \bar{  \dttnt{p_{\dttmv{i}}}  }  \, \dttnt{A_{\dttmv{i}}}  \mathbin{@}  \dttnt{n_{\dttmv{i}}}  \vdash  \dttnt{p} \, \dttsym{(}   \dttnt{B_{{\mathrm{1}}}}  \ndwedge{ \dttnt{p} }  \dttnt{B_{{\mathrm{2}}}}   \dttsym{)}  \mathbin{@}  \dttnt{m}}
      \leqno{\raise 8 pt\hbox{\textbf{Case}}}
\]
  This case follows easily by applying the induction hypothesis to each premise and then
  applying the $\ifrName{and}$ rule.

\item[\cW] 
  \[
      \mprset{flushleft}
      \inferrule* [right=\ifrName{andBar}] {
        G  \dttsym{;}   \bar{  \dttnt{p_{{\mathrm{1}}}}  }  \, \dttnt{A_{{\mathrm{1}}}}  \mathbin{@}  \dttnt{n_{{\mathrm{1}}}}  \dttsym{,} \, ... \, \dttsym{,}   \bar{  \dttnt{p_{\dttmv{i}}}  }  \, \dttnt{A_{\dttmv{i}}}  \mathbin{@}  \dttnt{n_{\dttmv{i}}}  \vdash  \dttnt{p} \,  \dttnt{B} _{ \dttnt{d} }   \mathbin{@}  \dttnt{m}
      }{G  \dttsym{;}   \bar{  \dttnt{p_{{\mathrm{1}}}}  }  \, \dttnt{A_{{\mathrm{1}}}}  \mathbin{@}  \dttnt{n_{{\mathrm{1}}}}  \dttsym{,} \, ... \, \dttsym{,}   \bar{  \dttnt{p_{\dttmv{i}}}  }  \, \dttnt{A_{\dttmv{i}}}  \mathbin{@}  \dttnt{n_{\dttmv{i}}}  \vdash  \dttnt{p} \, \dttsym{(}   \dttnt{B_{{\mathrm{1}}}}  \ndwedge{  \bar{  \dttnt{p}  }  }  \dttnt{B_{{\mathrm{2}}}}   \dttsym{)}  \mathbin{@}  \dttnt{m}}
      \leqno{\raise 8 pt\hbox{\textbf{Case}}}
\]
  This case follows easily by the induction hypothesis and then applying $\ifrName{andBar}$.      

\item[\cW] 
  \[
      \mprset{flushleft}
      \inferrule* [right=\ifrName{imp}] {
        \dttnt{n'} \, \not\in \, \dttsym{\mbox{$\mid$}}  G  \dttsym{\mbox{$\mid$}}  \dttsym{,}  \dttsym{\mbox{$\mid$}}   \bar{  \dttnt{p_{{\mathrm{1}}}}  }  \, \dttnt{A_{{\mathrm{1}}}}  \mathbin{@}  \dttnt{n_{{\mathrm{1}}}}  \dttsym{,} \, ... \, \dttsym{,}   \bar{  \dttnt{p_{\dttmv{i}}}  }  \, \dttnt{A_{\dttmv{i}}}  \mathbin{@}  \dttnt{n_{\dttmv{i}}}  \dttsym{\mbox{$\mid$}}
        \\\\
            \dttsym{(}  G  \dttsym{,}  \dttnt{m_{{\mathrm{1}}}} \,  \preccurlyeq_{ \dttnt{p} }  \, \dttnt{n'}  \dttsym{)}  \dttsym{;}   \bar{  \dttnt{p_{{\mathrm{1}}}}  }  \, \dttnt{A_{{\mathrm{1}}}}  \mathbin{@}  \dttnt{n_{{\mathrm{1}}}}  \dttsym{,} \, ... \, \dttsym{,}   \bar{  \dttnt{p_{\dttmv{i}}}  }  \, \dttnt{A_{\dttmv{i}}}  \mathbin{@}  \dttnt{n_{\dttmv{i}}}  \dttsym{,}  \dttnt{p} \, \dttnt{B_{{\mathrm{1}}}}  \mathbin{@}  \dttnt{n'}  \vdash  \dttnt{p} \, \dttnt{B_{{\mathrm{2}}}}  \mathbin{@}  \dttnt{n'}
      }{G  \dttsym{;}   \bar{  \dttnt{p_{{\mathrm{1}}}}  }  \, \dttnt{A_{{\mathrm{1}}}}  \mathbin{@}  \dttnt{n_{{\mathrm{1}}}}  \dttsym{,} \, ... \, \dttsym{,}   \bar{  \dttnt{p_{\dttmv{i}}}  }  \, \dttnt{A_{\dttmv{i}}}  \mathbin{@}  \dttnt{n_{\dttmv{i}}}  \vdash  \dttnt{p} \, \dttsym{(}   \dttnt{B_{{\mathrm{1}}}}  \ndto{ \dttnt{p} }  \dttnt{B_{{\mathrm{2}}}}   \dttsym{)}  \mathbin{@}  \dttnt{m}}
      \leqno{\raise 8 pt\hbox{\textbf{Case}}}
\]
  
  We know by assumption $G  \vdash  \dttnt{n_{{\mathrm{1}}}} \,  \preccurlyeq^*_{ \dttnt{p_{{\mathrm{1}}}} }  \, \dttnt{n'_{{\mathrm{1}}}}$, \ldots, $G  \vdash  \dttnt{n_{\dttmv{i}}} \,  \preccurlyeq^*_{ \dttnt{p_{\dttmv{i}}} }  \, \dttnt{n'_{\dttmv{i}}}$, and by graph weakening 
  (Lemma~\ref{lemma:graph_weakening})
  $G  \dttsym{,}  \dttnt{m} \,  \preccurlyeq_{ \dttnt{p} }  \, \dttnt{n'}  \vdash  \dttnt{n_{{\mathrm{1}}}} \,  \preccurlyeq^*_{ \dttnt{p_{{\mathrm{1}}}} }  \, \dttnt{n'_{{\mathrm{1}}}}$, \ldots, $G  \dttsym{,}  \dttnt{m} \,  \preccurlyeq_{ \dttnt{p} }  \, \dttnt{n'}  \vdash  \dttnt{n_{\dttmv{i}}} \,  \preccurlyeq^*_{ \dttnt{p_{\dttmv{i}}} }  \, \dttnt{n'_{\dttmv{i}}}$.  We also know by
  applying the $\ifrName{rel\_refl}$ rule that $G  \dttsym{,}  \dttnt{m} \,  \preccurlyeq_{ \dttnt{p} }  \, \dttnt{n'}  \vdash  \dttnt{n'} \,  \preccurlyeq^*_{  \bar{  \dttnt{p}  }  }  \, \dttnt{n'}$ and 
  $G  \dttsym{,}  \dttnt{m} \,  \preccurlyeq_{ \dttnt{p} }  \, \dttnt{n'}  \vdash  \dttnt{n'} \,  \preccurlyeq^*_{ \dttnt{p} }  \, \dttnt{n'}$.  
  Thus, by the induction hypothesis we know 
  $\dttsym{(}  G  \dttsym{,}  \dttnt{m} \,  \preccurlyeq_{ \dttnt{p} }  \, \dttnt{n'}  \dttsym{)}  \dttsym{;}    \bar{  \dttnt{p_{{\mathrm{1}}}}  }    \dttnt{A_{{\mathrm{1}}}}  @  \dttnt{n'_{{\mathrm{1}}}}  , \ldots ,   \bar{  \dttnt{p_{\dttmv{i}}}  }    \dttnt{A_{\dttmv{i}}}  @  \dttnt{n'_{\dttmv{i}}}   \dttsym{,}  \dttnt{p} \, \dttnt{B_{{\mathrm{1}}}}  \mathbin{@}  \dttnt{n'}  \vdash  \dttnt{p} \, \dttnt{B_{{\mathrm{2}}}}  \mathbin{@}  \dttnt{n'}$.
  Now we can raise the lower bound logically (Lemma~\ref{lemma:raising_the_lower_bound_logically}) with 
  $G_{{\mathrm{1}}} \equiv \dttnt{m} \,  \preccurlyeq_{ \dttnt{p} }  \, \dttnt{n'}$ and the assumption $G  \vdash  \dttnt{m} \,  \preccurlyeq^*_{ \dttnt{p} }  \, \dttnt{m'}$ to obtain \\
  $\dttsym{(}  G  \dttsym{,}  \mathsf{raise} \, \dttsym{(}  \dttnt{m}  \dttsym{,}  \dttnt{m'}  \dttsym{,}  \dttnt{m} \,  \preccurlyeq_{ \dttnt{p} }  \, \dttnt{n'}  \dttsym{)}  \dttsym{)}  \dttsym{;}    \bar{  \dttnt{p_{{\mathrm{1}}}}  }    \dttnt{A_{{\mathrm{1}}}}  @  \dttnt{n'_{{\mathrm{1}}}}  , \ldots ,   \bar{  \dttnt{p_{\dttmv{i}}}  }    \dttnt{A_{\dttmv{i}}}  @  \dttnt{n'_{\dttmv{i}}}   \dttsym{,}  \dttnt{p} \, \dttnt{B_{{\mathrm{1}}}}  \mathbin{@}  \dttnt{n'}  \vdash  \dttnt{p} \, \dttnt{B_{{\mathrm{2}}}}  \mathbin{@}  \dttnt{n'}$, but this is
  equivalent to $\dttsym{(}  G  \dttsym{,}  \dttnt{m} \,  \preccurlyeq_{ \dttnt{p} }  \, \dttnt{n'}  \dttsym{)}  \dttsym{;}    \bar{  \dttnt{p_{{\mathrm{1}}}}  }    \dttnt{A_{{\mathrm{1}}}}  @  \dttnt{n'_{{\mathrm{1}}}}  , \ldots ,   \bar{  \dttnt{p_{\dttmv{i}}}  }    \dttnt{A_{\dttmv{i}}}  @  \dttnt{n'_{\dttmv{i}}}   \dttsym{,}  \dttnt{p} \, \dttnt{B_{{\mathrm{1}}}}  \mathbin{@}  \dttnt{n'}  \vdash  \dttnt{p} \, \dttnt{B_{{\mathrm{2}}}}  \mathbin{@}  \dttnt{n'}$.  
  Finally, using the former, we obtain our result by applying the $\ifrName{imp}$ rule.

\item[\cW] 
  \[
      \mprset{flushleft}
      \inferrule* [right=\ifrName{impBar}] {
        G  \vdash  \dttnt{m} \,  \preccurlyeq^*_{  \bar{  \dttnt{p}  }  }  \, \dttnt{n'}
        \\\\
            G  \dttsym{;}   \bar{  \dttnt{p_{{\mathrm{1}}}}  }  \, \dttnt{A_{{\mathrm{1}}}}  \mathbin{@}  \dttnt{n_{{\mathrm{1}}}}  \dttsym{,} \, ... \, \dttsym{,}   \bar{  \dttnt{p_{\dttmv{i}}}  }  \, \dttnt{A_{\dttmv{i}}}  \mathbin{@}  \dttnt{n_{\dttmv{i}}}  \vdash   \bar{  \dttnt{p}  }  \, \dttnt{B_{{\mathrm{1}}}}  \mathbin{@}  \dttnt{n'} 
            \\
              G  \dttsym{;}   \bar{  \dttnt{p_{{\mathrm{1}}}}  }  \, \dttnt{A_{{\mathrm{1}}}}  \mathbin{@}  \dttnt{n_{{\mathrm{1}}}}  \dttsym{,} \, ... \, \dttsym{,}   \bar{  \dttnt{p_{\dttmv{i}}}  }  \, \dttnt{A_{\dttmv{i}}}  \mathbin{@}  \dttnt{n_{\dttmv{i}}}  \vdash  \dttnt{p} \, \dttnt{B_{{\mathrm{2}}}}  \mathbin{@}  \dttnt{n'}
      }{G  \dttsym{;}   \bar{  \dttnt{p_{{\mathrm{1}}}}  }  \, \dttnt{A_{{\mathrm{1}}}}  \mathbin{@}  \dttnt{n_{{\mathrm{1}}}}  \dttsym{,} \, ... \, \dttsym{,}   \bar{  \dttnt{p_{\dttmv{i}}}  }  \, \dttnt{A_{\dttmv{i}}}  \mathbin{@}  \dttnt{n_{\dttmv{i}}}  \vdash  \dttnt{p} \, \dttsym{(}   \dttnt{B_{{\mathrm{1}}}}  \ndto{  \bar{  \dttnt{p}  }  }  \dttnt{B_{{\mathrm{2}}}}   \dttsym{)}  \mathbin{@}  \dttnt{m}}
      \leqno{\raise 8 pt\hbox{\textbf{Case}}}
\]
  We can easily derive $G  \vdash  \dttnt{m'} \,  \preccurlyeq^*_{  \bar{  \dttnt{p}  }  }  \, \dttnt{n'}$ as follows:
  \[
      \mprset{flushleft}
      \inferrule* [right=\ifrName{rel\_flip}] {
        \mprset{flushleft}
        \inferrule* [right=\ifrName{rel\_trans}] {
          \mprset{flushleft}
          \inferrule* [right=\ifrName{rel\_flip}] {
            G  \vdash  \dttnt{m} \,  \preccurlyeq^*_{  \bar{  \dttnt{p}  }  }  \, \dttnt{n'}
          }{G  \vdash  \dttnt{n'} \,  \preccurlyeq^*_{ \dttnt{p} }  \, \dttnt{m}}
          \\
            G  \vdash  \dttnt{m} \,  \preccurlyeq^*_{ \dttnt{p} }  \, \dttnt{m'}
        }{G  \vdash  \dttnt{n'} \,  \preccurlyeq^*_{ \dttnt{p} }  \, \dttnt{m'}}
      }{G  \vdash  \dttnt{m'} \,  \preccurlyeq^*_{  \bar{  \dttnt{p}  }  }  \, \dttnt{n'}}
      \leqno{\raise 8 pt\hbox{\textbf{Case}}}
\]
  This case then follows by applying the induction hypothesis twice to both 
  $G  \dttsym{;}   \bar{  \dttnt{p_{{\mathrm{1}}}}  }  \, \dttnt{A_{{\mathrm{1}}}}  \mathbin{@}  \dttnt{n_{{\mathrm{1}}}}  \dttsym{,} \, ... \, \dttsym{,}   \bar{  \dttnt{p_{\dttmv{i}}}  }  \, \dttnt{A_{\dttmv{i}}}  \mathbin{@}  \dttnt{n_{\dttmv{i}}}  \vdash   \bar{  \dttnt{p}  }  \, \dttnt{B_{{\mathrm{1}}}}  \mathbin{@}  \dttnt{n'}$
  and
  $G  \dttsym{;}   \bar{  \dttnt{p_{{\mathrm{1}}}}  }  \, \dttnt{A_{{\mathrm{1}}}}  \mathbin{@}  \dttnt{n_{{\mathrm{1}}}}  \dttsym{,} \, ... \, \dttsym{,}   \bar{  \dttnt{p_{\dttmv{i}}}  }  \, \dttnt{A_{\dttmv{i}}}  \mathbin{@}  \dttnt{n_{\dttmv{i}}}  \vdash  \dttnt{p} \, \dttnt{B_{{\mathrm{2}}}}  \mathbin{@}  \dttnt{n'}$
  using the assumptions
  $G  \vdash  \dttnt{n_{{\mathrm{1}}}} \,  \preccurlyeq^*_{ \dttnt{p_{{\mathrm{1}}}} }  \, \dttnt{n'_{{\mathrm{1}}}}$, \ldots, $G  \vdash  \dttnt{n_{\dttmv{i}}} \,  \preccurlyeq^*_{ \dttnt{p_{\dttmv{i}}} }  \, \dttnt{n'_{\dttmv{i}}}$, 
  and the fact that we know $G  \vdash  \dttnt{n'} \,  \preccurlyeq^*_{ \dttnt{p} }  \, \dttnt{n'}$ and $G  \vdash  \dttnt{n'} \,  \preccurlyeq^*_{  \bar{  \dttnt{p}  }  }  \, \dttnt{n'}$.
  
\item[\cW] 
  \[
      \mprset{flushleft}
      \inferrule* [right=\ifrName{axCut}] {
          \bar{  \dttnt{p_{\dttmv{j}}}  }  \, \dttnt{A_{\dttmv{j}}}  \mathbin{@}  \dttnt{n_{\dttmv{j}}}  \in  \dttsym{(}   \bar{  \dttnt{p_{{\mathrm{1}}}}  }  \, \dttnt{A_{{\mathrm{1}}}}  \mathbin{@}  \dttnt{n_{{\mathrm{1}}}}  \dttsym{,} \, ... \, \dttsym{,}   \bar{  \dttnt{p_{\dttmv{i}}}  }  \, \dttnt{A_{\dttmv{i}}}  \mathbin{@}  \dttnt{n_{\dttmv{i}}}  \dttsym{)}  
        \\
          G  \dttsym{;}   \bar{  \dttnt{p_{{\mathrm{1}}}}  }  \, \dttnt{A_{{\mathrm{1}}}}  \mathbin{@}  \dttnt{n_{{\mathrm{1}}}}  \dttsym{,} \, ... \, \dttsym{,}   \bar{  \dttnt{p_{\dttmv{i}}}  }  \, \dttnt{A_{\dttmv{i}}}  \mathbin{@}  \dttnt{n_{\dttmv{i}}}  \dttsym{,}   \bar{  \dttnt{p}  }  \, \dttnt{B}  \mathbin{@}  \dttnt{m}  \vdash  \dttnt{p_{\dttmv{j}}} \, \dttnt{A_{\dttmv{j}}}  \mathbin{@}  \dttnt{n_{\dttmv{j}}}
      }{G  \dttsym{;}   \bar{  \dttnt{p_{{\mathrm{1}}}}  }  \, \dttnt{A_{{\mathrm{1}}}}  \mathbin{@}  \dttnt{n_{{\mathrm{1}}}}  \dttsym{,} \, ... \, \dttsym{,}   \bar{  \dttnt{p_{\dttmv{i}}}  }  \, \dttnt{A_{\dttmv{i}}}  \mathbin{@}  \dttnt{n_{\dttmv{i}}}  \vdash  \dttnt{p} \, \dttnt{B}  \mathbin{@}  \dttnt{m}}
      \leqno{\raise 8 pt\hbox{\textbf{Case}}}
\]
  We know by assumption that $G  \vdash  \dttnt{n_{{\mathrm{1}}}} \,  \preccurlyeq^*_{ \dttnt{p_{{\mathrm{1}}}} }  \, \dttnt{n'_{{\mathrm{1}}}}$, \ldots, $G  \vdash  \dttnt{n_{\dttmv{i}}} \,  \preccurlyeq^*_{ \dttnt{p_{\dttmv{i}}} }  \, \dttnt{n'_{\dttmv{i}}}$, and $G  \vdash  \dttnt{m} \,  \preccurlyeq^*_{ \dttnt{p} }  \, \dttnt{m'}$.
  In particular, we know $G  \vdash  \dttnt{n_{\dttmv{j}}} \,  \preccurlyeq^*_{ \dttnt{p_{\dttmv{j}}} }  \, \dttnt{n'_{\dttmv{j}}}$.  It is also the case that if 
  $  \bar{  \dttnt{p_{\dttmv{j}}}  }  \, \dttnt{A_{\dttmv{j}}}  \mathbin{@}  \dttnt{n_{\dttmv{j}}}  \in  \dttsym{(}   \bar{  \dttnt{p_{{\mathrm{1}}}}  }  \, \dttnt{A_{{\mathrm{1}}}}  \mathbin{@}  \dttnt{n_{{\mathrm{1}}}}  \dttsym{,} \, ... \, \dttsym{,}   \bar{  \dttnt{p_{\dttmv{i}}}  }  \, \dttnt{A_{\dttmv{i}}}  \mathbin{@}  \dttnt{n_{\dttmv{i}}}  \dttsym{)} $, then 
  $  \bar{  \dttnt{p_{\dttmv{j}}}  }  \, \dttnt{A_{\dttmv{j}}}  \mathbin{@}  \dttnt{n'_{\dttmv{j}}}  \in  \dttsym{(}    \bar{  \dttnt{p_{{\mathrm{1}}}}  }    \dttnt{A_{{\mathrm{1}}}}  @  \dttnt{n'}  , \ldots ,   \bar{  \dttnt{p_{\dttmv{i}}}  }    \dttnt{A_{\dttmv{i}}}  @  \dttnt{n'_{\dttmv{i}}}   \dttsym{)} $.  This case then follows by applying the induction
  hypothesis to $G  \dttsym{;}   \bar{  \dttnt{p_{{\mathrm{1}}}}  }  \, \dttnt{A_{{\mathrm{1}}}}  \mathbin{@}  \dttnt{n_{{\mathrm{1}}}}  \dttsym{,} \, ... \, \dttsym{,}   \bar{  \dttnt{p_{\dttmv{i}}}  }  \, \dttnt{A_{\dttmv{i}}}  \mathbin{@}  \dttnt{n_{\dttmv{i}}}  \dttsym{,}   \bar{  \dttnt{p}  }  \, \dttnt{B}  \mathbin{@}  \dttnt{m}  \vdash  \dttnt{p_{\dttmv{j}}} \, \dttnt{A_{\dttmv{j}}}  \mathbin{@}  \dttnt{n_{\dttmv{j}}}$, to obtain,
  $G  \dttsym{;}    \bar{  \dttnt{p_{{\mathrm{1}}}}  }    \dttnt{A_{{\mathrm{1}}}}  @  \dttnt{n'_{{\mathrm{1}}}}  , \ldots ,   \bar{  \dttnt{p_{\dttmv{i}}}  }    \dttnt{A_{\dttmv{i}}}  @  \dttnt{n'_{\dttmv{i}}}   \dttsym{,}   \bar{  \dttnt{p}  }  \, \dttnt{B}  \mathbin{@}  \dttnt{m'_{{\mathrm{1}}}}  \vdash  \dttnt{p_{\dttmv{j}}} \, \dttnt{A_{\dttmv{j}}}  \mathbin{@}  \dttnt{n'_{\dttmv{j}}}$, followed by applying the
  $\ifrName{axCut}$ rule.

\item[\cW] 
  \[
      \mprset{flushleft}
      \inferrule* [right=\ifrName{axCutBar}] {
          \bar{  \dttnt{p_{\dttmv{j}}}  }  \, \dttnt{A_{\dttmv{j}}}  \mathbin{@}  \dttnt{n_{\dttmv{j}}}  \in  \dttsym{(}   \bar{  \dttnt{p_{{\mathrm{1}}}}  }  \, \dttnt{A_{{\mathrm{1}}}}  \mathbin{@}  \dttnt{n_{{\mathrm{1}}}}  \dttsym{,} \, ... \, \dttsym{,}   \bar{  \dttnt{p_{\dttmv{i}}}  }  \, \dttnt{A_{\dttmv{i}}}  \mathbin{@}  \dttnt{n_{\dttmv{i}}}  \dttsym{)}  
        \\
          G  \dttsym{;}   \bar{  \dttnt{p_{{\mathrm{1}}}}  }  \, \dttnt{A_{{\mathrm{1}}}}  \mathbin{@}  \dttnt{n_{{\mathrm{1}}}}  \dttsym{,} \, ... \, \dttsym{,}   \bar{  \dttnt{p_{\dttmv{i}}}  }  \, \dttnt{A_{\dttmv{i}}}  \mathbin{@}  \dttnt{n_{\dttmv{i}}}  \dttsym{,}   \bar{  \dttnt{p}  }  \, \dttnt{B}  \mathbin{@}  \dttnt{m}  \vdash  \dttnt{p_{\dttmv{j}}} \, \dttnt{A_{\dttmv{j}}}  \mathbin{@}  \dttnt{n_{\dttmv{j}}}
      }{G  \dttsym{;}   \bar{  \dttnt{p_{{\mathrm{1}}}}  }  \, \dttnt{A_{{\mathrm{1}}}}  \mathbin{@}  \dttnt{n_{{\mathrm{1}}}}  \dttsym{,} \, ... \, \dttsym{,}   \bar{  \dttnt{p_{\dttmv{i}}}  }  \, \dttnt{A_{\dttmv{i}}}  \mathbin{@}  \dttnt{n_{\dttmv{i}}}  \vdash  \dttnt{p} \, \dttnt{B}  \mathbin{@}  \dttnt{m}}
      \leqno{\raise 8 pt\hbox{\textbf{Case}}}
\]
  Similar to the previous case.
\end{description}

\subsection{Proof of Lemma~\ref{lemma:containment-l-in-dil}: Containment of L in DIL}
\label{subsec:proof_of_containment_L_in_DIL}
This is a proof by induction on the form of the sequent $ \Gamma  \vdash_{ G }  \Delta $.

\begin{description}
\item[\cW] 
  \[
      \mprset{flushleft}
      \inferrule* [right=\ifrName{refl}] {
         \Gamma  \vdash_{  G  \Lsym{,}  \Lsym{(}  \Lmv{n}  \Lsym{,}  \Lmv{n}  \Lsym{)}  }  \Delta 
      }{ \Gamma  \vdash_{ G }  \Delta }
      \leqno{\raise 8 pt\hbox{\textbf{Case}}}      
\]
  We know by the induction hypothesis that every activation of $ \Gamma  \vdash_{  G  \Lsym{,}  \Lsym{(}  \Lmv{n}  \Lsym{,}  \Lmv{n}  \Lsym{)}  }  \Delta $ is derivable.  
  Suppose that $ \mathsf{D}( G  \dttsym{,}  \dttsym{(}  \dttnt{n}  \dttsym{,}  \dttnt{n}  \dttsym{)} )   \dttsym{;}   \mathsf{D}( \Gamma )^{ \dttsym{+} }   \dttsym{,}  \Gamma'  \vdash  \dttsym{+} \, \dttnt{A}  \mathbin{@}  \dttnt{n}$ is an arbitrary activation, where
  $ \mathsf{D}( \Delta )^{ \dttsym{-} }  \equiv  \mathsf{D}( \Delta_{{\mathrm{1}}} )^{ \dttsym{-} }   \dttsym{,}  \dttsym{-} \, \dttnt{A}  \mathbin{@}  \dttnt{n}  \dttsym{,}   \mathsf{D}( \Delta_{{\mathrm{2}}} )^{ \dttsym{-} } $ and $\Gamma' \equiv  \mathsf{D}( \Delta_{{\mathrm{1}}} )^{ \dttsym{-} }   \dttsym{,}   \mathsf{D}( \Delta_{{\mathrm{2}}} )^{ \dttsym{-} } $.
  This is equivalent to $ \mathsf{D}( G )   \dttsym{,}  \dttnt{n} \,  \preccurlyeq_{ \dttsym{+} }  \, \dttnt{n}  \dttsym{;}   \mathsf{D}( \Gamma )^{ \dttsym{+} }   \dttsym{,}  \Gamma'  \vdash  \dttsym{+} \, \dttnt{A}  \mathbin{@}  \dttnt{n}$, and 
  by the admissible rule for reflexivity (Lemma~\ref{lemma:reflexivity}) we have 
  $ \mathsf{D}( G )   \dttsym{;}   \mathsf{D}( \Gamma )^{ \dttsym{+} }   \dttsym{,}  \Gamma'  \vdash  \dttsym{+} \, \dttnt{A}  \mathbin{@}  \dttnt{n}$.

\item[\cW] 
  \[
      \mprset{flushleft}
      \inferrule* [right=\ifrName{Trans}] {
         \Lmv{n_{{\mathrm{1}}}}   G   \Lmv{n_{{\mathrm{2}}}} 
        \\\\
         \Lmv{n_{{\mathrm{2}}}}   G   \Lmv{n_{{\mathrm{3}}}} 
        \\\\
         \Gamma  \vdash_{  G  \Lsym{,}  \Lsym{(}  \Lmv{n_{{\mathrm{1}}}}  \Lsym{,}  \Lmv{n_{{\mathrm{3}}}}  \Lsym{)}  }  \Delta 
      }{ \Gamma  \vdash_{ G }  \Delta }
      \leqno{\raise 8 pt\hbox{\textbf{Case}}}      
\]
  We know by the induction hypothesis that every activation of $ \Gamma  \vdash_{  G  \Lsym{,}  \Lsym{(}  \Lmv{n_{{\mathrm{1}}}}  \Lsym{,}  \Lmv{n_{{\mathrm{3}}}}  \Lsym{)}  }  \Delta $ is derivable.  
  Suppose that $ \mathsf{D}( G  \dttsym{,}  \dttsym{(}  \dttnt{n_{{\mathrm{1}}}}  \dttsym{,}  \dttnt{n_{{\mathrm{3}}}}  \dttsym{)} )   \dttsym{;}   \mathsf{D}( \Gamma )^{ \dttsym{+} }   \dttsym{,}  \Gamma'  \vdash  \dttsym{+} \, \dttnt{A}  \mathbin{@}  \dttnt{n}$ is an arbitrary activation, where
  $ \mathsf{D}( \Delta )^{ \dttsym{-} }  \equiv  \mathsf{D}( \Delta_{{\mathrm{1}}} )^{ \dttsym{-} }   \dttsym{,}  \dttsym{-} \, \dttnt{A}  \mathbin{@}  \dttnt{n}  \dttsym{,}   \mathsf{D}( \Delta_{{\mathrm{2}}} )^{ \dttsym{-} } $ and $\Gamma' \equiv  \mathsf{D}( \Delta_{{\mathrm{1}}} )^{ \dttsym{-} }   \dttsym{,}   \mathsf{D}( \Delta_{{\mathrm{2}}} )^{ \dttsym{-} } $.  This sequent
  is equivalent to $ \mathsf{D}( G )   \dttsym{,}  \dttnt{n_{{\mathrm{1}}}} \,  \preccurlyeq_{ \dttsym{+} }  \, \dttnt{n_{{\mathrm{3}}}}  \dttsym{;}   \mathsf{D}( \Gamma )^{ \dttsym{+} }   \dttsym{,}  \Gamma'  \vdash  \dttsym{+} \, \dttnt{A}  \mathbin{@}  \dttnt{n}$.  Furthermore, it is clear by definition that
  if $ \Lmv{n_{{\mathrm{1}}}}   G   \Lmv{n_{{\mathrm{2}}}} $ and $ \Lmv{n_{{\mathrm{2}}}}   G   \Lmv{n_{{\mathrm{3}}}} $, then $ \dttnt{n_{{\mathrm{1}}}}    \preccurlyeq_{ \dttsym{+} }    \dttnt{n_{{\mathrm{2}}}}  \in   \mathsf{D}( G )  $ and $ \dttnt{n_{{\mathrm{2}}}}    \preccurlyeq_{ \dttsym{+} }    \dttnt{n_{{\mathrm{3}}}}  \in   \mathsf{D}( G )  $.  Thus,
  by the admissible rule for transitivity (Lemma~\ref{lemma:transitivity}) we have $ \mathsf{D}( G )   \dttsym{;}   \mathsf{D}( \Gamma )^{ \dttsym{+} }   \dttsym{,}  \Gamma'  \vdash  \dttsym{+} \, \dttnt{A}  \mathbin{@}  \dttnt{n}$,
  and we obtain our result.  

\item[\cW] 
  \[
      \mprset{flushleft}
      \inferrule* [right=\ifrName{hyp}] {
        \ 
      }{ \Gamma  \Lsym{,}  \Lmv{n}  \Lsym{:}  \Lnt{A}  \vdash_{ G }  \Lmv{n}  \Lsym{:}  \Lnt{A}  \Lsym{,}  \Delta }
      \leqno{\raise 8 pt\hbox{\textbf{Case}}}      
\]
  It suffices to show that every activation of $ \Gamma  \Lsym{,}  \Lmv{n}  \Lsym{:}  \Lnt{A}  \vdash_{ G }  \Lmv{n}  \Lsym{:}  \Lnt{A}  \Lsym{,}  \Delta $ is derivable.  Clearly,
  $ \mathsf{D}( G )   \dttsym{;}   \mathsf{D}( \Gamma )^{ \dttsym{+} }   \dttsym{,}  \dttsym{+} \,  \mathsf{D}( \dttnt{A} )   \mathbin{@}  \dttnt{n}  \dttsym{,}   \mathsf{D}( \Delta )^{ \dttsym{-} }   \vdash  \dttsym{+} \,  \mathsf{D}( \dttnt{A} )   \mathbin{@}  \dttnt{n}$ is a activation of $ \Gamma  \Lsym{,}  \Lmv{n}  \Lsym{:}  \Lnt{A}  \vdash_{ G }  \Lmv{n}  \Lsym{:}  \Lnt{A}  \Lsym{,}  \Delta $.  In addition,
  it is derivable:
  \[
      \mprset{flushleft}
      \inferrule* [right=Exchange] {
        \mprset{flushleft}
        \inferrule* [right=Ax] {
          \mprset{flushleft}
          \inferrule* [right=Refl] {
            \ 
          }{ \mathsf{D}( G )   \vdash  \dttnt{n} \,  \preccurlyeq^*_{ \dttsym{+} }  \, \dttnt{n}}
        }{ \mathsf{D}( G )   \dttsym{;}   \mathsf{D}( \Gamma )^{ \dttsym{+} }   \dttsym{,}   \mathsf{D}( \Delta )^{ \dttsym{-} }   \dttsym{,}  \dttsym{+} \,  \mathsf{D}( \dttnt{A} )   \mathbin{@}  \dttnt{n}  \vdash  \dttsym{+} \,  \mathsf{D}( \dttnt{A} )   \mathbin{@}  \dttnt{n}}
      }{ \mathsf{D}( G )   \dttsym{;}   \mathsf{D}( \Gamma )^{ \dttsym{+} }   \dttsym{,}  \dttsym{+} \,  \mathsf{D}( \dttnt{A} )   \mathbin{@}  \dttnt{n}  \dttsym{,}   \mathsf{D}( \Delta )^{ \dttsym{-} }   \vdash  \dttsym{+} \,  \mathsf{D}( \dttnt{A} )   \mathbin{@}  \dttnt{n}}
\]
  In the previous derivation we make use of the exchange rule, which
  is admissible by Lemma~\ref{lemma:exchange}.

  Now consider any other activation $ \mathsf{D}( G )   \dttsym{;}  \Gamma'  \vdash  \dttsym{+} \,  \mathsf{D}( \dttnt{B} )   \mathbin{@}  \dttnt{n'}$.  It must be the case that 
  $\Gamma' =     \mathsf{D}( \Gamma )^{ \dttsym{+} }   \dttsym{,}  \dttsym{+} \,  \mathsf{D}( \dttnt{A} )   \mathbin{@}  \dttnt{n}   \dttsym{,}   \mathsf{D}( \Delta_{{\mathrm{1}}} )^{ \dttsym{-} }    \dttsym{,}  \dttsym{-} \,  \mathsf{D}( \dttnt{A} )   \mathbin{@}  \dttnt{n}   \dttsym{,}   \mathsf{D}( \Delta_{{\mathrm{2}}} )^{ \dttsym{-} } $ for some $\Delta_{{\mathrm{1}}}$ and $\Delta_{{\mathrm{2}}}$.
  This sequent is then derivable as follows:
  \begin{center}
    \scriptsize
    \begin{math}
      \mprset{flushleft}
      \inferrule* [right={\scriptsize Left-to-Right}] {
        \mprset{flushleft}
        \inferrule* [right={\scriptsize Exchange}] {
          \mprset{flushleft}
          \inferrule* [right={\scriptsize Ax}] {
            \mprset{flushleft}
            \inferrule* [right={\scriptsize Refl}] {
              \ 
            }{ \mathsf{D}( G )   \vdash  \dttnt{n} \,  \preccurlyeq^*_{ \dttsym{+} }  \, \dttnt{n}}
          }{ \mathsf{D}( G )   \dttsym{;}   \mathsf{D}( \Gamma )^{ \dttsym{+} }   \dttsym{,}   \mathsf{D}( \Delta_{{\mathrm{1}}} )^{ \dttsym{-} }   \dttsym{,}   \mathsf{D}( \Delta_{{\mathrm{2}}} )^{ \dttsym{-} }   \dttsym{,}  \dttsym{-} \,  \mathsf{D}( \dttnt{B} )   \mathbin{@}  \dttnt{n'}  \dttsym{,}  \dttsym{+} \,  \mathsf{D}( \dttnt{A} )   \mathbin{@}  \dttnt{n}  \vdash  \dttsym{+} \,  \mathsf{D}( \dttnt{A} )   \mathbin{@}  \dttnt{n} }
        }{ \mathsf{D}( G )   \dttsym{;}      \mathsf{D}( \Gamma )^{ \dttsym{+} }   \dttsym{,}  \dttsym{+} \, \dttnt{A}  \mathbin{@}  \dttnt{n}   \dttsym{,}   \mathsf{D}( \Delta_{{\mathrm{1}}} )^{ \dttsym{-} }    \dttsym{,}   \mathsf{D}( \Delta_{{\mathrm{2}}} )^{ \dttsym{-} }    \dttsym{,}  \dttsym{-} \, \dttnt{B}  \mathbin{@}  \dttnt{n'}  \vdash  \dttsym{+} \,  \mathsf{D}( \dttnt{A} )   \mathbin{@}  \dttnt{n}}
      }{ \mathsf{D}( G )   \dttsym{;}      \mathsf{D}( \Gamma )^{ \dttsym{+} }   \dttsym{,}  \dttsym{+} \,  \mathsf{D}( \dttnt{A} )   \mathbin{@}  \dttnt{n}   \dttsym{,}   \mathsf{D}( \Delta_{{\mathrm{1}}} )^{ \dttsym{-} }    \dttsym{,}  \dttsym{-} \,  \mathsf{D}( \dttnt{A} )   \mathbin{@}  \dttnt{n}   \dttsym{,}   \mathsf{D}( \Delta_{{\mathrm{2}}} )^{ \dttsym{-} }   \vdash  \dttsym{+} \,  \mathsf{D}( \dttnt{B} )   \mathbin{@}  \dttnt{n'}}
    \end{math}
  \end{center}  
  Thus, we obtain our result.

\item[\cW] 
  \[
      \mprset{flushleft}
      \inferrule* [right=\ifrName{monL}] {
         \Lmv{n_{{\mathrm{1}}}}   G   \Lmv{n_{{\mathrm{2}}}} 
        \\\\
         \Gamma  \Lsym{,}  \Lmv{n_{{\mathrm{1}}}}  \Lsym{:}  \Lnt{A}  \Lsym{,}  \Lmv{n_{{\mathrm{2}}}}  \Lsym{:}  \Lnt{A}  \vdash_{ G }  \Delta 
      }{ \Gamma  \Lsym{,}  \Lmv{n_{{\mathrm{1}}}}  \Lsym{:}  \Lnt{A}  \vdash_{ G }  \Delta }
      \leqno{\raise 8 pt\hbox{\textbf{Case}}}      
\]
  Certainly, if $ \Lmv{n_{{\mathrm{1}}}}   G   \Lmv{n_{{\mathrm{2}}}} $, then $\dttnt{n_{{\mathrm{1}}}} \,  \preccurlyeq_{ \dttsym{+} }  \, \dttnt{n_{{\mathrm{2}}}} \in  \mathsf{D}( G ) $.
  We know by the induction hypothesis that all activations of $ \Gamma  \Lsym{,}  \Lmv{n_{{\mathrm{1}}}}  \Lsym{:}  \Lnt{A}  \Lsym{,}  \Lmv{n_{{\mathrm{2}}}}  \Lsym{:}  \Lnt{A}  \vdash_{ G }  \Delta $ are
  derivable.  Suppose $ \mathsf{D}( G )   \dttsym{;}  \Gamma'  \vdash  \dttsym{+} \, \dttnt{B}  \mathbin{@}  \dttnt{n}$ is an arbitrary activation.  Then it must be the case
  that $\Gamma' \equiv  \mathsf{D}( \Gamma )^{ \dttsym{+} }   \dttsym{,}  \dttsym{+} \,  \mathsf{D}( \dttnt{A} )   \mathbin{@}  \dttnt{n_{{\mathrm{1}}}}  \dttsym{,}  \dttsym{+} \,  \mathsf{D}( \dttnt{A} )   \mathbin{@}  \dttnt{n_{{\mathrm{2}}}}  \dttsym{,}   \mathsf{D}( \Delta_{{\mathrm{1}}} )^{ \dttsym{-} }   \dttsym{,}   \mathsf{D}( \Delta_{{\mathrm{2}}} )^{ \dttsym{-} } $, where $ \mathsf{D}( \Delta )^{ \dttsym{-} }  \equiv  \mathsf{D}( \Delta_{{\mathrm{1}}} )^{ \dttsym{-} }   \dttsym{,}  \dttsym{-} \, \dttnt{B}  \mathbin{@}  \dttnt{n}  \dttsym{,}   \mathsf{D}( \Delta_{{\mathrm{2}}} )^{ \dttsym{-} } $.
  Now we apply the monoL admissible rule (Lemma~\ref{corollary:monol}) to obtain 
  $ \mathsf{D}( G )   \dttsym{;}   \mathsf{D}( \Gamma )^{ \dttsym{+} }   \dttsym{,}  \dttsym{+} \,  \mathsf{D}( \dttnt{A} )   \mathbin{@}  \dttnt{n_{{\mathrm{1}}}}  \dttsym{,}   \mathsf{D}( \Delta_{{\mathrm{1}}} )^{ \dttsym{-} }   \dttsym{,}   \mathsf{D}( \Delta_{{\mathrm{2}}} )^{ \dttsym{-} }   \vdash  \dttsym{+} \, \dttnt{B}  \mathbin{@}  \dttnt{n}$, which is an arbitrary activation of 
  $ \Gamma  \Lsym{,}  \Lmv{n_{{\mathrm{1}}}}  \Lsym{:}  \Lnt{A}  \vdash_{ G }  \Delta $.

\item[\cW] 
  \[
      \mprset{flushleft}
      \inferrule* [right=\ifrName{monR}] {
         \Lmv{n_{{\mathrm{1}}}}   G   \Lmv{n_{{\mathrm{2}}}} 
        \\\\
         \Gamma  \vdash_{ G }  \Lmv{n_{{\mathrm{1}}}}  \Lsym{:}  \Lnt{A}  \Lsym{,}  \Lmv{n_{{\mathrm{2}}}}  \Lsym{:}  \Lnt{A}  \Lsym{,}  \Delta 
      }{ \Gamma  \vdash_{ G }  \Lmv{n_{{\mathrm{2}}}}  \Lsym{:}  \Lnt{A}  \Lsym{,}  \Delta }
      \leqno{\raise 8 pt\hbox{\textbf{Case}}}      
\]    
  If $ \Lmv{n_{{\mathrm{1}}}}   G   \Lmv{n_{{\mathrm{2}}}} $, then $\dttnt{n_{{\mathrm{1}}}} \,  \preccurlyeq_{ \dttsym{+} }  \, \dttnt{n_{{\mathrm{2}}}} \in  \mathsf{D}( G ) $.  We know by the induction hypothesis that
  all activations of $ \Gamma  \vdash_{ G }  \Lmv{n_{{\mathrm{1}}}}  \Lsym{:}  \Lnt{A}  \Lsym{,}  \Lmv{n_{{\mathrm{2}}}}  \Lsym{:}  \Lnt{A}  \Lsym{,}  \Delta $ are derivable.  In particular, the activation
  (modulo exchange (Lemma~\ref{lemma:exchange})) $ \mathsf{D}( G )   \dttsym{;}   \mathsf{D}( \Gamma )^{ \dttsym{+} }   \dttsym{,}   \mathsf{D}( \Delta )^{ \dttsym{-} }   \dttsym{,}  \dttsym{-} \,  \mathsf{D}( \dttnt{A} )   \mathbin{@}  \dttnt{n_{{\mathrm{1}}}}  \vdash  \dttsym{+} \,  \mathsf{D}( \dttnt{A} )   \mathbin{@}  \dttnt{n_{{\mathrm{2}}}}$ is
  derivable.  It suffices to show that $ \mathsf{D}( G )   \dttsym{;}   \mathsf{D}( \Gamma )^{ \dttsym{+} }   \dttsym{,}   \mathsf{D}( \Delta )^{ \dttsym{-} }   \vdash  \dttsym{+} \,  \mathsf{D}( \dttnt{A} )   \mathbin{@}  \dttnt{n_{{\mathrm{2}}}}$. This follows from
  the monoR admissible rule (Lemma~\ref{corollary:monor}).  Finally, any other activation of $ \Gamma  \vdash_{ G }  \Lmv{n_{{\mathrm{2}}}}  \Lsym{:}  \Lnt{A}  \Lsym{,}  \Delta $
  can be activated into $ \mathsf{D}( G )   \dttsym{;}   \mathsf{D}( \Gamma )^{ \dttsym{+} }   \dttsym{,}   \mathsf{D}( \Delta )^{ \dttsym{-} }   \vdash  \dttsym{+} \,  \mathsf{D}( \dttnt{A} )   \mathbin{@}  \dttnt{n_{{\mathrm{2}}}}$ (Lemma~\ref{lemma:refocus}).  Thus,
  we obtain our result.
  
\item[\cW] 
  \[
      \mprset{flushleft}
      \inferrule* [right=\ifrName{trueL}] {
         \Gamma  \vdash_{ G }  \Delta 
      }{ \Gamma  \Lsym{,}  \Lmv{n'}  \Lsym{:}   \top   \vdash_{ G }  \Delta }
      \leqno{\raise 8 pt\hbox{\textbf{Case}}}      
\]
  We know by the induction hypothesis that all activations of $ \Gamma  \vdash_{ G }  \Delta $
  are derivable.  Suppose $ \mathsf{D}( G )   \dttsym{;}  \Gamma'  \vdash  \dttsym{+} \,  \mathsf{D}( \dttnt{A} )   \mathbin{@}  \dttnt{n}$ is an arbitrary
  activation of $ \Gamma  \vdash_{ G }  \Delta $.  Then it must be the case that $\Gamma' =  \mathsf{D}( \Gamma )^{ \dttsym{+} }   \dttsym{,}   \mathsf{D}( \Delta_{{\mathrm{1}}} )^{ \dttsym{-} }   \dttsym{,}   \mathsf{D}( \Delta_{{\mathrm{2}}} )^{ \dttsym{-} } $, where
  $ \mathsf{D}( \Delta )^{ \dttsym{-} }  \equiv  \mathsf{D}( \Delta_{{\mathrm{1}}} )^{ \dttsym{-} }   \dttsym{,}  \dttsym{-} \,  \mathsf{D}( \dttnt{A} )   \mathbin{@}  \dttnt{n}  \dttsym{,}   \mathsf{D}( \Delta_{{\mathrm{2}}} )^{ \dttsym{-} } $.  Now by weakening (Lemma~\ref{lemma:weakening}) we know 
  $ \mathsf{D}( G )   \dttsym{;}  \Gamma'  \dttsym{,}  \dttsym{+} \,  \langle  \dttsym{+} \rangle   \mathbin{@}  \dttnt{n'}  \vdash  \dttsym{+} \,  \mathsf{D}( \dttnt{A} )   \mathbin{@}  \dttnt{n}$, and by exchange (Lemma~\ref{lemma:exchange}) \\
  $ \mathsf{D}( G )   \dttsym{;}   \mathsf{D}( \Gamma )^{ \dttsym{+} }   \dttsym{,}  \dttsym{+} \,  \langle  \dttsym{+} \rangle   \mathbin{@}  \dttnt{n'}  \dttsym{,}   \mathsf{D}( \Delta_{{\mathrm{1}}} )^{ \dttsym{-} }   \dttsym{,}   \mathsf{D}( \Delta_{{\mathrm{2}}} )^{ \dttsym{-} }   \vdash  \dttsym{+} \,  \mathsf{D}( \dttnt{A} )   \mathbin{@}  \dttnt{n}$, which is exactly an arbitrary activation of
  $ \Gamma  \Lsym{,}  \Lmv{n'}  \Lsym{:}   \top   \vdash_{ G }  \Delta $.

\item[\cW] 
  \[
      \mprset{flushleft}
      \inferrule* [right=\ifrName{trueR}] {
        \ 
      }{ \Gamma  \vdash_{ G }  \Lmv{n}  \Lsym{:}   \top   \Lsym{,}  \Delta }
      \leqno{\raise 8 pt\hbox{\textbf{Case}}}      
\]
  It suffices to show that every activation of $ \Gamma  \vdash_{ G }  \Lmv{n}  \Lsym{:}   \top   \Lsym{,}  \Delta $ is derivable.
  Consider the activation $ \mathsf{D}( G )   \dttsym{;}   \mathsf{D}( \Gamma )^{ \dttsym{+} }   \dttsym{,}   \mathsf{D}( \Delta )^{ \dttsym{-} }   \vdash  \dttsym{+} \,  \mathsf{D}(  \top  )   \mathbin{@}  \dttnt{n}$.  This is easily derivable
  by applying the \ifrName{unit} rule.  Any other activation of
  $ \Gamma  \vdash_{ G }  \Lmv{n}  \Lsym{:}   \top   \Lsym{,}  \Delta $ is derivable, because $ \mathsf{D}( G )   \dttsym{;}   \mathsf{D}( \Gamma )^{ \dttsym{+} }   \dttsym{,}   \mathsf{D}( \Delta )^{ \dttsym{-} }   \vdash  \dttsym{+} \,  \mathsf{D}(  \top  )   \mathbin{@}  \dttnt{n}$ can be activated by
  Lemma~\ref{lemma:refocus}.

\item[\cW] 
  \[
      \mprset{flushleft}
      \inferrule* [right=\ifrName{falseL}] {
        \ 
      }{ \Gamma  \Lsym{,}  \Lmv{n}  \Lsym{:}   \perp   \vdash_{ G }  \Delta }
      \leqno{\raise 8 pt\hbox{\textbf{Case}}}      
\]
  Suppose $ \mathsf{D}( G )   \dttsym{;}   \mathsf{D}( \Gamma )^{ \dttsym{+} }   \dttsym{,}  \dttsym{+} \,  \mathsf{D}(  \perp  )   \mathbin{@}  \dttnt{n}  \dttsym{,}   \mathsf{D}( \Delta_{{\mathrm{1}}} )^{ \dttsym{-} }   \dttsym{,}   \mathsf{D}( \Delta_{{\mathrm{2}}} )^{ \dttsym{-} }   \vdash  \dttsym{+} \,  \mathsf{D}( \dttnt{A} )   \mathbin{@}  \dttnt{n'}$ is an arbitrary activation of 
  $ \Gamma  \Lsym{,}  \Lmv{n}  \Lsym{:}   \perp   \vdash_{ G }  \Delta $, where $ \mathsf{D}( \Delta )^{ \dttsym{-} }  \equiv  \mathsf{D}( \Delta_{{\mathrm{1}}} )^{ \dttsym{-} }   \dttsym{,}  \dttsym{-} \,  \mathsf{D}( \dttnt{A} )   \mathbin{@}  \dttnt{n'}  \dttsym{,}   \mathsf{D}( \Delta_{{\mathrm{2}}} )^{ \dttsym{-} } $.  
  We can easily see that by definition $ \mathsf{D}( G )   \dttsym{;}   \mathsf{D}( \Gamma )^{ \dttsym{+} }   \dttsym{,}  \dttsym{+} \,  \mathsf{D}(  \perp  )   \mathbin{@}  \dttnt{n}  \dttsym{,}   \mathsf{D}( \Delta_{{\mathrm{1}}} )^{ \dttsym{-} }   \dttsym{,}   \mathsf{D}( \Delta_{{\mathrm{2}}} )^{ \dttsym{-} }   \vdash  \dttsym{+} \,  \mathsf{D}( \dttnt{A} )   \mathbin{@}  \dttnt{n'}$
  is equivalent to $ \mathsf{D}( G )   \dttsym{;}   \mathsf{D}( \Gamma )^{ \dttsym{+} }   \dttsym{,}  \dttsym{+} \,  \langle  \dttsym{-} \rangle   \mathbin{@}  \dttnt{n}  \dttsym{,}   \mathsf{D}( \Delta_{{\mathrm{1}}} )^{ \dttsym{-} }   \dttsym{,}   \mathsf{D}( \Delta_{{\mathrm{2}}} )^{ \dttsym{-} }   \vdash  \dttsym{+} \,  \mathsf{D}( \dttnt{A} )   \mathbin{@}  \dttnt{n'}$. We can derive the latter as follows:
  \begin{center}
    \footnotesize
    \begin{math}
      \mprset{flushleft}
      \inferrule* [right={\footnotesize \ifrName{axCutBar}}] {
         \dttsym{+} \,  \langle  \dttsym{-} \rangle   \mathbin{@}  \dttnt{n}  \in  \Gamma'  \dttsym{,}  \dttsym{-} \,  \mathsf{D}( \dttnt{A} )   \mathbin{@}  \dttnt{n'} 
        \\
        \mprset{flushleft}
        \inferrule* [right=\footnotesize \ifrName{unit}] {
          \ 
        }{ \mathsf{D}( G )   \dttsym{;}  \Gamma'  \dttsym{,}  \dttsym{-} \,  \mathsf{D}( \dttnt{A} )   \mathbin{@}  \dttnt{n'}  \vdash  \dttsym{-} \,  \langle  \dttsym{-} \rangle   \mathbin{@}  \dttnt{n}}
      }{ \mathsf{D}( G )   \dttsym{;}   \mathsf{D}( \Gamma )^{ \dttsym{+} }   \dttsym{,}  \dttsym{+} \,  \langle  \dttsym{-} \rangle   \mathbin{@}  \dttnt{n}  \dttsym{,}   \mathsf{D}( \Delta_{{\mathrm{1}}} )^{ \dttsym{-} }   \dttsym{,}   \mathsf{D}( \Delta_{{\mathrm{2}}} )^{ \dttsym{-} }   \vdash  \dttsym{+} \,  \mathsf{D}( \dttnt{A} )   \mathbin{@}  \dttnt{n'}}
    \end{math}
  \end{center}
  In the previous derivation $\Gamma' \equiv  \mathsf{D}( \Gamma )^{ \dttsym{+} }   \dttsym{,}  \dttsym{+} \,  \langle  \dttsym{-} \rangle   \mathbin{@}  \dttnt{n}  \dttsym{,}   \mathsf{D}( \Delta_{{\mathrm{1}}} )^{ \dttsym{-} }   \dttsym{,}   \mathsf{D}( \Delta_{{\mathrm{2}}} )^{ \dttsym{-} } $.  Thus, any activation of $ \Gamma  \Lsym{,}  \Lmv{n}  \Lsym{:}   \perp   \vdash_{ G }  \Delta $ is  derivable.

\item[\cW] 
  \[
      \mprset{flushleft}
      \inferrule* [right=\ifrName{falseR}] {
         \Gamma  \vdash_{ G }  \Delta 
      }{ \Gamma  \vdash_{ G }  \Lmv{n'}  \Lsym{:}   \perp   \Lsym{,}  \Delta }
      \leqno{\raise 8 pt\hbox{\textbf{Case}}}      
\]
  We know by the induction hypothesis that all activations of $ \Gamma  \vdash_{ G }  \Delta $
  are derivable.  Suppose $ \mathsf{D}( G )   \dttsym{;}  \Gamma'  \vdash  \dttsym{+} \,  \mathsf{D}( \dttnt{A} )   \mathbin{@}  \dttnt{n}$ is an arbitrary activation of 
  $ \Gamma  \vdash_{ G }  \Delta $.  Then it must be the case that $\Gamma' =  \mathsf{D}( \Gamma )^{ \dttsym{+} }   \dttsym{,}   \mathsf{D}( \Delta )^{ \dttsym{-} } $.
  Now by weakening (Lemma~\ref{lemma:weakening}) we know $ \mathsf{D}( G )   \dttsym{;}  \Gamma'  \dttsym{,}  \dttsym{-} \,  \langle  \dttsym{-} \rangle   \mathbin{@}  \dttnt{n'}  \vdash  \dttsym{+} \,  \mathsf{D}( \dttnt{A} )   \mathbin{@}  \dttnt{n}$,
  and by the left-to-right lemma (Lemma~\ref{lemma:refocus}) $ \mathsf{D}( G )   \dttsym{;}  \Gamma'  \dttsym{,}  \dttsym{-} \,  \mathsf{D}( \dttnt{A} )   \mathbin{@}  \dttnt{n}  \vdash  \dttsym{+} \,  \langle  \dttsym{-} \rangle   \mathbin{@}  \dttnt{n'}$, which --
  modulo exchange -- is equivalent to \\
  $ \mathsf{D}( G )   \dttsym{;}   \mathsf{D}( \Gamma )^{ \dttsym{+} }   \dttsym{,}   \mathsf{D}( \Delta )^{ \dttsym{-} }   \vdash  \dttsym{+} \,  \mathsf{D}(  \perp  )   \mathbin{@}  \dttnt{n'}$.  Thus, we obtain
  our result.

\item[\cW] 
  \[
      \mprset{flushleft}
      \inferrule* [right=\ifrName{andL}] {
         \Gamma  \Lsym{,}  \Lmv{n}  \Lsym{:}  \Lnt{T_{{\mathrm{1}}}}  \Lsym{,}  \Lmv{n}  \Lsym{:}  \Lnt{T_{{\mathrm{2}}}}  \vdash_{ G }  \Delta 
      }{ \Gamma  \Lsym{,}  \Lmv{n}  \Lsym{:}   \Lnt{T_{{\mathrm{1}}}}  \land  \Lnt{T_{{\mathrm{2}}}}   \vdash_{ G }  \Delta }
      \leqno{\raise 8 pt\hbox{\textbf{Case}}}      
\]
  We know by the induction hypothesis that all activations of $ \Gamma  \Lsym{,}  \Lmv{n}  \Lsym{:}  \Lnt{T_{{\mathrm{1}}}}  \Lsym{,}  \Lmv{n}  \Lsym{:}  \Lnt{T_{{\mathrm{2}}}}  \vdash_{ G }  \Delta $
  are derivable.  In particular, we know the following:
  \[
     \mathsf{D}( G )   \dttsym{;}   \mathsf{D}( \Gamma )^{ \dttsym{+} }   \dttsym{,}  \dttsym{+} \,  \mathsf{D}( \dttnt{T_{{\mathrm{1}}}} )   \mathbin{@}  \dttnt{n}  \dttsym{,}  \dttsym{+} \,  \mathsf{D}( \dttnt{T_{{\mathrm{2}}}} )   \mathbin{@}  \dttnt{n}  \dttsym{,}   \mathsf{D}( \Delta_{{\mathrm{1}}} )^{ \dttsym{-} }   \dttsym{,}   \mathsf{D}( \Delta_{{\mathrm{2}}} )^{ \dttsym{-} }   \vdash  \dttsym{+} \,  \mathsf{D}( \dttnt{A} )   \mathbin{@}  \dttnt{n'}
  \]
  where
  $ \mathsf{D}( \Delta )^{ \dttsym{-} }  =  \mathsf{D}( \Delta_{{\mathrm{1}}} )^{ \dttsym{-} }   \dttsym{,}  \dttsym{-} \,  \mathsf{D}( \dttnt{A} )   \mathbin{@}  \dttnt{n'}  \dttsym{,}   \mathsf{D}( \Delta_{{\mathrm{2}}} )^{ \dttsym{-} } $. Using exchange we know \\
  $ \mathsf{D}( G )   \dttsym{;}   \mathsf{D}( \Gamma )^{ \dttsym{+} }   \dttsym{,}   \mathsf{D}( \Delta_{{\mathrm{1}}} )^{ \dttsym{-} }   \dttsym{,}   \mathsf{D}( \Delta_{{\mathrm{2}}} )^{ \dttsym{-} }   \dttsym{,}  \dttsym{+} \,  \mathsf{D}( \dttnt{T_{{\mathrm{1}}}} )   \mathbin{@}  \dttnt{n}  \dttsym{,}  \dttsym{+} \,  \mathsf{D}( \dttnt{T_{{\mathrm{2}}}} )   \mathbin{@}  \dttnt{n}  \vdash  \dttsym{+} \,  \mathsf{D}( \dttnt{A} )   \mathbin{@}  \dttnt{n'}$, and
  by the left-to-right lemma $ \mathsf{D}( G )   \dttsym{;}   \mathsf{D}( \Gamma )^{ \dttsym{+} }   \dttsym{,}   \mathsf{D}( \Delta_{{\mathrm{1}}} )^{ \dttsym{-} }   \dttsym{,}   \mathsf{D}( \Delta_{{\mathrm{2}}} )^{ \dttsym{-} }   \dttsym{,}  \dttsym{+} \,  \mathsf{D}( \dttnt{T_{{\mathrm{1}}}} )   \mathbin{@}  \dttnt{n}  \dttsym{,}  \dttsym{-} \,  \mathsf{D}( \dttnt{A} )   \mathbin{@}  \dttnt{n'}  \vdash  \dttsym{-} \,  \mathsf{D}( \dttnt{T_{{\mathrm{2}}}} )   \mathbin{@}  \dttnt{n}$, and finally by one more application
  of exchange \\ $ \mathsf{D}( G )   \dttsym{;}   \mathsf{D}( \Gamma )^{ \dttsym{+} }   \dttsym{,}   \mathsf{D}( \Delta_{{\mathrm{1}}} )^{ \dttsym{-} }   \dttsym{,}   \mathsf{D}( \Delta_{{\mathrm{2}}} )^{ \dttsym{-} }   \dttsym{,}  \dttsym{-} \,  \mathsf{D}( \dttnt{A} )   \mathbin{@}  \dttnt{n'}  \dttsym{,}  \dttsym{+} \,  \mathsf{D}( \dttnt{T_{{\mathrm{1}}}} )   \mathbin{@}  \dttnt{n}  \vdash  \dttsym{-} \,  \mathsf{D}( \dttnt{T_{{\mathrm{2}}}} )   \mathbin{@}  \dttnt{n}$.  At this point we know 
  $ \mathsf{D}( G )   \dttsym{;}   \mathsf{D}( \Gamma )^{ \dttsym{+} }   \dttsym{,}   \mathsf{D}( \Delta_{{\mathrm{1}}} )^{ \dttsym{-} }   \dttsym{,}   \mathsf{D}( \Delta_{{\mathrm{2}}} )^{ \dttsym{-} }   \dttsym{,}  \dttsym{-} \,  \mathsf{D}( \dttnt{A} )   \mathbin{@}  \dttnt{n'}  \vdash  \dttsym{-} \,   \mathsf{D}( \dttnt{T_{{\mathrm{1}}}} )   \ndwedge{ \dttsym{+} }   \mathsf{D}( \dttnt{T_{{\mathrm{2}}}} )    \mathbin{@}  \dttnt{n}$ by using the admissible $\ifrName{andL}$
  rule (Lemma~\ref{lemma:andl}).
  Now using left-to-right we know the following:
  \[
     \mathsf{D}( G )   \dttsym{;}   \mathsf{D}( \Gamma )^{ \dttsym{+} }   \dttsym{,}   \mathsf{D}( \Delta_{{\mathrm{1}}} )^{ \dttsym{-} }   \dttsym{,}   \mathsf{D}( \Delta_{{\mathrm{2}}} )^{ \dttsym{-} }   \dttsym{,}  \dttsym{+} \,   \mathsf{D}( \dttnt{T_{{\mathrm{1}}}} )   \ndwedge{ \dttsym{+} }   \mathsf{D}( \dttnt{T_{{\mathrm{2}}}} )    \mathbin{@}  \dttnt{n}  \vdash  \dttsym{+} \,  \mathsf{D}( \dttnt{A} )   \mathbin{@}  \dttnt{n'}
  \]
  is derivable. Lastly, by exchange 
  $ \mathsf{D}( G )   \dttsym{;}   \mathsf{D}( \Gamma )^{ \dttsym{+} }   \dttsym{,}  \dttsym{+} \,   \mathsf{D}( \dttnt{T_{{\mathrm{1}}}} )   \ndwedge{ \dttsym{+} }   \mathsf{D}( \dttnt{T_{{\mathrm{2}}}} )    \mathbin{@}  \dttnt{n}  \dttsym{,}   \mathsf{D}( \Delta_{{\mathrm{1}}} )^{ \dttsym{-} }   \dttsym{,}   \mathsf{D}( \Delta_{{\mathrm{2}}} )^{ \dttsym{-} }   \vdash  \dttsym{+} \,  \mathsf{D}( \dttnt{A} )   \mathbin{@}  \dttnt{n'}$ is derivable, which is clearly an arbitrary
  activation of $ \Gamma  \Lsym{,}  \Lmv{n}  \Lsym{:}   \Lnt{T_{{\mathrm{1}}}}  \land  \Lnt{T_{{\mathrm{2}}}}   \vdash_{ G }  \Delta $.

\item[\cW] 
  \[
      \mprset{flushleft}
      \inferrule* [right=\ifrName{andR}] {
         \Gamma  \vdash_{ G }  \Lmv{n}  \Lsym{:}  \Lnt{A}  \Lsym{,}  \Delta 
        \\\\
         \Gamma  \vdash_{ G }  \Lmv{n}  \Lsym{:}  \Lnt{B}  \Lsym{,}  \Delta 
      }{ \Gamma  \vdash_{ G }  \Lmv{n}  \Lsym{:}   \Lnt{A}  \land  \Lnt{B}   \Lsym{,}  \Delta }
      \leqno{\raise 8 pt\hbox{\textbf{Case}}}      
\]
  We know by the induction hypothesis that all activations of $ \Gamma
  \vdash_{ G }  \Lmv{n}  \Lsym{:}  \Lnt{A}  \Lsym{,}  \Delta $ as well as
  $ \Gamma  \vdash_{ G }  \Lmv{n}  \Lsym{:}  \Lnt{B}  \Lsym{,}  \Delta $ are derivable.  In particular, $ \mathsf{D}( G )   \dttsym{;}   \mathsf{D}( \Gamma )^{ \dttsym{+} }   \dttsym{,}   \mathsf{D}( \Delta )^{ \dttsym{-} }   \vdash  \dttsym{+} \,  \mathsf{D}( \dttnt{A} )   \mathbin{@}  \dttnt{n}$
  and $ \mathsf{D}( G )   \dttsym{;}   \mathsf{D}( \Gamma )^{ \dttsym{+} }   \dttsym{,}   \mathsf{D}( \Delta )^{ \dttsym{-} }   \vdash  \dttsym{+} \,  \mathsf{D}( \dttnt{B} )   \mathbin{@}  \dttnt{n}$ are derivable. Now by applying the $\ifrName{and}$ rule
  we obtain $ \mathsf{D}( G )   \dttsym{;}   \mathsf{D}( \Gamma )^{ \dttsym{+} }   \dttsym{,}   \mathsf{D}( \Delta )^{ \dttsym{-} }   \vdash  \dttsym{+} \,   \mathsf{D}( \dttnt{A} )   \ndwedge{ \dttsym{+} }   \mathsf{D}( \dttnt{B} )    \mathbin{@}  \dttnt{n}$, which is a particular activation of 
  $ \Gamma  \vdash_{ G }  \Lmv{n}  \Lsym{:}   \Lnt{A}  \land  \Lnt{B}   \Lsym{,}  \Delta $. Finally, consider any other activation, then that sequent implies 
  $ \mathsf{D}( G )   \dttsym{;}   \mathsf{D}( \Gamma )^{ \dttsym{+} }   \dttsym{,}   \mathsf{D}( \Delta )^{ \dttsym{-} }   \vdash  \dttsym{+} \,   \mathsf{D}( \dttnt{A} )   \ndwedge{ \dttsym{+} }   \mathsf{D}( \dttnt{B} )    \mathbin{@}  \dttnt{n}$ is derivable using Lemma~\ref{lemma:refocus}.  
  Thus, we obtain our result.
\newpage

\item[\cW] 
  \[
      \mprset{flushleft}
      \inferrule* [right=\ifrName{disjL}] {
         \Gamma  \Lsym{,}  \Lmv{n}  \Lsym{:}  \Lnt{A}  \vdash_{ G }  \Delta 
        \\\\
         \Gamma  \Lsym{,}  \Lmv{n}  \Lsym{:}  \Lnt{B}  \vdash_{ G }  \Delta 
      }{ \Gamma  \Lsym{,}  \Lmv{n}  \Lsym{:}   \Lnt{A}  \lor  \Lnt{B}   \vdash_{ G }  \Delta }
      \leqno{\raise 8 pt\hbox{\textbf{Case}}}      
\]
  We know by the induction hypothesis that all activations of $ \Gamma  \Lsym{,}  \Lmv{n}  \Lsym{:}  \Lnt{A}  \vdash_{ G }  \Delta $ and
  $ \Gamma  \Lsym{,}  \Lmv{n}  \Lsym{:}  \Lnt{B}  \vdash_{ G }  \Delta $ are derivable.  So suppose
  \[  \mathsf{D}( G )   \dttsym{;}   \mathsf{D}( \Gamma )^{ \dttsym{+} }   \dttsym{,}  \dttsym{+} \,  \mathsf{D}( \dttnt{A} )   \mathbin{@}  \dttnt{n}  \dttsym{,}   \mathsf{D}( \Delta' )^{ \dttsym{-} }   \vdash  \dttsym{+} \,  \mathsf{D}( \dttnt{C} )   \mathbin{@}  \dttnt{n'} \] and
  \[  \mathsf{D}( G )   \dttsym{;}   \mathsf{D}( \Gamma )^{ \dttsym{+} }   \dttsym{,}  \dttsym{+} \,  \mathsf{D}( \dttnt{B} )   \mathbin{@}  \dttnt{n}  \dttsym{,}   \mathsf{D}( \Delta' )^{ \dttsym{-} }   \vdash  \dttsym{+} \,  \mathsf{D}( \dttnt{E} )   \mathbin{@}  \dttnt{n''} \]
  are particular activations, 
  where
  \[  \mathsf{D}( \Delta )^{ \dttsym{-} }  \equiv  \mathsf{D}( \Delta_{{\mathrm{1}}} )^{ \dttsym{-} }   \dttsym{,}  \dttsym{-} \,  \mathsf{D}( \dttnt{C} )   \mathbin{@}  \dttnt{n'}  \dttsym{,}   \mathsf{D}( \Delta_{{\mathrm{2}}} )^{ \dttsym{-} }   \dttsym{,}  \dttsym{-} \,  \mathsf{D}( \dttnt{E} )   \mathbin{@}  \dttnt{n''}  \dttsym{,}   \mathsf{D}( \Delta_{{\mathrm{3}}} )^{ \dttsym{-} } \]
  and
  \[  \mathsf{D}( \Delta' )^{ \dttsym{-} }  \equiv  \mathsf{D}( \Delta_{{\mathrm{1}}} )^{ \dttsym{-} }   \dttsym{,}   \mathsf{D}( \Delta_{{\mathrm{2}}} )^{ \dttsym{-} }   \dttsym{,}   \mathsf{D}( \Delta_{{\mathrm{3}}} )^{ \dttsym{-} } . \]
  By exchange (Lemma~\ref{lemma:exchange}) we know \\
  \[  \mathsf{D}( G )   \dttsym{;}   \mathsf{D}( \Gamma )^{ \dttsym{+} }   \dttsym{,}   \mathsf{D}( \Delta' )^{ \dttsym{-} }   \dttsym{,}  \dttsym{+} \,  \mathsf{D}( \dttnt{A} )   \mathbin{@}  \dttnt{n}  \vdash  \dttsym{+} \,  \mathsf{D}( \dttnt{C} )   \mathbin{@}  \dttnt{n'} \] 
  and
  \[  \mathsf{D}( G )   \dttsym{;}   \mathsf{D}( \Gamma )^{ \dttsym{+} }   \dttsym{,}   \mathsf{D}( \Delta' )^{ \dttsym{-} }   \dttsym{,}  \dttsym{+} \,  \mathsf{D}( \dttnt{B} )   \mathbin{@}  \dttnt{n}  \vdash  \dttsym{+} \,  \mathsf{D}( \dttnt{E} )   \mathbin{@}  \dttnt{n''}. \]
  Now by the left-to-right lemma (Lemma~\ref{lemma:refocus}) we know :
  \[  \mathsf{D}( G )   \dttsym{;}   \mathsf{D}( \Gamma )^{ \dttsym{+} }   \dttsym{,}   \mathsf{D}( \Delta' )^{ \dttsym{-} }   \dttsym{,}  \dttsym{-} \,  \mathsf{D}( \dttnt{C} )   \mathbin{@}  \dttnt{n'}  \vdash  \dttsym{-} \,  \mathsf{D}( \dttnt{A} )   \mathbin{@}  \dttnt{n} \]
  and
  \[  \mathsf{D}( G )   \dttsym{;}   \mathsf{D}( \Gamma )^{ \dttsym{+} }   \dttsym{,}   \mathsf{D}( \Delta' )^{ \dttsym{-} }   \dttsym{,}  \dttsym{-} \,  \mathsf{D}( \dttnt{E} )   \mathbin{@}  \dttnt{n''}  \vdash  \dttsym{-} \,  \mathsf{D}( \dttnt{B} )   \mathbin{@}  \dttnt{n}, \]
  and by applying weakening (and exchange) we know 
  \[  \mathsf{D}( G )   \dttsym{;}   \mathsf{D}( \Gamma )^{ \dttsym{+} }   \dttsym{,}   \mathsf{D}( \Delta' )^{ \dttsym{-} }   \dttsym{,}  \dttsym{-} \,  \mathsf{D}( \dttnt{C} )   \mathbin{@}  \dttnt{n'}  \dttsym{,}  \dttsym{-} \,  \mathsf{D}( \dttnt{E} )   \mathbin{@}  \dttnt{n''}  \vdash  \dttsym{-} \,  \mathsf{D}( \dttnt{A} )   \mathbin{@}  \dttnt{n} \]
  and 
  \[  \mathsf{D}( G )   \dttsym{;}   \mathsf{D}( \Gamma )^{ \dttsym{+} }   \dttsym{,}   \mathsf{D}( \Delta' )^{ \dttsym{-} }   \dttsym{,}  \dttsym{-} \,  \mathsf{D}( \dttnt{C} )   \mathbin{@}  \dttnt{n'}  \dttsym{,}  \dttsym{-} \,  \mathsf{D}( \dttnt{E} )   \mathbin{@}  \dttnt{n''}  \vdash  \dttsym{-} \,  \mathsf{D}( \dttnt{B} )   \mathbin{@}  \dttnt{n}.\]
  At this point we can apply the $\ifrName{and}$ rule to obtain
  \[  \mathsf{D}( G )   \dttsym{;}   \mathsf{D}( \Gamma )^{ \dttsym{+} }   \dttsym{,}   \mathsf{D}( \Delta' )^{ \dttsym{-} }   \dttsym{,}  \dttsym{-} \,  \mathsf{D}( \dttnt{C} )   \mathbin{@}  \dttnt{n'}  \dttsym{,}  \dttsym{-} \,  \mathsf{D}( \dttnt{E} )   \mathbin{@}  \dttnt{n''}  \vdash  \dttsym{-} \,   \mathsf{D}( \dttnt{A} )   \ndwedge{ \dttsym{-} }   \mathsf{D}( \dttnt{B} )    \mathbin{@}  \dttnt{n}, \] to which we can apply
  the left-to-right lemma to and obtain \\
  $ \mathsf{D}( G )   \dttsym{;}   \mathsf{D}( \Gamma )^{ \dttsym{+} }   \dttsym{,}   \mathsf{D}( \Delta' )^{ \dttsym{-} }   \dttsym{,}  \dttsym{-} \,  \mathsf{D}( \dttnt{E} )   \mathbin{@}  \dttnt{n''}  \dttsym{,}  \dttsym{+} \,   \mathsf{D}( \dttnt{A} )   \ndwedge{ \dttsym{-} }   \mathsf{D}( \dttnt{B} )    \mathbin{@}  \dttnt{n}  \vdash  \dttsym{+} \,  \mathsf{D}( \dttnt{C} )   \mathbin{@}  \dttnt{n'}$.  Finally, we can apply 
  exchange again to obtain \[  \mathsf{D}( G )   \dttsym{;}   \mathsf{D}( \Gamma )^{ \dttsym{+} }   \dttsym{,}  \dttsym{+} \,   \mathsf{D}( \dttnt{A} )   \ndwedge{ \dttsym{-} }   \mathsf{D}( \dttnt{B} )    \mathbin{@}  \dttnt{n}  \dttsym{,}   \mathsf{D}( \Delta' )^{ \dttsym{-} }   \dttsym{,}  \dttsym{-} \,  \mathsf{D}( \dttnt{E} )   \mathbin{@}  \dttnt{n''}  \vdash  \dttsym{+} \,  \mathsf{D}( \dttnt{C} )   \mathbin{@}  \dttnt{n'}, \]
  which -- modulo exchange -- is an arbitrary activation of $ \Gamma  \Lsym{,}  \Lmv{n}  \Lsym{:}   \Lnt{A}  \lor  \Lnt{B}   \vdash_{ G }  \Delta $.   Thus, we obtain our result.

\item[\cW] 
  \[
      \mprset{flushleft}
      \inferrule* [right=\ifrName{disjR}] {
         \Gamma  \vdash_{ G }  \Lmv{x}  \Lsym{:}  \Lnt{T_{{\mathrm{1}}}}  \Lsym{,}  \Lmv{x}  \Lsym{:}  \Lnt{T_{{\mathrm{2}}}}  \Lsym{,}  \Delta 
      }{ \Gamma  \vdash_{ G }  \Lmv{x}  \Lsym{:}   \Lnt{T_{{\mathrm{1}}}}  \lor  \Lnt{T_{{\mathrm{2}}}}   \Lsym{,}  \Delta }
      \leqno{\raise 8 pt\hbox{\textbf{Case}}}      
\]
  This case is similar to the case of $\ifrName{andR}$ case, except,
  it makes use of the $\ifrName{andBar}$ rule.

\item[\cW] 
  \[
      \mprset{flushleft}
      \inferrule* [right=\ifrName{impL}] {
         \Lmv{n_{{\mathrm{1}}}}   G   \Lmv{n_{{\mathrm{2}}}} 
        \\\\
         \Gamma  \vdash_{ G }  \Lmv{n_{{\mathrm{2}}}}  \Lsym{:}  \Lnt{T_{{\mathrm{1}}}}  \Lsym{,}  \Delta 
        \\\\
         \Gamma  \Lsym{,}  \Lmv{n_{{\mathrm{2}}}}  \Lsym{:}  \Lnt{T_{{\mathrm{2}}}}  \vdash_{ G }  \Delta 
      }{ \Gamma  \Lsym{,}  \Lmv{n_{{\mathrm{1}}}}  \Lsym{:}   \Lnt{T_{{\mathrm{1}}}}  \supset  \Lnt{T_{{\mathrm{2}}}}   \vdash_{ G }  \Delta }
      \leqno{\raise 8 pt\hbox{\textbf{Case}}}      
\]   
  We know by the induction hypothesis that all activations of $ \Gamma  \vdash_{ G }  \Lmv{n_{{\mathrm{2}}}}  \Lsym{:}  \Lnt{T_{{\mathrm{1}}}}  \Lsym{,}  \Delta $ and 
  $ \Gamma  \Lsym{,}  \Lmv{n_{{\mathrm{2}}}}  \Lsym{:}  \Lnt{T_{{\mathrm{2}}}}  \vdash_{ G }  \Delta $ are derivable.  In particular, we know 
  $ \mathsf{D}( G )   \dttsym{;}   \mathsf{D}( \Gamma )^{ \dttsym{+} }   \dttsym{,}   \mathsf{D}( \Delta )^{ \dttsym{-} }   \vdash  \dttsym{+} \,  \mathsf{D}( \dttnt{T_{{\mathrm{1}}}} )   \mathbin{@}  \dttnt{n_{{\mathrm{2}}}}$ is derivable, and so is $ \mathsf{D}( G )   \dttsym{;}   \mathsf{D}( \Gamma )^{ \dttsym{+} }   \dttsym{,}   \mathsf{D}( \Delta )^{ \dttsym{-} }   \vdash  \dttsym{-} \,  \mathsf{D}( \dttnt{T_{{\mathrm{2}}}} )   \mathbin{@}  \dttnt{n_{{\mathrm{2}}}}$. The latter
  being derivable by applying the induction hypothesis followed by exchange 
  (Lemma~\ref{lemma:exchange}) and the left-to-right lemma (Lemma~\ref{lemma:refocus}). We know $ \Lmv{n_{{\mathrm{1}}}}   G   \Lmv{n_{{\mathrm{2}}}} $ by assumption
  and so by Lemma~\ref{lemma:reach} $ \mathsf{D}( G )   \vdash  \dttnt{n_{{\mathrm{1}}}} \,  \preccurlyeq^*_{ \dttsym{+} }  \, \dttnt{n_{{\mathrm{2}}}}$.  Thus, by applying the \ifrName{impBar} rule we obtain
  $ \mathsf{D}( G )   \dttsym{;}   \mathsf{D}( \Gamma )^{ \dttsym{+} }   \dttsym{,}   \mathsf{D}( \Delta )^{ \dttsym{-} }   \vdash  \dttsym{-} \,   \mathsf{D}( \dttnt{T_{{\mathrm{1}}}} )   \ndto{ \dttsym{+} }   \mathsf{D}( \dttnt{T_{{\mathrm{2}}}} )    \mathbin{@}  \dttnt{n_{{\mathrm{1}}}}$.  At this point we can apply left-to-right to the previous sequent and obtain 
  an activation of $ \Gamma  \Lsym{,}  \Lmv{n_{{\mathrm{1}}}}  \Lsym{:}   \Lnt{T_{{\mathrm{1}}}}  \supset  \Lnt{T_{{\mathrm{2}}}}   \vdash_{ G }  \Delta $.  Any other activations can be used to derive
  $ \mathsf{D}( G )   \dttsym{;}   \mathsf{D}( \Gamma )^{ \dttsym{+} }
  \dttsym{,}   \mathsf{D}( \Delta )^{ \dttsym{-} }   \vdash
  \dttsym{+} \,  \mathsf{D}( \dttnt{T_{{\mathrm{1}}}} )   \mathbin{@}
  \dttnt{n_{{\mathrm{2}}}}$ and $ \mathsf{D}( G )   \dttsym{;}
  \mathsf{D}( \Gamma )^{ \dttsym{+} }   \dttsym{,}   \mathsf{D}(
  \Delta )^{ \dttsym{-} }   \vdash  \dttsym{-} \,  \mathsf{D}(
  \dttnt{T_{{\mathrm{2}}}} )   \mathbin{@}  \dttnt{n_{{\mathrm{2}}}}$,
  and thus, thus we obtain our result.

\item[\cW] 
  \[
      \mprset{flushleft}
      \inferrule* [right=\ifrName{impR}] {
         \Lmv{n_{{\mathrm{2}}}}  \not\in | G |,| \Gamma |,| \Delta | 
        \\\\
         \Gamma  \Lsym{,}  \Lmv{n_{{\mathrm{2}}}}  \Lsym{:}  \Lnt{T_{{\mathrm{1}}}}  \vdash_{   G  \cup  \Lsym{\{}  \Lsym{(}  \Lmv{n_{{\mathrm{1}}}}  \Lsym{,}  \Lmv{n_{{\mathrm{2}}}}  \Lsym{)}  \Lsym{\}}   }  \Lmv{n_{{\mathrm{2}}}}  \Lsym{:}  \Lnt{T_{{\mathrm{2}}}}  \Lsym{,}  \Delta 
      }{ \Gamma  \vdash_{ G }  \Lmv{n_{{\mathrm{1}}}}  \Lsym{:}   \Lnt{T_{{\mathrm{1}}}}  \supset  \Lnt{T_{{\mathrm{2}}}}   \Lsym{,}  \Delta }
      \leqno{\raise 8 pt\hbox{\textbf{Case}}}      
\]
  This case follows the same pattern as the previous cases.  We know by the induction hypothesis
  that all activations of $ \Gamma  \Lsym{,}  \Lmv{n_{{\mathrm{2}}}}  \Lsym{:}  \Lnt{T_{{\mathrm{1}}}}  \vdash_{   G  \cup  \Lsym{\{}  \Lsym{(}  \Lmv{n_{{\mathrm{1}}}}  \Lsym{,}  \Lmv{n_{{\mathrm{2}}}}  \Lsym{)}  \Lsym{\}}   }  \Lmv{n_{{\mathrm{2}}}}  \Lsym{:}  \Lnt{T_{{\mathrm{2}}}}  \Lsym{,}  \Delta $ are derivable.  In particular,
  $ \mathsf{D}( G )   \dttsym{,}  \dttnt{n_{{\mathrm{1}}}} \,  \preccurlyeq_{ \dttsym{+} }  \, \dttnt{n_{{\mathrm{2}}}}  \dttsym{;}   \mathsf{D}( \Gamma )^{ \dttsym{+} }   \dttsym{,}  \dttsym{+} \,  \mathsf{D}( \dttnt{T_{{\mathrm{1}}}} )   \mathbin{@}  \dttnt{n_{{\mathrm{2}}}}  \dttsym{,}   \mathsf{D}( \Delta )^{ \dttsym{-} }   \vdash  \dttsym{+} \,  \mathsf{D}( \dttnt{T_{{\mathrm{2}}}} )   \mathbin{@}  \dttnt{n_{{\mathrm{2}}}}$ is derivable.  By exchange (Lemma~\ref{lemma:exchange})\\
  $ \mathsf{D}( G )   \dttsym{,}  \dttnt{n_{{\mathrm{1}}}} \,  \preccurlyeq_{ \dttsym{+} }  \, \dttnt{n_{{\mathrm{2}}}}  \dttsym{;}   \mathsf{D}( \Gamma )^{ \dttsym{+} }   \dttsym{,}   \mathsf{D}( \Delta )^{ \dttsym{-} }   \dttsym{,}  \dttsym{+} \,  \mathsf{D}( \dttnt{T_{{\mathrm{1}}}} )   \mathbin{@}  \dttnt{n_{{\mathrm{2}}}}  \vdash  \dttsym{+} \,  \mathsf{D}( \dttnt{T_{{\mathrm{2}}}} )   \mathbin{@}  \dttnt{n_{{\mathrm{2}}}}$ is derivable, and by applying the $\ifrName{imp}$ rule
  we obtain $ \mathsf{D}( G )   \dttsym{;}   \mathsf{D}( \Gamma )^{ \dttsym{+} }   \dttsym{,}   \mathsf{D}( \Delta )^{ \dttsym{-} }   \vdash  \dttsym{+} \,   \mathsf{D}( \dttnt{T_{{\mathrm{1}}}} )   \ndto{ \dttsym{+} }   \mathsf{D}( \dttnt{T_{{\mathrm{2}}}} )    \mathbin{@}  \dttnt{n_{{\mathrm{1}}}}$, which is a particular activation of 
  $ \Gamma  \vdash_{ G }  \Lmv{n_{{\mathrm{1}}}}  \Lsym{:}   \Lnt{T_{{\mathrm{1}}}}  \supset  \Lnt{T_{{\mathrm{2}}}}   \Lsym{,}  \Delta $.  Note that in the previous application of $\ifrName{imp}$ we use
  the fact that if $ \Lmv{n_{{\mathrm{2}}}}  \not\in | G |,| \Gamma |,| \Delta | $, then $\dttnt{n_{{\mathrm{2}}}} \, \not\in \, \dttsym{\mbox{$\mid$}}   \mathsf{D}( G )   \dttsym{\mbox{$\mid$}}  \dttsym{,}  \dttsym{\mbox{$\mid$}}   \mathsf{D}( \Gamma )^{ \dttsym{+} }   \dttsym{,}   \mathsf{D}( \Delta )^{ \dttsym{-} }   \dttsym{\mbox{$\mid$}}$. 
  Lastly, any other activation of $ \Gamma  \vdash_{ G }  \Lmv{n_{{\mathrm{1}}}}  \Lsym{:}   \Lnt{T_{{\mathrm{1}}}}  \supset  \Lnt{T_{{\mathrm{2}}}}   \Lsym{,}  \Delta $ implies
  $ \mathsf{D}( G )   \dttsym{;}   \mathsf{D}( \Gamma )^{ \dttsym{+} }   \dttsym{,}   \mathsf{D}( \Delta )^{ \dttsym{-} }   \vdash  \dttsym{+} \,   \mathsf{D}( \dttnt{T_{{\mathrm{1}}}} )   \ndto{ \dttsym{+} }   \mathsf{D}( \dttnt{T_{{\mathrm{2}}}} )    \mathbin{@}  \dttnt{n_{{\mathrm{1}}}}$ is derivable by the left-to-right lemma, and hence is derivable.

\item[\cW] 
  \[
      \mprset{flushleft}
      \inferrule* [right=\ifrName{subL}] {
         \Lmv{n_{{\mathrm{1}}}}  \not\in | G |,| \Gamma |,| \Delta | 
        \\\\
         \Gamma  \Lsym{,}  \Lmv{n_{{\mathrm{1}}}}  \Lsym{:}  \Lnt{T_{{\mathrm{1}}}}  \vdash_{   G  \cup  \Lsym{\{}  \Lsym{(}  \Lmv{n_{{\mathrm{1}}}}  \Lsym{,}  \Lmv{n_{{\mathrm{2}}}}  \Lsym{)}  \Lsym{\}}   }  \Lmv{n_{{\mathrm{1}}}}  \Lsym{:}  \Lnt{T_{{\mathrm{2}}}}  \Lsym{,}  \Delta 
      }{ \Gamma  \Lsym{,}  \Lmv{n_{{\mathrm{2}}}}  \Lsym{:}   \Lnt{T_{{\mathrm{1}}}}  \prec  \Lnt{T_{{\mathrm{2}}}}   \vdash_{ G }  \Delta }
      \leqno{\raise 8 pt\hbox{\textbf{Case}}}      
\]
  We know by the induction hypothesis
  that all activation of \\ $ \Gamma  \Lsym{,}  \Lmv{n_{{\mathrm{1}}}}  \Lsym{:}  \Lnt{T_{{\mathrm{1}}}}  \vdash_{   G  \cup  \Lsym{\{}  \Lsym{(}  \Lmv{n_{{\mathrm{1}}}}  \Lsym{,}  \Lmv{n_{{\mathrm{2}}}}  \Lsym{)}  \Lsym{\}}   }  \Lmv{n_{{\mathrm{1}}}}  \Lsym{:}  \Lnt{T_{{\mathrm{2}}}}  \Lsym{,}  \Delta $ are derivable.  In particular,\\
  $ \mathsf{D}( G )   \dttsym{,}  \dttnt{n_{{\mathrm{1}}}} \,  \preccurlyeq_{ \dttsym{+} }  \, \dttnt{n_{{\mathrm{2}}}}  \dttsym{;}   \mathsf{D}( \Gamma )^{ \dttsym{+} }   \dttsym{,}  \dttsym{+} \,  \mathsf{D}( \dttnt{T_{{\mathrm{1}}}} )   \mathbin{@}  \dttnt{n_{{\mathrm{1}}}}  \dttsym{,}   \mathsf{D}( \Delta )^{ \dttsym{-} }   \vdash  \dttsym{+} \,  \mathsf{D}( \dttnt{T_{{\mathrm{2}}}} )   \mathbin{@}  \dttnt{n_{{\mathrm{1}}}}$ is derivable.  By exchange (Lemma~\ref{lemma:exchange})
  $ \mathsf{D}( G )   \dttsym{,}  \dttnt{n_{{\mathrm{1}}}} \,  \preccurlyeq_{ \dttsym{+} }  \, \dttnt{n_{{\mathrm{2}}}}  \dttsym{;}   \mathsf{D}( \Gamma )^{ \dttsym{+} }   \dttsym{,}   \mathsf{D}( \Delta )^{ \dttsym{-} }   \dttsym{,}  \dttsym{+} \,  \mathsf{D}( \dttnt{T_{{\mathrm{1}}}} )   \mathbin{@}  \dttnt{n_{{\mathrm{1}}}}  \vdash  \dttsym{+} \,  \mathsf{D}( \dttnt{T_{{\mathrm{2}}}} )   \mathbin{@}  \dttnt{n_{{\mathrm{1}}}}$ is derivable.  Now by the left-to-right lemma we know \\
  $ \mathsf{D}( G )   \dttsym{,}  \dttnt{n_{{\mathrm{1}}}} \,  \preccurlyeq_{ \dttsym{+} }  \, \dttnt{n_{{\mathrm{2}}}}  \dttsym{;}   \mathsf{D}( \Gamma )^{ \dttsym{+} }   \dttsym{,}   \mathsf{D}( \Delta )^{ \dttsym{-} }   \dttsym{,}  \dttsym{-} \,  \mathsf{D}( \dttnt{T_{{\mathrm{2}}}} )   \mathbin{@}  \dttnt{n_{{\mathrm{1}}}}  \vdash  \dttsym{-} \,  \mathsf{D}( \dttnt{T_{{\mathrm{1}}}} )   \mathbin{@}  \dttnt{n_{{\mathrm{1}}}}$, and by
  assumption we know $ \Lmv{y}  \not\in | G |,| \Gamma |,| \Delta | $, which implies \\
  $\dttnt{n_{{\mathrm{1}}}} \, \not\in \, \dttsym{\mbox{$\mid$}}   \mathsf{D}( G )   \dttsym{\mbox{$\mid$}}  \dttsym{,}  \dttsym{\mbox{$\mid$}}   \mathsf{D}( \Gamma )^{ \dttsym{+} }   \dttsym{,}   \mathsf{D}( \Delta )^{ \dttsym{-} }   \dttsym{\mbox{$\mid$}}$ is derivable.  Thus, by applying the $\ifrName{imp}$ rule we know  
  $ \mathsf{D}( G )   \dttsym{,}  \dttnt{n_{{\mathrm{1}}}} \,  \preccurlyeq_{ \dttsym{+} }  \, \dttnt{n_{{\mathrm{2}}}}  \dttsym{;}   \mathsf{D}( \Gamma )^{ \dttsym{+} }   \dttsym{,}   \mathsf{D}( \Delta )^{ \dttsym{-} }   \vdash  \dttsym{-} \,   \mathsf{D}( \dttnt{T_{{\mathrm{2}}}} )   \ndto{ \dttsym{-} }   \mathsf{D}( \dttnt{T_{{\mathrm{1}}}} )    \mathbin{@}  \dttnt{n_{{\mathrm{2}}}}$ is derivable.  Clearly, this is a particular activation of
  $ \Gamma  \Lsym{,}  \Lmv{n_{{\mathrm{2}}}}  \Lsym{:}   \Lnt{T_{{\mathrm{1}}}}  \prec  \Lnt{T_{{\mathrm{2}}}}   \vdash_{ G }  \Delta $, and any other activation implies
  $ \mathsf{D}( G )   \dttsym{,}  \dttnt{n_{{\mathrm{1}}}} \,  \preccurlyeq_{ \dttsym{+} }  \, \dttnt{n_{{\mathrm{2}}}}  \dttsym{;}   \mathsf{D}( \Gamma )^{ \dttsym{+} }   \dttsym{,}   \mathsf{D}( \Delta )^{ \dttsym{-} }   \vdash  \dttsym{-} \,   \mathsf{D}( \dttnt{T_{{\mathrm{2}}}} )   \ndto{ \dttsym{-} }   \mathsf{D}( \dttnt{T_{{\mathrm{1}}}} )    \mathbin{@}  \dttnt{n_{{\mathrm{2}}}}$ is derivable by the left-to-right lemma, and hence are derivable.

\item[\cW] 
  \[
      \mprset{flushleft}
      \inferrule* [right=\ifrName{subR}] {
         \Lmv{n_{{\mathrm{1}}}}   G   \Lmv{n_{{\mathrm{2}}}} 
        \\\\
         \Gamma  \vdash_{ G }  \Lmv{n_{{\mathrm{1}}}}  \Lsym{:}  \Lnt{T_{{\mathrm{1}}}}  \Lsym{,}  \Delta 
        \\\\
         \Gamma  \Lsym{,}  \Lmv{n_{{\mathrm{1}}}}  \Lsym{:}  \Lnt{T_{{\mathrm{2}}}}  \vdash_{ G }  \Delta 
      }{ \Gamma  \vdash_{ G }  \Lmv{n_{{\mathrm{2}}}}  \Lsym{:}   \Lnt{T_{{\mathrm{1}}}}  \prec  \Lnt{T_{{\mathrm{2}}}}   \Lsym{,}  \Delta }
      \leqno{\raise 8 pt\hbox{\textbf{Case}}}      
\]
  This case follows in the same way as the case for $\ifrName{impL}$,
  except the particular activation of $ \Gamma  \Lsym{,}  \Lmv{n_{{\mathrm{1}}}}  \Lsym{:}  \Lnt{T_{{\mathrm{2}}}}  \vdash_{ G }  \Delta $ has to have the active formulas such that the
  rule $\ifrName{impBar}$ can be applied.
\end{description}

\subsection{Proof of Lemma~\ref{lemma:containment_of_dil_in_l_part1}}
\label{subsec:proof_of_lemma:containment_of_dil_in_l_part1}

This is a proof by induction on the assumed typing derivation.

\begin{description}
\item[\cW] 
  \[
      \mprset{flushleft}
      \inferrule* [right=\ifrName{ax}] {
        G  \vdash  \dttnt{n} \,  \preccurlyeq^*_{ \dttnt{p} }  \, \dttnt{n'}
      }{G  \dttsym{;}  \Gamma  \dttsym{,}  \dttnt{p} \, \dttnt{A}  \mathbin{@}  \dttnt{n}  \dttsym{,}  \Gamma'  \vdash  \dttnt{p} \, \dttnt{A}  \mathbin{@}  \dttnt{n'}}
      \leqno{\raise 8 pt\hbox{\textbf{Case}}}      
\]
  We only show the case when $\dttnt{p} = \dttsym{+}$, because the case when $\dttnt{p} = \dttsym{-}$ is similar.
  By the definition of the L-translation we must show that the L-sequent
  $  \mathsf{L}( \Gamma )^{ \Lsym{+} }   \Lsym{,}  \Lmv{n}  \Lsym{:}   \mathsf{L}( \Lnt{A} )   \Lsym{,}   \mathsf{L}( \Gamma' )^{ \Lsym{+} }   \vdash_{  \mathsf{L}( G )  }   \mathsf{L}( \Gamma )^{ \Lsym{-} }   \Lsym{,}  \Lmv{n'}  \Lsym{:}   \mathsf{L}( \Lnt{A} )   \Lsym{,}   \mathsf{L}( \Gamma' )^{ \Lsym{-} }  $ is derivable.   
  Since we know that for any $\dttnt{n_{{\mathrm{1}}}}  \dttsym{,}  \dttnt{n_{{\mathrm{2}}}} \, \in \, \dttsym{\mbox{$\mid$}}  \dttnt{n} \,  \preccurlyeq_{ \dttnt{p'} }  \, \dttnt{n}  \dttsym{,}  G  \dttsym{\mbox{$\mid$}}  \dttsym{,}  \dttsym{\mbox{$\mid$}}  \Gamma  \dttsym{\mbox{$\mid$}}$, if $G  \vdash  \dttnt{n_{{\mathrm{1}}}} \,  \preccurlyeq^*_{ \dttnt{p'} }  \, \dttnt{n_{{\mathrm{2}}}}$, then
  $\dttnt{n_{{\mathrm{1}}}} \,  \preccurlyeq_{ \dttnt{p'} }  \, \dttnt{n_{{\mathrm{2}}}} \in G$, it must be the case that $\dttnt{n} \,  \preccurlyeq_{ \dttsym{+} }  \, \dttnt{n'} \in G$, and thus,
  $ \Lmv{n}    \mathsf{L}( G )    \Lmv{n'} $.  At this point we may apply the L inference rule $\dttdrulename{monL}$ using this fact.
  Therefore, we have derived $  \mathsf{L}( \Gamma )^{ \Lsym{+} }   \Lsym{,}  \Lmv{n}  \Lsym{:}   \mathsf{L}( \Lnt{A} )   \Lsym{,}  \Lmv{n'}  \Lsym{:}   \mathsf{L}( \Lnt{A} )   \Lsym{,}   \mathsf{L}( \Gamma' )^{ \Lsym{+} }   \vdash_{  \mathsf{L}( G )  }   \mathsf{L}( \Gamma )^{ \Lsym{-} }   \Lsym{,}  \Lmv{n'}  \Lsym{:}   \mathsf{L}( \Lnt{A} )   \Lsym{,}   \mathsf{L}( \Gamma' )^{ \Lsym{-} }  $, and then 
  we may complete the derivation by applying the L inference rule $\dttdrulename{hyp}$.
  
\item[\cW] 
  \[
      \mprset{flushleft}
      \inferrule* [right=\ifrName{unit}] {
        \ 
      }{G  \dttsym{;}  \Gamma  \vdash  \dttnt{p} \,  \langle  \dttnt{p} \rangle   \mathbin{@}  \dttnt{n}}
      \leqno{\raise 8 pt\hbox{\textbf{Case}}}      
\]
  Suppose $\dttnt{p} = \dttsym{+}$.  Then by the definition of the
  L-translation we must derive 
\[\mathsf{L}( \Gamma )^{ \Lsym{+} }   \vdash_{  \mathsf{L}( G )  }
\Lmv{n}  \Lsym{:}   \top   \Lsym{,}   \mathsf{L}( \Gamma )^{ \Lsym{-}
},
\]
  but this follows by simply applying the L inference rule
  $\dttdrulename{trueR}$.

  \ \\
  \noindent
  Suppose $\dttnt{p} = \dttsym{-}$.  Then by the definition of the
  L-translation we must derive 
\[\mathsf{L}( \Gamma )^{ \Lsym{+} }   \Lsym{,}  \Lmv{n}  \Lsym{:}
\;\perp\;   \vdash_{  \mathsf{L}( G )  }   \mathsf{L}( \Gamma )^{ \Lsym{-}
},
\]
  but this follows by simply applying the L inference rule
  $\dttdrulename{falseL}$.

\item[\cW] 
  \[
      \mprset{flushleft}
      \inferrule* [right=\ifrName{and}] {
         G  \dttsym{;}  \Gamma  \vdash  \dttnt{p} \, \dttnt{A}  \mathbin{@}  \dttnt{n}  \qquad  G  \dttsym{;}  \Gamma  \vdash  \dttnt{p} \, \dttnt{B}  \mathbin{@}  \dttnt{n} 
      }{G  \dttsym{;}  \Gamma  \vdash  \dttnt{p} \, \dttsym{(}   \dttnt{A}  \ndwedge{ \dttnt{p} }  \dttnt{B}   \dttsym{)}  \mathbin{@}  \dttnt{n}}
      \leqno{\raise 8 pt\hbox{\textbf{Case}}}      
\]
  Suppose $\dttnt{p} = \dttsym{+}$.  Then by the induction hypothesis we know the following:
  \[
      \begin{array}{lll}
          \mathsf{L}( \Gamma )^{ \Lsym{+} }   \vdash_{  \mathsf{L}( G )  }  \Lmv{n}  \Lsym{:}  \Lnt{A}  \Lsym{,}   \mathsf{L}( \Gamma )^{ \Lsym{-} }   \\
          \mathsf{L}( \Gamma )^{ \Lsym{+} }   \vdash_{  \mathsf{L}( G )  }  \Lmv{n}  \Lsym{:}  \Lnt{B}  \Lsym{,}   \mathsf{L}( \Gamma )^{ \Lsym{-} }  
      \end{array}      
      \]
  By the definition of the L-translation we must show that
  \[   \mathsf{L}( \Gamma )^{ \Lsym{+} }   \vdash_{  \mathsf{L}( G )  }  \Lmv{n}  \Lsym{:}   \Lnt{A}  \land  \Lnt{B}   \Lsym{,}   \mathsf{L}( \Gamma )^{ \Lsym{-} }  . \]
  This easily follows by applying the L inference rule $\dttdrulename{andR}$.

  \ \\
  \noindent
  Now suppose $\dttnt{p} = \dttsym{-}$.  Then by the induction hypothesis we know the following:
  \[
      \begin{array}{lll}
          \mathsf{L}( \Gamma )^{ \Lsym{+} }   \Lsym{,}  \Lmv{n}  \Lsym{:}  \Lnt{A}  \vdash_{  \mathsf{L}( G )  }   \mathsf{L}( \Gamma )^{ \Lsym{-} }   \\
          \mathsf{L}( \Gamma )^{ \Lsym{+} }   \Lsym{,}  \Lmv{n}  \Lsym{:}  \Lnt{B}  \vdash_{  \mathsf{L}( G )  }   \mathsf{L}( \Gamma )^{ \Lsym{-} }  
      \end{array}
\]
  By the definition of the L-translation we must show that
  \[   \mathsf{L}( \Gamma )^{ \Lsym{+} }   \Lsym{,}  \Lmv{n}  \Lsym{:}   \Lnt{A}  \lor  \Lnt{B}   \vdash_{  \mathsf{L}( G )  }   \mathsf{L}( \Gamma )^{ \Lsym{-} }  . \]
  This easily follows by applying the L inference rule $\dttdrulename{disjL}$.

\item[\cW] 
  \[
      \mprset{flushleft}
      \inferrule* [right=\ifrName{andBar}] {
        G  \dttsym{;}  \Gamma  \vdash  \dttnt{p} \,  \dttnt{A} _{ \dttnt{d} }   \mathbin{@}  \dttnt{n}
      }{G  \dttsym{;}  \Gamma  \vdash  \dttnt{p} \, \dttsym{(}   \dttnt{A_{{\mathrm{1}}}}  \ndwedge{  \bar{  \dttnt{p}  }  }  \dttnt{A_{{\mathrm{2}}}}   \dttsym{)}  \mathbin{@}  \dttnt{n}}
      \leqno{\raise 8 pt\hbox{\textbf{Case}}}      
\]
  Suppose $p = +$ and $d = 1$.  Then by the induction hypothesis we know the following:
  \[
  \begin{array}{lll}
      \mathsf{L}( \Gamma )^{ \Lsym{+} }   \vdash_{  \mathsf{L}( G )  }  \Lmv{n}  \Lsym{:}   \mathsf{L}( \Lnt{A_{{\mathrm{1}}}} )   \Lsym{,}   \mathsf{L}( \Gamma )^{ \Lsym{-} }  
  \end{array}
  \]
  Then by the L admissible inference rule $\Ldrulename{weakR}$ (Lemma~\ref{lemma:right_weakening_in_l})
  we know the following:
  \[
  \begin{array}{lll}
      \mathsf{L}( \Gamma )^{ \Lsym{+} }   \vdash_{  \mathsf{L}( G )  }  \Lmv{n}  \Lsym{:}   \mathsf{L}( \Lnt{A_{{\mathrm{1}}}} )   \Lsym{,}  \Lmv{n}  \Lsym{:}   \mathsf{L}( \Lnt{A_{{\mathrm{2}}}} )   \Lsym{,}   \mathsf{L}( \Gamma )^{ \Lsym{-} }  
  \end{array}
  \]
  Thus, we obtain our result that $  \mathsf{L}( \Gamma )^{ \Lsym{+} }   \vdash_{  \mathsf{L}( G )  }  \Lmv{n}  \Lsym{:}    \mathsf{L}( \Lnt{A_{{\mathrm{1}}}} )   \lor   \mathsf{L}( \Lnt{A_{{\mathrm{2}}}} )    \Lsym{,}   \mathsf{L}( \Gamma )^{ \Lsym{-} }  $ is derivable by applying
  the L inference rule $\Ldrulename{disjR}$.  The case for when $d = 2$ is similar.

  If $\dttnt{p} = \dttsym{-}$ then the result follows similarly to the case
  when $\dttnt{p} = \dttsym{+}$ except that the derivation is a result of
  applying the rule $\Ldrulename{andL}$ after applying the
  admissible L inference rule $\Ldrulename{weakL}$ to the induction
  hypothesis.

\item[\cW] 
  \[
      \mprset{flushleft}
      \inferrule* [right=\ifrName{imp}] {
        \dttnt{n'} \, \not\in \, \dttsym{\mbox{$\mid$}}  G  \dttsym{\mbox{$\mid$}}  \dttsym{,}  \dttsym{\mbox{$\mid$}}  \Gamma  \dttsym{\mbox{$\mid$}}
        \\\\
            \dttsym{(}  G  \dttsym{,}  \dttnt{n} \,  \preccurlyeq_{ \dttnt{p} }  \, \dttnt{n'}  \dttsym{)}  \dttsym{;}  \Gamma  \dttsym{,}  \dttnt{p} \, \dttnt{A}  \mathbin{@}  \dttnt{n'}  \vdash  \dttnt{p} \, \dttnt{B}  \mathbin{@}  \dttnt{n'}
      }{G  \dttsym{;}  \Gamma  \vdash  \dttnt{p} \, \dttsym{(}   \dttnt{A}  \ndto{ \dttnt{p} }  \dttnt{B}   \dttsym{)}  \mathbin{@}  \dttnt{n}}
      \leqno{\raise 8 pt\hbox{\textbf{Case}}}      
\]
  Suppose $\dttnt{p} = \dttsym{+}$.  Then by the induction hypothesis we know the following:
  \[
  \begin{array}{lll}
      \mathsf{L}( \Gamma )^{ \Lsym{+} }   \Lsym{,}  \Lmv{n'}  \Lsym{:}   \mathsf{L}( \Lnt{A} )   \vdash_{ \Lsym{(}    \mathsf{L}( G )   \cup  \Lsym{\{}  \Lsym{(}  \Lmv{n}  \Lsym{,}  \Lmv{n'}  \Lsym{)}  \Lsym{\}}   \Lsym{)} }  \Lmv{n'}  \Lsym{:}   \mathsf{L}( \Lnt{B} )   \Lsym{,}   \mathsf{L}( \Gamma )^{ \Lsym{-} }  
  \end{array}
  \]
  We know by assumption that $\dttnt{n'} \, \not\in \, \dttsym{\mbox{$\mid$}}  G  \dttsym{\mbox{$\mid$}}  \dttsym{,}  \dttsym{\mbox{$\mid$}}  \Gamma  \dttsym{\mbox{$\mid$}}$, and
  hence, $\dttnt{n'} \not\in |  \mathsf{L}( G )  | , |  \mathsf{L}( \Gamma )^{ \Lsym{+} }  |, | \mathsf{L}( \Gamma )^{ \Lsym{-} } |$
  by the definition of the L translation.  Therefore, our result
  follows by simply applying the L inference rule $\Ldrulename{impR}$.

  \ \\
  \noindent
  Suppose $\dttnt{p} = \dttsym{-}$.  This case follows similarly to the case
  when $\dttnt{p} = \dttsym{+}$, but we conclude with the L inference rule
  $\Ldrulename{subL}$.
\item[\cW] 
  \[
      \mprset{flushleft}
      \inferrule* [right=\ifrName{impBar}] {
        G  \vdash  \dttnt{n} \,  \preccurlyeq^*_{  \bar{  \dttnt{p}  }  }  \, \dttnt{n'}
        \\\\
             G  \dttsym{;}  \Gamma  \vdash   \bar{  \dttnt{p}  }  \, \dttnt{A}  \mathbin{@}  \dttnt{n'}  \qquad  G  \dttsym{;}  \Gamma  \vdash  \dttnt{p} \, \dttnt{B}  \mathbin{@}  \dttnt{n'} 
      }{G  \dttsym{;}  \Gamma  \vdash  \dttnt{p} \, \dttsym{(}   \dttnt{A}  \ndto{  \bar{  \dttnt{p}  }  }  \dttnt{B}   \dttsym{)}  \mathbin{@}  \dttnt{n}}
      \leqno{\raise 8 pt\hbox{\textbf{Case}}}      
\]
  Suppose $\dttnt{p} = \dttsym{+}$.  Then by the induction hypothesis we know the following:
  \[
  \begin{array}{rll}
    i. &   \mathsf{L}( \Gamma )^{ \Lsym{+} }   \Lsym{,}  \Lmv{n'}  \Lsym{:}   \mathsf{L}( \Lnt{A} )   \vdash_{  \mathsf{L}( G )  }   \mathsf{L}( \Gamma )^{ \Lsym{-} }  \\
    ii. &   \mathsf{L}( \Gamma )^{ \Lsym{+} }   \vdash_{  \mathsf{L}( G )  }  \Lmv{n'}  \Lsym{:}   \mathsf{L}( \Lnt{B} )   \Lsym{,}   \mathsf{L}( \Gamma )^{ \Lsym{-} }  
  \end{array}
  \]
  Furthermore, we know for any
  $\dttnt{n_{{\mathrm{1}}}}  \dttsym{,}  \dttnt{n_{{\mathrm{2}}}} \, \in \, \dttsym{\mbox{$\mid$}}  \dttnt{n} \,  \preccurlyeq_{ \dttnt{p'} }  \, \dttnt{n}  \dttsym{,}  G  \dttsym{\mbox{$\mid$}}  \dttsym{,}  \dttsym{\mbox{$\mid$}}  \Gamma  \dttsym{\mbox{$\mid$}}$
  if $G  \vdash  \dttnt{n_{{\mathrm{1}}}} \,  \preccurlyeq^*_{ \dttnt{p'} }  \, \dttnt{n_{{\mathrm{2}}}}$, then $\dttnt{n_{{\mathrm{1}}}} \,  \preccurlyeq_{ \dttnt{p'} }  \, \dttnt{n_{{\mathrm{2}}}} \in G$,
  and we know by assumption that $G  \vdash  \dttnt{n} \,  \preccurlyeq^*_{ \dttsym{-} }  \, \dttnt{n'}$, and thus,
  $\dttnt{n} \,  \preccurlyeq_{ \dttsym{-} }  \, \dttnt{n'} \in G$, hence, $ \Lmv{n'}    \mathsf{L}( G )    \Lmv{n} $ by the definition
  of the L-translation.

  \ \\
  \noindent
  It suffices to show that
  $  \mathsf{L}( \Gamma )^{ \Lsym{+} }   \vdash_{  \mathsf{L}( G )  }  \Lmv{n}  \Lsym{:}    \mathsf{L}( \Lnt{B} )   \prec   \mathsf{L}( \Lnt{A} )    \Lsym{,}   \mathsf{L}( \Gamma )^{ \Lsym{-} }  $, but this follows by
  applying the L inference rule $\Ldrulename{subR}$ using i and ii
  from above as well as the fact that we know $ \Lmv{n'}    \mathsf{L}( G )    \Lmv{n} $.

  \ \\
  \noindent
  Now suppose $p = -$.  Similar to the case when $p = +$, but we
  conclude with applying the L inference rule $\Ldrulename{impL}$,
  and the induction hypothesis provides the following:
  \[
  \begin{array}{rll}
    i. &   \mathsf{L}( \Gamma )^{ \Lsym{+} }   \vdash_{  \mathsf{L}( G )  }  \Lmv{n'}  \Lsym{:}   \mathsf{L}( \Lnt{A} )   \Lsym{,}   \mathsf{L}( \Gamma )^{ \Lsym{-} }  \\
    ii. &   \mathsf{L}( \Gamma )^{ \Lsym{+} }   \Lsym{,}  \Lmv{n'}  \Lsym{:}   \mathsf{L}( \Lnt{B} )   \vdash_{  \mathsf{L}( G )  }   \mathsf{L}( \Gamma )^{ \Lsym{-} }  .
  \end{array}
  \]
  
\item[\cW] 
  \[
      \mprset{flushleft}
      \inferrule* [right=\ifrName{axCut}] {
          \dttnt{p} \, \dttnt{B}  \mathbin{@}  \dttnt{n'}  \in  \dttsym{(}  \Gamma  \dttsym{,}   \bar{  \dttnt{p}  }  \, \dttnt{A}  \mathbin{@}  \dttnt{n}  \dttsym{)}   \qquad  G  \dttsym{;}  \Gamma  \dttsym{,}   \bar{  \dttnt{p}  }  \, \dttnt{A}  \mathbin{@}  \dttnt{n}  \vdash   \bar{  \dttnt{p}  }  \, \dttnt{B}  \mathbin{@}  \dttnt{n'} 
      }{G  \dttsym{;}  \Gamma  \vdash  \dttnt{p} \, \dttnt{A}  \mathbin{@}  \dttnt{n}}
      \leqno{\raise 8 pt\hbox{\textbf{Case}}}      
\]
  Suppose $\dttnt{p} = \dttsym{+}$.  Then by the induction hypothesis we know the following:
  \[
    \mathsf{L}( \Gamma )^{ \Lsym{+} }   \Lsym{,}  \Lmv{n'}  \Lsym{:}   \mathsf{L}( \Lnt{B} )   \vdash_{  \mathsf{L}( G )  }  \Lmv{n}  \Lsym{:}   \mathsf{L}( \Lnt{A} )   \Lsym{,}   \mathsf{L}( \Gamma )^{ \Lsym{-} }  
  \]
  Now we know that $ \dttnt{p} \, \dttnt{B}  \mathbin{@}  \dttnt{n'}  \in  \dttsym{(}  \Gamma  \dttsym{,}   \bar{  \dttnt{p}  }  \, \dttnt{A}  \mathbin{@}  \dttnt{n}  \dttsym{)} $, and hence,
  $\Lmv{n'}  \Lsym{:}   \mathsf{L}( \Lnt{B} )  \in  \mathsf{L}( \Gamma )^{ \dttsym{+} } $, which implies we know the following:
  \[
    \mathsf{L}( \Gamma_{{\mathrm{1}}} )^{ \Lsym{+} }   \Lsym{,}  \Lmv{n'}  \Lsym{:}   \mathsf{L}( \Lnt{B} )   \Lsym{,}   \mathsf{L}( \Gamma_{{\mathrm{2}}} )^{ \Lsym{+} }   \Lsym{,}  \Lmv{n'}  \Lsym{:}   \mathsf{L}( \Lnt{B} )   \vdash_{  \mathsf{L}( G )  }  \Lmv{n}  \Lsym{:}   \mathsf{L}( \Lnt{A} )   \Lsym{,}   \mathsf{L}( \Gamma )^{ \Lsym{-} }  
  \]
  Therefore, by applying the admissible L inference rule $\Ldrulename{contrL}$ we know the following:
  \[
    \mathsf{L}( \Gamma_{{\mathrm{1}}} )^{ \Lsym{+} }   \Lsym{,}  \Lmv{n'}  \Lsym{:}   \mathsf{L}( \Lnt{B} )   \Lsym{,}   \mathsf{L}( \Gamma_{{\mathrm{2}}} )^{ \Lsym{+} }   \vdash_{  \mathsf{L}( G )  }  \Lmv{n}  \Lsym{:}   \mathsf{L}( \Lnt{A} )   \Lsym{,}   \mathsf{L}( \Gamma )^{ \Lsym{-} }  
  \]
  This is equivalent to our result:
  \[
    \mathsf{L}( \Gamma )^{ \Lsym{+} }   \vdash_{  \mathsf{L}( G )  }  \Lmv{n}  \Lsym{:}   \mathsf{L}( \Lnt{A} )   \Lsym{,}   \mathsf{L}( \Gamma )^{ \Lsym{-} }  
  \]
  
  \ \\
  \noindent
  Suppose $\dttnt{p} = \dttsym{-}$.  Then by the induction hypothesis we know the following:
  \[
    \mathsf{L}( \Gamma )^{ \Lsym{+} }   \Lsym{,}  \Lmv{n'}  \Lsym{:}   \mathsf{L}( \Lnt{A} )   \vdash_{  \mathsf{L}( G )  }  \Lmv{n}  \Lsym{:}   \mathsf{L}( \Lnt{B} )   \Lsym{,}   \mathsf{L}( \Gamma )^{ \Lsym{-} }  
  \]
  This case now follows similarly to the previous case by exposing
  $\Lmv{n}  \Lsym{:}   \mathsf{L}( \Lnt{B} ) $ in $ \mathsf{L}( \Gamma )^{ \Lsym{-} } $, and then using contraction on the
  right.

\item[\cW] 
  \[
      \mprset{flushleft}
      \inferrule* [right=\ifrName{axCutBar}] {
           \bar{  \dttnt{p}  }  \, \dttnt{B}  \mathbin{@}  \dttnt{n'}  \in  \dttsym{(}  \Gamma  \dttsym{,}   \bar{  \dttnt{p}  }  \, \dttnt{A}  \mathbin{@}  \dttnt{n}  \dttsym{)}   \qquad  G  \dttsym{;}  \Gamma  \dttsym{,}   \bar{  \dttnt{p}  }  \, \dttnt{A}  \mathbin{@}  \dttnt{n}  \vdash  \dttnt{p} \, \dttnt{B}  \mathbin{@}  \dttnt{n'} 
      }{G  \dttsym{;}  \Gamma  \vdash  \dttnt{p} \, \dttnt{A}  \mathbin{@}  \dttnt{n}}
      \leqno{\raise 8 pt\hbox{\textbf{Case}}}      
\]
This case is similar to the previous case except in the case when
$\dttnt{p} = \dttsym{+}$ we use contraction on the right, and then when
$\dttnt{p} = \dttsym{-}$ we use contraction on the left.
\end{description}

\subsection{Proof of Lemma~\ref{lemma:dil-validity_is_l-validity}: DIL-validity is L-validity}
\label{subsec:proof_of_dil-validity_is_l-validity}
Suppose $ \interp{ G  \dttsym{;}  \Gamma  \vdash  \dttnt{p} \, \dttnt{A}  \mathbin{@}  \dttnt{n} }_{N} $ holds for some Kripke model
$(W,R,V)$ and node interpreter $N$ on $|G|$, and $\dttnt{p} = \dttsym{+}$.  It suffices to show that
$  \mathsf{L}( \Gamma )^{ \Lsym{+} }   \vdash_{  \mathsf{L}( G )  }  \Lmv{n}  \Lsym{:}   \mathsf{L}( \Lnt{A} )   \Lsym{,}   \mathsf{L}( \Gamma )^{ \Lsym{-} }  $ is L-valid.
By the definition of the interpretation of DIL-sequents (Definition~\ref{def:validity}) we know that
\[
\text{ if }  \interp{ G }_{N}  \text{ and }  \interp{ \Gamma }_{N} , \text{ then }  \dttnt{p}  \interp{ \dttnt{A} }_{ \dttsym{(}   N\, \dttnt{n}   \dttsym{)} } 
\]
Now to show that $  \mathsf{L}( \Gamma )^{ \Lsym{+} }   \vdash_{  \mathsf{L}( G )  }  \Lmv{n}  \Lsym{:}   \mathsf{L}( \Lnt{A} )   \Lsym{,}   \mathsf{L}( \Gamma )^{ \Lsym{-} }  $ is L-valid we
must show that at least one of the following does not hold:
\begin{center}
  \begin{tabular}{rll}
    i.   & for any $ \Lmv{n_{{\mathrm{1}}}}    \mathsf{L}( G )    \Lmv{n_{{\mathrm{2}}}} $, $ R\, \dttsym{(}   N\, \dttnt{n_{{\mathrm{1}}}}   \dttsym{)} \, \dttsym{(}   N\, \dttnt{n_{{\mathrm{2}}}}   \dttsym{)} $\\
    ii.  & for any $\Lmv{n}  \Lsym{:}   \mathsf{L}( \Lnt{B} )  \in  \mathsf{L}( \Gamma )^{ \Lsym{+} } , \interp{ \mathsf{L}( \Lnt{B} ) }_{ N\, \dttnt{n} }$\\
    iii. & for any $\Lmv{n}  \Lsym{:}   \mathsf{L}( \Lnt{B} )  \in (\Lmv{n}  \Lsym{:}   \mathsf{L}( \Lnt{A} )   \Lsym{,}   \mathsf{L}( \Gamma )^{ \Lsym{-} } ), \lnot \interp{ \mathsf{L}( \Lnt{B} ) }_{ N\, \dttnt{n} }$\\
  \end{tabular}
\end{center}
So if neither $ \interp{ G }_{N} $ or $ \interp{ \Gamma }_{N} $ hold, then neither of i or ii will hold. Thus,
$  \mathsf{L}( \Gamma )^{ \Lsym{+} }   \vdash_{  \mathsf{L}( G )  }  \Lmv{n}  \Lsym{:}   \mathsf{L}( \Lnt{A} )   \Lsym{,}   \mathsf{L}( \Gamma )^{ \Lsym{-} }  $ is L-valid.

\ \\
\noindent
So assume $ \interp{ G }_{N} $ or $ \interp{ \Gamma }_{N} $ hold.  Then both i and ii are satisfied by Lemma~\ref{lemma:DIL-interp-L-interps}.
However, we now know $ \dttsym{+}  \interp{ \dttnt{A} }_{ \dttsym{(}   N\, \dttnt{n}   \dttsym{)} }  =  \interp{ \dttnt{A} }_{ \dttsym{(}   N\, \dttnt{n}   \dttsym{)} } $ holds, and hence by Lemma~\ref{lemma:DIL-interp-L-interps},
iii does not hold.  Therefore, $  \mathsf{L}( \Gamma )^{ \Lsym{+} }   \vdash_{  \mathsf{L}( G )  }  \Lmv{n}  \Lsym{:}   \mathsf{L}( \Lnt{A} )   \Lsym{,}   \mathsf{L}( \Gamma )^{ \Lsym{-} }  $ is L-valid.

\ \\
\noindent
Now suppose $\dttnt{p} = \dttsym{-}$.  It suffices to show that
$  \mathsf{L}( \Gamma )^{ \Lsym{+} }   \Lsym{,}  \Lmv{n}  \Lsym{:}   \mathsf{L}( \Lnt{A} )   \vdash_{  \mathsf{L}( G )  }   \mathsf{L}( \Gamma )^{ \Lsym{-} }  $ is L-valid.  However, notice that we must
show that at least one of the following does not hold:
\begin{center}
  \begin{tabular}{rll}
    i.   & for any $ \Lmv{n_{{\mathrm{1}}}}    \mathsf{L}( G )    \Lmv{n_{{\mathrm{2}}}} $, $ R\, \dttsym{(}   N\, \dttnt{n_{{\mathrm{1}}}}   \dttsym{)} \, \dttsym{(}   N\, \dttnt{n_{{\mathrm{2}}}}   \dttsym{)} $\\
    ii.  & for any $\Lmv{n}  \Lsym{:}   \mathsf{L}( \Lnt{B} )  \in ( \mathsf{L}( \Gamma )^{ \Lsym{+} }   \Lsym{,}  \Lmv{n}  \Lsym{:}   \mathsf{L}( \Lnt{A} ) ), \interp{ \mathsf{L}( \Lnt{B} ) }_{ N\, \dttnt{n} }$\\
    iii. & for any $\Lmv{n}  \Lsym{:}   \mathsf{L}( \Lnt{B} )  \in  \mathsf{L}( \Gamma )^{ \Lsym{-} } , \lnot \interp{ \mathsf{L}( \Lnt{B} ) }_{ N\, \dttnt{n} }$\\
  \end{tabular}
\end{center}
However, notice that ii will allows be false in this case, because if $ \interp{ G }_{N} $ or $ \interp{ \Gamma }_{N} $ hold,
then we know $ \dttsym{-}  \interp{ \dttnt{A} }_{ \dttsym{(}   N\, \dttnt{n}   \dttsym{)} }  = \lnot  \interp{ \dttnt{A} }_{ \dttsym{(}   N\, \dttnt{n}   \dttsym{)} } $, which implies that
$\lnot  \interp{  \mathsf{L}( \dttnt{A} )  }_{ \dttsym{(}   N\, \dttnt{n}   \dttsym{)} } $.  Therefore, $  \mathsf{L}( \Gamma )^{ \Lsym{+} }   \Lsym{,}  \Lmv{n}  \Lsym{:}   \mathsf{L}( \Lnt{A} )   \vdash_{  \mathsf{L}( G )  }   \mathsf{L}( \Gamma )^{ \Lsym{-} }  $ is L-valid.


\section{Proofs from Section~\ref{sec:dualized_type_theory_(dtt)}: Dualized Type Theory}
\label{sec:proofs_from_section_dualized_type_theory}

\subsection{Proof of Lemma~\ref{lemma:disj-elim-adm}}
\label{subsec:proof_of_lemma:disj-elim-adm}
Due to the size of the derivation in question we give several
derivations that combine to form the typing derivation of \\
$G  \dttsym{;}  \Gamma  \vdash   \mathbf{case}\, \dttnt{t} \,\mathbf{of}\, \dttmv{x} . \dttnt{t_{{\mathrm{1}}}} , \dttmv{x} . \dttnt{t_{{\mathrm{2}}}}   \dttsym{:}  \dttnt{p} \, \dttnt{C}  \mathbin{@}  \dttnt{n}$.\ \\

\noindent
The typing derivation begins using cut as follows:

\begin{math}
  $$\mprset{flushleft}
  \inferrule* [right=\ifrName{cut}] {
    D_0
    \\
    D_1
  }{G  \dttsym{;}  \Gamma  \vdash  \nu \, \dttmv{z_{{\mathrm{0}}}}  \dttsym{.}  \dttsym{(}  \nu \, \dttmv{z_{{\mathrm{1}}}}  \dttsym{.}  \dttsym{(}  \nu \, \dttmv{z_{{\mathrm{2}}}}  \dttsym{.}  \dttnt{t}  \mathbin{\Cdot[2]}  \dttsym{(}  \dttmv{z_{{\mathrm{1}}}}  \dttsym{,}  \dttmv{z_{{\mathrm{2}}}}  \dttsym{)}  \dttsym{)}  \mathbin{\Cdot[2]}  \dttsym{(}  \nu \, \dttmv{x}  \dttsym{.}  \dttnt{t_{{\mathrm{2}}}}  \mathbin{\Cdot[2]}  \dttmv{z_{{\mathrm{0}}}}  \dttsym{)}  \dttsym{)}  \mathbin{\Cdot[2]}  \dttsym{(}  \nu \, \dttmv{x}  \dttsym{.}  \dttnt{t_{{\mathrm{1}}}}  \mathbin{\Cdot[2]}  \dttmv{z_{{\mathrm{0}}}}  \dttsym{)}  \dttsym{:}  \dttsym{+} \, \dttnt{C}  \mathbin{@}  \dttnt{n}}
\end{math} \\

\noindent
Then the remainder of the derivation depends on the following sub-derivations:

\begin{math}
  \begin{array}{lcl}
    D_0:\\
    &
    $$\mprset{flushleft}
    \inferrule* [right= \ifrName{cut}] {
      D_3
      \\
      D_4
    }{G  \dttsym{;}  \Gamma  \dttsym{,}  \dttmv{z_{{\mathrm{0}}}}  \dttsym{:}  \dttsym{-} \, \dttnt{C}  \mathbin{@}  \dttnt{n}  \vdash  \nu \, \dttmv{z_{{\mathrm{1}}}}  \dttsym{.}  \dttsym{(}  \nu \, \dttmv{z_{{\mathrm{2}}}}  \dttsym{.}  \dttnt{t}  \mathbin{\Cdot[2]}  \dttsym{(}  \dttmv{z_{{\mathrm{1}}}}  \dttsym{,}  \dttmv{z_{{\mathrm{2}}}}  \dttsym{)}  \dttsym{)}  \mathbin{\Cdot[2]}  \dttsym{(}  \nu \, \dttmv{x}  \dttsym{.}  \dttnt{t_{{\mathrm{2}}}}  \mathbin{\Cdot[2]}  \dttmv{z_{{\mathrm{0}}}}  \dttsym{)}  \dttsym{:}  \dttsym{+} \, \dttnt{A}  \mathbin{@}  \dttnt{n}}
  \end{array}
\end{math} \\

\begin{math}
  \begin{array}{lcc}
    D_1: \\      
    &       
    $$\mprset{flushleft}
    \inferrule* [right= \ifrName{cut}] {
      D_2
      \\
      $$\mprset{flushleft}
      \inferrule* [right= \ifrName{ax}] {
        \,
      }{G  \dttsym{;}  \Gamma  \dttsym{,}  \dttmv{z_{{\mathrm{0}}}}  \dttsym{:}  \dttsym{-} \, \dttnt{C}  \mathbin{@}  \dttnt{n}  \dttsym{,}  \dttmv{x}  \dttsym{:}  \dttsym{+} \, \dttnt{A}  \mathbin{@}  \dttnt{n}  \vdash  \dttmv{z_{{\mathrm{0}}}}  \dttsym{:}  \dttsym{-} \, \dttnt{C}  \mathbin{@}  \dttnt{n}}               
    }{G  \dttsym{;}  \Gamma  \dttsym{,}  \dttmv{z_{{\mathrm{0}}}}  \dttsym{:}  \dttsym{-} \, \dttnt{C}  \mathbin{@}  \dttnt{n}  \vdash  \nu \, \dttmv{x}  \dttsym{.}  \dttnt{t_{{\mathrm{1}}}}  \mathbin{\Cdot[2]}  \dttmv{z_{{\mathrm{0}}}}  \dttsym{:}  \dttsym{-} \, \dttnt{A}  \mathbin{@}  \dttnt{n}}
  \end{array}
\end{math} \\

\begin{math}
  \begin{array}{lcc}
    D_2: \\      
    & 
    $$\mprset{flushleft}
    \inferrule* [right= \ifrName{weakening}] {
      G  \dttsym{;}  \Gamma  \dttsym{,}  \dttmv{x}  \dttsym{:}  \dttsym{+} \, \dttnt{A}  \mathbin{@}  \dttnt{n}  \vdash  \dttnt{t_{{\mathrm{1}}}}  \dttsym{:}  \dttsym{+} \, \dttnt{C}  \mathbin{@}  \dttnt{n}
    }{G  \dttsym{;}  \Gamma  \dttsym{,}  \dttmv{z_{{\mathrm{0}}}}  \dttsym{:}  \dttsym{-} \, \dttnt{C}  \mathbin{@}  \dttnt{n}  \dttsym{,}  \dttmv{x}  \dttsym{:}  \dttsym{+} \, \dttnt{A}  \mathbin{@}  \dttnt{n}  \vdash  \dttnt{t_{{\mathrm{1}}}}  \dttsym{:}  \dttsym{+} \, \dttnt{C}  \mathbin{@}  \dttnt{n}}
  \end{array}
\end{math} \\  

\begin{math}
  \begin{array}{lcl}
    D_4:\\
    &
    $$\mprset{flushleft}
    \inferrule* [right= \ifrName{cut}] {
      D_5
      \\        
        G  \dttsym{;}  \Gamma  \dttsym{,}  \dttmv{z_{{\mathrm{0}}}}  \dttsym{:}  \dttsym{-} \, \dttnt{C}  \mathbin{@}  \dttnt{n}  \dttsym{,}  \dttmv{z_{{\mathrm{1}}}}  \dttsym{:}  \dttsym{-} \, \dttnt{A}  \mathbin{@}  \dttnt{n}  \dttsym{,}  \dttmv{x}  \dttsym{:}  \dttsym{+} \, \dttnt{B}  \mathbin{@}  \dttnt{n}  \vdash  \dttmv{z_{{\mathrm{0}}}}  \dttsym{:}  \dttsym{-} \, \dttnt{C}  \mathbin{@}  \dttnt{n}
    }{G  \dttsym{;}  \Gamma  \dttsym{,}  \dttmv{z_{{\mathrm{0}}}}  \dttsym{:}  \dttsym{-} \, \dttnt{C}  \mathbin{@}  \dttnt{n}  \dttsym{,}  \dttmv{z_{{\mathrm{1}}}}  \dttsym{:}  \dttsym{-} \, \dttnt{A}  \mathbin{@}  \dttnt{n}  \vdash  \nu \, \dttmv{x}  \dttsym{.}  \dttnt{t_{{\mathrm{2}}}}  \mathbin{\Cdot[2]}  \dttmv{z_{{\mathrm{0}}}}  \dttsym{:}  \dttsym{-} \, \dttnt{B}  \mathbin{@}  \dttnt{n}}
  \end{array}
\end{math} \\

\begin{math}
  \begin{array}{lcl}
    D_3:\\
    &
    $$\mprset{flushleft}
    \inferrule* [right= \ifrName{cut}] {
      D_6
      \\
      D_7
    }{G  \dttsym{;}  \Gamma  \dttsym{,}  \dttmv{z_{{\mathrm{0}}}}  \dttsym{:}  \dttsym{-} \, \dttnt{C}  \mathbin{@}  \dttnt{n}  \dttsym{,}  \dttmv{z_{{\mathrm{1}}}}  \dttsym{:}  \dttsym{-} \, \dttnt{A}  \mathbin{@}  \dttnt{n}  \vdash  \nu \, \dttmv{z_{{\mathrm{2}}}}  \dttsym{.}  \dttnt{t}  \mathbin{\Cdot[2]}  \dttsym{(}  \dttmv{z_{{\mathrm{1}}}}  \dttsym{,}  \dttmv{z_{{\mathrm{2}}}}  \dttsym{)}  \dttsym{:}  \dttsym{+} \, \dttnt{B}  \mathbin{@}  \dttnt{n}}
  \end{array}
\end{math} \\

\begin{math}
  \begin{array}{lcl}
    D_5:\\
    &
    $$\mprset{flushleft}
    \inferrule* [right= \ifrName{weakening}] {
      G  \dttsym{;}  \Gamma  \dttsym{,}  \dttmv{x}  \dttsym{:}  \dttsym{+} \, \dttnt{B}  \mathbin{@}  \dttnt{n}  \vdash  \dttnt{t_{{\mathrm{2}}}}  \dttsym{:}  \dttsym{+} \, \dttnt{C}  \mathbin{@}  \dttnt{n}
    }{G  \dttsym{;}  \Gamma  \dttsym{,}  \dttmv{z_{{\mathrm{0}}}}  \dttsym{:}  \dttsym{-} \, \dttnt{C}  \mathbin{@}  \dttnt{n}  \dttsym{,}  \dttmv{z_{{\mathrm{1}}}}  \dttsym{:}  \dttsym{-} \, \dttnt{A}  \mathbin{@}  \dttnt{n}  \dttsym{,}  \dttmv{x}  \dttsym{:}  \dttsym{+} \, \dttnt{B}  \mathbin{@}  \dttnt{n}  \vdash  \dttnt{t_{{\mathrm{2}}}}  \dttsym{:}  \dttsym{+} \, \dttnt{C}  \mathbin{@}  \dttnt{n}}
  \end{array}
\end{math} \\

\begin{math}
  \begin{array}{lcl}
    D_6:\\
    &
    $$\mprset{flushleft}
    \inferrule* [right= \ifrName{weakening}] {
      G  \dttsym{;}  \Gamma  \vdash  \dttnt{t}  \dttsym{:}  \dttsym{+} \, \dttsym{(}   \dttnt{A}  \ndwedge{ \dttsym{-} }  \dttnt{B}   \dttsym{)}  \mathbin{@}  \dttnt{n}
    }{G  \dttsym{;}  \Gamma  \dttsym{,}  \dttmv{z_{{\mathrm{0}}}}  \dttsym{:}  \dttsym{-} \, \dttnt{C}  \mathbin{@}  \dttnt{n}  \dttsym{,}  \dttmv{z_{{\mathrm{1}}}}  \dttsym{:}  \dttsym{-} \, \dttnt{A}  \mathbin{@}  \dttnt{n}  \dttsym{,}  \dttmv{z_{{\mathrm{2}}}}  \dttsym{:}  \dttsym{-} \, \dttnt{B}  \mathbin{@}  \dttnt{n}  \vdash  \dttnt{t}  \dttsym{:}  \dttsym{+} \, \dttsym{(}   \dttnt{A}  \ndwedge{ \dttsym{-} }  \dttnt{B}   \dttsym{)}  \mathbin{@}  \dttnt{n}}
  \end{array}
\end{math} \\

\begin{math}
  \begin{array}{lcl}
    D_7:\\
    &
    $$\mprset{flushleft}
    \inferrule* [right= \ifrName{and}] {
      D_8 
      \\
      D_9
    }{G  \dttsym{;}  \Gamma  \dttsym{,}  \dttmv{z_{{\mathrm{0}}}}  \dttsym{:}  \dttsym{-} \, \dttnt{C}  \mathbin{@}  \dttnt{n}  \dttsym{,}  \dttmv{z_{{\mathrm{1}}}}  \dttsym{:}  \dttsym{-} \, \dttnt{A}  \mathbin{@}  \dttnt{n}  \dttsym{,}  \dttmv{z_{{\mathrm{2}}}}  \dttsym{:}  \dttsym{-} \, \dttnt{B}  \mathbin{@}  \dttnt{n}  \vdash  \dttsym{(}  \dttmv{z_{{\mathrm{1}}}}  \dttsym{,}  \dttmv{z_{{\mathrm{2}}}}  \dttsym{)}  \dttsym{:}  \dttsym{-} \, \dttsym{(}   \dttnt{A}  \ndwedge{ \dttsym{-} }  \dttnt{B}   \dttsym{)}  \mathbin{@}  \dttnt{n}}
  \end{array}
\end{math} \\

\begin{math}
  \begin{array}{lcl}
    D_8:\\
    &
    $$\mprset{flushleft}
    \inferrule* [right= \ifrName{ax}] {
      \,
    }{G  \dttsym{;}  \Gamma  \dttsym{,}  \dttmv{z_{{\mathrm{0}}}}  \dttsym{:}  \dttsym{-} \, \dttnt{C}  \mathbin{@}  \dttnt{n}  \dttsym{,}  \dttmv{z_{{\mathrm{1}}}}  \dttsym{:}  \dttsym{-} \, \dttnt{A}  \mathbin{@}  \dttnt{n}  \dttsym{,}  \dttmv{z_{{\mathrm{2}}}}  \dttsym{:}  \dttsym{-} \, \dttnt{B}  \mathbin{@}  \dttnt{n}  \vdash  \dttmv{z_{{\mathrm{1}}}}  \dttsym{:}  \dttsym{-} \, \dttnt{A}  \mathbin{@}  \dttnt{n}}
  \end{array}
\end{math} \\

\begin{math}
  \begin{array}{lcl}
    D_9:\\
    &
    $$\mprset{flushleft}
    \inferrule* [right= \ifrName{ax}] {
      \,
    }{G  \dttsym{;}  \Gamma  \dttsym{,}  \dttmv{z_{{\mathrm{0}}}}  \dttsym{:}  \dttsym{-} \, \dttnt{C}  \mathbin{@}  \dttnt{n}  \dttsym{,}  \dttmv{z_{{\mathrm{1}}}}  \dttsym{:}  \dttsym{-} \, \dttnt{A}  \mathbin{@}  \dttnt{n}  \dttsym{,}  \dttmv{z_{{\mathrm{2}}}}  \dttsym{:}  \dttsym{-} \, \dttnt{B}  \mathbin{@}  \dttnt{n}  \vdash  \dttmv{z_{{\mathrm{2}}}}  \dttsym{:}  \dttsym{-} \, \dttnt{B}  \mathbin{@}  \dttnt{n}}
  \end{array}
\end{math}

\section{Proofs from Section~\ref{sec:metatheory_of_dtt}: Metatheory of DTT}
\label{sec:proofs_from_section_metatheory_of_dtt}

\subsection{Proof of Lemma~\ref{lemma:renaming_nodes_in_graph}: Node Renaming}
\label{subsec:proof_of_lemma_node_renaming}
This is a proof by induction on the assumed reachability
derivation.  Throughout each case suppose we have nodes $\dttnt{n_{{\mathrm{4}}}}$
and $\dttnt{n_{{\mathrm{5}}}}$.

\begin{description}
\item[\cW] 
  \[
      \mprset{flushleft}
      \inferrule* [right=\ifrName{ax}] {
        \ 
      }{G  \dttsym{,}  \dttnt{n_{{\mathrm{1}}}} \,  \preccurlyeq_{ \dttnt{p} }  \, \dttnt{n_{{\mathrm{3}}}}  \dttsym{,}  G'  \vdash  \dttnt{n_{{\mathrm{1}}}} \,  \preccurlyeq^*_{ \dttnt{p} }  \, \dttnt{n_{{\mathrm{3}}}}}
      \leqno{\raise 8 pt\hbox{\textbf{Case}}}      
    \]
  Trivial.

\item[\cW] 
  \[
      \mprset{flushleft}
      \inferrule* [right=\ifrName{refl}] {
        \ 
      }{G_{{\mathrm{1}}}  \dttsym{,}  G_{{\mathrm{2}}}  \vdash  \dttnt{n} \,  \preccurlyeq^*_{ \dttnt{p} }  \, \dttnt{n}}
      \leqno{\raise 8 pt\hbox{\textbf{Case}}}      
    \]
  Trivial.
  
\item[\cW] 
  \[
      \mprset{flushleft}
      \inferrule* [right=\ifrName{trans}] {
         G_{{\mathrm{1}}}  \dttsym{,}  G_{{\mathrm{2}}}  \vdash  \dttnt{n_{{\mathrm{1}}}} \,  \preccurlyeq^*_{ \dttnt{p} }  \, \dttnt{n'}  \qquad  G_{{\mathrm{1}}}  \dttsym{,}  G_{{\mathrm{2}}}  \vdash  \dttnt{n'} \,  \preccurlyeq^*_{ \dttnt{p} }  \, \dttnt{n_{{\mathrm{3}}}} 
      }{G_{{\mathrm{1}}}  \dttsym{,}  G_{{\mathrm{2}}}  \vdash  \dttnt{n_{{\mathrm{1}}}} \,  \preccurlyeq^*_{ \dttnt{p} }  \, \dttnt{n_{{\mathrm{3}}}}}
      \leqno{\raise 8 pt\hbox{\textbf{Case}}}      
    \]
  By the induction hypothesis we know that for any nodes $\dttnt{n'_{{\mathrm{4}}}}$ and $\dttnt{n'_{{\mathrm{5}}}}$ we have\\
  $\dttsym{[}  \dttnt{n'_{{\mathrm{4}}}}  \dttsym{/}  \dttnt{n'_{{\mathrm{5}}}}  \dttsym{]}  G_{{\mathrm{1}}}  \dttsym{,}  \dttsym{[}  \dttnt{n'_{{\mathrm{4}}}}  \dttsym{/}  \dttnt{n'_{{\mathrm{5}}}}  \dttsym{]}  G_{{\mathrm{2}}}  \vdash  \dttsym{[}  \dttnt{n'_{{\mathrm{4}}}}  \dttsym{/}  \dttnt{n'_{{\mathrm{5}}}}  \dttsym{]}  \dttnt{n_{{\mathrm{1}}}} \,  \preccurlyeq^*_{ \dttnt{p} }  \, \dttsym{[}  \dttnt{n'_{{\mathrm{4}}}}  \dttsym{/}  \dttnt{n'_{{\mathrm{5}}}}  \dttsym{]}  \dttnt{n'}$, and for any nodes $\dttnt{n''_{{\mathrm{4}}}}$ and $\dttnt{n''_{{\mathrm{5}}}}$ 
  we have 
  $\dttsym{[}  \dttnt{n''_{{\mathrm{4}}}}  \dttsym{/}  \dttnt{n''_{{\mathrm{5}}}}  \dttsym{]}  G_{{\mathrm{1}}}  \dttsym{,}  \dttsym{[}  \dttnt{n''_{{\mathrm{4}}}}  \dttsym{/}  \dttnt{n''_{{\mathrm{5}}}}  \dttsym{]}  G_{{\mathrm{2}}}  \vdash  \dttsym{[}  \dttnt{n''_{{\mathrm{4}}}}  \dttsym{/}  \dttnt{n''_{{\mathrm{5}}}}  \dttsym{]}  \dttnt{n'} \,  \preccurlyeq^*_{ \dttnt{p} }  \, \dttsym{[}  \dttnt{n''_{{\mathrm{4}}}}  \dttsym{/}  \dttnt{n''_{{\mathrm{5}}}}  \dttsym{]}  \dttnt{n_{{\mathrm{3}}}}$.  Choose $\dttnt{n_{{\mathrm{4}}}}$ for $\dttnt{n'_{{\mathrm{4}}}}$ and $\dttnt{n''_{{\mathrm{4}}}}$ and
  $\dttnt{n_{{\mathrm{5}}}}$ for $\dttnt{n'_{{\mathrm{5}}}}$ and $\dttnt{n''_{{\mathrm{5}}}}$ to obtain $\dttsym{[}  \dttnt{n_{{\mathrm{4}}}}  \dttsym{/}  \dttnt{n_{{\mathrm{5}}}}  \dttsym{]}  G_{{\mathrm{1}}}  \dttsym{,}  \dttsym{[}  \dttnt{n_{{\mathrm{4}}}}  \dttsym{/}  \dttnt{n_{{\mathrm{5}}}}  \dttsym{]}  G_{{\mathrm{2}}}  \vdash  \dttsym{[}  \dttnt{n_{{\mathrm{4}}}}  \dttsym{/}  \dttnt{n_{{\mathrm{5}}}}  \dttsym{]}  \dttnt{n_{{\mathrm{1}}}} \,  \preccurlyeq^*_{ \dttnt{p} }  \, \dttsym{[}  \dttnt{n_{{\mathrm{4}}}}  \dttsym{/}  \dttnt{n_{{\mathrm{5}}}}  \dttsym{]}  \dttnt{n'}$ and
  $\dttsym{[}  \dttnt{n_{{\mathrm{4}}}}  \dttsym{/}  \dttnt{n_{{\mathrm{5}}}}  \dttsym{]}  G_{{\mathrm{1}}}  \dttsym{,}  \dttsym{[}  \dttnt{n_{{\mathrm{4}}}}  \dttsym{/}  \dttnt{n_{{\mathrm{5}}}}  \dttsym{]}  G_{{\mathrm{2}}}  \vdash  \dttsym{[}  \dttnt{n_{{\mathrm{4}}}}  \dttsym{/}  \dttnt{n_{{\mathrm{5}}}}  \dttsym{]}  \dttnt{n'} \,  \preccurlyeq^*_{ \dttnt{p} }  \, \dttsym{[}  \dttnt{n_{{\mathrm{4}}}}  \dttsym{/}  \dttnt{n_{{\mathrm{5}}}}  \dttsym{]}  \dttnt{n_{{\mathrm{3}}}}$.  Finally, this case follows by reapplying the 
  rule to the previous two facts.

\item[\cW] 
  \[
      \mprset{flushleft}
      \inferrule* [right=\ifrName{flip}] {
        G  \vdash  \dttnt{n'} \,  \preccurlyeq^*_{  \bar{  \dttnt{p}  }  }  \, \dttnt{n}
      }{G  \vdash  \dttnt{n} \,  \preccurlyeq^*_{ \dttnt{p} }  \, \dttnt{n'}}
      \leqno{\raise 8 pt\hbox{\textbf{Case}}}      
    \]
  Similar to the previous case.

\end{description}

\subsection{Proof of Lemma~\ref{lemma:node_substitution_for_reachability}: Node Substitution for Reachability}
\label{subsec:proof_of_lemma_node_substitution_for_reachability}
This is a proof by induction on the form of the assumed reachability
derivation.  Throughout the following cases we assume $G  \dttsym{,}  G'  \vdash  \dttnt{n_{{\mathrm{1}}}} \,  \preccurlyeq^*_{ \dttnt{p_{{\mathrm{1}}}} }  \, \dttnt{n_{{\mathrm{3}}}}$ holds.

\begin{description}
\item[\cW] 
  \[
      \mprset{flushleft}
      \inferrule* [right=\ifrName{ax}] {
        \ 
      }{G_{{\mathrm{1}}}  \dttsym{,}  \dttnt{n_{{\mathrm{4}}}} \,  \preccurlyeq_{ \dttnt{p} }  \, \dttnt{n_{{\mathrm{5}}}}  \dttsym{,}  G_{{\mathrm{2}}}  \vdash  \dttnt{n_{{\mathrm{4}}}} \,  \preccurlyeq^*_{ \dttnt{p} }  \, \dttnt{n_{{\mathrm{5}}}}}
      \leqno{\raise 8 pt\hbox{\textbf{Case}}}      
    \]
  Suppose $G_{{\mathrm{1}}}  \dttsym{,}  \dttnt{n_{{\mathrm{4}}}} \,  \preccurlyeq_{ \dttnt{p} }  \, \dttnt{n_{{\mathrm{5}}}}  \dttsym{,}  G_{{\mathrm{2}}} = G  \dttsym{,}  \dttnt{n_{{\mathrm{1}}}} \,  \preccurlyeq_{ \dttnt{p_{{\mathrm{1}}}} }  \, \dttnt{n_{{\mathrm{2}}}}  \dttsym{,}  G'$.  Then
  either $\dttnt{n_{{\mathrm{1}}}} \,  \preccurlyeq_{ \dttnt{p_{{\mathrm{1}}}} }  \, \dttnt{n_{{\mathrm{2}}}} \in G_{{\mathrm{1}}}$, $\dttnt{n_{{\mathrm{1}}}} \,  \preccurlyeq_{ \dttnt{p_{{\mathrm{1}}}} }  \, \dttnt{n_{{\mathrm{2}}}} \in G_{{\mathrm{2}}}$,
  or $\dttnt{n_{{\mathrm{1}}}} \,  \preccurlyeq_{ \dttnt{p_{{\mathrm{1}}}} }  \, \dttnt{n_{{\mathrm{2}}}} \equiv \dttnt{n_{{\mathrm{4}}}} \,  \preccurlyeq_{ \dttnt{p} }  \, \dttnt{n_{{\mathrm{5}}}}$.  Suppose $\dttnt{n_{{\mathrm{1}}}} \,  \preccurlyeq_{ \dttnt{p_{{\mathrm{1}}}} }  \, \dttnt{n_{{\mathrm{2}}}} \in G_{{\mathrm{1}}}$,
  then $G_{{\mathrm{1}}} = G'_{{\mathrm{1}}}  \dttsym{,}  \dttnt{n_{{\mathrm{1}}}} \,  \preccurlyeq_{ \dttnt{p} }  \, \dttnt{n_{{\mathrm{2}}}}  \dttsym{,}  G''_{{\mathrm{1}}}$.  Then it is easy to see that
  $\dttsym{[}  \dttnt{n_{{\mathrm{3}}}}  \dttsym{/}  \dttnt{n_{{\mathrm{2}}}}  \dttsym{]}  G'_{{\mathrm{1}}}  \dttsym{,}  \dttsym{[}  \dttnt{n_{{\mathrm{3}}}}  \dttsym{/}  \dttnt{n_{{\mathrm{2}}}}  \dttsym{]}  G''_{{\mathrm{1}}}  \dttsym{,}  \dttsym{[}  \dttnt{n_{{\mathrm{3}}}}  \dttsym{/}  \dttnt{n_{{\mathrm{2}}}}  \dttsym{]}  \dttnt{n_{{\mathrm{4}}}} \,  \preccurlyeq_{ \dttnt{p} }  \, \dttsym{[}  \dttnt{n_{{\mathrm{3}}}}  \dttsym{/}  \dttnt{n_{{\mathrm{2}}}}  \dttsym{]}  \dttnt{n_{{\mathrm{5}}}}  \dttsym{,}  \dttsym{[}  \dttnt{n_{{\mathrm{3}}}}  \dttsym{/}  \dttnt{n_{{\mathrm{2}}}}  \dttsym{]}  G_{{\mathrm{2}}}  \vdash  \dttsym{[}  \dttnt{n_{{\mathrm{3}}}}  \dttsym{/}  \dttnt{n_{{\mathrm{2}}}}  \dttsym{]}  \dttnt{n_{{\mathrm{4}}}} \,  \preccurlyeq^*_{ \dttnt{p} }  \, \dttsym{[}  \dttnt{n_{{\mathrm{3}}}}  \dttsym{/}  \dttnt{n_{{\mathrm{2}}}}  \dttsym{]}  \dttnt{n_{{\mathrm{5}}}}$
  is derivable by applying $\dttdrulename{Ax}$.  The case where $\dttnt{n_{{\mathrm{1}}}} \,  \preccurlyeq_{ \dttnt{p_{{\mathrm{1}}}} }  \, \dttnt{n_{{\mathrm{2}}}} \in G_{{\mathrm{2}}}$ is
  similar.

  Now suppose $\dttnt{n_{{\mathrm{1}}}} \,  \preccurlyeq_{ \dttnt{p_{{\mathrm{1}}}} }  \, \dttnt{n_{{\mathrm{2}}}} \equiv \dttnt{n_{{\mathrm{4}}}} \,  \preccurlyeq_{ \dttnt{p} }  \, \dttnt{n_{{\mathrm{5}}}}$.  Then we know by assumption that 
  \[
      \mprset{flushleft}
      \inferrule* [right=\ifrName{ax}] {
        \ 
      }{G_{{\mathrm{1}}}  \dttsym{,}  \dttnt{n_{{\mathrm{1}}}} \,  \preccurlyeq_{ \dttnt{p} }  \, \dttnt{n_{{\mathrm{2}}}}  \dttsym{,}  G_{{\mathrm{2}}}  \vdash  \dttnt{n_{{\mathrm{1}}}} \,  \preccurlyeq^*_{ \dttnt{p} }  \, \dttnt{n_{{\mathrm{2}}}}}
    \]
  Then it suffices to show $\dttsym{[}  \dttnt{n_{{\mathrm{3}}}}  \dttsym{/}  \dttnt{n_{{\mathrm{2}}}}  \dttsym{]}  G_{{\mathrm{1}}}  \dttsym{,}  \dttsym{[}  \dttnt{n_{{\mathrm{3}}}}  \dttsym{/}  \dttnt{n_{{\mathrm{2}}}}  \dttsym{]}  G_{{\mathrm{2}}}  \vdash  \dttsym{[}  \dttnt{n_{{\mathrm{3}}}}  \dttsym{/}  \dttnt{n_{{\mathrm{2}}}}  \dttsym{]}  \dttnt{n_{{\mathrm{1}}}} \,  \preccurlyeq^*_{ \dttnt{p} }  \, \dttsym{[}  \dttnt{n_{{\mathrm{3}}}}  \dttsym{/}  \dttnt{n_{{\mathrm{2}}}}  \dttsym{]}  \dttnt{n_{{\mathrm{2}}}}$, which is equivalent
  to $\dttsym{[}  \dttnt{n_{{\mathrm{3}}}}  \dttsym{/}  \dttnt{n_{{\mathrm{2}}}}  \dttsym{]}  G_{{\mathrm{1}}}  \dttsym{,}  \dttsym{[}  \dttnt{n_{{\mathrm{3}}}}  \dttsym{/}  \dttnt{n_{{\mathrm{2}}}}  \dttsym{]}  G_{{\mathrm{2}}}  \vdash  \dttsym{[}  \dttnt{n_{{\mathrm{3}}}}  \dttsym{/}  \dttnt{n_{{\mathrm{2}}}}  \dttsym{]}  \dttnt{n_{{\mathrm{1}}}} \,  \preccurlyeq^*_{ \dttnt{p} }  \, \dttnt{n_{{\mathrm{3}}}}$.  Now if $\dttnt{n_{{\mathrm{1}}}}$ is equivalent to $\dttnt{n_{{\mathrm{2}}}}$, then
  $\dttsym{[}  \dttnt{n_{{\mathrm{3}}}}  \dttsym{/}  \dttnt{n_{{\mathrm{2}}}}  \dttsym{]}  G_{{\mathrm{1}}}  \dttsym{,}  \dttsym{[}  \dttnt{n_{{\mathrm{3}}}}  \dttsym{/}  \dttnt{n_{{\mathrm{2}}}}  \dttsym{]}  G_{{\mathrm{2}}}  \vdash  \dttsym{[}  \dttnt{n_{{\mathrm{3}}}}  \dttsym{/}  \dttnt{n_{{\mathrm{2}}}}  \dttsym{]}  \dttnt{n_{{\mathrm{1}}}} \,  \preccurlyeq^*_{ \dttnt{p} }  \, \dttnt{n_{{\mathrm{3}}}}$ holds by reflexivity, and if $\dttnt{n_{{\mathrm{1}}}}$ is distinct from $\dttnt{n_{{\mathrm{2}}}}$,
  then $\dttsym{[}  \dttnt{n_{{\mathrm{3}}}}  \dttsym{/}  \dttnt{n_{{\mathrm{2}}}}  \dttsym{]}  G_{{\mathrm{1}}}  \dttsym{,}  \dttsym{[}  \dttnt{n_{{\mathrm{3}}}}  \dttsym{/}  \dttnt{n_{{\mathrm{2}}}}  \dttsym{]}  G_{{\mathrm{2}}}  \vdash  \dttsym{[}  \dttnt{n_{{\mathrm{3}}}}  \dttsym{/}  \dttnt{n_{{\mathrm{2}}}}  \dttsym{]}  \dttnt{n_{{\mathrm{1}}}} \,  \preccurlyeq^*_{ \dttnt{p} }  \, \dttnt{n_{{\mathrm{3}}}}$ is equivalent to 
  $\dttsym{[}  \dttnt{n_{{\mathrm{3}}}}  \dttsym{/}  \dttnt{n_{{\mathrm{2}}}}  \dttsym{]}  G_{{\mathrm{1}}}  \dttsym{,}  \dttsym{[}  \dttnt{n_{{\mathrm{3}}}}  \dttsym{/}  \dttnt{n_{{\mathrm{2}}}}  \dttsym{]}  G_{{\mathrm{2}}}  \vdash  \dttnt{n_{{\mathrm{1}}}} \,  \preccurlyeq^*_{ \dttnt{p} }  \, \dttnt{n_{{\mathrm{3}}}}$.  We know by assumption that $G  \dttsym{,}  G'  \vdash  \dttnt{n_{{\mathrm{1}}}} \,  \preccurlyeq^*_{ \dttnt{p_{{\mathrm{1}}}} }  \, \dttnt{n_{{\mathrm{3}}}}$ holds, which
  is equivalent to $G_{{\mathrm{1}}}  \dttsym{,}  G_{{\mathrm{2}}}  \vdash  \dttnt{n_{{\mathrm{1}}}} \,  \preccurlyeq^*_{ \dttnt{p} }  \, \dttnt{n_{{\mathrm{3}}}}$.  Now if $\dttnt{n_{{\mathrm{3}}}}$ is equal to $\dttnt{n_{{\mathrm{2}}}}$, then 
  $\dttsym{[}  \dttnt{n_{{\mathrm{3}}}}  \dttsym{/}  \dttnt{n_{{\mathrm{2}}}}  \dttsym{]}  G_{{\mathrm{1}}}  \dttsym{,}  \dttsym{[}  \dttnt{n_{{\mathrm{3}}}}  \dttsym{/}  \dttnt{n_{{\mathrm{2}}}}  \dttsym{]}  G_{{\mathrm{2}}}  \vdash  \dttnt{n_{{\mathrm{1}}}} \,  \preccurlyeq^*_{ \dttnt{p} }  \, \dttnt{n_{{\mathrm{3}}}}$ is equivalent to $G_{{\mathrm{1}}}  \dttsym{,}  G_{{\mathrm{2}}}  \vdash  \dttnt{n_{{\mathrm{1}}}} \,  \preccurlyeq^*_{ \dttnt{p} }  \, \dttnt{n_{{\mathrm{3}}}}$.  So suppose
  $\dttnt{n_{{\mathrm{3}}}}$ is distinct from $\dttnt{n_{{\mathrm{2}}}}$, then by Lemma~\ref{lemma:renaming_nodes_in_graph} we know 
  $\dttsym{[}  \dttnt{n_{{\mathrm{3}}}}  \dttsym{/}  \dttnt{n_{{\mathrm{2}}}}  \dttsym{]}  G_{{\mathrm{1}}}  \dttsym{,}  \dttsym{[}  \dttnt{n_{{\mathrm{3}}}}  \dttsym{/}  \dttnt{n_{{\mathrm{2}}}}  \dttsym{]}  G_{{\mathrm{2}}}  \vdash  \dttnt{n_{{\mathrm{1}}}} \,  \preccurlyeq^*_{ \dttnt{p} }  \, \dttnt{n_{{\mathrm{3}}}}$.        
  
\item[\cW] 
  \[
      \mprset{flushleft}
      \inferrule* [right=\ifrName{refl}] {
        \ 
      }{G  \dttsym{,}  \dttnt{n_{{\mathrm{1}}}} \,  \preccurlyeq_{ \dttnt{p_{{\mathrm{1}}}} }  \, \dttnt{n_{{\mathrm{2}}}}  \dttsym{,}  G'  \vdash  \dttnt{n} \,  \preccurlyeq^*_{ \dttnt{p} }  \, \dttnt{n}}
      \leqno{\raise 8 pt\hbox{\textbf{Case}}}      
    \]
  Trivial.

\item[\cW] 
  \[
      \mprset{flushleft}
      \inferrule* [right=\ifrName{trans}] {
         G  \dttsym{,}  \dttnt{n_{{\mathrm{1}}}} \,  \preccurlyeq_{ \dttnt{p_{{\mathrm{1}}}} }  \, \dttnt{n_{{\mathrm{2}}}}  \dttsym{,}  G'  \vdash  \dttnt{n_{{\mathrm{4}}}} \,  \preccurlyeq^*_{ \dttnt{p} }  \, \dttnt{n_{{\mathrm{6}}}}  \qquad  G  \vdash  \dttnt{n_{{\mathrm{6}}}} \,  \preccurlyeq^*_{ \dttnt{p} }  \, \dttnt{n_{{\mathrm{5}}}} 
      }{G  \dttsym{,}  \dttnt{n_{{\mathrm{1}}}} \,  \preccurlyeq_{ \dttnt{p_{{\mathrm{1}}}} }  \, \dttnt{n_{{\mathrm{2}}}}  \dttsym{,}  G'  \vdash  \dttnt{n_{{\mathrm{4}}}} \,  \preccurlyeq^*_{ \dttnt{p} }  \, \dttnt{n_{{\mathrm{5}}}}}
      \leqno{\raise 8 pt\hbox{\textbf{Case}}}      
    \]
  This case by applying the induction to each premise, and then
  reapplying the rule.

\item[\cW] 
  \[
      \mprset{flushleft}
      \inferrule* [right=\ifrName{flip}] {
        G  \dttsym{,}  \dttnt{n_{{\mathrm{1}}}} \,  \preccurlyeq_{ \dttnt{p_{{\mathrm{1}}}} }  \, \dttnt{n_{{\mathrm{2}}}}  \dttsym{,}  G'  \vdash  \dttnt{n_{{\mathrm{5}}}} \,  \preccurlyeq^*_{  \bar{  \dttnt{p}  }  }  \, \dttnt{n_{{\mathrm{4}}}}
      }{G  \dttsym{,}  \dttnt{n_{{\mathrm{1}}}} \,  \preccurlyeq_{ \dttnt{p_{{\mathrm{1}}}} }  \, \dttnt{n_{{\mathrm{2}}}}  \dttsym{,}  G'  \vdash  \dttnt{n_{{\mathrm{4}}}} \,  \preccurlyeq^*_{ \dttnt{p} }  \, \dttnt{n_{{\mathrm{5}}}}}
      \leqno{\raise 8 pt\hbox{\textbf{Case}}}      
    \]
  This case holds by applying the induction hypothesis to the
  premise, and then reapplying the rule.    
\end{description}

\subsection{Proof of Lemma~\ref{lemma:node_substitution_for_typing}: Node Substitution for Typing}
\label{subsec:proof_of_lemma_node_substitution_for_typing}
This is a proof by induction on the form of the assumed typing
derivation.  Throughout each of the following cases we assume
$G  \dttsym{,}  G'  \vdash  \dttnt{n_{{\mathrm{1}}}} \,  \preccurlyeq^*_{ \dttnt{p_{{\mathrm{1}}}} }  \, \dttnt{n_{{\mathrm{4}}}}$ holds.

\begin{description}
\item[\cW] 
  \[
      \mprset{flushleft}
      \inferrule* [right=\ifrName{Ax}] {
        G  \dttsym{,}  \dttnt{n_{{\mathrm{1}}}} \,  \preccurlyeq_{ \dttnt{p_{{\mathrm{1}}}} }  \, \dttnt{n_{{\mathrm{2}}}}  \dttsym{,}  G'  \vdash  \dttnt{n} \,  \preccurlyeq^*_{ \dttnt{p} }  \, \dttnt{n_{{\mathrm{3}}}}
      }{G  \dttsym{,}  \dttnt{n_{{\mathrm{1}}}} \,  \preccurlyeq_{ \dttnt{p_{{\mathrm{1}}}} }  \, \dttnt{n_{{\mathrm{2}}}}  \dttsym{,}  G'  \dttsym{;}  \Gamma_{{\mathrm{1}}}  \dttsym{,}  \dttmv{y}  \dttsym{:}  \dttnt{p_{{\mathrm{2}}}} \, \dttnt{A}  \mathbin{@}  \dttnt{n}  \dttsym{,}  \Gamma_{{\mathrm{2}}}  \vdash  \dttmv{y}  \dttsym{:}  \dttnt{p_{{\mathrm{2}}}} \, \dttnt{A}  \mathbin{@}  \dttnt{n_{{\mathrm{3}}}}}
      \leqno{\raise 8 pt\hbox{\textbf{Case}}}      
    \]
  First, by node substitution for reachability (Lemma~\ref{lemma:node_substitution_for_reachability}) we know\\
  $\dttsym{[}  \dttnt{n_{{\mathrm{4}}}}  \dttsym{/}  \dttnt{n_{{\mathrm{2}}}}  \dttsym{]}  G  \dttsym{,}  \dttsym{[}  \dttnt{n_{{\mathrm{4}}}}  \dttsym{/}  \dttnt{n_{{\mathrm{2}}}}  \dttsym{]}  G'  \vdash  \dttsym{[}  \dttnt{n_{{\mathrm{4}}}}  \dttsym{/}  \dttnt{n_{{\mathrm{2}}}}  \dttsym{]}  \dttnt{n} \,  \preccurlyeq^*_{ \dttnt{p} }  \, \dttsym{[}  \dttnt{n_{{\mathrm{4}}}}  \dttsym{/}  \dttnt{n_{{\mathrm{2}}}}  \dttsym{]}  \dttnt{n_{{\mathrm{3}}}}$.  Thus, by applying the $\dttdrulename{Ax}$ rule we may derive
  $\dttsym{[}  \dttnt{n_{{\mathrm{4}}}}  \dttsym{/}  \dttnt{n_{{\mathrm{2}}}}  \dttsym{]}  G  \dttsym{,}  \dttsym{[}  \dttnt{n_{{\mathrm{4}}}}  \dttsym{/}  \dttnt{n_{{\mathrm{2}}}}  \dttsym{]}  G'  \dttsym{;}  \dttsym{[}  \dttnt{n_{{\mathrm{4}}}}  \dttsym{/}  \dttnt{n_{{\mathrm{2}}}}  \dttsym{]}  \Gamma_{{\mathrm{1}}}  \dttsym{,}  \dttmv{y}  \dttsym{:}  \dttnt{p_{{\mathrm{2}}}} \, \dttnt{A}  \mathbin{@}  \dttsym{[}  \dttnt{n_{{\mathrm{4}}}}  \dttsym{/}  \dttnt{n_{{\mathrm{2}}}}  \dttsym{]}  \dttnt{n}  \dttsym{,}  \dttsym{[}  \dttnt{n_{{\mathrm{4}}}}  \dttsym{/}  \dttnt{n_{{\mathrm{2}}}}  \dttsym{]}  \Gamma_{{\mathrm{2}}}  \vdash  \dttmv{y}  \dttsym{:}  \dttnt{p_{{\mathrm{2}}}} \, \dttnt{A}  \mathbin{@}  \dttsym{[}  \dttnt{n_{{\mathrm{4}}}}  \dttsym{/}  \dttnt{n_{{\mathrm{2}}}}  \dttsym{]}  \dttnt{n_{{\mathrm{3}}}}$.

\item[\cW] 
  \[
      \mprset{flushleft}
      \inferrule* [right=\ifrName{Unit}] {
        \ 
      }{G  \dttsym{,}  \dttnt{n_{{\mathrm{1}}}} \,  \preccurlyeq_{ \dttnt{p_{{\mathrm{1}}}} }  \, \dttnt{n_{{\mathrm{2}}}}  \dttsym{,}  G'  \dttsym{;}  \Gamma  \vdash  \dttkw{triv}  \dttsym{:}  \dttnt{p_{{\mathrm{2}}}} \,  \langle  \dttnt{p_{{\mathrm{2}}}} \rangle   \mathbin{@}  \dttnt{n_{{\mathrm{3}}}}}
      \leqno{\raise 8 pt\hbox{\textbf{Case}}}      
    \]
  Trivial.

\item[\cW] 
  \[
      \mprset{flushleft}
      \inferrule* [right=\ifrName{And}] {
         G  \dttsym{,}  \dttnt{n_{{\mathrm{1}}}} \,  \preccurlyeq_{ \dttnt{p_{{\mathrm{1}}}} }  \, \dttnt{n_{{\mathrm{2}}}}  \dttsym{;}  \Gamma  \vdash  \dttnt{t_{{\mathrm{1}}}}  \dttsym{:}  \dttnt{p_{{\mathrm{2}}}} \, \dttnt{A_{{\mathrm{1}}}}  \mathbin{@}  \dttnt{n_{{\mathrm{3}}}}  \qquad  G  \dttsym{,}  \dttnt{n_{{\mathrm{1}}}} \,  \preccurlyeq_{ \dttnt{p_{{\mathrm{1}}}} }  \, \dttnt{n_{{\mathrm{2}}}}  \dttsym{;}  \Gamma  \vdash  \dttnt{t_{{\mathrm{2}}}}  \dttsym{:}  \dttnt{p_{{\mathrm{2}}}} \, \dttnt{A_{{\mathrm{2}}}}  \mathbin{@}  \dttnt{n_{{\mathrm{3}}}} 
      }{G  \dttsym{,}  \dttnt{n_{{\mathrm{1}}}} \,  \preccurlyeq_{ \dttnt{p_{{\mathrm{1}}}} }  \, \dttnt{n_{{\mathrm{2}}}}  \dttsym{;}  \Gamma  \vdash  \dttsym{(}  \dttnt{t_{{\mathrm{1}}}}  \dttsym{,}  \dttnt{t_{{\mathrm{2}}}}  \dttsym{)}  \dttsym{:}  \dttnt{p_{{\mathrm{2}}}} \, \dttsym{(}   \dttnt{A_{{\mathrm{1}}}}  \ndwedge{ \dttnt{p_{{\mathrm{2}}}} }  \dttnt{A_{{\mathrm{2}}}}   \dttsym{)}  \mathbin{@}  \dttnt{n_{{\mathrm{3}}}}}
      \leqno{\raise 8 pt\hbox{\textbf{Case}}}      
    \]
  This case holds by applying the induction hypothesis to each
  premise, and then reapplying the rule.

\item[\cW] 
  \[
      \mprset{flushleft}
      \inferrule* [right=\ifrName{AndBar}] {
        G  \dttsym{,}  \dttnt{n_{{\mathrm{1}}}} \,  \preccurlyeq_{ \dttnt{p_{{\mathrm{1}}}} }  \, \dttnt{n_{{\mathrm{2}}}}  \dttsym{;}  \Gamma  \vdash  \dttnt{t'}  \dttsym{:}  \dttnt{p_{{\mathrm{2}}}} \,  \dttnt{A} _{ \dttnt{d} }   \mathbin{@}  \dttnt{n_{{\mathrm{3}}}}
      }{G  \dttsym{,}  \dttnt{n_{{\mathrm{1}}}} \,  \preccurlyeq_{ \dttnt{p_{{\mathrm{1}}}} }  \, \dttnt{n_{{\mathrm{2}}}}  \dttsym{;}  \Gamma  \vdash   \mathbf{in}_{ \dttnt{d} }\, \dttnt{t'}   \dttsym{:}  \dttnt{p_{{\mathrm{2}}}} \, \dttsym{(}   \dttnt{A_{{\mathrm{1}}}}  \ndwedge{  \bar{  \dttnt{p_{{\mathrm{2}}}}  }  }  \dttnt{A_{{\mathrm{2}}}}   \dttsym{)}  \mathbin{@}  \dttnt{n_{{\mathrm{3}}}}}
      \leqno{\raise 8 pt\hbox{\textbf{Case}}}      
    \]
  This case holds by applying the induction hypothesis to the
  premise, and then reapplying the rule.

\item[\cW] 
  \[
      \mprset{flushleft}
      \inferrule* [right=\ifrName{Imp}] {
        \dttnt{n'} \, \not\in \, \dttsym{\mbox{$\mid$}}  G  \dttsym{,}  \dttnt{n_{{\mathrm{1}}}} \,  \preccurlyeq_{ \dttnt{p_{{\mathrm{1}}}} }  \, \dttnt{n_{{\mathrm{2}}}}  \dttsym{,}  G'  \dttsym{\mbox{$\mid$}}  \dttsym{,}  \dttsym{\mbox{$\mid$}}  \Gamma  \dttsym{\mbox{$\mid$}}
        \\\\
            \dttsym{(}  G  \dttsym{,}  \dttnt{n_{{\mathrm{1}}}} \,  \preccurlyeq_{ \dttnt{p_{{\mathrm{1}}}} }  \, \dttnt{n_{{\mathrm{2}}}}  \dttsym{,}  G'  \dttsym{,}  \dttnt{n_{{\mathrm{3}}}} \,  \preccurlyeq_{ \dttnt{p_{{\mathrm{2}}}} }  \, \dttnt{n'}  \dttsym{)}  \dttsym{;}  \Gamma  \dttsym{,}  \dttmv{x}  \dttsym{:}  \dttnt{p_{{\mathrm{2}}}} \, \dttnt{A_{{\mathrm{1}}}}  \mathbin{@}  \dttnt{n'}  \vdash  \dttnt{t'}  \dttsym{:}  \dttnt{p_{{\mathrm{2}}}} \, \dttnt{A_{{\mathrm{2}}}}  \mathbin{@}  \dttnt{n'}
      }{G  \dttsym{,}  \dttnt{n_{{\mathrm{1}}}} \,  \preccurlyeq_{ \dttnt{p_{{\mathrm{1}}}} }  \, \dttnt{n_{{\mathrm{2}}}}  \dttsym{,}  G'  \dttsym{;}  \Gamma  \vdash  \lambda  \dttmv{x}  \dttsym{.}  \dttnt{t'}  \dttsym{:}  \dttnt{p_{{\mathrm{2}}}} \, \dttsym{(}   \dttnt{A_{{\mathrm{1}}}}  \ndto{ \dttnt{p_{{\mathrm{2}}}} }  \dttnt{A_{{\mathrm{2}}}}   \dttsym{)}  \mathbin{@}  \dttnt{n_{{\mathrm{3}}}}}
      \leqno{\raise 8 pt\hbox{\textbf{Case}}}      
    \]
  First, if $\dttnt{n'} \, \not\in \, \dttsym{\mbox{$\mid$}}  G  \dttsym{,}  \dttnt{n_{{\mathrm{1}}}} \,  \preccurlyeq_{ \dttnt{p_{{\mathrm{1}}}} }  \, \dttnt{n_{{\mathrm{2}}}}  \dttsym{,}  G'  \dttsym{\mbox{$\mid$}}  \dttsym{,}  \dttsym{\mbox{$\mid$}}  \Gamma  \dttsym{\mbox{$\mid$}}$, then $\dttnt{n'} \, \not\in \, \dttsym{\mbox{$\mid$}}  G  \dttsym{,}  G'  \dttsym{\mbox{$\mid$}}  \dttsym{,}  \dttsym{\mbox{$\mid$}}  \Gamma  \dttsym{\mbox{$\mid$}}$.  Furthermore,
  we know that $\dttsym{[}  \dttnt{n_{{\mathrm{4}}}}  \dttsym{/}  \dttnt{n_{{\mathrm{2}}}}  \dttsym{]}  \dttnt{n'} \, \not\in \, \dttsym{\mbox{$\mid$}}  \dttsym{[}  \dttnt{n_{{\mathrm{4}}}}  \dttsym{/}  \dttnt{n_{{\mathrm{2}}}}  \dttsym{]}  G  \dttsym{,}  \dttsym{[}  \dttnt{n_{{\mathrm{4}}}}  \dttsym{/}  \dttnt{n_{{\mathrm{2}}}}  \dttsym{]}  G'  \dttsym{\mbox{$\mid$}}  \dttsym{,}  \dttsym{\mbox{$\mid$}}  \dttsym{[}  \dttnt{n_{{\mathrm{4}}}}  \dttsym{/}  \dttnt{n_{{\mathrm{2}}}}  \dttsym{]}  \Gamma  \dttsym{\mbox{$\mid$}}$, because we know 
  $\dttnt{n'}$ is distinct from $\dttnt{n_{{\mathrm{2}}}}$ by assumption, and if $\dttnt{n'}$ is equal to $\dttnt{n_{{\mathrm{4}}}}$, then 
  $\dttnt{n'} \, \not\in \, \dttsym{\mbox{$\mid$}}  G  \dttsym{,}  \dttnt{n_{{\mathrm{1}}}} \,  \preccurlyeq_{ \dttnt{p_{{\mathrm{1}}}} }  \, \dttnt{n_{{\mathrm{2}}}}  \dttsym{,}  G'  \dttsym{\mbox{$\mid$}}  \dttsym{,}  \dttsym{\mbox{$\mid$}}  \Gamma  \dttsym{\mbox{$\mid$}}$ implies that $\dttnt{n_{{\mathrm{1}}}}$ must also be $\dttnt{n_{{\mathrm{4}}}}$, because we know by assumption 
  that $G  \dttsym{,}  G'  \vdash  \dttnt{n_{{\mathrm{1}}}} \,  \preccurlyeq^*_{ \dttnt{p_{{\mathrm{1}}}} }  \, \dttnt{n_{{\mathrm{4}}}}$, which could
  only be derived by reflexivity since $\dttnt{n'} \, \not\in \, \dttsym{\mbox{$\mid$}}  G  \dttsym{,}  G'  \dttsym{\mbox{$\mid$}}  \dttsym{,}  \dttsym{\mbox{$\mid$}}  \Gamma  \dttsym{\mbox{$\mid$}}$, but we know by assumption that 
  $\dttnt{n'} \, \not\in \, \dttsym{\mbox{$\mid$}}  G  \dttsym{,}  \dttnt{n_{{\mathrm{1}}}} \,  \preccurlyeq_{ \dttnt{p_{{\mathrm{1}}}} }  \, \dttnt{n_{{\mathrm{2}}}}  \dttsym{,}  G'  \dttsym{\mbox{$\mid$}}  \dttsym{,}  \dttsym{\mbox{$\mid$}}  \Gamma  \dttsym{\mbox{$\mid$}}$, which implies that $\dttnt{n'}$ must be distinct from $\dttnt{n_{{\mathrm{1}}}}$, 
  and hence a contradiction, thus $\dttnt{n'}$ cannot be $\dttnt{n_{{\mathrm{4}}}}$.  Therefore, we know 
  $\dttnt{n'} \, \not\in \, \dttsym{\mbox{$\mid$}}  \dttsym{[}  \dttnt{n_{{\mathrm{4}}}}  \dttsym{/}  \dttnt{n_{{\mathrm{2}}}}  \dttsym{]}  G  \dttsym{,}  \dttsym{[}  \dttnt{n_{{\mathrm{4}}}}  \dttsym{/}  \dttnt{n_{{\mathrm{2}}}}  \dttsym{]}  G'  \dttsym{\mbox{$\mid$}}  \dttsym{,}  \dttsym{\mbox{$\mid$}}  \dttsym{[}  \dttnt{n_{{\mathrm{4}}}}  \dttsym{/}  \dttnt{n_{{\mathrm{2}}}}  \dttsym{]}  \Gamma  \dttsym{\mbox{$\mid$}}$.   

  \ \\
  By the induction hypothesis we know 
  \begin{center}
    \scriptsize
    \begin{math}
      \dttsym{(}  \dttsym{[}  \dttnt{n_{{\mathrm{4}}}}  \dttsym{/}  \dttnt{n_{{\mathrm{2}}}}  \dttsym{]}  G  \dttsym{,}  \dttsym{[}  \dttnt{n_{{\mathrm{4}}}}  \dttsym{/}  \dttnt{n_{{\mathrm{2}}}}  \dttsym{]}  G'  \dttsym{,}  \dttsym{[}  \dttnt{n_{{\mathrm{4}}}}  \dttsym{/}  \dttnt{n_{{\mathrm{2}}}}  \dttsym{]}  \dttnt{n_{{\mathrm{3}}}} \,  \preccurlyeq_{ \dttnt{p_{{\mathrm{2}}}} }  \, \dttsym{[}  \dttnt{n_{{\mathrm{4}}}}  \dttsym{/}  \dttnt{n_{{\mathrm{2}}}}  \dttsym{]}  \dttnt{n'}  \dttsym{)}  \dttsym{;}  \dttsym{[}  \dttnt{n_{{\mathrm{4}}}}  \dttsym{/}  \dttnt{n_{{\mathrm{2}}}}  \dttsym{]}  \Gamma  \dttsym{,}  \dttmv{x}  \dttsym{:}  \dttnt{p_{{\mathrm{2}}}} \, \dttnt{A_{{\mathrm{1}}}}  \mathbin{@}  \dttsym{[}  \dttnt{n_{{\mathrm{4}}}}  \dttsym{/}  \dttnt{n_{{\mathrm{2}}}}  \dttsym{]}  \dttnt{n'}  \vdash  \dttnt{t'}  \dttsym{:}  \dttnt{p_{{\mathrm{2}}}} \, \dttnt{A_{{\mathrm{2}}}}  \mathbin{@}  \dttsym{[}  \dttnt{n_{{\mathrm{4}}}}  \dttsym{/}  \dttnt{n_{{\mathrm{2}}}}  \dttsym{]}  \dttnt{n'}\,,      
    \end{math}
  \end{center}  
  which
  is equivalent to 
  \[ \dttsym{(}  \dttsym{[}  \dttnt{n_{{\mathrm{4}}}}  \dttsym{/}  \dttnt{n_{{\mathrm{2}}}}  \dttsym{]}  G  \dttsym{,}  \dttsym{[}  \dttnt{n_{{\mathrm{4}}}}  \dttsym{/}  \dttnt{n_{{\mathrm{2}}}}  \dttsym{]}  G'  \dttsym{,}  \dttsym{[}  \dttnt{n_{{\mathrm{4}}}}  \dttsym{/}  \dttnt{n_{{\mathrm{2}}}}  \dttsym{]}  \dttnt{n_{{\mathrm{3}}}} \,  \preccurlyeq_{ \dttnt{p_{{\mathrm{2}}}} }  \, \dttnt{n'}  \dttsym{)}  \dttsym{;}  \dttsym{[}  \dttnt{n_{{\mathrm{4}}}}  \dttsym{/}  \dttnt{n_{{\mathrm{2}}}}  \dttsym{]}  \Gamma  \dttsym{,}  \dttmv{x}  \dttsym{:}  \dttnt{p_{{\mathrm{2}}}} \, \dttnt{A_{{\mathrm{1}}}}  \mathbin{@}  \dttnt{n'}  \vdash  \dttnt{t'}  \dttsym{:}  \dttnt{p_{{\mathrm{2}}}} \, \dttnt{A_{{\mathrm{2}}}}  \mathbin{@}  \dttnt{n'}. \] 
  Finally, this case follows by applying the $\dttdrulename{Imp}$ rule using\\
  $\dttnt{n'} \, \not\in \, \dttsym{\mbox{$\mid$}}  \dttsym{[}  \dttnt{n_{{\mathrm{4}}}}  \dttsym{/}  \dttnt{n_{{\mathrm{2}}}}  \dttsym{]}  G  \dttsym{,}  \dttsym{[}  \dttnt{n_{{\mathrm{4}}}}  \dttsym{/}  \dttnt{n_{{\mathrm{2}}}}  \dttsym{]}  G'  \dttsym{\mbox{$\mid$}}  \dttsym{,}  \dttsym{\mbox{$\mid$}}  \dttsym{[}  \dttnt{n_{{\mathrm{4}}}}  \dttsym{/}  \dttnt{n_{{\mathrm{2}}}}  \dttsym{]}  \Gamma  \dttsym{\mbox{$\mid$}}$ and the previous fact.

\item[\cW] 
  \[
      \mprset{flushleft}
      \inferrule* [right={\tiny ImpBar}] {
        G  \dttsym{,}  \dttnt{n_{{\mathrm{1}}}} \,  \preccurlyeq_{ \dttnt{p_{{\mathrm{1}}}} }  \, \dttnt{n_{{\mathrm{2}}}}  \dttsym{,}  G'  \vdash  \dttnt{n_{{\mathrm{3}}}} \,  \preccurlyeq^*_{  \bar{  \dttnt{p_{{\mathrm{2}}}}  }  }  \, \dttnt{n'}
        \\\\
             G  \dttsym{,}  \dttnt{n_{{\mathrm{1}}}} \,  \preccurlyeq_{ \dttnt{p_{{\mathrm{1}}}} }  \, \dttnt{n_{{\mathrm{2}}}}  \dttsym{,}  G'  \dttsym{;}  \Gamma  \vdash  \dttnt{t_{{\mathrm{1}}}}  \dttsym{:}   \bar{  \dttnt{p_{{\mathrm{2}}}}  }  \, \dttnt{A_{{\mathrm{1}}}}  \mathbin{@}  \dttnt{n'}  \qquad  G  \dttsym{,}  \dttnt{n_{{\mathrm{1}}}} \,  \preccurlyeq_{ \dttnt{p_{{\mathrm{1}}}} }  \, \dttnt{n_{{\mathrm{2}}}}  \dttsym{,}  G'  \dttsym{;}  \Gamma  \vdash  \dttnt{t_{{\mathrm{2}}}}  \dttsym{:}  \dttnt{p_{{\mathrm{2}}}} \, \dttnt{A_{{\mathrm{2}}}}  \mathbin{@}  \dttnt{n'} 
      }{G  \dttsym{,}  \dttnt{n_{{\mathrm{1}}}} \,  \preccurlyeq_{ \dttnt{p_{{\mathrm{1}}}} }  \, \dttnt{n_{{\mathrm{2}}}}  \dttsym{,}  G'  \dttsym{;}  \Gamma  \vdash  \langle  \dttnt{t_{{\mathrm{1}}}}  \dttsym{,}  \dttnt{t_{{\mathrm{2}}}}  \rangle  \dttsym{:}  \dttnt{p_{{\mathrm{2}}}} \, \dttsym{(}   \dttnt{A_{{\mathrm{1}}}}  \ndto{  \bar{  \dttnt{p_{{\mathrm{2}}}}  }  }  \dttnt{A_{{\mathrm{2}}}}   \dttsym{)}  \mathbin{@}  \dttnt{n_{{\mathrm{3}}}}}
      \leqno{\raise 8 pt\hbox{\textbf{Case}}}      
    \]
  We now by assumption that $G  \dttsym{,}  G'  \vdash  \dttnt{n_{{\mathrm{1}}}} \,  \preccurlyeq^*_{ \dttnt{p_{{\mathrm{1}}}} }  \, \dttnt{n_{{\mathrm{4}}}}$ holds.  So by node substitution for reachability
  (Lemma~\ref{lemma:node_substitution_for_reachability}) we know $\dttsym{[}  \dttnt{n_{{\mathrm{4}}}}  \dttsym{/}  \dttnt{n_{{\mathrm{2}}}}  \dttsym{]}  G  \dttsym{,}  \dttsym{[}  \dttnt{n_{{\mathrm{4}}}}  \dttsym{/}  \dttnt{n_{{\mathrm{2}}}}  \dttsym{]}  G'  \vdash  \dttsym{[}  \dttnt{n_{{\mathrm{4}}}}  \dttsym{/}  \dttnt{n_{{\mathrm{2}}}}  \dttsym{]}  \dttnt{n_{{\mathrm{3}}}} \,  \preccurlyeq^*_{  \bar{  \dttnt{p_{{\mathrm{2}}}}  }  }  \, \dttsym{[}  \dttnt{n_{{\mathrm{4}}}}  \dttsym{/}  \dttnt{n_{{\mathrm{2}}}}  \dttsym{]}  \dttnt{n'}$.
  Now by the induction hypothesis we know $\dttsym{[}  \dttnt{n_{{\mathrm{4}}}}  \dttsym{/}  \dttnt{n_{{\mathrm{2}}}}  \dttsym{]}  G  \dttsym{,}  \dttsym{[}  \dttnt{n_{{\mathrm{4}}}}  \dttsym{/}  \dttnt{n_{{\mathrm{2}}}}  \dttsym{]}  G'  \dttsym{;}  \dttsym{[}  \dttnt{n_{{\mathrm{4}}}}  \dttsym{/}  \dttnt{n_{{\mathrm{2}}}}  \dttsym{]}  \Gamma  \vdash  \dttnt{t_{{\mathrm{1}}}}  \dttsym{:}   \bar{  \dttnt{p_{{\mathrm{2}}}}  }  \, \dttnt{A_{{\mathrm{1}}}}  \mathbin{@}  \dttsym{[}  \dttnt{n_{{\mathrm{4}}}}  \dttsym{/}  \dttnt{n_{{\mathrm{2}}}}  \dttsym{]}  \dttnt{n'}$  and \\
  $\dttsym{[}  \dttnt{n_{{\mathrm{4}}}}  \dttsym{/}  \dttnt{n_{{\mathrm{2}}}}  \dttsym{]}  G  \dttsym{,}  \dttsym{[}  \dttnt{n_{{\mathrm{4}}}}  \dttsym{/}  \dttnt{n_{{\mathrm{2}}}}  \dttsym{]}  G'  \dttsym{;}  \dttsym{[}  \dttnt{n_{{\mathrm{4}}}}  \dttsym{/}  \dttnt{n_{{\mathrm{2}}}}  \dttsym{]}  \Gamma  \vdash  \dttnt{t_{{\mathrm{2}}}}  \dttsym{:}  \dttnt{p_{{\mathrm{2}}}} \, \dttnt{A_{{\mathrm{2}}}}  \mathbin{@}  \dttsym{[}  \dttnt{n_{{\mathrm{4}}}}  \dttsym{/}  \dttnt{n_{{\mathrm{2}}}}  \dttsym{]}  \dttnt{n'}$.  This case then follows by applying the rule
  $\dttdrulename{ImpBar}$ to the previous three facts.

\item[\cW] 
  \[
      \mprset{flushleft}
      \inferrule* [right=\ifrName{Cut}] {
        G  \dttsym{,}  \dttnt{n_{{\mathrm{1}}}} \,  \preccurlyeq_{ \dttnt{p_{{\mathrm{1}}}} }  \, \dttnt{n_{{\mathrm{2}}}}  \dttsym{,}  G'  \dttsym{;}  \Gamma  \dttsym{,}  \dttmv{y}  \dttsym{:}   \bar{  \dttnt{p_{{\mathrm{2}}}}  }  \, \dttnt{A}  \mathbin{@}  \dttnt{n_{{\mathrm{3}}}}  \vdash  \dttnt{t_{{\mathrm{1}}}}  \dttsym{:}  \dttsym{+} \, \dttnt{C}  \mathbin{@}  \dttnt{n}
        \\\\
            G  \dttsym{,}  \dttnt{n_{{\mathrm{1}}}} \,  \preccurlyeq_{ \dttnt{p_{{\mathrm{1}}}} }  \, \dttnt{n_{{\mathrm{2}}}}  \dttsym{,}  G'  \dttsym{;}  \Gamma  \dttsym{,}  \dttmv{y}  \dttsym{:}   \bar{  \dttnt{p_{{\mathrm{2}}}}  }  \, \dttnt{A}  \mathbin{@}  \dttnt{n_{{\mathrm{3}}}}  \vdash  \dttnt{t_{{\mathrm{2}}}}  \dttsym{:}  \dttsym{-} \, \dttnt{C}  \mathbin{@}  \dttnt{n}
      }{G  \dttsym{,}  \dttnt{n_{{\mathrm{1}}}} \,  \preccurlyeq_{ \dttnt{p_{{\mathrm{1}}}} }  \, \dttnt{n_{{\mathrm{2}}}}  \dttsym{,}  G'  \dttsym{;}  \Gamma  \vdash  \nu \, \dttmv{x}  \dttsym{.}  \dttnt{t_{{\mathrm{1}}}}  \mathbin{\Cdot[2]}  \dttnt{t_{{\mathrm{2}}}}  \dttsym{:}  \dttnt{p_{{\mathrm{2}}}} \, \dttnt{A}  \mathbin{@}  \dttnt{n_{{\mathrm{3}}}}}
      \leqno{\raise 8 pt\hbox{\textbf{Case}}}      
    \]
  This case follows by applying the induction hypothesis to each premise, and then reapplying
  the rule.
\end{description}  

\subsection{Proof of Lemma~\ref{lemma:substitution_for_typing}: Substitution for Typing}
\label{subsec:proof_of_lemma_substitution_for_typing}
This proof holds by a straightforward induction on the second
assumed typing relation. 
\begin{description}
\item[\cW] 
  \[
      \mprset{flushleft}
      \inferrule* [right=\ifrName{Ax}] {
        G  \vdash  \dttnt{n} \,  \preccurlyeq^*_{ \dttnt{p} }  \, \dttnt{n'}
      }{G  \dttsym{;}  \Gamma_{{\mathrm{1}}}  \dttsym{,}  \dttmv{y}  \dttsym{:}  \dttnt{p} \, \dttnt{C}  \mathbin{@}  \dttnt{n}  \dttsym{,}  \Gamma_{{\mathrm{2}}}  \vdash  \dttmv{y}  \dttsym{:}  \dttnt{p} \, \dttnt{C}  \mathbin{@}  \dttnt{n'}}
      \leqno{\raise 8 pt\hbox{\textbf{Case}}}      
    \]
  Trivial.

\item[\cW] 
  \[
      \mprset{flushleft}
      \inferrule* [right=\ifrName{Unit}] {
        \ 
      }{G  \dttsym{;}  \Gamma_{{\mathrm{1}}}  \vdash  \dttkw{triv}  \dttsym{:}  \dttnt{p} \,  \langle  \dttnt{p} \rangle   \mathbin{@}  \dttnt{n}}
      \leqno{\raise 8 pt\hbox{\textbf{Case}}}      
    \]
  Trivial.
\newpage

\item[\cW] 
  \[
      \mprset{flushleft}
      \inferrule* [right=\ifrName{And}] {
         G  \dttsym{;}  \Gamma_{{\mathrm{1}}}  \vdash  \dttnt{t'_{{\mathrm{1}}}}  \dttsym{:}  \dttnt{p} \, \dttnt{A}  \mathbin{@}  \dttnt{n}  \qquad  G  \dttsym{;}  \Gamma_{{\mathrm{1}}}  \vdash  \dttnt{t'_{{\mathrm{2}}}}  \dttsym{:}  \dttnt{p} \, \dttnt{B}  \mathbin{@}  \dttnt{n} 
      }{G  \dttsym{;}  \Gamma_{{\mathrm{1}}}  \vdash  \dttsym{(}  \dttnt{t'_{{\mathrm{1}}}}  \dttsym{,}  \dttnt{t'_{{\mathrm{2}}}}  \dttsym{)}  \dttsym{:}  \dttnt{p} \, \dttsym{(}   \dttnt{C_{{\mathrm{1}}}}  \ndwedge{ \dttnt{p} }  \dttnt{C_{{\mathrm{2}}}}   \dttsym{)}  \mathbin{@}  \dttnt{n}}
      \leqno{\raise 8 pt\hbox{\textbf{Case}}}      
    \]
  Suppose $\Gamma_{{\mathrm{1}}} \equiv \Gamma  \dttsym{,}  \dttmv{x}  \dttsym{:}  \dttnt{p_{{\mathrm{1}}}} \, \dttnt{A}  \mathbin{@}  \dttnt{n_{{\mathrm{1}}}}  \dttsym{,}  \Gamma'$.  Then this case
  follows from applying the induction hypothesis to each premise and
  then reapplying the rule.

\item[\cW] 
  \[
      \mprset{flushleft}
      \inferrule* [right=\ifrName{AndBar}] {
        G  \dttsym{;}  \Gamma_{{\mathrm{1}}}  \vdash  \dttnt{t}  \dttsym{:}  \dttnt{p} \,  \dttnt{C} _{ \dttnt{d} }   \mathbin{@}  \dttnt{n}
      }{G  \dttsym{;}  \Gamma_{{\mathrm{1}}}  \vdash   \mathbf{in}_{ \dttnt{d} }\, \dttnt{t}   \dttsym{:}  \dttnt{p} \, \dttsym{(}   \dttnt{C_{{\mathrm{1}}}}  \ndwedge{  \bar{  \dttnt{p}  }  }  \dttnt{C_{{\mathrm{2}}}}   \dttsym{)}  \mathbin{@}  \dttnt{n}}
      \leqno{\raise 8 pt\hbox{\textbf{Case}}}      
    \]
  Suppose $\Gamma_{{\mathrm{1}}} \equiv \Gamma  \dttsym{,}  \dttmv{x}  \dttsym{:}  \dttnt{p_{{\mathrm{1}}}} \, \dttnt{A}  \mathbin{@}  \dttnt{n_{{\mathrm{1}}}}  \dttsym{,}  \Gamma'$. Then this case
  follows from applying the induction hypothesis to the premise and
  then reapplying the rule.

\item[\cW] 
  \[
      \mprset{flushleft}
      \inferrule* [right=\ifrName{Imp}] {
        \dttnt{n'} \, \not\in \, \dttsym{\mbox{$\mid$}}  G  \dttsym{\mbox{$\mid$}}  \dttsym{,}  \dttsym{\mbox{$\mid$}}  \Gamma_{{\mathrm{1}}}  \dttsym{\mbox{$\mid$}}
        \\\\
            \dttsym{(}  G  \dttsym{,}  \dttnt{n} \,  \preccurlyeq_{ \dttnt{p} }  \, \dttnt{n'}  \dttsym{)}  \dttsym{;}  \Gamma_{{\mathrm{1}}}  \dttsym{,}  \dttmv{x}  \dttsym{:}  \dttnt{p} \, \dttnt{C_{{\mathrm{1}}}}  \mathbin{@}  \dttnt{n'}  \vdash  \dttnt{t}  \dttsym{:}  \dttnt{p} \, \dttnt{C_{{\mathrm{2}}}}  \mathbin{@}  \dttnt{n'}
      }{G  \dttsym{;}  \Gamma_{{\mathrm{1}}}  \vdash  \lambda  \dttmv{x}  \dttsym{.}  \dttnt{t}  \dttsym{:}  \dttnt{p} \, \dttsym{(}   \dttnt{C_{{\mathrm{1}}}}  \ndto{ \dttnt{p} }  \dttnt{C_{{\mathrm{2}}}}   \dttsym{)}  \mathbin{@}  \dttnt{n}}
      \leqno{\raise 8 pt\hbox{\textbf{Case}}}      
    \]
  Similarly to the previous case.

\item[\cW] 
  \[
      \mprset{flushleft}
      \inferrule* [right=\ifrName{ImpBar}] {
        G  \vdash  \dttnt{n} \,  \preccurlyeq^*_{  \bar{  \dttnt{p}  }  }  \, \dttnt{n'}
        \\\\
             G  \dttsym{;}  \Gamma_{{\mathrm{1}}}  \vdash  \dttnt{t'_{{\mathrm{1}}}}  \dttsym{:}   \bar{  \dttnt{p}  }  \, \dttnt{C_{{\mathrm{1}}}}  \mathbin{@}  \dttnt{n'}  \qquad  G  \dttsym{;}  \Gamma_{{\mathrm{1}}}  \vdash  \dttnt{t'_{{\mathrm{2}}}}  \dttsym{:}  \dttnt{p} \, \dttnt{C_{{\mathrm{2}}}}  \mathbin{@}  \dttnt{n'} 
      }{G  \dttsym{;}  \Gamma_{{\mathrm{1}}}  \vdash  \langle  \dttnt{t'_{{\mathrm{1}}}}  \dttsym{,}  \dttnt{t'_{{\mathrm{2}}}}  \rangle  \dttsym{:}  \dttnt{p} \, \dttsym{(}   \dttnt{C_{{\mathrm{1}}}}  \ndto{  \bar{  \dttnt{p}  }  }  \dttnt{C_{{\mathrm{2}}}}   \dttsym{)}  \mathbin{@}  \dttnt{n}}
      \leqno{\raise 8 pt\hbox{\textbf{Case}}}      
    \]
  Suppose $\Gamma_{{\mathrm{1}}} \equiv \Gamma  \dttsym{,}  \dttmv{x}  \dttsym{:}  \dttnt{p_{{\mathrm{1}}}} \, \dttnt{A}  \mathbin{@}  \dttnt{n_{{\mathrm{1}}}}  \dttsym{,}  \Gamma'$.  Then this case
  follows from applying the induction hypothesis to each premise and
  then reapplying the rule.

\item[\cW] 
  \[
      \mprset{flushleft}
      \inferrule* [right=\ifrName{Cut}] {
        G  \dttsym{;}  \Gamma_{{\mathrm{1}}}  \dttsym{,}  \dttmv{y}  \dttsym{:}   \bar{  \dttnt{p}  }  \, \dttnt{C}  \mathbin{@}  \dttnt{n}  \vdash  \dttnt{t'_{{\mathrm{1}}}}  \dttsym{:}  \dttsym{+} \, \dttnt{C'}  \mathbin{@}  \dttnt{n'}
        \\\\
            G  \dttsym{;}  \Gamma_{{\mathrm{1}}}  \dttsym{,}  \dttmv{y}  \dttsym{:}   \bar{  \dttnt{p}  }  \, \dttnt{C}  \mathbin{@}  \dttnt{n}  \vdash  \dttnt{t'_{{\mathrm{2}}}}  \dttsym{:}  \dttsym{-} \, \dttnt{C'}  \mathbin{@}  \dttnt{n'}
      }{G  \dttsym{;}  \Gamma_{{\mathrm{1}}}  \vdash  \nu \, \dttmv{x}  \dttsym{.}  \dttnt{t'_{{\mathrm{1}}}}  \mathbin{\Cdot[2]}  \dttnt{t'_{{\mathrm{2}}}}  \dttsym{:}  \dttnt{p} \, \dttnt{C}  \mathbin{@}  \dttnt{n}}
      \leqno{\raise 8 pt\hbox{\textbf{Case}}}      
    \]
  Similarly to the previous case.
\end{description}  

\subsection{Proof of Lemma~\ref{lemma:type_preservation}: Type Preservation}
\label{subsec:proof_of_lemma:type_preservation}
This is a proof by induction on the form of the assumed typing
derivation.  We only consider non-trivial cases.  All the other
cases either follow directly from assumptions or are similar to the
cases we provide below.

\begin{description}    
\item[\cW]
  \[
      \mprset{flushleft}
      \inferrule* [right=\ifrName{Cut}] {
        G  \dttsym{;}  \Gamma  \dttsym{,}  \dttmv{x}  \dttsym{:}   \bar{  \dttnt{p}  }  \, \dttnt{A}  \mathbin{@}  \dttnt{n}  \vdash  \dttnt{t_{{\mathrm{1}}}}  \dttsym{:}  \dttsym{+} \, \dttnt{B}  \mathbin{@}  \dttnt{n'}
        \\\\
            G  \dttsym{;}  \Gamma  \dttsym{,}  \dttmv{x}  \dttsym{:}   \bar{  \dttnt{p}  }  \, \dttnt{A}  \mathbin{@}  \dttnt{n}  \vdash  \dttnt{t_{{\mathrm{2}}}}  \dttsym{:}  \dttsym{-} \, \dttnt{B}  \mathbin{@}  \dttnt{n'}
      }{G  \dttsym{;}  \Gamma  \vdash  \nu \, \dttmv{x}  \dttsym{.}  \dttnt{t_{{\mathrm{1}}}}  \mathbin{\Cdot[2]}  \dttnt{t_{{\mathrm{2}}}}  \dttsym{:}  \dttnt{p} \, \dttnt{A}  \mathbin{@}  \dttnt{n}}
      \leqno{\raise 8 pt\hbox{\textbf{Case}}}      
  \]
  The interesting cases are the ones where the assumed cut is a
  redex itself, otherwise this case holds by the induction
  hypothesis.  Thus, we case split on the form of this redex. 
  \begin{description}
  \item[\cW] \textbf{Case}
    Suppose $ \nu \, \dttmv{x}  \dttsym{.}  \dttnt{t_{{\mathrm{1}}}}  \mathbin{\Cdot[2]}  \dttnt{t_{{\mathrm{2}}}}  \equiv  \nu \, \dttmv{x}  \dttsym{.}  \lambda  \dttmv{y}  \dttsym{.}  \dttnt{t'_{{\mathrm{1}}}}  \mathbin{\Cdot[2]}  \langle  \dttnt{t'_{{\mathrm{2}}}}  \dttsym{,}  \dttnt{t''_{{\mathrm{2}}}}  \rangle $, thus, $ \dttnt{t_{{\mathrm{1}}}}  \equiv  \lambda  \dttmv{y}  \dttsym{.}  \dttnt{t'_{{\mathrm{1}}}} $ and $ \dttnt{t_{{\mathrm{2}}}}  \equiv  \langle  \dttnt{t'_{{\mathrm{2}}}}  \dttsym{,}  \dttnt{t''_{{\mathrm{2}}}}  \rangle $.  
    This then implies that $ \dttnt{B}  \equiv   \dttnt{B_{{\mathrm{1}}}}  \ndto{ \dttsym{+} }  \dttnt{B_{{\mathrm{2}}}}  $ for some $\dttnt{B_{{\mathrm{1}}}}$ and $\dttnt{B_{{\mathrm{2}}}}$.  Then 
    \[    \dttnt{t}  \equiv  \nu \, \dttmv{x}  \dttsym{.}  \dttnt{t_{{\mathrm{1}}}}  \mathbin{\Cdot[2]}  \dttnt{t_{{\mathrm{2}}}}    \equiv  \nu \, \dttmv{x}  \dttsym{.}  \lambda  \dttmv{y}  \dttsym{.}  \dttnt{t'_{{\mathrm{1}}}}  \mathbin{\Cdot[2]}  \langle  \dttnt{t'_{{\mathrm{2}}}}  \dttsym{,}  \dttnt{t''_{{\mathrm{2}}}}  \rangle  \redto  \nu \, \dttmv{x}  \dttsym{.}  \dttsym{[}  \dttnt{t'_{{\mathrm{2}}}}  \dttsym{/}  \dttmv{y}  \dttsym{]}  \dttnt{t'_{{\mathrm{1}}}}  \mathbin{\Cdot[2]}  \dttnt{t''_{{\mathrm{2}}}}  \equiv  \dttnt{t'} . \]
    Now by inversion we know the following:
    \begin{center}
      \begin{math}
        \begin{array}{lll}
          (1) & G  \dttsym{,}  \dttsym{(}  \dttnt{n'} \,  \preccurlyeq_{ \dttsym{+} }  \, \dttnt{n''}  \dttsym{)}  \dttsym{;}  \Gamma  \dttsym{,}  \dttmv{x}  \dttsym{:}   \bar{  \dttnt{p}  }  \, \dttnt{A}  \mathbin{@}  \dttnt{n}  \dttsym{,}  \dttmv{y}  \dttsym{:}  \dttsym{+} \, \dttnt{B_{{\mathrm{1}}}}  \mathbin{@}  \dttnt{n''}  \vdash  \dttnt{t'_{{\mathrm{1}}}}  \dttsym{:}  \dttsym{+} \, \dttnt{B_{{\mathrm{2}}}}  \mathbin{@}  \dttnt{n''} \\
          & \,\,\,\text{ for some } \dttnt{n''} \, \not\in \, \dttsym{\mbox{$\mid$}}  G  \dttsym{\mbox{$\mid$}}  \dttsym{,}  \dttsym{\mbox{$\mid$}}  \Gamma  \dttsym{,}  \dttmv{x}  \dttsym{:}   \bar{  \dttnt{p}  }  \, \dttnt{A}  \mathbin{@}  \dttnt{n}  \dttsym{\mbox{$\mid$}}\\
          (2) & G  \dttsym{;}  \Gamma  \dttsym{,}  \dttmv{x}  \dttsym{:}   \bar{  \dttnt{p}  }  \, \dttnt{A}  \mathbin{@}  \dttnt{n}  \vdash  \dttnt{t'_{{\mathrm{2}}}}  \dttsym{:}  \dttsym{+} \, \dttnt{B_{{\mathrm{1}}}}  \mathbin{@}  \dttnt{n'''} \\
          (3) & G  \dttsym{;}  \Gamma  \dttsym{,}  \dttmv{x}  \dttsym{:}   \bar{  \dttnt{p}  }  \, \dttnt{A}  \mathbin{@}  \dttnt{n}  \vdash  \dttnt{t''_{{\mathrm{2}}}}  \dttsym{:}  \dttsym{-} \, \dttnt{B_{{\mathrm{2}}}}  \mathbin{@}  \dttnt{n'''}\\
          (4) & G  \vdash  \dttnt{n'} \,  \preccurlyeq^*_{ \dttsym{+} }  \, \dttnt{n'''}\\
        \end{array}
      \end{math}
    \end{center}
  \end{description}
  Using (1) and (4) we may apply node substitution for typing (Lemma~\ref{lemma:node_substitution_for_typing}) to obtain
  \[ (5)\,\dttsym{[}  \dttnt{n'''}  \dttsym{/}  \dttnt{n''}  \dttsym{]}  G  \dttsym{;}  \dttsym{[}  \dttnt{n'''}  \dttsym{/}  \dttnt{n''}  \dttsym{]}  \Gamma  \dttsym{,}  \dttmv{x}  \dttsym{:}   \bar{  \dttnt{p}  }  \, \dttnt{A}  \mathbin{@}  \dttnt{n}  \dttsym{,}  \dttmv{y}  \dttsym{:}  \dttsym{+} \, \dttnt{B_{{\mathrm{1}}}}  \mathbin{@}  \dttnt{n'''}  \vdash  \dttnt{t'_{{\mathrm{1}}}}  \dttsym{:}  \dttsym{+} \, \dttnt{B_{{\mathrm{2}}}}  \mathbin{@}  \dttnt{n'''}. \]
  
  Finally, by applying substitution for typing using (2) and (5) we obtain
  \[ (6)\,\dttsym{[}  \dttnt{n'''}  \dttsym{/}  \dttnt{n''}  \dttsym{]}  G  \dttsym{;}  \dttsym{[}  \dttnt{n'''}  \dttsym{/}  \dttnt{n''}  \dttsym{]}  \Gamma  \dttsym{,}  \dttmv{x}  \dttsym{:}   \bar{  \dttnt{p}  }  \, \dttnt{A}  \mathbin{@}  \dttnt{n}  \vdash  \dttsym{[}  \dttnt{t'_{{\mathrm{2}}}}  \dttsym{/}  \dttmv{y}  \dttsym{]}  \dttnt{t'_{{\mathrm{1}}}}  \dttsym{:}  \dttsym{+} \, \dttnt{B_{{\mathrm{2}}}}  \mathbin{@}  \dttnt{n'''},  \]  and since
  $\dttnt{n''}$ is a fresh in $G$ and $\Gamma$ we know (6) is equivalent to
  \[ (7)\,G  \dttsym{;}  \Gamma  \dttsym{,}  \dttmv{x}  \dttsym{:}   \bar{  \dttnt{p}  }  \, \dttnt{A}  \mathbin{@}  \dttnt{n}  \vdash  \dttsym{[}  \dttnt{t'_{{\mathrm{2}}}}  \dttsym{/}  \dttmv{y}  \dttsym{]}  \dttnt{t'_{{\mathrm{1}}}}  \dttsym{:}  \dttsym{+} \, \dttnt{B_{{\mathrm{2}}}}  \mathbin{@}  \dttnt{n'''}.  \]
  Finally, by applying the $\dttdrulename{Cut}$ rule using (7) and (3) we obtain 
  \[ G  \dttsym{;}  \Gamma  \vdash  \nu \, \dttmv{x}  \dttsym{.}  \dttsym{[}  \dttnt{t'_{{\mathrm{2}}}}  \dttsym{/}  \dttmv{y}  \dttsym{]}  \dttnt{t'_{{\mathrm{1}}}}  \mathbin{\Cdot[2]}  \dttnt{t''_{{\mathrm{2}}}}  \dttsym{:}  \dttnt{p} \, \dttnt{A}  \mathbin{@}  \dttnt{n}. \]

\end{description}

\subsection{Proof of Lemma~\ref{lem:sninterp}: SN Interpretations}
\label{subsec:proof_of_lemma_sn_interpretations}
For purposes of this proof and subsequent ones, define $\delta(t)$ to be
the length of the longest reduction sequence from $t$ to a normal form, for
$t\in\SN$.

The proof of the lemma is by mutual well-founded induction on the
pair $(A,n)$, where $n$ is the number of the proposition in the
statement of the lemma; the well-founded ordering in question is the
lexicographic combination of the structural ordering on types (for $A$) 
and the ordering $1 > 2 > 4 > 3$ (for $n$).

For proposition (1): assume $t\in\interp{A}^+$, and show $t\in\SN$.
Let $x$ be a variable.  By IH(2), $x\in\interp{A}^-$, so by the
definition of $\interp{A}^+$, we have
\[
  \nu \, \dttmv{x}  \dttsym{.}  \dttnt{t}  \mathbin{\Cdot[2]}  \dttmv{x} \in \SN
\]
This implies $t\in\SN$.

For proposition (2): assume $x\in\textit{Vars}$, and show
$x\in\interp{A}^-$.  For the latter, it suffices to assume arbitrary
$y\in\textit{Vars}$ and $t'\in\interp{A}^{+c}$, and show $\nu \, \dttmv{y}  \dttsym{.}  \dttnt{t'}  \mathbin{\Cdot[2]}  \dttmv{x} \in \SN$.  We will prove this by inner induction on
$\delta(t')$, which is defined by IH(4).  By the definition of
$\interp{A}^{+c}$ for the various cases of $A$, we see that $\nu \, \dttmv{y}  \dttsym{.}  \dttnt{t'}  \mathbin{\Cdot[2]}  \dttmv{x}$ cannot be a redex itself, as $t'$ cannot be a cut.  If
$t'$ is a normal form we are done.  If $t\leadsto t''$, then we have
$t''\in\interp{A}^{+c}$ by Lemma~\ref{lem:stepinterp}, and we may
apply the inner induction hypothesis.

For proposition (3): assume $t\in\interp{A}^-$, and show $t\in\SN$.
By the definition of $\interp{A}^-$ and the fact that $\textit{Vars}\subseteq\interp{A}^{+c}$
by definition of $\interp{A}^{+c}$, we have 
\[
  \nu \, \dttmv{y}  \dttsym{.}  \dttmv{y}  \mathbin{\Cdot[2]}  \dttnt{t} \in \SN
\]
This implies $t\in\SN$ as required.

For proposition (4): assume $t\in\interp{A}^{+c}$, and consider the
following cases.  If $t\in\textit{Vars}$ or $A\equiv \langle  \dttsym{+} \rangle $, then
$t$ is normal and the result is immediate.  So suppose $A \equiv
 \dttnt{A_{{\mathrm{1}}}}  \ndto{ \dttsym{+} }  \dttnt{A_{{\mathrm{2}}}} $.  Then $t\equiv \lambda x.t'$ for some $x$ and $t'$
where for all $t''\in\interp{A_1}^+$, $[t''/x]t'\in\interp{A_2}^+$.
By IH(2), the variable $x$ itself is in $\interp{A_1}^+$, so
we know that $t'\equiv[x/x]t'\in\interp{A_2}^+$.  Then by IH(1)
we have $t'\in\SN$, which implies $\lambda x.t'\in\SN$.  If $A\equiv  \dttnt{A_{{\mathrm{1}}}}  \ndto{ \dttsym{-} }  \dttnt{A_{{\mathrm{2}}}} $,
then $t\equiv \langle  \dttnt{t_{{\mathrm{1}}}}  \dttsym{,}  \dttnt{t_{{\mathrm{2}}}}  \rangle$ for some $t_1\in\interp{A_1}^-$ and
$t_2\in\interp{A_2}^+$.  By IH(3) and IH(1), $t_1\in\SN$ and $t_2\in\SN$,
which implies $\langle  \dttnt{t_{{\mathrm{1}}}}  \dttsym{,}  \dttnt{t_{{\mathrm{2}}}}  \rangle\in\SN$.  The cases for $A \equiv  \dttnt{A_{{\mathrm{1}}}}  \ndwedge{ \dttnt{p} }  \dttnt{A_{{\mathrm{2}}}} $
are similar to this one.

\subsection{Proof of Theorem~\ref{thm:sndinterp}: Soundness}
\label{subsec:proof_of_soundness}
The proof is by induction on the derivation of $\Gamma  \vdash_c  \dttnt{t}  \dttsym{:}  \dttnt{p} \, \dttnt{A}$.  We consider
the two possible polarities for the conclusion of the typing judgment separately.

\begin{description}
\item[\cW]

  \[
  \inferrule* [right=\ifrName{ClassAx}] {\ }{\Gamma  \dttsym{,}  \dttmv{x}  \dttsym{:}  \dttnt{p} \, \dttnt{A}  \dttsym{,}  \Gamma'  \vdash_c  \dttmv{x}  \dttsym{:}  \dttnt{p} \, \dttnt{A}}
  \leqno{\raise 8 pt\hbox{\textbf{Case}}}      
  \]

  Since $\sigma\in\interp{\Gamma  \dttsym{,}  \dttmv{x}  \dttsym{:}  \dttnt{p} \, \dttnt{A}  \dttsym{,}  \Gamma'}$, $\sigma(x)\in\interp{A}^p$ as required.

\item[\cW]

  \[
  \inferrule* [right=\ifrName{ClassUnit}] {\ }{\Gamma  \vdash_c  \dttkw{triv}  \dttsym{:}  \dttsym{+} \,  \langle  \dttsym{+} \rangle }
  \leqno{\raise 8 pt\hbox{\textbf{Case}}}      
  \]

  We have $\dttkw{triv}\in\interp{ \langle  \dttsym{+} \rangle }^{+c}$ by definition.

\item[\cW]

  \[
  \inferrule* [right=\ifrName{ClassUnit}] {\ }{\Gamma  \vdash_c  \dttkw{triv}  \dttsym{:}  \dttsym{-} \,  \langle  \dttsym{-} \rangle }
  \leqno{\raise 8 pt\hbox{\textbf{Case}}}      
  \]

  To prove $\dttkw{triv}\in\interp{ \langle  \dttsym{-} \rangle }^{-}$, it suffices to assume
  arbitrary $y\in\textit{Vars}$ and $t\in\interp{ \langle  \dttsym{-} \rangle }^{+c}$, and
  show $\nu \, \dttmv{y}  \dttsym{.}  \dttnt{t}  \mathbin{\Cdot[2]}  \dttkw{triv}\in\SN$.  By definition of
  $\interp{ \langle  \dttsym{-} \rangle }^{+c}$, $t\in\textit{Vars}$, and then $\nu \, \dttmv{y}  \dttsym{.}  \dttnt{t}  \mathbin{\Cdot[2]}  \dttkw{triv}$ is in normal form.

\item[\cW]

  \[
  \inferrule* [right=\ifrName{ClassAnd}] {\Gamma  \vdash_c  \dttnt{t_{{\mathrm{1}}}}  \dttsym{:}  \dttsym{+} \, \dttnt{A} \qquad \Gamma  \vdash_c  \dttnt{t_{{\mathrm{2}}}}  \dttsym{:}  \dttsym{+} \, \dttnt{B}}{\Gamma  \vdash_c  \dttsym{(}  \dttnt{t_{{\mathrm{1}}}}  \dttsym{,}  \dttnt{t_{{\mathrm{2}}}}  \dttsym{)}  \dttsym{:}  \dttsym{+} \,  \dttnt{A}  \ndwedge{ \dttsym{+} }  \dttnt{B} }
  \leqno{\raise 8 pt\hbox{\textbf{Case}}}      
  \]

  By Lemma~\ref{lem:canonpos}, it suffices to show $(\sigma t_1,\sigma
  t_2)\in\interp{ \dttnt{A}  \ndwedge{ \dttsym{+} }  \dttnt{B} }^{+c}$.  This follows directly from the
  definition of $\interp{ \dttnt{A}  \ndwedge{ \dttsym{+} }  \dttnt{B} }^{+c}$, since the IH gives us
  $\sigma t_1\in\interp{A}^+$ and $\sigma t_2\in\interp{B}^+$.

\item[\cW]

  \[
  \inferrule* [right=\ifrName{ClassAnd}] {\Gamma  \vdash_c  \dttnt{t_{{\mathrm{1}}}}  \dttsym{:}  \dttsym{-} \, \dttnt{A_{{\mathrm{1}}}} \qquad \Gamma  \vdash_c  \dttnt{t_{{\mathrm{2}}}}  \dttsym{:}  \dttsym{-} \, \dttnt{A_{{\mathrm{2}}}}}{\Gamma  \vdash_c  \dttsym{(}  \dttnt{t_{{\mathrm{1}}}}  \dttsym{,}  \dttnt{t_{{\mathrm{2}}}}  \dttsym{)}  \dttsym{:}  \dttsym{-} \,  \dttnt{A_{{\mathrm{1}}}}  \ndwedge{ \dttsym{-} }  \dttnt{A_{{\mathrm{2}}}} }
  \leqno{\raise 8 pt\hbox{\textbf{Case}}}      
  \]

  It suffices to assume arbitrary $y\in\textit{Vars}$ and
  $t'\in\interp{ \dttnt{A_{{\mathrm{1}}}}  \ndwedge{ \dttsym{-} }  \dttnt{A_{{\mathrm{2}}}} }^{+c}$, and show $\nu \, \dttmv{y}  \dttsym{.}  \dttnt{t'}  \mathbin{\Cdot[2]}  \dttsym{(}  \sigma \, \dttnt{t_{{\mathrm{1}}}}  \dttsym{,}  \sigma \, \dttnt{t_{{\mathrm{2}}}}  \dttsym{)}\in\SN$.  If $t'\in\textit{Vars}$, then this follows by
  Lemma~\ref{lem:sninterp} from the facts that $\sigma
  t_1\in\interp{A_1}^+$ and $\sigma t_2\in\interp{\dttnt{A_{{\mathrm{2}}}}}^+$, which we have
  by the IH.  So suppose $t'$ is of the form $ \mathbf{in}_{ \dttnt{d} }\, \dttnt{t''} $ for some
  $d$ and some $t''\in\interp{A_d}^+$.  By the definition of $\SN$, it
  suffices to show that all one-step successors $t_a$ of the term in
  question are $\SN$.  The proof of this is by inner induction on
  $\delta(t'') + \delta(\sigma t_1) + \delta(\sigma t_2)$, which exists
  by Lemma~\ref{lem:sninterp}, using also Lemma~\ref{lem:stepinterp}.
  Suppose that we step to $t_a$ by stepping $t''$, $\sigma t_1$, or
  $\sigma t_2$.  Then the result holds by the inner IH.  So consider the
  step
  \[
    \nu \, \dttmv{y}  \dttsym{.}   \mathbf{in}_{ \dttnt{d} }\, \dttnt{t''}   \mathbin{\Cdot[2]}  \dttsym{(}  \sigma \, \dttnt{t_{{\mathrm{1}}}}  \dttsym{,}  \sigma \, \dttnt{t_{{\mathrm{2}}}}  \dttsym{)}  \redto  \nu \, \dttmv{y}  \dttsym{.}  \dttnt{t''}  \mathbin{\Cdot[2]}  \sigma \,   \dttnt{t} _{ \dttnt{d} }  
    \]
    We then have $\nu \, \dttmv{y}  \dttsym{.}  \dttnt{t''}  \mathbin{\Cdot[2]}  \sigma \,   \dttnt{t} _{ \dttnt{d} }  \in\SN$ from the facts
    that $t''\in\interp{A_d}^+$ and $\sigma t_d\in\interp{A_d}^-$, by
    the definition of $\interp{A_d}^+$.

  \item[\cW]

    \[
    \inferrule* [right=\ifrName{ClassAndBar}] {\Gamma  \vdash_c  \dttnt{t}  \dttsym{:}  \dttsym{+} \,  \dttnt{A} _{ \dttnt{d} } }{\Gamma  \vdash_c   \mathbf{in}_{ \dttnt{d} }\, \dttnt{t}   \dttsym{:}  \dttsym{+} \,  \dttnt{A_{{\mathrm{1}}}}  \ndwedge{ \dttsym{-} }  \dttnt{A_{{\mathrm{2}}}} }
    \leqno{\raise 8 pt\hbox{\textbf{Case}}}      
    \]
    By Lemma~\ref{lem:canonpos}, it suffices to prove $ \mathbf{in}_{ \dttnt{d} }\, \sigma \, \dttnt{t} \in\interp{ \dttnt{A_{{\mathrm{1}}}}  \ndwedge{ \dttsym{-} }  \dttnt{A_{{\mathrm{2}}}} }^+$, but by the definition of $\interp{ \dttnt{A_{{\mathrm{1}}}}  \ndwedge{ \dttsym{-} }  \dttnt{A_{{\mathrm{2}}}} }^+$, this follows directly from $\sigma
    t\in\interp{A_d}^+$, which we have by the IH.

  \item[\cW]

    \[
    \inferrule* [right=\ifrName{ClassAndBar}] {\Gamma  \vdash_c  \dttnt{t}  \dttsym{:}  \dttsym{-} \,  \dttnt{A} _{ \dttnt{d} } }{\Gamma  \vdash_c   \mathbf{in}_{ \dttnt{d} }\, \dttnt{t}   \dttsym{:}  \dttsym{-} \,  \dttnt{A_{{\mathrm{1}}}}  \ndwedge{ \dttsym{+} }  \dttnt{A_{{\mathrm{2}}}} }
    \leqno{\raise 8 pt\hbox{\textbf{Case}}}      
    \]
    To prove $ \mathbf{in}_{ \dttnt{d} }\, \sigma \, \dttnt{t} \in\interp{ \dttnt{A_{{\mathrm{1}}}}  \ndwedge{ \dttsym{+} }  \dttnt{A_{{\mathrm{2}}}} }^-$, it suffices
    to assume arbitrary $y\in\textit{Vars}$ and $t'\in\interp{ \dttnt{A_{{\mathrm{1}}}}  \ndwedge{ \dttsym{+} }  \dttnt{A_{{\mathrm{2}}}} }^{+c}$, and show $\nu \, \dttmv{y}  \dttsym{.}  \dttnt{t'}  \mathbin{\Cdot[2]}   \mathbf{in}_{ \dttnt{d} }\, \sigma \, \dttnt{t} \in\SN$.  If
    $t'\in\textit{Vars}$, then this follows from the fact that $\sigma
    t\in\SN$, which we have by Lemma~\ref{lem:sninterp} from $\sigma
    t\in\interp{A_d}^-$ (which the IH gives us).  So suppose $t'$ is of
    the form $(s_1,s_2)$ for some $s_1\in\interp{A_1}^+$ and
    $s_2\in\interp{A_2}^+$.  It suffices to prove that all one-step
    successors of the term in question are in $\SN$, as we did in a
    previous case above.  Lemma~\ref{lem:sninterp} lets us proceed by
    inner induction on $\delta(\sigma t) + \delta(s_1) + \delta(s_2)$,
    using also Lemma~\ref{lem:stepinterp}.  If we step $\sigma t$, $s_1$
    or $s_2$, then the result holds by inner IH.  Otherwise, we have the
    step
    \[
      \nu \, \dttmv{y}  \dttsym{.}  \dttsym{(}  \dttnt{s_{{\mathrm{1}}}}  \dttsym{,}  \dttnt{s_{{\mathrm{2}}}}  \dttsym{)}  \mathbin{\Cdot[2]}   \mathbf{in}_{ \dttnt{d} }\, \sigma \, \dttnt{t}   \redto  \nu \, \dttmv{y}  \dttsym{.}   \dttnt{s} _{ \dttnt{d} }   \mathbin{\Cdot[2]}  \sigma \, \dttnt{t}
      \]
      And this successor is in $\SN$ by the facts that $s_d\in\interp{A_d}^+$
      and $\sigma t\in\interp{A_d}^-$, from the definition of $\interp{A_d}^+$.

    \item[\cW]

      \[
      \inferrule* [right=\ifrName{ClassImp}] {\Gamma  \dttsym{,}  \dttmv{x}  \dttsym{:}  \dttsym{+} \, \dttnt{A}  \vdash_c  \dttnt{t}  \dttsym{:}  \dttsym{+} \, \dttnt{B}}{\Gamma  \vdash_c  \lambda  \dttmv{x}  \dttsym{.}  \dttnt{t}  \dttsym{:}  \dttsym{+} \,  \dttnt{A}  \ndto{ \dttsym{+} }  \dttnt{B} }
      \leqno{\raise 8 pt\hbox{\textbf{Case}}}      
      \]
      By Lemma~\ref{lem:canonpos}, it suffices to assume arbitrary $y\in\textit{Vars}$ and $t'\in\interp{A}^+$,
      and prove $[t'/x](\sigma t)\in\interp{B}^+$.  But this follows immediately from the IH, since
      $[t'/x](\sigma t)\equiv (\sigma[x\mapsto t']) t$ and $\sigma[x\mapsto t]\in\interp{\Gamma  \dttsym{,}  \dttmv{x}  \dttsym{:}  \dttsym{+} \, \dttnt{A}}$.

    \item[\cW]

      \[
      \inferrule* [right=\ifrName{ClassImp}] {\Gamma  \dttsym{,}  \dttmv{x}  \dttsym{:}  \dttsym{-} \, \dttnt{A}  \vdash_c  \dttnt{t}  \dttsym{:}  \dttsym{-} \, \dttnt{B}}{\Gamma  \vdash_c  \lambda  \dttmv{x}  \dttsym{.}  \dttnt{t}  \dttsym{:}  \dttsym{-} \,  \dttnt{A}  \ndto{ \dttsym{-} }  \dttnt{B} }
      \leqno{\raise 8 pt\hbox{\textbf{Case}}}      
      \]
      It suffices to assume arbitrary $y\in\textit{Vars}$ and
      $t'\in\interp{ \dttnt{A}  \ndto{ \dttsym{-} }  \dttnt{B} }^{+c}$, and show $\nu \, \dttmv{y}  \dttsym{.}  \dttnt{t'}  \mathbin{\Cdot[2]}  \lambda  \dttmv{x}  \dttsym{.}  \sigma \, \dttnt{t}\in\SN$.  Let us first observe that $\sigma \, \dttnt{t}\in\SN$, because by
      the IH, for all $\sigma'\in\interp{\Gamma  \dttsym{,}  \dttmv{x}  \dttsym{:}  \dttsym{-} \, \dttnt{A}}$, we have $\sigma'
      t\in\interp{B}^-$, and $\interp{B}^-\subseteq\SN$ by
      Lemma~\ref{lem:sninterp}.  We may instantiate this with
      $\sigma[x\mapsto x]$, since by Lemma~\ref{lem:sninterp},
      $x\in\interp{A}^-$.  Since $\sigma \, \dttnt{t}\in\SN$, we also have $\lambda  \dttmv{x}  \dttsym{.}  \sigma \, \dttnt{t}\in\SN$.  Now let us consider cases for the assumption
      $t'\in\interp{ \dttnt{A}  \ndto{ \dttsym{-} }  \dttnt{B} }^{+c}$.  If $t'\in\textit{Vars}$ then we
      directly have $\nu \, \dttmv{y}  \dttsym{.}  \dttnt{t'}  \mathbin{\Cdot[2]}  \lambda  \dttmv{x}  \dttsym{.}  \sigma \, \dttnt{t}\in\SN$ from $\lambda  \dttmv{x}  \dttsym{.}  \sigma \, \dttnt{t}\in\SN$.  So assume $t'\equiv\langle  \dttnt{t_{{\mathrm{1}}}}  \dttsym{,}  \dttnt{t_{{\mathrm{2}}}}  \rangle$ for some
      $t_1\in\interp{A}^-$ and $t_2\in\interp{B}^+$.  By
      Lemma~\ref{lem:sninterp} again, we may reason by inner induction on
      $\delta(t_1)+\delta(t_2)+\delta(\sigma t)$ to show that all one-step
      successors of $\nu \, \dttmv{y}  \dttsym{.}  \langle  \dttnt{t_{{\mathrm{1}}}}  \dttsym{,}  \dttnt{t_{{\mathrm{2}}}}  \rangle  \mathbin{\Cdot[2]}  \lambda  \dttmv{x}  \dttsym{.}  \sigma \, \dttnt{t}$ are in $\SN$,
      using also Lemma~\ref{lem:stepinterp}.  We can see that $t_1$, $t_2$, and $\sigma t$
      are structurally smaller, and hence, if any one of them steps, then the result follows by the inner IH.  So suppose we have
      the step
      \[
        \nu \, \dttmv{y}  \dttsym{.}  \langle  \dttnt{t_{{\mathrm{1}}}}  \dttsym{,}  \dttnt{t_{{\mathrm{2}}}}  \rangle  \mathbin{\Cdot[2]}  \lambda  \dttmv{x}  \dttsym{.}  \sigma \, \dttnt{t}  \redto  \nu \, \dttmv{y}  \dttsym{.}  \dttnt{t_{{\mathrm{2}}}}  \mathbin{\Cdot[2]}  \dttsym{[}  \dttnt{t_{{\mathrm{1}}}}  \dttsym{/}  \dttmv{x}  \dttsym{]}  \dttsym{(}  \sigma \, \dttnt{t}  \dttsym{)}
        \]
        Since $t_1\in\interp{A}^-$, the substitution $\sigma[x\mapsto t_1]$ is
        in $\interp{\Gamma  \dttsym{,}  \dttmv{x}  \dttsym{:}  \dttsym{-} \, \dttnt{A}}$.  So we may apply the IH to obtain $[t_1
          / x ] (\sigma t) \equiv \sigma[x\mapsto t_1]\in\interp{B}^-$.  Then
        since $t_2\in\interp{B}^+$, we have $\nu \, \dttmv{y}  \dttsym{.}  \dttnt{t_{{\mathrm{2}}}}  \mathbin{\Cdot[2]}  \dttsym{[}  \dttnt{t_{{\mathrm{1}}}}  \dttsym{/}  \dttmv{x}  \dttsym{]}  \dttsym{(}  \sigma \, \dttnt{t}  \dttsym{)}$ by definition of $\interp{B}^+$.
\enlargethispage{\baselineskip}
      \item[\cW]

        \[
        \inferrule* [right=\ifrName{ClassImpBar}] {\Gamma  \vdash_c  \dttnt{t_{{\mathrm{1}}}}  \dttsym{:}  \dttsym{-} \, \dttnt{A}  \qquad \Gamma  \vdash_c  \dttnt{t_{{\mathrm{2}}}}  \dttsym{:}  \dttsym{+} \, \dttnt{B}}{\Gamma  \vdash_c  \langle  \dttnt{t_{{\mathrm{1}}}}  \dttsym{,}  \dttnt{t_{{\mathrm{2}}}}  \rangle  \dttsym{:}  \dttsym{+} \, \dttsym{(}   \dttnt{A}  \ndto{ \dttsym{-} }  \dttnt{B}   \dttsym{)}}
        \leqno{\raise 8 pt\hbox{\textbf{Case}}}      
        \]
        By Lemma~\ref{lem:canonpos}, as in previous cases of positive typing,
        it suffices to prove $\langle  \sigma \, \dttnt{t_{{\mathrm{1}}}}  \dttsym{,}  \sigma \, \dttnt{t_{{\mathrm{2}}}}  \rangle\in\interp{ \dttnt{A}  \ndto{ \dttsym{-} }  \dttnt{B} }^{+c}$.  By the definition of $\interp{ \dttnt{A}  \ndto{ \dttsym{-} }  \dttnt{B} }^{+c}$, this
        follows directly from $\sigma t_1\in\interp{A}^-$ and $\sigma
        t_2\in\interp{B}^+$, which we have by the IH.

      \item[\cW]

        \[
        \inferrule* [right=\ifrName{ClassImpBar}] {\Gamma  \vdash_c  \dttnt{t_{{\mathrm{1}}}}  \dttsym{:}  \dttsym{+} \, \dttnt{A}  \qquad \Gamma  \vdash_c  \dttnt{t_{{\mathrm{2}}}}  \dttsym{:}  \dttsym{-} \, \dttnt{B}}{\Gamma  \vdash_c  \langle  \dttnt{t_{{\mathrm{1}}}}  \dttsym{,}  \dttnt{t_{{\mathrm{2}}}}  \rangle  \dttsym{:}  \dttsym{-} \, \dttsym{(}   \dttnt{A}  \ndto{ \dttsym{+} }  \dttnt{B}   \dttsym{)}}
        \leqno{\raise 8 pt\hbox{\textbf{Case}}}      
        \]
        It suffices to assume arbitrary $y\in\textit{Vars}$ and
        $t'\in\interp{ \dttnt{A}  \ndto{ \dttsym{+} }  \dttnt{B} }^{+c}$, and show $\nu \, \dttmv{y}  \dttsym{.}  \dttnt{t'}  \mathbin{\Cdot[2]}  \langle  \sigma \, \dttnt{t_{{\mathrm{1}}}}  \dttsym{,}  \sigma \, \dttnt{t_{{\mathrm{2}}}}  \rangle\in\SN$.  By the IH, we have $\sigma t_1\in\interp{A}^+$
        and $\sigma t_2\in\interp{B}^-$, and hence $\sigma t_1\in\SN$ and
        $\sigma t_2\in\SN$ by Lemma~\ref{lem:sninterp}.  If
        $t'\in\textit{Vars}$, then these facts are sufficient to show the term
        in question is in $\SN$.  So suppose $t'\equiv \lambda x.t_3$, for
        some $x\in\textit{Vars}$ and $t''$ such that for all
        $t_4\in\interp{A}^+$, $[t_4/x]t_3\in\interp{B}^+$.  By similar
        reasoning as in a previous case, we have $t_3\in\SN$.  So we may
        proceed by inner induction on $\delta(t_1)+\delta(t_2)+\delta(t_3)$ to
        show that all one-step successors of $\nu \, \dttmv{y}  \dttsym{.}  \lambda  \dttmv{x}  \dttsym{.}  \dttnt{t_{{\mathrm{3}}}}  \mathbin{\Cdot[2]}  \langle  \sigma \, \dttnt{t_{{\mathrm{1}}}}  \dttsym{,}  \sigma \, \dttnt{t_{{\mathrm{2}}}}  \rangle$ are in $\SN$, using also Lemma~\ref{lem:stepinterp}.  We can see that $t_3$, $\sigma t_1$, and $\sigma t_2$ are structurally smaller, and
        hence, if anyone of them steps, then the result follows by the inner IH.  So consider this step:
        \[
          \nu \, \dttmv{y}  \dttsym{.}  \lambda  \dttmv{x}  \dttsym{.}  \dttnt{t_{{\mathrm{3}}}}  \mathbin{\Cdot[2]}  \langle  \sigma \, \dttnt{t_{{\mathrm{1}}}}  \dttsym{,}  \sigma \, \dttnt{t_{{\mathrm{2}}}}  \rangle  \redto  \nu \, \dttmv{y}  \dttsym{.}  \dttsym{[}  \sigma \, \dttnt{t_{{\mathrm{1}}}}  \dttsym{/}  \dttmv{x}  \dttsym{]}  \dttnt{t_{{\mathrm{3}}}}  \mathbin{\Cdot[2]}  \sigma \, \dttnt{t_{{\mathrm{2}}}}
          \]
          Since we have that $\sigma t_1\in\interp{A}^+$, the assumption about substitution
          instances of $t_3$ gives us that $[\sigma t_1/x]t_3\in\interp{B}^+$, which is
          then sufficient to conclude $\nu \, \dttmv{y}  \dttsym{.}  \dttsym{[}  \sigma \, \dttnt{t_{{\mathrm{1}}}}  \dttsym{/}  \dttmv{x}  \dttsym{]}  \dttnt{t_{{\mathrm{3}}}}  \mathbin{\Cdot[2]}  \sigma \, \dttnt{t_{{\mathrm{2}}}}\in\SN$
          by the definition of $\interp{B}^+$.

        \item[\cW]

          \[
          \inferrule* [right=\ifrName{ClassCut}] {\Gamma  \dttsym{,}  \dttmv{x}  \dttsym{:}  \dttsym{-} \, \dttnt{A}  \vdash_c  \dttnt{t_{{\mathrm{1}}}}  \dttsym{:}  \dttsym{+} \, \dttnt{B}  \qquad \Gamma  \dttsym{,}  \dttmv{x}  \dttsym{:}  \dttsym{-} \, \dttnt{A}  \vdash_c  \dttnt{t_{{\mathrm{2}}}}  \dttsym{:}  \dttsym{-} \, \dttnt{B}}{\Gamma  \vdash_c  \nu \, \dttmv{x}  \dttsym{.}  \dttnt{t_{{\mathrm{1}}}}  \mathbin{\Cdot[2]}  \dttnt{t_{{\mathrm{2}}}}  \dttsym{:}  \dttsym{+} \, \dttnt{A}}
          \leqno{\raise 8 pt\hbox{\textbf{Case}}}      
          \]
          It suffices to assume arbitrary $y\in\textit{Vars}$ and
          $t'\in\interp{A}^-$, and show $\nu \, \dttmv{y}  \dttsym{.}  \dttsym{(}  \nu \, \dttmv{x}  \dttsym{.}  \sigma \, \dttnt{t_{{\mathrm{1}}}}  \mathbin{\Cdot[2]}  \sigma \, \dttnt{t_{{\mathrm{2}}}}  \dttsym{)}  \mathbin{\Cdot[2]}  \dttnt{t'}\in\SN$.  By the IH and part 2 of Lemma~\ref{lem:sninterp}, we
          know that $\sigma t_1\in\interp{B}^+$ and $\sigma t_2\in\interp{B}^-$.
          By Lemma~\ref{lem:sninterp} again, we have $t'\in\SN$, $\sigma
          t_1\in\SN$, and $\sigma t_2\in\SN$.  So we may reason by induction on
          $\delta(t')+\delta(\sigma t_1)+\delta(\sigma t_2)$ to show that all
          one-step successors of $\nu \, \dttmv{y}  \dttsym{.}  \dttsym{(}  \nu \, \dttmv{x}  \dttsym{.}  \sigma \, \dttnt{t_{{\mathrm{1}}}}  \mathbin{\Cdot[2]}  \sigma \, \dttnt{t_{{\mathrm{2}}}}  \dttsym{)}  \mathbin{\Cdot[2]}  \dttnt{t'}$
          are in $\SN$, using also Lemma~\ref{lem:stepinterp}.  We can see that $t'$,
          $\sigma t_1$, and $\sigma t_2$ are structurally smaller, and hence, if any one of them steps, then the result follows by
          the inner IH.  The only possible other reduction is by the
          \dttdrulename{RBetaL} reduction rule (Figure~\ref{fig:dtt-red}).  And
          then, since $t'\in\interp{A}^-$, we may apply the IH to conclude that
          $[t'/x](\sigma t_1)\in\interp{B}^+$ and $[t'/x](\sigma t_2)\in\interp{B}^-$.
          By the definition of $\in\interp{B}^+$, this suffices to prove
          $\nu \, \dttmv{y}  \dttsym{.}  \dttsym{[}  \dttnt{t'}  \dttsym{/}  \dttmv{x}  \dttsym{]}  \sigma \, \dttnt{t_{{\mathrm{1}}}}  \mathbin{\Cdot[2]}  \dttsym{[}  \dttnt{t'}  \dttsym{/}  \dttmv{x}  \dttsym{]}  \sigma \, \dttnt{t_{{\mathrm{2}}}}\in\SN$, as required.

        \item[\cW]

          \[
          \inferrule* [right=\ifrName{ClassCut}] {\Gamma  \dttsym{,}  \dttmv{x}  \dttsym{:}  \dttsym{-} \, \dttnt{A}  \vdash_c  \dttnt{t_{{\mathrm{1}}}}  \dttsym{:}  \dttsym{+} \, \dttnt{B}  \qquad \Gamma  \dttsym{,}  \dttmv{x}  \dttsym{:}  \dttsym{-} \, \dttnt{A}  \vdash_c  \dttnt{t_{{\mathrm{2}}}}  \dttsym{:}  \dttsym{-} \, \dttnt{B}}{\Gamma  \vdash_c  \nu \, \dttmv{x}  \dttsym{.}  \dttnt{t_{{\mathrm{1}}}}  \mathbin{\Cdot[2]}  \dttnt{t_{{\mathrm{2}}}}  \dttsym{:}  \dttsym{-} \, \dttnt{A}}
          \leqno{\raise 8 pt\hbox{\textbf{Case}}}      
          \]
          It suffices to consider arbitrary $y\in\textit{Vars}$ and
          $t'\in\interp{A}^{+c}$, and show $\nu \, \dttmv{y}  \dttsym{.}  \dttnt{t'}  \mathbin{\Cdot[2]}  \dttsym{(}  \nu \, \dttmv{x}  \dttsym{.}  \sigma \, \dttnt{t_{{\mathrm{1}}}}  \mathbin{\Cdot[2]}  \sigma \, \dttnt{t_{{\mathrm{2}}}}  \dttsym{)}\in\SN$.  By the IH and part 2 of Lemma~\ref{lem:sninterp},
          we have $\sigma t_1\in\interp{B}^+$ and $\sigma t_2\in\interp{B}^-$,
          which implies $\sigma t_1\in\SN$ and $\sigma t_2\in\SN$ by
          Lemma~\ref{lem:sninterp} again.  We proceed by inner induction on
          $\delta(t')+\delta(\sigma t_1)+\delta(\sigma t_2)$, using
          Lemma~\ref{lem:stepinterp}, to show that all one-step successors of
          $\nu \, \dttmv{y}  \dttsym{.}  \dttnt{t'}  \mathbin{\Cdot[2]}  \dttsym{(}  \nu \, \dttmv{x}  \dttsym{.}  \sigma \, \dttnt{t_{{\mathrm{1}}}}  \mathbin{\Cdot[2]}  \sigma \, \dttnt{t_{{\mathrm{2}}}}  \dttsym{)}$ are in $\SN$.  We can see that 
          $t'$, $\sigma t_1$, and $\sigma t_2$ are structurally smaller, and hence, if any one of them steps, then the result holds
          by inner IH.  The only other reduction possible is by
          \dttdrulename{RBetaR}, since $t'$ cannot be a cut term by the
          definition of $\interp{A}^{+c}$.  In this case, the IH gives us
          $[t'/x]\sigma t_1\in\interp{B}^+$ and $[t'/x]\sigma
          t_2\in\interp{B}^-$, and we then have $\nu \, \dttmv{y}  \dttsym{.}  \dttsym{[}  \dttnt{t'}  \dttsym{/}  \dttmv{x}  \dttsym{]}  \sigma \, \dttnt{t_{{\mathrm{1}}}}  \mathbin{\Cdot[2]}  \dttsym{[}  \dttnt{t'}  \dttsym{/}  \dttmv{x}  \dttsym{]}  \sigma \, \dttnt{t_{{\mathrm{2}}}}\in\SN$ by the definition of $\interp{B}^+$.
\end{description}

    


\end{document}